\documentclass[11pt,english]{article}

\usepackage[table]{xcolor}
\usepackage{lineno}
\usepackage{amsmath,amssymb,amsthm}
\usepackage{multicol}
\usepackage[margin=1in]{geometry}
\usepackage{graphicx,color}
\usepackage{enumitem}
\usepackage{fullpage}
\usepackage[noblocks]{authblk}
\usepackage{tcolorbox}
\usepackage{babel}
\usepackage{wrapfig}
\usepackage{MnSymbol}
\usepackage{mdframed}
\usepackage{mathtools}
\usepackage{floatrow}
\usepackage{multirow}
\usepackage{tablefootnote}
\usepackage[leftcaption]{sidecap}
\usepackage{hhline}

\usepackage[ruled,linesnumbered,vlined]{algorithm2e}
	\usepackage{tablefootnote}
	\usepackage{thm-restate}
\usepackage{bbding}
\usepackage{pifont}
\usepackage{nicefrac}
\usepackage{undertilde}
\renewcommand{\arraystretch}{1.2}

\definecolor{Darkblue}{rgb}{0,0,0.4}
\definecolor{Brown}{cmyk}{0,0.61,1.,0.60}
\definecolor{Purple}{cmyk}{0.45,0.86,0,0}
\definecolor{Darkgreen}{rgb}{0.133,0.543,0.133}

\usepackage[colorlinks,linkcolor=Darkblue,filecolor=blue,citecolor=blue,urlcolor=Darkblue,pagebackref]{hyperref}
\usepackage[nameinlink]{cleveref}

\usepackage[colorinlistoftodos,prependcaption,textsize=tiny]{todonotes}

\newcommand{\red}[1]{{#1}}
\newcommand{\atodoAfter}[1]{}
\newcommand{\aidea}[1]{}

\newif\ifdraft 
\draftfalse

\newcommand{\namedref}[2]{\hyperref[#2]{#1~\ref*{#2}}}
\newcommand{\propref}[1]{\hyperref[#1]{property~(\ref*{#1})}}
\newcommand{\theoremref}[1]{\namedref{Thm.}{#1}}
\newcommand{\cororef}[1]{\namedref{Cor.}{#1}}

\newlength{\Oldarrayrulewidth}


\newcommand{\real}{\mathbb{R}}
\newcommand{\wsp}{\mathtt{Ws}}

\newtheorem{theorem}{Theorem}
\newtheorem{lemma}{Lemma}
\newtheorem{definition}{Definition}
\newtheorem{claim}{Claim}
\newtheorem{observation}{Observation}
\newtheorem{corollary}{Corollary}

\newtheorem{remark}{Remark}

\newcommand{\poly}{\mathrm{poly}}
\newcommand{\polylog}{\mathrm{polylog}}

\newcommand{\R}{\mathbb{R}}
\newcommand{\Z}{\mathbb{Z}}
\newcommand{\N}{\mathbb{N}}

\newcommand{\est}{{\rm est}}

\newcommand{\supp}{\mathrm{supp}}
\newcommand{\home}{\mbox{\bf home}}
\newcommand{\diam}{\mathrm{diam}}

\newcommand{\SPD}{\textsf{SPD}\xspace}

\newcommand{\SPDdepth}{\textsf{SPDdepth}\xspace}
\newcommand{\lca}{\mathsf{lca}}
\newcommand{\vsucc}{\mathrm{vs}}
\newcommand{\graycell}{\cellcolor[HTML]{EFEFEF}}

\newcommand{\ourresults}{\vspace{5pt}\hspace{-17pt}\textbf{Our contribution. }}

\newtheorem{fact}{Fact}
\definecolor{forestgreen}{rgb}{0.13, 0.55, 0.13}

\SetCommentSty{mycommfont}

\DeclareMathOperator*{\argmin}{arg~min}

\def\eps{\epsilon}

\DeclareMathAlphabet{\mathpzc}{OT1}{pzc}{m}{it}
\newcommand{\etal}{{\em et al. \xspace}}

\newlength{\dhatheight}

\newcommand {\ignore} [1] {}

\newcommand{\initOneLiners}{%
	\setlength{\itemsep}{0.2pt}
	\setlength{\parsep }{0.2pt}
	\setlength{\topsep }{0.2pt}
}

\newenvironment{OneLiners}[1][\ensuremath{\bullet}]
{\begin{list}
		{#1}
		{\initOneLiners}}
	{\end{list}}

\title{Labeled Nearest Neighbor Search and Metric Spanners\\ via Locality Sensitive Orderings}

\author{Arnold Filtser\thanks{	Email: \texttt{arnold.filtser@biu.ac.il}. This research was supported by the ISRAEL SCIENCE FOUNDATION (grant No. 1042/22).}}
\affil{Bar-Ilan University}
\date{}
\begin{document}
\maketitle
\begin{abstract}
Chan, Har-Peled, and Jones [SICOMP 2020] developed locality-sensitive orderings (LSO) for Euclidean space. A $(\tau,\rho)$-LSO is a collection $\Sigma$ of orderings such that for every $x,y\in\mathbb{R}^d$ there is an ordering $\sigma\in\Sigma$, where all the points between $x$ and $y$ w.r.t. $\sigma$ are in the $\rho$-neighborhood of either $x$ or $y$. 
In essence, LSO allow one to reduce problems to the $1$-dimensional line. 
Later, Filtser and Le [STOC 2022] developed LSO's for doubling metrics, general metric spaces, and minor free graphs.

For Euclidean and doubling spaces, the number of orderings in the LSO is exponential in the dimension, which made them mainly useful for the low dimensional regime.
In this paper, we develop new LSO's for Euclidean, $\ell_p$, and doubling spaces that allow us to trade larger stretch for a much smaller number of orderings.
We then use our new LSO's (as well as the previous ones) to construct path reporting low hop spanners, fault tolerant spanners, reliable spanners, and light spanners for different metric spaces.

While many nearest neighbor search (NNS) data structures were constructed for metric spaces with implicit distance representations (where the distance between two metric points can be computed using their names, e.g. Euclidean space), for other spaces almost nothing is known. 
In this paper we initiate the study of the labeled NNS problem, where one is allowed to artificially assign labels (short names) to metric points. We use LSO's to construct efficient labeled NNS data structures in this model. 

\end{abstract}

\newpage

\setcounter{secnumdepth}{5}
		\setcounter{tocdepth}{3} \tableofcontents

    \newpage
    \pagenumbering{arabic}

\section{Introduction}
\subsection{Locality Sensitive Ordering}
Chan, Har-Peled, and Jones \cite{CHJ20} recently introduce a new and powerful tool into the   algorithmist's toolkit, called \emph{locality sensitive ordering} (abbreviated LSO).
LSO provides an order over the points of a metric space $(X,d_X)$, this order being very useful, as it helps to store, sort, and search the data (among other manipulations).
\begin{restatable}[$(\tau,\rho)$-LSO]{definition}{defLSO}
	\label{def:LSOclassic}
	Given a metric space $(X,d_{X})$, we say that a collection $\Sigma$
	of orderings is a $(\tau,\rho)$-LSO if
	$\left|\Sigma\right|\le\tau$, and for every $x,y\in X$, there is
	a linear ordering $\sigma\in\Sigma$ such that (w.l.o.g.\footnote{That is either  $x\preceq_{\sigma}y$ or  $y\preceq_{\sigma}x$, and the guarantee holds w.r.t. all the points between $x$ and $y$ in the order $\sigma$.\label{foot:wlogOrder}}) $x\preceq_{\sigma}y$ and the points between $x$ and $y$ w.r.t. $\sigma$ could be partitioned
	into two consecutive intervals $I_{x},I_{y}$ where $I_{x}\subseteq B_{X}(x,\rho\cdot d_{X}(x,y))$	and $I_{y}\subseteq B_{X}(y,\rho\cdot d_{X}(x,y))$. $\rho$ is called the \emph{stretch} parameter. 
\end{restatable}
Morally speaking, given a problem, LSO can reduce it from a general and complicated space to a much simpler space: $1$-dimensional line.
Chan \etal \cite{CHJ20} constructed $\left(O_d(\eps^{-d})\cdot\log\frac1\eps,\eps\right)$-LSO for the $d$-dimensional Euclidean space. They used their LSO to design simple dynamic algorithms for approximate nearest neighbor search, bichromatic closest pair, MST, spanners, and fault-tolerant spanners.
Later, Buchin, Har{-}Peled, and Ol{\'{a}}h \cite{BHO19,BHO20} constructed reliable spanners using LSO, obtaining considerably superior results compared with previous techniques.

Filtser and Le \cite{FL22} generalized Chan \etal \cite{CHJ20} result to doubling spaces,\footnote{A metric space $(X, d)$ has doubling dimension $d$ if every ball of radius $2r$ can be 	covered by $2^{d}$ balls of radius $r$.\label{foot:doubling}} showing that every metric space with doubling dimension $d$ admits a $\left(\eps^{-O(d)},\eps\right)$-LSO.
Furthermore, they generalized the concept of LSO to other metric spaces, defining the two related notions of \emph{triangle-LSO} (which turn to be useful for general metric spaces), and \emph{left-sided LSO} (which turn to be useful for topologically restricted graphs).
Here, instead of presenting the left-sided LSO's of \cite{FL22}, we introduce the closely related notion of \emph{rooted-LSO}, which has some additional structure. All the results and constructions for left-sided LSO in \cite{FL22} hold for rooted LSO as well. We refer to \cite{FL22} for a comparison between the different notions, and to \Cref{fig:LSO} for an illustration.
\begin{definition}[$(\tau,\rho)$-Triangle-LSO]\label{def:TriangleLSO}
	Given a metric space $(X,d_{X})$, we say that a collection $\Sigma$
	of orderings is a $(\tau,\rho)$-triangle-LSO if
	$\left|\Sigma\right|\le\tau$, and for every $x,y\in X$, there is
	an ordering $\sigma\in\Sigma$ such that (w.l.o.g.$^{\ref{foot:wlogOrder}}$)  $x\prec_{\sigma}y$, and for every $a,b\in X$ such that $x\preceq_{\sigma}a\preceq_{\sigma}b\preceq_{\sigma}y$ it holds that $d_X(a,b)\le\rho\cdot d_X(x,y)$.
\end{definition}
\begin{definition}[$(\tau,\rho)$-rooted-LSO]\label{def:RootedLeftSided}
	Given a metric space $(X,d_{X})$, we say that a collection $\Sigma$
	of \emph{orderings} over subsets of $X$ is a $(\tau,\rho)$-rooted-LSO if the following hold:
	\begin{OneLiners}
		\item Each point $x\in X$ belongs to at most $\tau$ orderings in $\Sigma$.
		\item Each ordering $\sigma\in\Sigma$ is associated with a point
		$x_{\sigma}\in X$, which is the first in the order, and such that the ordering is w.r.t. distances from
		$x_{\sigma}$ (i.e. $y\prec_\sigma z~\Rightarrow~ d_X(x_\sigma,y)\le d_X(x_\sigma,z)$).
		\item For every pair of points $u,v$, there is some $\sigma\in\Sigma$ containing both $x,y$, and such that  
		$d_{G}(u,x_{\sigma})+d_{G}(x_{\sigma},v)\le\rho\cdot d_{G}(u,v)$.
	\end{OneLiners} 
\end{definition}

\begin{figure}[t]
	\centering
	\includegraphics[width=.84\textwidth]{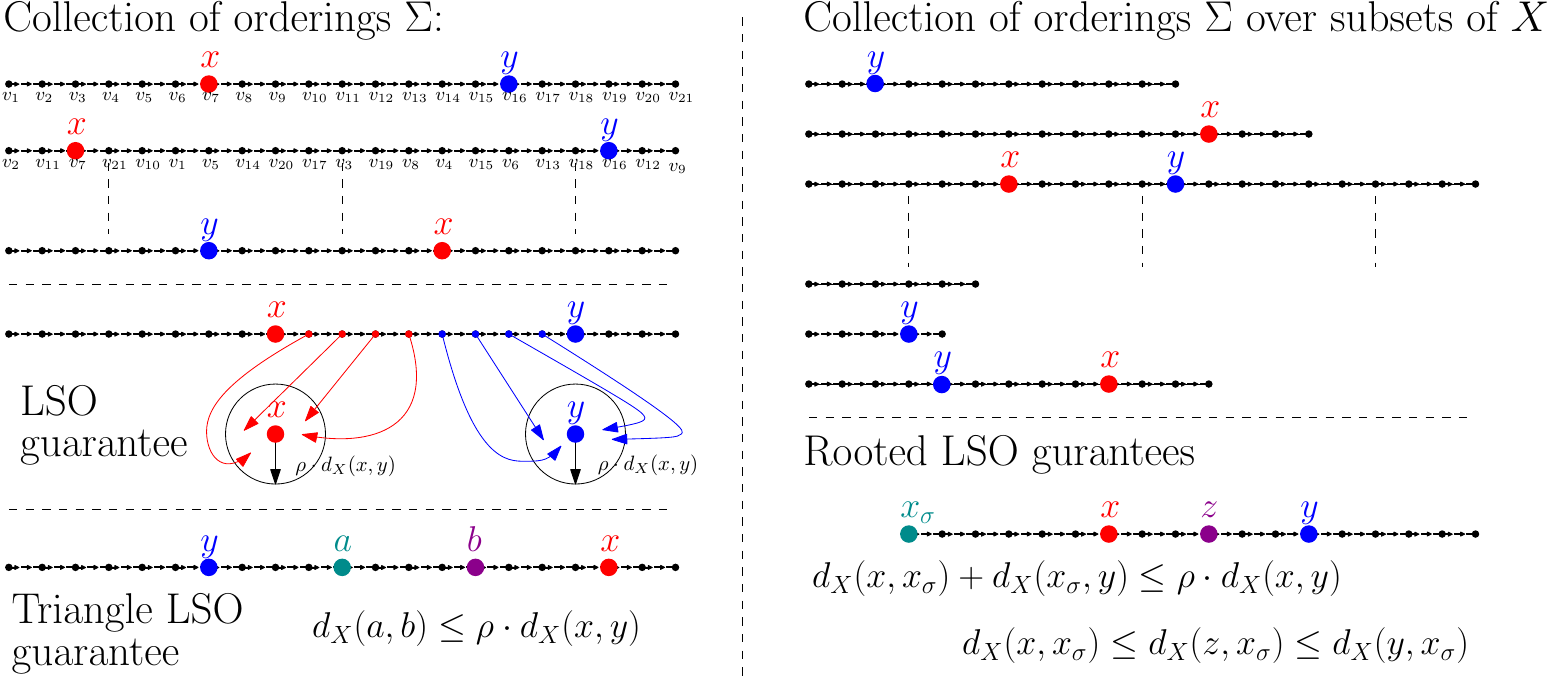}
	\caption{\footnotesize{Illustration of different types of LSO.
	}}
	\label{fig:LSO}
\end{figure}

Filtser and Le \cite{FL22} constructed triangle LSO for general metrics, and rooted LSO for the shortest path metrics of trees, treewidth graphs, planar graphs, and graph excluding a fixed minor. 
They used their LSO's to construct oblivious reliable spanners for the respective metric spaces, considerably improving previous constructions (that used different techniques).
All the known results on LSO's are summarized in \Cref{tab:LSO}.

\begin{table}[]
	\begin{tabular}{|l|l|l|l|l|}
		\hline
		LSO type                                             & Metric Space             & \# of orderings ($\tau$)                         & Stretch ($\rho$)         & Ref                                                  \\ \hline
		\multicolumn{1}{|c|}{\multirow{2}{*}{(Classic) LSO}} & Euclidean space $\mathbb{R}^d$ & $O_d(\epsilon^{-d})\cdot \log\frac{1}{\epsilon}$    \hspace{22pt} $^{(*)}$    & $\eps$           & \cite{CHJ20}                        \\ \cline{2-5} 
		\multicolumn{1}{|c|}{}                               & Doubling dimension $d$  & $\eps^{-O(d)}$                                   & $\eps$           & \cite{FL22}              \\ \hline
		
		\multirow{4}{*}{Triangle-LSO}                                        & General metric           & 
		$O(n^{\frac1k}\cdot\log n\cdot\frac{k^2}{\eps}\cdot\log\frac{k}{\eps})$
		& $2k+\eps$ & \cite{FL22} \\ \hhline{~|-|-|-|-|}
		&\graycell Euclidean space $\mathbb{R}^d$         &\graycell $e^{\frac{d}{2t^{2}}\cdot(1+\frac{2}{t^{2}})}\cdot\tilde{O}(\frac{d^{1.5}}{\eps\cdot t})$ &\graycell $(1+\eps)t$ &\graycell \theoremref{thm:LSOEuclideanLargeStretch} \\ \hhline{~|-|-|-|-|}
		&\graycell $\ell_p^d$ for $p\in[1,2]$         &\graycell $e^{O(\nicefrac{d}{t^{p}})}\cdot\tilde{O}(d)$ &\graycell $t$ &\graycell \theoremref{thm:LSOLp12} \\ \hhline{~|-|-|-|-|}
		&\graycell $\ell_p^d$ for $p\in[2,\infty]$         &\graycell $\tilde{O}(d)$ &\graycell $d^{1-\frac1p}$ &\graycell \cororef{cor:LpLSO} \\ \hhline{~|-|-|-|-|}
		&\graycell Doubling dimension $d$           &\graycell $2^{O(\nicefrac{d}{t})}\cdot d\cdot\log^{2}t$ &\graycell $t$ &\graycell \theoremref{thm:LSOdoublingLargeStretch} \\ \hline

		\multirow{3}{*}{Rooted LSO}                      & Tree                     & $\log n$                                         & $1$              & \cite{FL22}         \\ \cline{2-5} 
		& Treewidth $k$            & $k\cdot \log n$                                        & $1$              & \cite{FL22}         \\ \cline{2-5} 
		& Planar /  fixed minor free             & $O(\frac{1}{\eps}\cdot \log^2 n)$                         & $1+\eps$         & \cite{FL22}         \\ 
		\hline
	\end{tabular}
	\caption{\small{Summary of all known results, on all the different types of locality sensitive orderings (LSO). $k\in\N$, $t>1$, $\eps\in(0,1)$ is an arbitrarily small parameter.
	$^{(*)}$ $O_d$ hides an arbitrary function of $d$, the number of orderings in \cite{CHJ20} LSO is $O_d(\epsilon^{-d})\cdot \log\frac{1}{\epsilon}=2^{O(d)}\cdot d^{\frac{3}{2}d}\cdot\epsilon^{-d}\cdot\log\frac{1}{\epsilon}$.
}}
	\label{tab:LSO}
\end{table}
	Previously constructed LSO for the Euclidean space \cite{CHJ20}, as well as for metric spaces with doubling dimension $d$ \cite{FL22}, have exponential dependency on the dimension in their cardinality, a phenomena often referred to as ``the curse of dimensionality''. 
When the dimension is high, it can be a major obstacle. 
Indeed, the distances induced by $n$ point in an $O(\log n)$-dimensional Euclidean space induce a metric space which is much more structured than a general metric space. Therefore one might expect them to admit better LSO. However, using \cite{CHJ20} one can only construct $(n,\eps)$-LSO (note that every metric admits $(\lceil\frac n2\rceil,0)$-LSO \footnote{This follows from a theorem by Walecki \cite{Alspach08} who showed that the edges of the $K_n$ clique graph can be partitioned into $\lceil\frac n2\rceil$ Hamiltonian paths.}).

Every $n$ point metric space has doubling dimension at most $\log n$. Consider the case where the doubling dimension is somewhat large (e.g. $\sqrt{\log n}$) but not maximal. It is much more structured than general metric, however the only construction we have \cite{FL22} gives us $\eps^{-O(d)}$ orderings, which might be too large. If we insist that the number of orderings will be small, could we take advantage of the doubling structure to construct better LSO then for general metrics?

\ourresults
In this paper we construct new triangle-LSO for high dimensional spaces. We then present many applications for the newly constructed LSO's, as well as for the previously constructed LSO's.
Previous and new LSO construction are summarized in \Cref{tab:LSO}.

\begin{restatable}
	[]{theorem}{HighDimLSO}
	\label{thm:LSOEuclideanLargeStretch}
	\sloppy For every $t\in[4,2\sqrt{d}]$, $\delta\in(0,1]$, and $d\ge 1$, the $d$-dimensional Euclidean space $\mathbb{R}^{d}$ admits $\left(O\left(\frac{d^{1.5}}{\delta\cdot t}\cdot\log(\frac{2\sqrt{d}}{t})\cdot\log\frac{d}{\delta}\cdot e^{\frac{d}{2t^{2}}\cdot(1+\frac{2}{t^{2}})}\right),(1+\delta)t\right)$-triangle LSO.
\end{restatable}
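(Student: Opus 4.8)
The natural plan is to realize each ordering as a \emph{rotated, anisotropic grid hierarchy}, to use only as many rotations as are needed to ``align'' the grid with an arbitrary query direction, and to obtain the stretch/ordering-count trade-off through the Gaussian measure of a spherical cap. As a first step I would carry out the standard reduction used in \cite{CHJ20,FL22}: it suffices to build, for a single fixed scale $\Delta$, a family $\Sigma_\Delta$ of orderings certifying the triangle-LSO property for all pairs $x,y$ with $\|x-y\|\in(\Delta/2,\Delta]$, and then to merge the $\Sigma_\Delta$'s into genuine scale-free orderings via a randomly shifted quadtree. In Euclidean space this merging multiplies the count only by $O(\log\frac{d}{\delta})$ — essentially the number of scales that can interact with a fixed point up to additive precision $\delta\cdot\Delta$ — rather than by the number of scales. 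So from here on I fix $\Delta$ and $r:=\|x-y\|\in(\Delta/2,\Delta]$, and (using the footnoted w.l.o.g.\ convention) assume the ordering visits $y$ before $x$.

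The second step is the source of the exponential term. I would take a set $\mathcal{N}\subseteq S^{d-1}$ such that for every $w\in S^{d-1}$ there is some $u\in\mathcal{N}$ with $\langle u,w\rangle\ge 1/t$, i.e.\ the spherical caps $C_u=\{w:\langle u,w\rangle\ge 1/t\}$ cover the sphere. A greedy covering bound gives $|\mathcal{N}|=O\!\big(\mathrm{poly}(d)\cdot\log(1/\mu)\cdot\mu^{-1}\big)$, where $\mu$ is the normalized measure of one cap. Since the first coordinate of a uniform point of $S^{d-1}$ behaves like $\mathcal{N}(0,1/d)$ up to $1+o(1)$ factors, the Gaussian tail yields $\mu=\Pr[\mathcal{N}(0,1)\ge\sqrt d/t]\cdot(1\pm o(1))=e^{-\frac{d}{2t^2}\left(1+O(1/t^2)\right)}$ and $\log(1/\mu)=\Theta(d/t^2)$. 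The restriction $t\in[4,2\sqrt d]$ is exactly the regime where $C_u$ is a proper sub-hemisphere ($1/t\le\tfrac14$) yet still exponentially small, so the bound on $|\mathcal{N}|$ is meaningful; plugging in accounts for the $e^{\frac{d}{2t^2}(1+2/t^2)}$ factor and part of the polynomial/logarithmic overhead.

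The third step produces one ordering per direction together with shifts. For $u\in\mathcal{N}$ I would fix an orthonormal frame with $f_1=u$ and define an ordering by a grid hierarchy in that frame whose scale-$\Delta$ cell is \emph{thin} in the $f_1$-direction (width $\Theta(\delta\Delta/t)$) and suitably shaped in the orthogonal $\mathbb{R}^{d-1}$, ordering points by their $f_1$-layer and breaking ties inside a layer by a space-filling order of the orthogonal part, then recursing into finer scales. One then shifts the thin $f_1$-grid $O(1/\delta)$ ways — this is where the $1/\delta$ factor enters — and the orthogonal grid $O(d)$ ways (the usual quadtree-shift trick). For a query pair $x,y$, choose $u\in\mathcal{N}$ with $\langle u,x-y\rangle\ge r/t$; since $x$ and $y$ differ along $u$ by at most $r$, in the right shift they lie in $f_1$-layers separated by only $O(t/\delta)$ thin layers, and the construction must additionally force them into a bounded, contiguous cluster of orthogonal cells of controlled $\ell_2$-diameter. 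One then argues that every point between $x$ and $y$ in this ordering is confined to those $O(t/\delta)$ thin layers times the orthogonal cluster, a set of $\ell_2$-diameter at most $(1+\delta)t\cdot r$, which gives the triangle-LSO guarantee with $\rho=(1+\delta)t$. Collecting all $u\in\mathcal{N}$, the $O(1/\delta)$ thin-shifts, the $O(d)$ orthogonal shifts, the $O(\log\frac{d}{\delta})$ scale-merging overhead, and an extra $O(\log\frac{2\sqrt d}{t})$ factor (which I expect to arise either from the covering estimate or from an auxiliary recursion run down from stretch $2\sqrt d$) yields the stated $\tau$.

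The crux is the orthogonal part of Step 3. The set of points ``between $x$ and $y$'' always contains the slab $\{z:\langle u,z\rangle\text{ lies between }\langle u,x\rangle\text{ and }\langle u,y\rangle\}$, which is unbounded in the $d-1$ directions orthogonal to $u$; and since any region meeting more than one grid cell in two or more coordinate directions has $\ell_2$-diameter at least $\sqrt d$ times the cell width, a naive axis-parallel grid in the orthogonal complement forces stretch $\gtrsim\sqrt d$, which is too large once $t\ll\sqrt d$. Circumventing this — shaping the orthogonal cells anisotropically / coarsely in just the right way, or recursing in the orthogonal complement with a geometrically decreasing dimension, so that $x$ and $y$ land in a common (or adjacent) orthogonal cell of $\ell_2$-diameter only $O(t\cdot r)$ while the total number of orderings stays $e^{O(d/t^2)}\cdot\mathrm{poly}(d)$ rather than $t^{\Theta(d)}$ — is where the real work lies, and is what pins down the constant $(1+\delta)$ instead of $\Theta(\sqrt d)$. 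The scale-merging of Step 1 and the cap estimate of Step 2 are, by contrast, essentially routine.
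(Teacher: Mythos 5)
There is a genuine gap, and you have named it yourself: the entire theorem lives in the step you defer (``shaping the orthogonal cells \ldots is where the real work lies''). Your Steps 1--2 only set up a direction net of size $e^{d/(2t^2)(1+O(1/t^2))}\cdot\mathrm{poly}(d)$; they do not produce orderings in which every point between $x$ and $y$ lies in a region of diameter $O(t\cdot\|x-y\|_2)$. Worse, the slab-plus-orthogonal-grid route you sketch cannot be completed as stated with the claimed count: if the orthogonal cells are boxes (or any product-like cells) of $\ell_2$-diameter $O(t r)$, their side length is $O(tr/\sqrt d)$, and the probability that a random shift places both $x$ and $y$ (which may differ by up to $r$ in the orthogonal complement) in a common cell is only $e^{-\Theta(d/t)}$, not $e^{-\Theta(d/t^2)}$; so the number of shifts/orderings blows up to $e^{\Theta(d/t)}$. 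The $e^{-d/(2t^2)}$ rate is special to \emph{ball-shaped} cells: it is exactly the ratio $\mathrm{Vol}(B(x,R)\cap B(y,R))/\mathrm{Vol}(B(x,R)\cup B(y,R))\approx(1-(\|x-y\|/2R)^2)^{d/2}$ at $R\approx t\|x-y\|$. Your Step 1 reduction (``merge single-scale families via a randomly shifted quadtree with only $O(\log\frac d\delta)$ overhead'') is also asserted, not proved, and shifted quadtree cells again cost a $\sqrt d$ in diameter, which is fatal once $t\ll\sqrt d$.

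The paper's proof takes a different route that makes both issues disappear simultaneously: there is no direction net at all. Each ordering comes from a random ball-carving partition (\`a la Andoni--Indyk) at every scale, with scale ratio $\xi=\Theta(\sqrt d/t)$, forced into a laminar hierarchy by reusing the same random seed every $\gamma=\Theta(d/t^2)$ scales; the ordering is the one induced by this laminar partition. For a pair at distance $r$, being ``satisfied'' means being clustered together (with an $\eps$-buffer) at all scales $\ge$ the scale $\approx tr$; the probability of this is governed by the ball intersection/union volume ratio, $\Omega(t/\sqrt d)\cdot e^{-\frac{d}{8t^2}(1+\frac1{2t^2})}$ at the bottom scale (Facts 2 and 3) and nearly $1$ at higher scales (summing geometrically over scales, events $\Psi_1,\Psi_2$). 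Sampling $p_{\mathrm{vs}}^{-1}\cdot O(\log|N|)$ orderings and union-bounding over an $\eps$-net $N$, plus $O(\frac1\delta\log\xi)$ multiplicative shifts of the scale sequence, gives the stated $\tau$; the diameter bound is immediate because all points between $x$ and $y$ in the ordering lie in one cluster, i.e.\ one ball of radius $\approx tr$. So your cap-covering heuristic correctly predicts the exponential term, but the construction that realizes it is the random ball partition, not a rotated anisotropic grid, and without that (or an equivalent resolution of your ``crux'') the proposal does not establish the theorem.
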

For $t=\frac23\sqrt{d}$ and $\delta=\frac12$, we obtain $\left(O(d\log d),\sqrt{d}\right)$-triangle LSO. In particular, for every set of $n$ points in $\ell_2$, using the Johnson Lindenstrauss dimension reduction \cite{JL84}, for every fixed $t>1$, we can construct $\left(n^{\frac{1}{t^{2}}}\cdot\tilde{O}(\frac{\log^{1.5} n}{t}),O(t)\right)$-triangle LSO, or $\left(\tilde{O}(\log n),O(\sqrt{\log n})\right)$-triangle LSO, a quadratic improvement compared with general $n$-point metric spaces!

Interestingly, in \Cref{cor:LpLSO} we show that the $\left(O(d\log d),\sqrt{d}\right)$-triangle LSO $\Sigma$ for $\ell_2$, is in the same time also a $\left(O(d\log d),d^{\frac1p}\right)$-triangle LSO for $\ell_p$ where $p\in[1,2]$, and $\left(O(d\log d),d^{1-\frac1p}\right)$-triangle LSO for $\ell_p$ where $p\in[2,\infty]$.
For $p\in [1,2]$, we generalize \Cref{thm:LSOEuclideanLargeStretch} to $\ell_p$ spaces to get the entire \#ordering-stretch trade-off.
Finally, we generalize \Cref{thm:LSOEuclideanLargeStretch} to general metric spaces with doubling dimension $d$.
\begin{restatable}
	[]{theorem}{HighDimLSOLP}
	\label{thm:LSOLp12}
	\sloppy For every $p\in[1,2]$, $t\in[5,d^{\frac1p}]$ and $d\ge 1$, the $d$-dimensional $\ell_p$ space admits $\left( e^{O(\frac{d}{t^{p}})}\cdot\tilde{O}(d),t\right)$-triangle LSO.
\end{restatable}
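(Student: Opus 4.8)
The plan is to transport the construction behind \theoremref{thm:LSOEuclideanLargeStretch} from $\ell_2$ to $\ell_p$, $p\in[1,2]$. The quantitative engine of the Euclidean proof is a covering estimate: a ``near‑hemispherical'' cap $\{v\in S^{d-1}:\langle u,v\rangle\ge\Omega(1/t)\}$ has normalized measure $e^{-\Theta(d/t^2)}$, so $e^{\Theta(d/t^2)}\cdot\poly(d)$ such caps cover the sphere, and this governs the number of orderings. The Euclidean argument also uses rotational invariance to place $x-y$ conveniently; this is unavailable in $\ell_p$, but $\ell_p$ is coordinate‑wise and invariant under coordinate permutations, and it obeys the following elementary substitute: if $\|x-y\|_p=r$ then at most $(2t)^p$ coordinates satisfy $|x_i-y_i|\ge r/(2t)$, since each such coordinate contributes at least $(r/2t)^p$ to $r^p$. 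Call these the \emph{heavy} coordinates of the pair; the role of the Euclidean cap‑net will be played by a family of random coordinate partitions equipped with per‑block resolutions, whose size will be $e^{O(d/t^p)}$.

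I would organize the construction in four steps. (i) Reduce to a single dyadic scale $s$, handling all pairs with $\|x-y\|_p\in[s,2s)$; the scales then merge at a cost of only a $\tilde O(1)$ factor for the triangle‑LSO guarantee, exactly as in \cite{FL22} and \theoremref{thm:LSOEuclideanLargeStretch}. (ii) Fix $s$, set $m=\Theta(t^p)$, and take a family $\mathcal F$ of $e^{O(d/t^p)}\cdot\tilde O(d)$ independent random \emph{patterns}; a pattern is a partition of $[d]$ into blocks of size $\Theta(m)$ together with, for each block, a random grid shift and a resolution offset drawn from $\tilde O(1)$ options. Each pattern produces one ordering, for $e^{O(d/t^p)}\cdot\tilde O(d)$ orderings in total (the $\tilde O(d)$ factor being the union‑bound overhead of step (iv)). (iii) The ordering of a pattern is obtained by composing the per‑block randomly shifted quadtrees via a hierarchical, bit‑interleaving composition, so that the interval between two points in the composed ordering lies inside a product of one grid cell per block; within a single block of size $m$ a grid cell has $\ell_p$‑diameter $\Theta(m^{1/p})=\Theta(t)$ times its side, which is the ``quadtree'' stretch regime underlying \cororef{cor:LpLSO}. (iv) Show $\mathcal F$ can be chosen \emph{good}: for every scale‑$s$ pair there is a pattern that isolates the $\le m$ heavy coordinates into distinct blocks without a shift cutting between $x$ and $y$ along any of them, and that sets each block's resolution offset so that, at the level at which the composed ordering separates $x$ from $y$, the block's non‑heavy coordinates are already negligible.

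The stretch analysis is then short: given a pair with $\|x-y\|_p=r\in[s,2s)$ and its good pattern, each block carries at most one heavy coordinate, which that block's quadtree resolves at a scale comparable to the block's own $\ell_p$‑contribution to $r$; at most $O(1)$ blocks are ``major'' and each contributes $O\!\left(m^{1/p}\cdot(\text{its distance})\right)=O(tr)$ to the $\ell_p$‑diameter of the interval between $x$ and $y$, while the remaining blocks contribute $O(r)$ in total because their offsets were chosen to kill all but negligible coordinates. Hence the interval has $\ell_p$‑diameter $O(tr)$, and running the construction with parameter $t/c$ yields stretch exactly $t$ without changing the form of the count. (Embedding $\ell_p^d\hookrightarrow\ell_1^{O(d)}$ with $1+o(1)$ distortion to reduce to $p=1$ is tempting but loses logarithmic factors and does not obviously give this exponent, so I would avoid that route.)

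The main obstacle is step (iv) --- proving that $e^{O(d/t^p)}\cdot\tilde O(d)$ random patterns suffice --- and, within it, the coordinates that are \emph{neither} heavy \emph{nor} negligible: there can be $\Omega(d)$ of them, spread across essentially every block, each moving by up to $r/(2t)$, so merely isolating the heavy coordinates does not work, and the per‑block resolution offsets must absorb this ``medium'' mass without the $\ell_p$‑aggregation over $\approx d/t^p$ blocks blowing up the diameter. Arranging the good event so that its failure probability for a single pair is $e^{-\Theta(d/t^p)}$, while keeping the net of pairs one must union over down to $2^{O(d)}$ (pairs being classified only by the coarse magnitude profile of their coordinates, not by their exact heavy‑set once random shifts are present), and balancing these two quantities, is precisely where the $\ell_2$ estimate $e^{d/(2t^2)}$ turns into the $\ell_p$ estimate $e^{O(d/t^p)}$; I would expect the precise partition structure and the number of offsets to need tuning across the range $t\in[5,d^{1/p}]$.
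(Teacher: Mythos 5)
Your plan diverges from the paper's proof in its core mechanism, and the divergence is exactly where it breaks. The paper does not build coordinate-block quadtree orderings at all: it reruns the ball-carving construction of \Cref{sec:highDimLSOconstruction} verbatim with $\ell_p$ balls in place of $\ell_2$ balls (and a larger scale gap $\xi_p=\frac{36d}{t}$), and the whole quantitative content is the volume-ratio bound $\frac{\mathrm{Vol}(B_p(x,w)\cap B_p(y,w))}{\mathrm{Vol}(B_p(x,w)\cup B_p(y,w))}=e^{-\Omega(d/t^p)}$ for $\|x-y\|_p\approx w/t$, proved via $p$-uniform smoothness of $\ell_p$ (\Cref{lem:diffClustersLargeLp}), together with a crude bound for nearby points (\Cref{obs:diffClustersSmallLp}); the nets, the reuse of scaled centers, the union bound and the shifted scales are then identical to the Euclidean proof. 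Your step (iv) is the analogue of this volume estimate, and you correctly identify it as the missing piece --- but as stated it is not a gap you can expect to close with the proposed grid/shift machinery.

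Concretely, consider the worst case you flag yourself: $|x_i-y_i|=r\,d^{-1/p}$ for \emph{every} coordinate, so all $d$ coordinates are ``medium'' and none is heavy. In your scheme the interval between $x$ and $y$ is contained in a product of one axis-parallel cell per block; if block $j$'s cell has side $\Delta_j$ and size $m=\Theta(t^p)$, the stretch constraint forces $\sum_j m\Delta_j^p=O(t^pr^p)$, i.e.\ $\sum_j\Delta_j^p=O(r^p)$, while the probability that the shifts fail to separate $x$ from $y$ is (in the worst case over pairs, which is what the net/union-bound argument must survive) of order $e^{-\sum_j \|x-y\|_{1,j}/\Delta_j}$, and no choice of per-block resolutions beats $\sum_j \|x-y\|_{1,j}/\Delta_j=\Omega(d/t)$ under that constraint; correlating the shift components inside a block does not improve the worst case either, since the per-coordinate ``bad'' intervals can be disjoint. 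So the success probability of a single pair is $e^{-\Theta(d/t)}$, not $e^{-\Theta(d/t^p)}$, and for every $p\in(1,2]$ your family size would be $e^{O(d/t)}\cdot\tilde O(d)$ --- the well-known inferiority of randomly shifted grids to ball carving, which is precisely why the paper carries the Andoni--Indyk clustering over to $\ell_p$ rather than switching to a coordinate-wise construction. (For $p=1$ the two exponents coincide, so your route could plausibly be made to work there, but not on the full range claimed by the theorem.)
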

\begin{restatable}
	[]{theorem}{DoublingLSO}
	\label{thm:LSOdoublingLargeStretch}
	Given a metric space $(X,d_X)$ with doubling dimension $d$, and parameter $t\in[\Omega(1),d]$, $X$ admits
	$\left(2^{O(\nicefrac{d}{t})}\cdot d\cdot\log^{2}t, t\right)$-triangle  LSO.
\end{restatable}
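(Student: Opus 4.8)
The plan is to build a family of $\tau=2^{O(d/t)}\cdot\mathrm{poly}(d,\log t)$ hierarchical (laminar) partitions of $X$ and to let $\Sigma$ consist of one linear order per partition, namely an \emph{arbitrary} depth-first traversal of it. What makes an arbitrary traversal enough is the following reduction: a triangle-LSO only needs, for every pair $x,y$ with $r:=d_X(x,y)$, one order in which $x$ and $y$ sit inside a common contiguous block of diameter $O(t r)$. Indeed, if a cluster $C$ of a hierarchical partition has $x,y\in C$ and $\diam(C)\le O(tr)$, then $C$ occupies a contiguous interval in any DFS order of that hierarchy, so every $a,b$ with $x\preceq a\preceq b\preceq y$ also lie in $C$, whence $d_X(a,b)\le\diam(C)=O(tr)$. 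Thus the entire task becomes: guarantee that every pair is ``captured at the right scale'' by some partition of the family.

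The first ingredient is a padded hierarchical decomposition tuned for large stretch. For each scale $2^i$ I would take a region-growing (CKR-type) partition whose clusters have diameter $O(t\cdot 2^i)$ and whose centers form a $\Theta(t\cdot 2^i)$-net, nested across scales so that the scale-$2^i$ partition refines the scale-$2^{i+1}$ one; this gives one random laminar family $\mathcal H$. The quantitative core is the estimate
\[
\Pr\bigl[\,B_X(v,2^i)\ \text{lies in a single cluster of the scale-}2^i\text{ level of }\mathcal H\,\bigr]\ \ge\ 2^{-O(d/t)}
\qquad\text{for every }v\in X,\ i\in\Z .
\]
Here $\rho/\Delta=2^i/(t2^i)=1/t$; since a ball of radius $O(\Delta)$ meets only $2^{O(d)}$ of the centers, the standard argument gives $\Pr[\text{cut by one level}]\le O(d\rho/\Delta)=O(d/t)$ (useful for $t=\Omega(d)$) and, keeping the product form of that analysis with an appropriate radius distribution, $\Pr[\text{not cut}]\ge 2^{-O(d\rho/\Delta)}$ throughout $t\in[\Omega(1),d]$ — the $t=\Theta(1)$ endpoint is exactly the $2^{O(d)}$-orderings, $O(1)$-stretch regime already present in Filtser and Le \cite{FL22}. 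Nesting across scales multiplies such factors over $j\ge i$ with $\rho/\Delta$ decaying geometrically, so the exponent stays $O(d/t)$. (For $\ell_2$ one gains a square here, $2^{-O(d/t^2)}$, which is precisely the gap between this theorem and \theoremref{thm:LSOEuclideanLargeStretch}.)

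The second ingredient turns ``good with probability $2^{-O(d/t)}$'' into an explicit family of $\tau$ laminar partitions such that \emph{every} pair $(v,i)$ is captured by at least one of them — i.e.\ a hierarchical sparse cover of $X$ with stretch parameter $t$. A crude union bound over the $2^{O(d)}$ net pairs near a point would cost $2^{O(d)}$ partitions, which is far too many; the saving is that one well-padded cluster around a center $v$ simultaneously captures all pairs inside $B_X(v,\Omega(t\cdot 2^i))$, so the effective number of ``colors'' needed at a single scale is only $2^{O(d/t)}$ up to logarithmic overhead. The leftover factors $d$ and $\log^2 t$ come from stitching the single-scale covers into scale-free laminar families: one handles $O(\log t)$ consecutive scales per block and uses a net-tree recursion to remove any dependence on the aspect ratio of $X$, so that $\tau$ depends on $d$ and $t$ only.

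Finally, each laminar family $\mathcal H^{(k)}$ of the cover yields one ordering $\sigma_k$ (any DFS), so $|\Sigma|=\tau=2^{O(d/t)}\cdot d\cdot\log^2 t$; and for $x,y$ with $r\in(2^{i-1},2^i]$ I would take the $k$ for which $\mathcal H^{(k)}$ captures $(\hat x,i)$, where $\hat x$ is a net point within $2^i$ of $x$. Then $B_X(\hat x,\Omega(2^i))$, which contains both $x$ and $y$, lies in one cluster of diameter $O(t\cdot 2^i)=O(tr)$, and the contiguity observation of the first paragraph gives triangle-LSO stretch $O(t)$; rescaling $t$ by a constant absorbs the $O(\cdot)$. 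I expect the main obstacle to be the second ingredient — obtaining the number of partitions $2^{O(d/t)}$ simultaneously with scale-freeness, rather than $2^{O(d)}$ or $2^{O(d/t)}\cdot\log(\text{aspect ratio})$. The padding estimate is close to classical and the reduction to a bounded-diameter contiguous block is clean; the delicate part is the combinatorial accounting that extracts exactly the claimed $2^{O(d/t)}\cdot d\cdot\log^2 t$.
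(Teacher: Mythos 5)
Your architecture coincides with the paper's: a family of laminar partitions, one ordering per family via a DFS/preorder, and the observation that a cluster of a laminar hierarchy is contiguous in such an order is precisely the ultrametric-cover-to-triangle-LSO reduction of \cite{FL22} (\Cref{thm:CoverToTriangleLSO}); your per-scale padding estimate $2^{-O(d/t)}$ is also the right quantity. The genuine gap is the step you yourself flag as the main obstacle: converting ``padded with probability $2^{-O(d/t)}$'' into an explicit family of $2^{O(d/t)}\cdot d\cdot\log^2 t$ laminar partitions whose cardinality is independent of $n$ and of the aspect ratio. Independent sampling plus a union bound gives $2^{O(d/t)}\cdot\log n$ hierarchies, and unlike the Euclidean proof of \Cref{thm:LSOEuclideanLargeStretch} you cannot rescue this with a net of a fundamental domain, translation/scaling symmetry, and periodic reuse of scales, since a general doubling metric has no such symmetry. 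Your two sentences addressing this --- that ``the effective number of colors at a single scale is only $2^{O(d/t)}$'' and that ``a net-tree recursion removes the aspect-ratio dependence'' --- are statements of the desired conclusion, not arguments; essentially all the content of the theorem lives there.

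The paper fills this hole with two concrete pieces. First, it invokes the deterministic padded partition cover scheme implicit in \cite{Fil19padded} (\Cref{thm:DdimCover}): for every $\Delta$, a collection of $2^{O(d/t)}\cdot d\cdot\log t$ partitions with $\Delta$-bounded clusters such that every ball of radius $\Delta/t$ lies inside a single cluster of some partition --- this is the $n$- and $\Phi$-free replacement for your probabilistic ingredient, and it is a nontrivial known construction rather than a consequence of the CKR-type analysis alone. Second, it proves a laminarization lemma (\Cref{lem:PartitionCoverSchemeToUltrametric}): take one partition per scale with a large geometric gap $\Theta(\rho/\eps)=\Theta(t)$ between consecutive scales, and round each level-$i$ cluster bottom-up to a union of level-$(i-1)$ clusters; the gap makes the diameter and padding losses only $(1+\eps)$, and enumerating $O(\log_{1+\eps}t)=O(\log t)$ shifted scale sequences recovers all distance scales, which is where the second $\log t$ factor comes from (your ``$O(\log t)$ consecutive scales per block'' is not how the accounting goes). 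Feeding the resulting ultrametric cover into \Cref{thm:CoverToTriangleLSO} finishes the proof. If you replace your second ingredient by the cover of \cite{Fil19padded} and carry out the laminarization step, your plan becomes the paper's proof; as written, it assumes the theorem's quantitative heart.
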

For $t=d$, we get $(\tilde{O}(d),d)$-triangle LSO, again much better then general metric spaces!

\subsection{Labeled Nearest Neighbor Search}

Nearest neighbor search (abbreviated NNS) is a classical and  fundamental task used in numerous domains
including machine learning, clustering, document retrieval, databases, statistics, data compression, database queries, computational biology, data mining, pattern recognition, and many others. 
In the NNS problem we are given a set $P$ of points in a metric space $(X,d_X)$. The goal is to construct a succinct  data structure that given a query point $q\in X$, quickly returns a point $p\in P$ closest to $q$ (i.e. $\argmin_{p\in P}d_X(p,q)$).
In order to keep the size of the data structure, and the query time small, usually approximation is allowed. In the $t$-approximate nearest neighbor problem (abbreviated $t$-NNS) the goal is to return a point $p$ at distance at most $t\cdot \min_{p\in P}d_X(p,q)$ from $q$.
The problem was extensively studied in $\ell_p$ spaces (see the survey \cite{AIR18}), and also in various norm spaces over $\R^d$ (see e.g. \cite{ANNRW17,ANRW21}).  
NNS data structures were also constructed beyond normed spaces. Some examples are Earth-Mover distance \cite{IT03}, Edit Distance \cite{OR07,AIR18}, and Fr\'echet distance \cite{Indyk02,DS17,EP18,FFK20}.
We observe that a crucial property shared by these examples, is that they have an ``implicit distance representation''.
That is, it is possible  to compute the distance between two points using only their names (e.g. the coordinates values in $\R^d$ used as names:  $d_{\R^d}\left((x_1,\dots,x_d),(y_1,\dots,y_d)\right)=\left\|(x_1,\dots,x_d)-(y_1,\dots,y_d)\right\|_2$). 

For general metric spaces, Krauthgamer and Lee \cite{KL05} introduced the \emph{black box model}. Here one is given access to an exact distance oracle \footnote{An exact distance oracle $D$ is a data structure that given two points $x,y$, returns $\est(x,y)=d_X(x,y)$. A distance oracle of stretch $t$ returns a value $\est(x,y)$ in $[d_X(x,y),t\cdot d_X(x,y)]$.\label{foot:DistanceOracle}} ${\rm DO}$ that answer distance queries in $t_{\rm DO}$ time. They showed that one can construct an efficient $(1+\eps)$-NNS (that is with polynomial space, and polylogarithmic query time), if and only if the doubling dimension of $X$ is at most $O(\log\log n)$.

\begin{wrapfigure}{r}{0.18\textwidth}
	\begin{center}
		\vspace{-20pt}
		\includegraphics[width=0.9\textwidth]{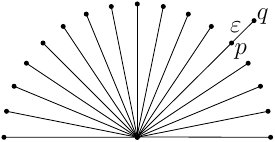}
		\vspace{-5pt}
	\end{center}
	\vspace{-15pt}
\end{wrapfigure}
Indeed, for metric spaces with large doubling dimension, distance queries provide very limited information. Consider for example the case where the input metric is the star graph (inducing uniform metric on the leaves, see illustration on the right), and the query point attached to one of the leaves with an edge of infinitesimal weight, one must query all the points before finding any finite approximation to the nearest neighbor.

An interesting case studied by Abraham, Chechik, Krauthgamer, and Wieder \cite{ACKW15} is that of planar graphs. 
Here we are given a huge weighted planar graph $G=(V,E,w)$ with $N$ vertices, and a subset of $n$ vertices $X\subseteq V$. The goal is to solve the $(1+\eps)$-NNS problem w.r.t. the shortest path metric $d_G$, input set $X$ and queries from $V$. 
Assuming access to an exact distance oracle$^{\ref{foot:DistanceOracle}}$ ${\rm DO}$ that answer distance queries in $t_{\rm DO}$ time, and given a planar graph $G$ of maximum degree $\Delta$, Abraham \etal \cite{ACKW15} constructed a $(1+\eps)$-NNS data structure for planar graph of size $n\cdot O(\eps^{-1} \cdot\log n\cdot\log N+\Delta\cdot\log^2n)$ and query time $O((\eps^{-1}\cdot\log\log n+t_{\rm DO})\cdot\log n\cdot\log N+\log n\cdot\Delta\cdot t_{\rm DO})$.

Linear dependence on the degree is a very limiting requirement, as planar graphs have  apriori unbounded degree. Moreover, exact distance computations (even in planar graphs) are time consuming, and if the graph is big enough could be infeasible. Exact distance oracle is a highly non-trivial assumption, it is an expensive data structure, \footnote{After a long line of work, the state of the art (by Long and Pettie \cite{LP21}) requires either super-linear space $N^{1+o(1)}$, or very large query time $N^{o(1)}$, both quite undesirable.} better to be avoided. 
One might hope to relax either the max degree assumption, or to use the much more reasonable and efficient data structure of approximate distance oracle  \cite{Thorup04,Kle02,LW21}.
Unfortunately, Abraham  \etal \cite{ACKW15} showed both assumptions to be necessary.
Specifically, the dependence on the degree is necessary, as every NNS data structure with space at most $O(\frac{N}{\Delta\log_\Delta n})$ 
must probe the distance oracle at least $\Omega(\Delta\log_\Delta n)$ times. Furthermore, they show that if one is only given access to a $(1+\eps)$-distance oracle, then there is a planar graph (in fact a tree) with maximum degree $O(\log n)$, aspect ratio $O(\frac{\log n}{\eps})$, $N\le n^2$, and the NNS data structure is forced to make $\Omega(n)$ queries to the distance oracle.

To conclude this discussion, exact distance oracle (assumed both by the black box model \cite{KL05} and \cite{ACKW15}) is an expensive data structure, which enables us to construct efficient NNS only under very limiting assumptions (small doubling dimension / constant maximum degree in planar graphs). 
On the other hand in many metric spaces with ``implicit distance representation'' efficient NNS were constructed. 
The crux is that the information stored in the name (e.g. coordinate values)  used to preform various manipulations on the data, in addition to distance computation.
What if in planar graphs, or even in completely general metric spaces, we could choose the names of the metric points, or alternatively assigning each point a short label,  
would it be possible to construct efficient NNS data structures?

To answer this question, we introduce the  \emph{labeled $t$-NNS} problem.
\begin{definition}[Labeled $t$-NNS] Consider an $N$-point metric space $(X,d_{X})$, where one can assign to each point $x\in X$ an arbitrary short label $l_{x}$. 
Given a subset $P\subseteq X$ of size $n$ (unknown in advance) together with their labels $\{l_x\}_{x\in P}$ (but without access to $(X,d_X)$ or any additional information) the goal is to construct a NNS search data structure as follows: given a query $q\in X$ together with its assigned label $\ell_q$, the data structure will return a $t$-approximate nearest neighbor $p\in P$: $d_X(p,q)\le t\cdot\min_{x\in P}d_X(x,q)$.
The parameters of study are: label size,  data structure size, query time, and approximation factor $t$.
	
We also consider the scenario where the set $P$ is changing dynamically: points are added and removed from $P$.
Here we are required to maintain a data structure for $P$, while minimizing the update time (as well as all the other parameters).\footnote{For example, suppose that we have a NNS data structure for a set $P$. Dynamic NNS, should be able to efficiently update the data structure to work w.r.t. a slightly updated set $P'=P\cup\{x\}\setminus\{y\}$ instead of $P$.}
\end{definition}
In the labeled NNS model we get to assign a short label (alternatively choose a name) for each point in a big metric space $(X,d_X)$.
These labels try to imitate the natural hint provided by the name of the points themselves in metric spaces with implicit distance representation.
The main object of study here is the trade-off between label size, and the approximation of the resulting NNS.
A trivial choice of label for each point $x$ will be simply to store distances to all other points. However the label size $\Omega(N)$ is infeasible.
A more sophisticated solution is the following: fix constants $k,t\in\N$, and embed all the points in $(X,d_X)$ into $d=\tilde{O}(N^{\frac1k})$-dimensional $\ell_\infty$ \cite{Mat96,ABN11}. That is we assign each point $x$ a vector $v_x\in\R^d$ such that $\forall x,y\in X$, $d_X(x,y)\le \|v_x-v_y\|_\infty\le(2k-1)\cdot d_X(x,y)$, and use the vectors as labels.
Given an $n$ point subset $P\subseteq X$ with its respective labels (vectors), use Indyk's NNS \cite{Indyk01infty} over $\{v_x\}_{x\in P}$ to construct a NNS data structure $\mathbb{D}_{\rm Ind}$ with approximation factor $O(\log_{1+\frac1t}\log d)=O(t\cdot\log\log N)$ w.r.t. the $\ell_\infty$ vectors, space $\tilde{O}(d\cdot n^{1+\frac1t})=\tilde{O}(N^{\frac1k}\cdot n^{1+\frac1t})$ and query time $\tilde{O}(n^{1+\frac1k})$. 
Given a query $q$, we will simply query $\mathbb{D}_{\rm Ind}$ on the vector $v_q$, and on answer $v_p$ will return $p$. Note that the query time and space are the same as above, while the approximation factor will be $O(k\cdot t \cdot \log\log N)$.

\ourresults
Our results for the labeled $t$-NNS are summarized in \Cref{tab:LabeledNNSFamilies}.
We begin by proving meta \Cref{thm:rootedToNNS}  showing that $(\tau,\rho)$-rooted LSO implies a labeled $\rho$-NNS with label size $O(\tau)$, space $O(n\cdot\tau)$, query time $O(\tau)$, and update time $O(\tau\cdot \log\log N)$. As a result we conclude efficient labeled $(1+\eps)$-NNS data structures for fixed minor free graphs (and planar), and exact labeled NNS for treewidth graphs.
Another interesting corollary is an efficient labeled NNS for metrics with small \emph{correlation dimension} (a generalization of doubling, see \cite{CG12}).

\begin{table}[]
	\scalebox{0.9}{	\begin{tabular}{|l|l|l|l|l|l|}
			\hline
			Family & stretch  & label & query time                &  update time & Ref\\ \hline
			
			Planar  / {\small Minor free} & $1+\eps$ & $O(\eps^{-1}\log^{2}N)$ & $O(\eps^{-1}\log^{2}N)$ & $\eps^{-1}\cdot\tilde{O}(\log^{2}N)$  &
			\cororef{cor:PlanarNNS}\\\hline
			
			Treewidth $k$ & $1$ & $O(k\log N)$  & $O(k\log N)$  & $k\cdot\tilde{O}(\log N)$  &
			\cororef{cor:TreeWidthNNS}\\\hline
			
			Correlation dim. $k$ & $1+\eps$ & $\tilde{O}_{k,\eps}(\sqrt{N})$ & $\tilde{O}_{k,\eps}(\sqrt{N})$ & $\tilde{O}_{k,\eps}(\sqrt{N})$  &
			\cororef{cor:CorrelationNNS}\\\hline
			
			Ultrametric & $1$ & $O(\log N)$ & $O(\log\log N)$ & $O(\log\log N)$ & \theoremref{lem:UltraDeterministic}\\\hline
			
			\multirow{2}{*}{General Metric} & $8(1+\eps)k$ & $O(\frac{k}{\eps}N^{\frac{1}{k}}\cdot\log N)$
			&  $O(\frac1\eps\cdot\log\log N)$  &      $O(\frac{k}{\eps}N^{\frac1k}\cdot\log\log N)$        & \theoremref{thm:NNSfromRamsey}\\\cline{2-6}

			&$t<2k+1$ & $\tilde{\Omega}(N^{\frac1k})$ & arbitrary & arbitrary&\theoremref{thm:NNS-LB}\\\hline

			Doubling dim. $d$ & $t$ & {\small $2^{O(\nicefrac{d}{t})}\cdot\tilde{O}(d)\cdot\log N$} & {\small $2^{O(\nicefrac{d}{t})}\cdot\tilde{O}(d)\cdot{\footnotesize\log\log N}$ }& {\small $2^{O(\nicefrac{d}{t})}\cdot\tilde{O}(d)\cdot{\footnotesize\log\log N}$ }
			
			& \cororef{cor:doublingNNS}\\\hline

	\end{tabular}}
	\caption{Labeled NNS data structures for different families. 
		The second to last line (\Cref{thm:NNS-LB}) is a lower bound.
		Space is measured in machine words.		
		The label size and query time always equal. The space in all the cases above equals $n$ times the label size.\label{tab:LabeledNNSFamilies}}
\end{table}

Next, we prove a meta \Cref{thm:triangleToNNS}, showing that $(\tau,\rho)$-triangle LSO implies a labeled $2\rho$-NNS with label size $O(\tau\cdot\log N)$, space $O(n\cdot\tau\cdot\log N)$, and query and update time $O(\tau\cdot \log\log N)$.
We conclude an efficient labeled NNS for graphs with large doubling dimension.
For the high-dimensional Euclidean space, approximate nearest neighbor search was extensively studied (see the survey \cite{AIR18}, and additional discussion in \Cref{subsec:Related}). 
However, for the case of doubling metrics, NNS never went beyond $1+\eps$ approximation. In particular, in all existing solutions the query time and space have exponential dependence on the dimension (see \Cref{subsec:Related}). Thus ours are the first results in this regime, removing ``the curse of dimensionality''.

As an additional corollary of the triangle LSO to labeled NNS meta theorem one can derive a NNS of for general metric spaces which considerably improved upon the labeled NNS based on \cite{Mat96}+\cite{Indyk01infty} discussed above. 
However, the query time turns out to be somewhat large.
We provide direct constructions for labeled NNS for general metrics, getting label size $\tilde{O}(\eps^{-1}\cdot N^{\frac1k})$, stretch $8(1+\eps)k$ and very small query time: $O(\eps^{-1}\cdot\log\log N)$ (\Cref{thm:NNSfromRamsey}).
We show that the standard information theoretic bound applies for the labeled NNS as well, specifically, for stretch $t<2k+1$, the label size must be $\tilde{\Omega}(n^{\frac1k})$ (regardless of query time, \Cref{thm:NNS-LB}).
Finally, we put special focus on the regime where the stretch is $O(\log N)$. We obtain labeled NNS scheme with very short label and small query time. Most notably, assuming polynomial aspect ratio, and allowing the bound on the label to be only in expectation, we can obtain $O(1)$ label size, and $O(\log\log N)$ query time.
See \Cref{tab:GeneralNNS} for a summary.

\subsection{Spanners}
Given a metric space $(X,d_X)$, a metric \emph{spanner} is a graph $H$ over $X$ points, such that that the shortest path metric $d_H$ in $H$, closely resembles the metric $d_X$.  
Formally, a \emph{$t$-spanner} for $X$ is a weighted graph $H(X,E,w)$ that has  $w(u,v) = d_X(u,v)$ for every edge $(u,v)\in E$ and $d_H(x,y) \leq t\cdot d_X(x,y)$ for every pair of points $x,y\in X$. \footnote{Frequently the literature is concerned with graph spanners, where given a graph $G=(V,E,w)$ the goal is to find a subgraph $H$ preserving distances. Here we study metric spanners, where there is no underlying graph.}
The classic parameter of study is the trade-off between stretch and sparsity (number of edges). 
Alth{\"{o}}fer \etal \cite{ADDJS93} showed that every $n$ point metric space admits a  $2k-1$ spanner with $O(n^{1+\frac1k})$ edges, while every set of $n$ points in $\R^d$, or more generally metric space of doubling dimension $d$, admits a $(1+\eps)$-spanner with $n\cdot\eps^{-O(d)}$ edges \cite{DHN93,GGN06}. 
We refer to the book \cite{NS07}, and the survey \cite{ABSHJKS20} for an overview.

\paragraph*{Path Reporting Low Hop Spanners}
Recently, Kahalon, Le, Milenkovic, and Solomon \cite{KLMS22} studied \emph{path reporting low-hop spanners}.
While a $t$-spanner guarantees that a ``short'' path exists between every two points, such a path might be very long, and finding it is a time consuming operation.
A path reporting $t$-spanner, is a spanner accompanied with a data structure that given a query pair $\{x,y\}$, efficiently retrieves a path between $x$ and $y$ (of total weight $\leq t\cdot d_X(x,y)$).
A path $P$ with $h$ edges is called an $h$-hop path.
$H$ is an $h$-hop $t$-spanner of $X$ if for every $x,y\in X$, there is an $h$-hop path $P$ from $x$ to $y$ in $H$, such that $w(P)\leq t\cdot d_X(x,y)$.
Clearly, the time required to report a path is at least as large as the number of edges along the path, thus we wish to minimize the number of hops.

Low number of hops is a highly desirable property in network design, as each transmission causes delays, which are non-negligible when the number of transmissions is large \cite{AT11,BF18}. 
Low hop networks are also known to be more reliable \cite{BF18,WA88,RAJ12},
and used in electricity and telecommunications \cite{BF18}, and many other (practical) network design problems \cite{BCM99,BA92,GPSV03,GM03,PS03}.
Hop-constrained network approximation is often used in parallel computing \cite{Cohen00,ASZ20}, as the number of hops governs the number of required parallel rounds (e.g. in Dijkstra). 

Kahalon \etal \cite{KLMS22} constructed path reporting low-hop spanners for many spaces, such as path reporting $2$-hop $O(k)$-spanners with $O(n^{1+\frac1k}\cdot k\cdot \log n)$ edges, and $O(1)$ query time for general metrics, and path reporting $2$-hop $(1+\eps)$-spanners with $O(\frac{n}{\eps^2}\cdot \log^{3}n)$ edges and $O(\eps^{-2}\cdot \log^2 n)$ query time for planar graphs.
They showed a plethora of applications for their spanners: compact routing schemes, fault tolerant routing, spanner sparsification, approximate shortest path trees (SPT), minimum weight trees (MST), and online MST verification.

\begin{table}[t]
	\begin{tabular}{|c|c|c|c|c|}
		\hline 
		Metric family & stretch & sparsity & query time & Ref\tabularnewline
		\hline 
		\hline 
		
		\multirow{-0.5}{*}{\begin{tabular}[c]{@{}c@{}}General \\ Metric\end{tabular}} & $O(k)$ & $O\left(n^{1+\frac{1}{k}}\cdot k\cdot\log n\right)$ & $O(1)$&\cite{KLMS22} \tabularnewline
		\hhline{~|-|-|-|-|}
		&\graycell $2k-1$ & \graycell $O(n^{1+\frac{1}{k}}\cdot k)$  &\graycell $O(k)$ &\graycell\theoremref{thm:TZ_GeneralPathReportingSpanner}, \cite{TZ05}\tabularnewline\hhline{~|-|-|-|-|}

		&\graycell $(1+\eps)(4k-2)$ &\graycell $O(n^{1+\frac{1}{k}}\cdot\eps^{-1}\cdot k\cdot\log\Phi)$ &\graycell $O(\eps^{-1}\cdot\log2k)$ &\graycell \theoremref{thm:GeneralSmallQueryPathReportingSpanner} \tabularnewline\hline

		\multirow{-1}{*}{\begin{tabular}[c]{@{}c@{}}Doubling \\ Dimension  $d$\end{tabular}} & $1+\epsilon$ & $\epsilon^{-O(d)}\cdot n\cdot\log n$ & $\epsilon^{-O(d)}$&\cite{KLMS22}\tabularnewline
		\hhline{~|-|-|-|-|} 
		& \graycell$t$ &\graycell $2^{-O(\nicefrac dt)}\cdot \tilde{O}(n)$ &\graycell $2^{-O(\nicefrac dt)}\cdot d\cdot \log^2 t$ &\graycell \cororef{cor:LowHopSpannerFromLSO}\tabularnewline
		\hline

		\multirow{-1}{*}{\begin{tabular}[c]{@{}c@{}}Euclidean \\ $\R^d$\end{tabular}}  &\graycell $1+\epsilon$ &\graycell $O_d(\eps^{-d})\cdot\log\frac1\eps\cdot n\cdot\log n$ &\graycell $O_d(1)$&\graycell\cororef{cor:LowHopSpannerEuclidean}\tabularnewline
		\hhline{~|-|-|-|-|} 
		&\graycell $(1+\eps)t$ &\graycell $\tilde{O}(\frac{d^{1.5}}{\eps\cdot t})\cdot e^{\frac{2d}{t^{2}}\cdot(1+\frac{8}{t^{2}})}\cdot n\log n$ &\graycell $\tilde{O}(\frac{d^{1.5}}{\eps\cdot t})\cdot e^{\frac{2d}{t^{2}}\cdot(1+\frac{8}{t^{2}})}$&\graycell\cororef{cor:LowHopSpannerFromLSO}\tabularnewline
		\hline 
		
		$\ell_p^d$, $p\in[1,2]$&\graycell $t$ & \graycell$\tilde{O}(d)\cdot e^{O(\frac{d}{t^{p}})}\cdot n\log n$ &\graycell $\tilde{O}(d)\cdot e^{O(\frac{d}{t^{p}})}$&\graycell\cororef{cor:LowHopSpannerFromLSO}\tabularnewline
		\hline 
		
		$\ell_p^d$, $p\in[2,\infty]$&\graycell $2\cdot d^{1-\frac1p}$ & \graycell$\tilde{O}(d)\cdot n\log n$ &\graycell $\tilde{O}(d)$&\graycell\cororef{cor:LowHopSpannerFromLSO}\tabularnewline
		\hline 
		
		Tree & $1$ & $O\left(n\cdot\log n\right)$ & $O(1)$&\cite{KLMS22}\tabularnewline
		\hline 
		
		\multirow{-1}{*}{\begin{tabular}[c]{@{}c@{}}Fixed\\ Minor Free\end{tabular}}
		& $1+\epsilon$ & $O\left(n\cdot\epsilon^{-2}\cdot\log^{3}n\right)$ & $O(\eps^{-2}\cdot\log^{2}n)$&\cite{KLMS22}\\\hhline{~|-|-|-|-|}
		&\graycell $1+\epsilon$ &\graycell $O(n\cdot\eps^{-1}\cdot\log^2 n)$ &\graycell $O(\eps^{-1}\cdot\log n)$&\graycell\cororef{cor:LowHopMinor}\\\hline
		Planar &\graycell $1+\epsilon$ & \graycell$O(n\cdot\eps^{-1}\cdot\log^2 n)$ &\graycell $O(\eps^{-1})$&\graycell\cororef{cor:LowHopPlanar}\tabularnewline
		\hline 
		Treewidth $k$&\graycell $1$ & \graycell$O(n\cdot k\cdot\log n)$ &\graycell $O(k)$&\graycell\cororef{cor:LowHopTreewidth}\tabularnewline
		\hline 		
		
	\end{tabular}
	\caption{Summary of old and new results on path reporting low hop spanners. The spanners are for $n$ point metrics, and all report paths with hop bound $2$. Here $\epsilon\in(0,1)$, $k,d\ge1$ are integers. The space required for the path reporting data structure is asymptotically equal to the sparsity of the spanner in all the cases other than Euclidean space \Cref{cor:LowHopSpannerEuclidean}, where there is an additional additive factor of $O_d(\eps^{-2d})\log\frac1\eps$.  \label{tab:LowHopSpanners}}
\end{table}

\ourresults Kahalon \etal \cite{KLMS22} first constructed path reporting low hop spanners for trees, and then reduced each type of metric to the case of trees.
We observe that it is actually enough to reduce to the even simpler case of paths, and obtain a host of such spanners using LSO's. 
We then manually improve some of the resulting spanners, most notably we create path reporting $2$-hop $(1+\eps)$-spanner for planar graph with $O(\frac n\eps\log^2n)$ edges and $O(\frac1\eps)$-query time, and a path reporting $2$-hop $(1+\eps)$-spanner for points in $d$-dimensional Euclidean space with $O_d(\eps^{-d})\cdot \log\frac1\eps\cdot n\log n$ edges and $O_d(1)$-query time.
See \Cref{tab:LowHopSpanners} for a summary of old and new results.

\paragraph*{Fault tolerant spanners}
Levcopoulos, Narasimhan, and Smid \cite{LNS02} introduced the notion of a fault-tolerant spanner. A graph $H=(X,E_H,w)$ is an $f$-\emph{vertex-fault-tolerant} $t$-spanner of a metric space $(X,d_X)$, if for every set $F\subset X$ of at most $f$ vertices, it holds that $\forall u,v\notin F$, $d_{H\setminus F}(u,v)\le t\cdot d_{X}(u,v)$.
For general metrics, after a long line of work \cite{CLPR10,DK11,BDPW18,BP19Spanners,DR20,BDR21,Parter22}, it was shown that every $n$-vertex graph admits an efficiently constructible $f$-vertex-fault-tolerant $(2k-1)$-spanner with $O(f^{1-1/k}\cdot n^{1+1/k})$ edges, which is optimal assuming the Erd\"{o}s' Girth Conjecture~\cite{Erdos64}.
For $n$-points in $d$ dimensional Euclidean space, or more generally in a space of doubling dimension $d$, $f$-\emph{vertex fault tolerant} $(1+\eps)$-spanner were constructed with $\eps^{-O(d)}\cdot f\cdot n$ edges \cite{LNS02,Lukovszki99,Solomon14}.

Kahalon \etal \cite{KLMS22} initiated the study of low-hop fault tolerant spanners (previous constructions had $\Omega(\log n)$ hops). 
An $h$-hop $f$-fault tolerant $t$-spanner $H$ of a metric $(X,d_x)$ is a graph over $X$  such that for every set $F\subseteq X$ of at most $f$ vertices, for every $x,y\notin F$, the spanner without $F$: $H[X\setminus F]$ contains an $h$-hop path between $x$ to $y$ of weight at most $t\cdot d_X(x,y)$.
The advantages of such a spanner are straightforward, we refer to \cite{KLMS22} for a discussion.
Kahalon \etal constructed a $2$-hop $f$-fault tolerant spanner for doubling spaces with $n\cdot f^2\cdot\eps^{-O(d)}\cdot\log n$ edges. Note that a linear dependence on $f$ is necessary (as if a point has degree $\le f$ in $H$, we can delete all it's neighbors and get distortion $\infty$). It is natural to ask whether it is possible to construct such a spanner with only a linear dependence, and not quadratic as in   \cite{KLMS22}.

\ourresults One can easily construct $f$-fault tolerant $1$-spanner for the path graph with  $O(nf)$ edges. We observe that using $O(nf\log n)$ edges, it is possible to obtain  $f$-fault tolerant $2$-hop $1$-spanner for the path graph (note that $O(n\log n)$ edges are necessary for every $2$-hop spanner \cite{AS87,LMS22}).
Using the various old and new LSO's, we obtain a host of $f$-fault tolerant $2$-hop spanners for various metric spaces. Most notably, for metrics with doubling dimension $d$, we obtain an $f$-fault tolerant $2$-hop $(1+\eps)$-spanner with $\eps^{-O(d)}\cdot f\cdot n\cdot \log n$ edges, getting the desired linear dependence on $f$.
See \Cref{tab:FaultTolerantSpanners} for a summary of results.

\begin{restatable}[]{table}{TabFaultTolerant}
	\begin{tabular}{|l|l|l|l|}
		\hline
		Family                                                                    & Stretch & Edges & Ref \\ \hline
		\multirow{3}{*}{\begin{tabular}[c]{@{}l@{}}Doubling dimension $d$\end{tabular}} & $1+\eps$        & $\eps^{-O(d)}\cdot f^2\cdot n\cdot\log n$       &  \cite{KLMS22}   \\ \cline{2-4} 
		&  $1+\eps$       &  $\eps^{-O(d)}\cdot f\cdot n\cdot\log n$     & \cororef{cor:FaultTolerantSpannerFromLSO} \\ \cline{2-4} 
		&   $t$      &  $2^{-O(\nicefrac dt)}\cdot f\cdot \tilde{O}(n)$     &    \cororef{cor:FaultTolerantSpannerFromLSO} \\ \hline
		General Metric                                                            & $4k+\eps$        & $\tilde{O}(n^{1+\frac1k}\cdot f\cdot\eps^{-1})$      &    \cororef{cor:FaultTolerantSpannerFromLSO} \\ \hline
		\multirow{2}{*}{Euclidean $\R^d$}           & $1+\eps$        &   $O_d(\eps^{-d})\log\frac1\eps\cdot f\cdot n\cdot\log n$    &    \cororef{cor:FaultTolerantSpannerFromLSO} \\ \cline{2-4} 
		&  $(1+\eps)k$       &  $e^{\frac{2d}{k^{2}}\cdot(1+\frac{8}{k^{2}})}\cdot\tilde{O}(\frac{d^{1.5}}{\eps\cdot k})\cdot f \cdot n\cdot \log n$     &    \cororef{cor:FaultTolerantSpannerFromLSO} \\ \hline
		$\ell_p^d$, $p\in[1,2]$                                                                &   $k$      &  $e^{O(\frac{d}{k^{p}})}\cdot\tilde{O}(d)\cdot f \cdot n\cdot \log n$     &    \cororef{cor:FaultTolerantSpannerFromLSO} \\ \hline
		
		$\ell_p^d$, $p\in[2,\infty]$                                                                &   $2\cdot d^{1-\frac1p}$      &  $\tilde{O}(d)\cdot f \cdot n\cdot \log n$     &    \cororef{cor:FaultTolerantSpannerFromLSO} \\ \hline

		Treewidth k                                                              &   $2$      &  $O(n\cdot k\cdot f\cdot \log n)$     &    \cororef{cor:FaultTolerantSpannerFromLSO} \\ \hline
		Fixed Minor Free                                                          &  $2+\eps$       &  $O(\frac{n}{\eps}\cdot f\cdot \log^2 n)$     &    \cororef{cor:FaultTolerantSpannerFromLSO} \\ \hline
	\end{tabular}
	\caption{Summary of old and new results on $2$-hop $f$-fault tolerant spanners. The spanners are for $n$ point metrics, and all report paths with hop bound $2$. Here $\epsilon\in(0,1)$, $k,d\ge1$ are integers.  \label{tab:FaultTolerantSpanners}}
\end{restatable}

\paragraph*{Reliable spanners}
A major limitation of fault tolerant spanners is that the number of failures must be determined in advance. In particular, such spanners cannot withstand a massive failure.
One can imagine a scenario where a significant portion (even 90\%) of a network fails and ceases to function (due to, e.g., close-down during a pandemic), it is important that the remaining parts of the network (or at least most of it) will remain highly connected and functioning.  To this end, Bose \etal \cite{BDMS13} introduced the notion of a \emph{reliable spanner}. 
A $\nu$-reliable spanner is a graph such that for every failure set $B\subseteq X$, the residual spanner $H\setminus B$ is a $t$-spanner for $X\setminus B^+$, where $B^+\supseteq B$ is a superset of cardinality at most $(1+\nu)\cdot|B|$.
An oblivious $\nu$-reliable $t$-spanner is a distribution $\mathcal{D}$ over spanners, such  that for every failure set $B$, $H\setminus B$ is a $t$-spanner for $X\setminus B_H^+$, where the superset $B_H^+$ depends on both $B$ and the sampled spanner $H$. The guarantee is that the cardinality of $B^+_H$ is bounded by $(1+\nu)\cdot|B|$ in expectation, see \Cref{def:reliableSpanner}.

$\nu$-Reliable spanners were constructed for $d$ dimensional Euclidean and doubling spaces with $n\cdot\eps^{-O(d)}\cdot\tilde{O}(\log n)$ edges \cite{BHO19,BHO20,FL22} by a reduction from (classic) LSO's.
Oblivious reliable spanners were constructed also for planar, minor free, treewidth graphs, and general metrics \cite{FL22} by reductions from triangle, and rooted LSO's (as well as from sparse covers \cite{HMO21}).

\ourresults Our newly constructed triangle LSO's for high dimensional Euclidean, $\ell_p$ spaces, and doubling spaces, directly imply reliable spanners for these spaces, obtaining the first results without exponential dependence on the dimension. See \Cref{tab:reliableSpanners} for a summary.

\begin{restatable}[t]{table}{TabReliable}
	\begin{tabular}{|c|l|l|l|l|}
		\hline
		Family                                                                                    & stretch               & guarantee     & size                                                                                                                                 & ref                        \\ \hline
		\multirow{3}{*}{\begin{tabular}[c]{@{}l@{}}Euclidean\\ $(\R^d,\|\cdot\|_2)$\end{tabular}}& $1+\eps$              & Deterministic & $n\cdot\tilde{O}_d(\epsilon^{-7d})\nu^{-6}\cdot\tilde{O}(\log n)$
		& \cite{BHO19}                       \\ \cline{2-5} 
		& $1+\eps$              & Oblivious     & $n\cdot\tilde{O}_d(\epsilon^{-2d})\cdot\tilde{O}(\nu^{-1}(\log\log n)^{2})$ & \cite{BHO20}          \\ \hhline{~|-|-|-|-|}
		
		&\graycell $(1+\eps)t$              & \graycell Oblivious     &\graycell $\nu^{-1}\cdot e^{\frac{4d}{t^{2}}\cdot(1+\frac{8}{t^{2}})}\cdot\tilde{O}(n\cdot \frac{d^{3}}{\eps^{2}\cdot t^{2}})$ &\graycell \cororef{cor:RelaibleSpanner}        \\ \hline
		
		$\ell_p^d$ for $p\in[1,2]$ &\graycell  $t$              & \graycell Oblivious     &\graycell $\nu^{-1}\cdot e^{O(\frac{d}{t^{p}})}\cdot\tilde{O}(n\cdot d^{2})$ &\graycell \cororef{cor:RelaibleSpanner}        \\ \hline
		
		$\ell_p^d$ for $p\in[2,\infty]$ &\graycell  $2\cdot d^{1-\frac1p}$              & \graycell Oblivious     &\graycell $\nu^{-1}\cdot\tilde{O}\left(n\cdot d^{2}\right)$ &\graycell \cororef{cor:RelaibleSpanner}        \\ \hline

		\multirow{3}{*}{\begin{tabular}[c]{@{}l@{}}Doubling\\ dimension $d$\end{tabular}}         & $1+\eps$              & Deterministic & $n\cdot\epsilon^{-O(d)}\nu^{-6}\cdot\tilde{O}(\log n)$
		&    \cite{FL22}                        \\ \cline{2-5} 
		& $1+\eps$              & Oblivious     & $n\cdot\epsilon^{-O(d)}\nu^{-1}\log\nu^{-1}\cdot\tilde{O}(\log\log n)^{2}$
		&    \cite{FL22}                       \\ \hhline{~|-|-|-|-|}
		
		&\graycell $t$              &\graycell Oblivious     &\graycell $\tilde{O}(n)\cdot\nu^{-1}\cdot2^{O(\nicefrac{d}{t})}$                                               
		&\graycell    \cororef{cor:RelaibleSpanner}                   \\ \hline
		
		General metric& $8t+\eps$ & Oblivious     & $\tilde{O}(n^{1+\nicefrac{1}{t}}\cdot\eps^{-2})\cdot\nu^{-1}$ & \cite{FL22} \\ \hline
		
		Tree& $2$                   & Oblivious     & $n\cdot O(\nu^{-1}\log^{3}n)$                                                                                                  & \cite{FL22}       \\ \hline
		Treewidth $k$                                                                            & $2$                   & Oblivious     & $n\cdot O(\nu^{-1}k^{2}\log^{3}n)$                                                                                             & \cite{FL22}      \\ \hline
		Planar/Minor-free    & $2+\eps$                                                                                          & Oblivious     & $n\cdot O(\nu^{-1}\eps^{-2}\log^{5}n)$                                                                          & \cite{FL22}        \\ \hline
	\end{tabular}
	\caption{\small{Summary of previous and new constructions of $\nu$-reliable spanners.
	}}
	\label{tab:reliableSpanners}
\end{restatable}

\paragraph*{Light spanners}
An extensively studied parameter is the \emph{lightness} of a spanner, defined as the ratio $w(H)/w(MST(X))$, where $w(H)$ resp.~$w(MST(X))$ is the total weight of edges
in $H$ resp.~a minimum spanning tree (MST) of $X$. 
Obtaining spanners with small lightness (and thus total weight) is
motivated by applications where edge weights denote e.g.~establishing cost.
The best possible total weight that can be achieved in order to ensure finite stretch is the weight of an MST, thus making the definition of lightness very natural.

Obtaining light spanners for general graphs has been the subject of an active line of
work \cite{ADDJS93,CDNS95,ENS15,BFN19,CW18,FS20}, where the state of the are by Le and Solomon \cite{LS21Unified2} who obtained $(1+\eps)(2k-1)$ spanner with lightness $O(\eps^{-1}\cdot n^{\frac1k})$.
Light spanners were also studied extensively in Euclidean spaces (see the book \cite{NS07}), doubling spaces \cite{Gottlieb15,FS20,BLW19}, planar and minor free graphs \cite{Klein06,Klein08,BLW17,Le20,CFKL20}, and high dimensional Euclidean and doubling spaces \cite{HIS13,FN22,LS21Unified2}.

\ourresults Recently Le and Solomon \cite{LS21Unified2} obtain a general framework for constructing light spanners from spanner oracles. 
We construct new spanner oracles using LSO's. As a result we derive new light spanners, that improve the state of the art for high dimensional spaces (and match the state of the art for low dimensional doubling spaces). See \Cref{tab:resultsLightSpanners} for a summary of results.

\begin{restatable}[t]{table}{TabLight}
	\def\arraystretch{1.15}
	\begin{tabular}{|c|l|l|l|}
		\hline
		\multicolumn{1}{|l|}{Metric space}                      & \multicolumn{1}{c|}{\textbf{Stretch}}                 & \multicolumn{1}{c|}{\textbf{Lightness}} & \multicolumn{1}{c|}{\textbf{ref}}   \\ \hline
		\multirow{4}{*}{Euclidean space}      	& $O(t)$                                                & $O(n^{\frac{1}{t^2}}\cdot\log n\cdot t)$               & \cite{LS21Unified2} \\ \cline{2-4}
		
		& $O(t)$                                                & $ O(e^{\frac{d}{t^2}}\cdot\log^2 n\cdot t)$               & \cite{FN22} \\ \hhline{~|-|-|-|}
		
		& \graycell$(1+\eps)2t$                                                & \graycell$e^{\frac{d}{2t^{2}}\cdot(1+\frac{2}{t^{2}})}\cdot\tilde{O}(\frac{d^{1.5}}{\eps^{2}})\cdot\log n$               & \graycell\cororef{cor:highDimEuclidean} \\ \hhline{~|-|-|-|}
		& \graycell$(1+\eps)4t$                                                &\graycell $e^{\frac{d}{2t^{2}}\cdot(1+\frac{2}{t^{2}})}\cdot\tilde{O}(\frac{d^{1.5}}{\eps^{2}})\cdot\log^* n $               &\graycell \cororef{cor:highDimEuclidean} \\ \hline	
		
		\multirow{4}{*}{Doubling dimension $d$} & $O(t)$                                                & $O(2^{\frac{d}{t}}\cdot t\cdot\log^2n)$    &\cite{FN22} \\ \hhline{~|-|-|-|}
		
		&\graycell$O(t)$                                                &\graycell $2^{O(\nicefrac{d}{t})}\cdot d\cdot\log^{2}t\cdot\log^*n$    &\graycell\cororef{cor:LightSpannerDoubling}\\ \cline{2-4}
		
		& $d$                       & $O(d\cdot\log^2n)$         & \cite{FN22}\\ \hhline{~|-|-|-|}

		&\graycell $d$                       & \graycell$O(d\cdot\log^2d\cdot\log^*n)$         &\graycell\cororef{cor:LightSpannerDoubling} \\ \hline

		\multirow{2}{*}{$\ell^d_p$ for $p\in[1,2]$} & $t$                                                & $O(\frac{t^{1+p}}{\log^{2}t}\cdot n^{O(\frac{\log^{2}t}{t^{p}})}\cdot\log n)$    &\cite{FN22} \\ \hhline{~|-|-|-|}
		
		&\graycell$t$                                                &\graycell $e^{O(\frac{d}{t^{p}})}\cdot\tilde{O}(d\cdot t)\cdot\log^{*}n$    &\graycell\cororef{cor:LightSpannerLp12}\\ \hline
		
		$\ell^d_p$ for $p\in[2,\infty]$&\graycell$4\cdot d^{1-\frac1p}$                                                &\graycell $\tilde{O}(d^{2-\frac{1}{p}})\cdot\log^{*}n$    &\graycell\cororef{cor:LightSpannerLp2infty}\\ \hline
	\end{tabular}
	\caption{Summary of previews and new results of light spanners for high dimensional metric spaces. Interestingly, for $p\in[1,2]$ \cite{FL22} obtain lightness $O(\frac{t^{1+p}}{\log^{2}t}\cdot n^{O(\frac{\log^{2}t}{t^{p}})}\cdot\log n)$ regardless of dimension, which is superior to our \Cref{cor:LightSpannerLp12} for $d\gg\log n$.
	}\label{tab:resultsLightSpanners}
\end{restatable}

\subsection{Technical ideas}\label{subsec:techIdeas}

\paragraph*{Triangle LSO for high dimensional Euclidean space}
Our construction is very natural: partition the space randomly in every distance scale $\xi_i$ (for some large $\xi$) into clusters of diameter $\xi_i$, such that close-by points are likely to be clustered together. In the created ordering $\sigma$, points in each cluster will be ordered consecutively and recursively.
In particular, the ordering $\sigma$ will correspond to a laminar partition obtained by the clustering in all possible scales.
For a pair of points $x,y\in\R^d$ to be satisfied in the resulting ordering $\sigma$, they have to be clustered together in all the distance scales $\xi^i\ge t\cdot\|x-y\|_2$.

Our space partition in each scale is done using ball carving (ala \cite{AI08}): pick a uniformly random series of centers $z_1,z_2,\dots,$. Each points is assigned to the cluster of the first center at distance at most $R=\frac12\cdot\xi^i$. 
We show that a finite random seed of size $d^{O(d)}$ is enough to sample such a clustering (in all possible distance scales, simultaneously).
The probability that two points $x,y$ are clustered together is then equal to the ratio between the volumes of intersection and union of balls:
$\Pr[x,y\mbox{ clustered together}]=\frac{\text{Vol}_{d}\left(B(x,R)\cap B(y,R)\right)}{\text{Vol}_{d}\left(B(x,R)\cup B(y,R)\right)}\ge\Omega(\frac{1}{\sqrt{d}})\cdot\left(1-(\frac{\|x-y\|_{2}}{R})^{2}\right)^{\nicefrac{d}{2}}$.
We bound this ratio for the case $\|x-y\|_2\le \frac{R}{\sqrt{d}}$ using a lemma from \cite{CCGGP98} (see \Cref{fact:diffClustersSmall}).
For the general case, we prove that the ratio between these volumes is at least $\Omega(\frac{R}{\sqrt{d}\cdot\|p-q\|_{2}})\cdot(1-(\frac{\|p-q\|_{2}}{R})^{2})^{\frac{d}{2}}$
 (\Cref{fact:diffClustersLarge}), slightly improving a similar fact from \cite{AndoniThesis}, by a $\frac{R}{\|p-q\|_{2}}$ factor. This ratio eventually governs our success probability (when replacing $\nicefrac{R}{\|p-q\|_2}$ by twice the stretch $2t$). 
 The improved analysis of the volumes ratio is significant for the $O(\sqrt{d})$-stretch regime, improving the number of orderings to $\tilde{O}(d)$ (compered with $\tilde{O}(d^{1.5})$ orderings if we were using \cite{AndoniThesis}).
 
To generalize this construction to $\ell_p$ spaces, we use the exact same construction, replacing $\ell_2$ balls with $\ell_p$ balls. The volume ratio lemma from \cite{CCGG98} for close-by points is replaced by a crude observation (\Cref{obs:diffClustersSmallLp}) without any significant consequences to the resulting number of orderings. For the general case, we directly analyze the ratio of volumes for $\ell_p$-balls (our computation is similar to \cite{Nguyen13}). The rest of the analysis is the same. 

\paragraph*{Triangle LSO for doubling spaces}
Ultrametrics (\Cref{def:HST}) are trees with additional structure, where each ultrametric admits a $(1,1)$-triangle LSO.
$(\tau,\rho)$-ultrametric cover (\Cref{def:UltrametricCover}) of a metric space $(X,d_X)$ is a collection $\mathcal{U}$ of $\tau$ ultrametrics such that every pair $x,y\in X$ is well approximated by the ultrametrics: $d_X(x,y)\le\min_{U\in\mathcal{U}}d_{U}(x,y)\le \rho\cdot d_X(x,y)$. 
Filtser and Le \cite{FL22} showed that $(\tau,\rho)$-ultrametric cover implies $(\tau,\rho)$-triangle LSO. 
We construct $\left(2^{O(\nicefrac dt)}\cdot d\cdot\log^2 t,t\right)$-ultrametric cover for spaces with doubling dimension $d$, which implies \Cref{thm:LSOdoublingLargeStretch}.

Our starting point for constructing the ultrametric cover is Filtser's \cite{Fil19padded} padded partition cover (\Cref{def:PaddedPartitionCover}), which is a collection of $\approx 2^{O(\nicefrac dt)}$ space partitions where all clusters are of diameter at most $\Delta$, and every ball of radius $\frac{\Delta}{t}$ is fully contained in a single cluster in one of the partitions. 
We take a single partition from each distance scales, where the gap between the distance scales is somewhat large: $O(\frac{t}{\eps})$. Initially these partitions are unrelated, and we ``force'' them to be laminar (see \Cref{fig:Laminar}), while keeping the padding property. Each such laminar partition induces an ultrametric, and their union is the desired ultrametric cover.

\paragraph*{Labeled NNS}
Morally, given a $(\tau,\rho)$ LSO (or triangle LSO), the NNS label of every point is simply its position in each ordering. Given a query $q$, we simply find its successor and predecessor in each one of the orderings, one of them is guaranteed to be an approximate nearest neighbor (abbreviated ANN).
We can find the successor and predecessor in each ordering in $O(\log\log N)$ time using Y-fast trie \cite{Wil83}, it only remains to choose one of the $2\tau$ candidates to be the ANN.
To solve this problem we again deploy the LSO structure, and construct a $2$-hop $1$-spanner for the implicit path graph induced by each ordering. Specifically, each point will be associated with $O(\log N)$ edges (the name and weight of which will be added to the NNS label), where given two points $x\prec_\sigma y$, in $O(1)$ time we will be able to find a point $z$ such that  $x\preceq_\sigma z\preceq_\sigma y$ and $x$ and $y$ stored $\{x,z\}, \{y,z\}$ respectively (see \Cref{thm:2hopPath}).
Then $d_X(x,z)+d_X(z,y)$ will provide us the desired estimate of $d_X(x,y)$, which will be used to choose the ANN.

The case of rooted LSO is simpler- the label of each point $z$ will consist of its position in all the orderings $\sigma$ it belongs to, and the distance to the first point $x_\sigma$ (w.r.t. $d_X$).
Given a query $q$, for each ordering $\sigma$ containing $q$, 
the leftmost point  $y_\sigma\in P$ in the ordering will be a candidate ANN. We will estimate the distance from $q$ to $y_\sigma$ by $d_X(q,x_\sigma)+d_X(x_\sigma,y_\sigma)$, and return the point with minimum estimation. 

For general metrics, the number of orderings is polynomial, $N^{\frac1k}$ which results in similar NNS label size, and query time (following the approach above). While the NNS label essentially cannot be improved (\Cref{thm:NNS-LB}), we can significantly reduce the query time. 
Our solution is to use Ramsey trees \cite{BLMN05,MN07,Bar21,ACEFN20}, which are a collection of embeddings into ultrametrics $\mathcal{U}$ such that each point $x$ has a single home ultrametric $U_{x}\in\mathcal{U}$ which well approximate all the distances to $x$. We thus reduce the labeled NNS problem to ultrametrics, where it can be efficiently solved.
For the case of approximation factor $O(\log N)$ the required number of ultrametrics is $O(\log N)$, which leads us to label size $O(\log^2 N)$.
To reduce it even farther, we use the novel clan embedding \cite{FL21}, where instead of embedding the space $X$ into a collection of ultrametrics, we embed it into a single ultrametric (but where each point might have several copies). 
This allows us to reduce the label size to $O(\log N)$ (in expectation), and with one additional easing assumption (either polynomial aspect ratio or small failure probability) to even $O(1)$ label size, see \Cref{tab:GeneralNNS}.

\paragraph*{Path reporting low hop spanners}
A $(\tau,\rho)$-tree cover is similar to ultrametric cover discussed above, where the ultrametrics are replaced by trees.
Kahalon \etal \cite{KLMS22} first constructed path reporting low hop spanner for a tree metric,
and then for each metric space of interest, they considered it's tree cover, and constructed a path reporting low hop spanner for each tree in the cover. The spanner for the global metric is obtained by taking the union of all these spanners constructed for the trees in the cover. 
To report a queried distance, they simply computed the paths in all the trees, and returned the shortest observed path.

Thus Kahalon \etal idea is to reduce the problem to the fairly simple case of tree metrics. 
We reduce each metric space into the even simpler case of paths using LSO.
Given an LSO (or triangle LSO) we simply construct a path reporting $2$-hop path for each path associated with an ordering, and similarly to \cite{KLMS22}, check all the path spanners and return the shortest observed path. 
The resulting query time has linear dependence on the number of orderings.
The case of rooted LSO is simpler, where it is enough to add a single edge per ordering, to the leftmost point in the ordering.

Next we present some improvements to the query. First, for the case of Euclidean space (low dimensional), we observe that given two points $x,y$, the ordering satisfying them could be computed in $O_d(1)$ time, implying that we don't need to check all the orderings, and return a $2$ hop path in $O_d(1)$ time.
Next, for the case of planar graphs, using the structure of cycle separators (which are used to construct the rooted LSO), in $O(1)$ time one can narrow the number of potential orderings to $O(\eps^{-1})$, implying $O(\eps^{-1})$ query time.
For general graphs we observe that the celebrated Thorup Zwick distance oracle \cite{TZ05} can be used to produce a path reporting $2$-hop $(2k-1)$-spanner with $O(n^{1+\frac1k}\cdot k)$ edges and $O(k)$ query time.
Finally, we use sparse covers \cite{AP90} to obtain an exponential improvement in the query time, while incurring a factor $2$ increase in the stretch.

\paragraph*{Fault tolerant spanners}
The $2$-hop $f$-fault tolerant spanner for doubling metrics by Kahalon \etal \cite{KLMS22} is based on a quite sophisticated tool of robust tree cover.
We have a superior, and an extremely simple construction.
First we observe that the path graph has a $2$-hop $f$-fault tolerant $1$-spanner with $O(nf\log n)$ edges. Indeed, add edges from all the vertices to the middle $f+1$ vertices, delete the middle vertices and recurse on each side.
We then apply this construction on each of the path graphs induced by the LSO (or triangle LSO) to obtain our results.
The case of  rooted LSO is even simpler: for every path it is enough to add all the edges to the first $f+1$ points.

\subsection{Related Work}\label{subsec:Related}
Most of the literature on NNS studies the $(r,c)$-Near Neighbor problem, where given a query $q$, the goal is to find a neighbor $p$ at distance at most $t\cdot r$ from $q$, or declare that there are no points at distance at most $r$ from $q$.
Har-Peled, Indyk, and Motwani \cite{HIM12} gave a general reduction from  the $(r,c)$-Near Neighbor problem to $(1+\eps)\cdot c$-NNS with a blow up of $O(\log n)$ in the query time, $O(\eps^{-1}\cdot\log n)$ in space, and $O(\log n)$ in the failure probability. 
In this paper we do not go though this reduction, and solve the actual problem directly. In particular, for general metric spaces we can get query time much better that the additional $O(\log n)$ inflicted by the reduction.

Nearest neighbor search for doubling dimension $d$ spaces is well studied \cite{KL04,HM06,CG06NNS,HK13}. One can find a $(1+\eps)$-approximated nearest neighbor with space $2^{O(d)}\cdot n$, and query time $2^{O(d)}\log n+\eps^{-O(d)}$ both being exponentially dependent on the dimension. 
Note that one can retrieve similar performance in the labeled NNS model using the approach in this paper. No previous work attempt to solve the problem for  moderately large doubling dimension. 

For the high-dimensional Euclidean space, approximate nearest neighbor search was extensively studied (see the survey \cite{AIR18}), where the state of the art is by Andoni and Razenshteyn \cite{AR15} who solve the $r$-near neighbor problem with approximation $c$, space $n^{1+\rho}+O(nd)$ and query time $n^{\rho}+dn^{o(1)}$, where $\rho=\frac{1}{2c^2-1}$.
Using the triangle-LSO's presented in this paper, we can obtain a similar result (though with worse constants).
Specifically, first use the JL transform \cite{JL84} to reduce the dimension to $O(\log n)$, then use the (hard-coded) $(\tilde{O}(e^{\frac{d}{c^2}}),c)\approx (\tilde{O}(n^{\frac{1}{c^2}}),O(c))$-triangle LSO, and proceed as in \Cref{thm:triangleToNNS} (given a query, check it's neighbors in all orderings and take the closest one).  The result will be an $O(c)$ approximation with $\tilde{O}(n^{1+\frac{1}{c^2}})$ space and $\tilde{O}(n^{\frac{1}{c^2}})$ query time.

Sparse (and light) spanners were constructed efficiently in different computational models such as  standard RAM  \cite{ES16,EN19,ADFSW22,LS22}, LOCAL \cite{KPX08}, CONGEST \cite{EFN20}, streaming \cite{Elkin11,Baswana08,BS07,KW14,AGM12Spanners,FKN21}, massive parallel computation (MPC) \cite{BDGMN21} and dynamic graph  algorithms~\cite{BKS12, BFH19}, and online \cite{BFKT22}.
Path reporting distance oracles were studied by Elkin and Pettie \cite{EP15} (see also \cite{ENS15}), who constructed a distance oracle of space $O(n^{1+\frac1k})$, and that given a query pair, in $O(n^\eps)$ time returns a path with stretch $O_\eps(k)$.

An extensively studied related notion to low hop metric spanners is \emph{hop-set}. Given a graph $G=(V,E,w)$, a hop-set is a set of edges $H'$ that when added to $G$, every pair of vertices have a low-hop path.
Formally, an  $(t,h)$-hop-set is a set $H'$ of edges such that $\forall u,v\in V$, $d_G(u,v)\le d_{G\cup H'}(u,v)$, and $G\cup H'$ contains an $h$-hop path $P$ between $u,v$ of weight $w(P)\le  t\cdot d_G(u,v)$.
Elkin and Neiman \cite{EN19hop} constructed $\left(1+\eps,\red{\left(\frac{\log k}{\eps}\right)^{\log k-2}}\right)$ hop-sets with $O_{\eps,k}(n^{1+\frac1k})$ edges.
We refer to  the survey \cite{EN20} for further details.
Another related notion is  low-hop emulator \cite{ASZ20}, where the $h$-hop distance is required to respect the triangle inequality.
 
 \section{Preliminaries}
 All logarithms (unless explicitly stated otherwise) are in base $2$.
 We use $\tilde{O}$ notation to hide poly-logarithmic factors. That is $\tilde{O}(s)=O(s)\cdot\log^{O(1)}(s)$.
 Similarly, we use $O_d$ notation to hide arbitrary factors in $d$. That is $O_d(s)=O(s)\cdot f(d)$ for some arbitrary function $f:\N\rightarrow\N$.
 
 In the data structures constructed in this paper, the space is counted in machine words and not bits (unless explicitly stated otherwise). Machine word is generally thought of as being $O(\log N)$ bits, and can store number, identifier, pointer etc.
 In addition, we consider storing a given label as taking $O(1)$ time (as it is just storing a pointer to the place where it is actually stored).
 
 Let $(X,d_X)$ be a metric space.  The aspect ratio (sometimes called spread)  $\Phi = \frac{\max_{x,y\in X}d_X(x,y)}{\min_{x\not=y \in X}d_X(x,y)}$ is the ratio between maximum and minimum distances.
 This paper studies metric spaces, however we also study graphs. Note that a metric space can be viewed as a complete undirected graph with edge weights respecting the triangle inequality.
 Given a weighted graph $G=(V,E,w)$ where the weight function $w:E\rightarrow\R_{\ge0}$ gets positive values, the shortest path metric $d_G$ is a metric space where the distance between two vertices $u,v$ equals to the minimal weight of a path from $u$ to $v$.
 Frequently the literature is concerned with graph spanners, where given a graph $G=(V,E,w)$ the goal is to find a subgraph $H$ preserving distances. Here we study metric spanners, where there is no underlying graph. For example, when we discuss spanners for planar graphs (or other topologically restricted graph families), we assume that the underlying metric is the shortest path metric of a planar graph $G$, but we are allowed to take any edge, regardless of whether it is in $G$. 
 Interesting to note, for the case of the lightness parameter, if the metric comes from a shortest path metric of a graph $G$, given a metric spanner $H$ for $d_G$, we can always turn $H$ into a graph spanner $H'$ (a subgraph of $G$) of the same lightness as $H$ (this is by replacing each edge $e$ in $H$ by the shortest path between its endpoints in $G$).
 Thus our result in \Cref{subsec:light} hold for graph spanners as well.

 A metric $(X,d_X)$ has doubling dimension $d$ if every ball of radius $r$ can be covered by at most $2^{d}$ balls of radius $r/2$. 
 Doubling dimension is a generalization of the concept of Euclidean dimension to general metric spaces. In particular, the $d$-dimensional Euclidean space has doubling dimension $\Theta(d)$.
 A key property of doubling metrics (see, e.g., \cite{GKL03}) is the \emph{packing property} that grantees that for a subset of points  $S \subseteq X$ with minimum interpoint distance $r$ that is contained in a ball of radius $R$, it holds that  
 $|S| = \left(\frac{2R}{r}\right)^{O(d)}$.

The path graph $P_{n}$ contains $n$ vertices
$v_{1},v_{2},\dots,v_{n}$ and there is (unweighted) edge between
$v_{i}$ and $v_{j}$ iff $|i-j|=1$. A path $v_{i_{1}},v_{i_{2}},\dots v_{i_{s}}$
is monotone iff for every $j$, $i_{j}<i_{j+1}.$ Note that if a spanner
$H$ contains a monotone path between $v_{i},v_{j}$ then $d_{H}(v_{i},v_{j})=d_{P_{n}}(v_{i},v_{j})=|i-j|$.
The following result is usually attributed to Alon and Schieber \cite{AS87} (see also \cite{BTS94,Solomon13}), we provide a proof as we will use properties not always explicitly stated.
\begin{theorem}[\cite{AS87}]\label{thm:2hopPath}
	The path graph $P_n=(v_1,v_2,\dots,v_n)$ admits a $2$-hop $1$-spanner $H$ with $O(n\cdot\log n)$ edges. Moreover, there is a path reporting data structure with constant query time.
	Specifically, each vertex $v_i$ is responsible for a subset of edges $E_i$ incident on $v_i$ where $|E_i|=O(\log n)$. Given a pair $i<j$, in $O(1)$ time, we can find an $l\in[i,j]$ such that $\{v_i,v_l\}\in E_i$ and $\{v_j,v_l\}\in E_j$.
\end{theorem}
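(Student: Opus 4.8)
The plan is the classical divide\hyp{}and\hyp{}conquer / balanced binary tree construction. First I would reduce to $n=2^{h}$ by padding the index range with dummy coordinates (this changes $n$ by at most a factor $2$ and does not alter distances among $v_1,\dots,v_n$, nor the pairs we ever query). Build a complete binary tree $\mathcal{T}$ whose leaves are $v_1,\dots,v_{2^{h}}$ in order and whose internal nodes are the dyadic subintervals of $[1,2^{h}]$; to an internal node $w$ with interval $[a,b]$, $b-a+1=2^{\ell}$, associate the \emph{representative} index $\midP(w)=a+2^{\ell-1}-1$, i.e.\ the last index of $w$'s left child. Define the spanner $H$ by: for every internal node $w$ with $\midP(w)\le n$ and every index $i$ in $w$'s interval with $i\le n$ and $i\neq\midP(w)$, put the edge $\{v_i,v_{\midP(w)}\}$, of weight $|i-\midP(w)|=d_{P_n}(v_i,v_{\midP(w)})$, into $E_i$. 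Since every leaf has exactly $h=\lceil\log_2 n\rceil$ ancestors, $|E_i|=O(\log n)$, so $H$ has $O(n\log n)$ edges.

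Next I would check the two\hyp{}hop, stretch\hyp{}$1$ guarantee. Fix $i<j\le n$ and let $w$ be the least common ancestor of the leaves $v_i,v_j$ in $\mathcal{T}$; its interval is the shortest dyadic interval containing both $i$ and $j$, hence $i$ lies in $w$'s left child and $j$ in $w$'s right child, so $i\le\midP(w)<j$. In particular $l:=\midP(w)\in[i,j]$ and $l\le n$. Both $v_i$ and $v_j$ lie in $w$'s interval, so $\{v_i,v_l\}\in E_i$ and $\{v_j,v_l\}\in E_j$ (when $l=i$ the first edge is vacuous and one returns the single edge $\{v_j,v_l\}=\{v_j,v_i\}\in E_j$). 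The path $v_i\to v_l\to v_j$ is monotone, has at most two hops, and has weight $(l-i)+(j-l)=j-i=d_{P_n}(v_i,v_j)$; thus $H$ is a $2$-hop $1$-spanner and the reported path is exact.

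It remains to realize the query in $O(1)$ time, i.e.\ to compute $l=\midP(w)$ from $i<j$ without descending $\mathcal{T}$. Taking indices $0$-based and writing $i,j$ as $h$-bit integers, the least common ancestor splits precisely at the most significant bit position $p=\lfloor\log_2(i\oplus j)\rfloor$ at which $i$ and $j$ differ; since $i<j$, bit $p$ of $i$ equals $0$, so $\midP(w)$ — the common prefix of $i$ and $j$, then a $0$ at position $p$, then all ones — equals $i\mathbin{|}(2^{p}-1)$. On the word RAM (word size $\Theta(\log N)$) the value $p$ is produced by one most\hyp{}significant\hyp{}bit operation, or by a lookup into a precomputed table of total size $O(n)$; hence $l$, and therefore the path $(v_i,v_l,v_j)$, is output in $O(1)$ time.

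The construction is routine; the only genuinely delicate point is this last step — converting ``find the separating midpoint of $i$ and $j$'' from an $O(\log n)$ tree descent into an $O(1)$ word\hyp{}RAM primitive — together with the small bookkeeping that padding to a power of two never inserts an edge to a dummy vertex into any $E_i$. The latter is handled by retaining only representatives of index $\le n$ and observing that for any genuine pair $i<j\le n$ the representative of their least common ancestor satisfies $\midP(w)<j\le n$, so it is always a real vertex and the two required edges are present.
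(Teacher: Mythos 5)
Your proposal is correct and follows essentially the same route as the paper: recursive/dyadic midpoint edges giving $O(\log n)$ edges per vertex, exact monotone two-hop paths through the midpoint of the least dyadic interval separating $i$ and $j$, and an $O(1)$ query via the most-significant differing bit of the two indices (your $l=i\mathbin{|}(2^{p}-1)$ in $0$-based indexing is the same value as the paper's prefix-then-zeros formula applied to $a$ and $b-1$). No gaps; your explicit handling of the $l=i$ degenerate case and of padding to a power of two are fine.
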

\begin{proof}
	For simplicity we will assume that $n=2^\delta$ for some integer $\delta$. Later we can simply construct the spanner for a power of $2$ in $[n,2n)$.
	Assume by induction that for every $\delta'<\delta$, a $2$-hop $1$-spanner with $n'\log n'+1=2^{\delta'}\cdot\delta'+1$ edges and the data structure above, exist for the path graph $P_{n'}=P_{2^{\delta'}}$.
	The base case with $2$ vertices and $\delta'=0$ clearly holds.
	Next consider $n=2^\delta$. Add edges from all the vertices to $v_{2^{\delta-1}}$ (and for every vertex $v_i$ add $\{v_i,v_{2^{\delta-1}}\}$ to $E_i$). Next use the induction hypothesis on the graph induced by $(v_1,v_2,\dots,v_{2^{\delta-1}})$, and the graph induced by $(v_{2^{\delta-1}+1},\dots,v_{2^{\delta}})$. By the induction hypothesis, the total number of edges is bounded by 
	\[
	2^{\delta}-1+2\cdot\left(2^{\delta-1}\cdot(\delta-1)+1\right)=2^{\delta}\cdot\delta+1=n\cdot\log n+1~.
	\]
	As the depth of the recursion is $\log n$, clearly for every $i$, $|E_i|\le\log n$.
	To report a $2$ hop path between $v_a$ and $v_b$, we simply need to find the maximal $i$ such that $a\le c\cdot2^i\le b-1$ for some integer $c$. The path will be $(v_a,v_{c\cdot2^i},v_b)$, where $\{v_a,v_{c\cdot2^i}\}\in E_a$ and $\{v_b,v_{c\cdot2^i}\}\in E_b$. To find the number $c\cdot2^i$, we can simply look on the binary representation of $a=\alpha_1\alpha_2\dots\alpha_s$ and $b-1=\beta_1\beta_2\dots\beta_s$, and let $j$ be the most significant bit where they differ, then $c\cdot2^i=\beta_1\dots\beta_j0\dots0$. This operation take $O(1)$ time in the standard RAM computational model.
\end{proof}

\section{LSO's for high dimensional Euclidean space}\label{sec:LSOForEuclidean}

This section is devoted to proving \Cref{thm:LSOEuclideanLargeStretch}, restated bellow for convenience:
\HighDimLSO*
In \Cref{sec:highDimLSOconstruction,subsec:BallIntersection,subsec:RdLSOproof} we describe the construction of our LSO, and prove it's properties. In \Cref{sec:ell-p-LSO} we generalize the construction for $\ell_p$ spaces for $p\in[1,2]$.

\subsection{The construction of the LSO}\label{sec:highDimLSOconstruction}

Fix a stretch parameter $\red{t\ge1}$.
Given a distance scale $w>0$ we define the
following random clustering procedure (introduced by Andoni and Indyk \cite{AI08}), where in addition there will be an order over the clusters. 
The point set $4w\cdot\mathbb{Z}^{d}=\left\{\left(4w\cdot \alpha_1,\dots,4w\cdot \alpha_d\right)\in\R^d~\mid~\alpha_1,\dots,\alpha_d\in\Z\right\}$
is the $d$-dimensional axis parallel lattice with side length $4w$. $G^{d,w}=\cup_{x\in4w\cdot\mathbb{Z}^{d}}B(x,w)$
is the union of all the balls with radius $w$ and center at a lattice point
$4w\cdot\mathbb{Z}^{d}$. 
Given a point $v\in\mathbb{R}^{d}$,
$v+G^{d,w}$ is simply $G^{d,w}$ shifted by $v$, we will denote
this set by $G_{v}^{d,w}$. 
We create clusters $\mathcal{C}$ of $\mathbb{R}^{d}$ as
follows: 
let $v_{1},v_{2},\dots,v_{l}\in[0,4\cdot w)^{d}$ be random points chosen independently (i.i.d) and uniformly, until $\bigcup_{i=1}^l G_{v_{i}}^{d,w}=\mathbb{R}^{d}$.
By a standard volume argument (see also Lemma 3.2.2. in \cite{AndoniThesis}),
$l=2^{O(d\log d)}$ with high probability. For every $v_{i}$, and
$u\in\mathbb{Z}^{d}$ there will be a cluster $C_{v_{i},u}$ containing
all the points $p$ in the ball $B\left(v_{i}+4w\cdot u,w\right)$
with center $v_{i}+4w\cdot u$ and radius $w$, such that there is
no $v_{i'}$ with $i'<i$ and $u'\in\mathbb{Z}^{d}$ such that $p\in B\left(v_{i'}+4w\cdot u',w\right)$.
The set of created clusters is denoted $\mathcal{C}_{w}$. 
See \Cref{fig:AI08clustering} for illustration.
We define
an order $\preceq_{\mathcal{C}^{w}}$ over the clusters where $C_{v_{i},u}\preceq_{\mathcal{C}^{w}}C_{v_{i'},u'}$
if either $i<i'$, or $i=i'$ and $u\preceq_{\text{lex}}u'$ in the
lexicographical order (defined on vectors in $\mathbb{Z}^d$).

\begin{SCfigure}[][t]\caption{\it 
		Illustration of the clustering process for $w=\frac14$ and $d=2$.
		First the point $v_1\in[0,4\cdot w)^d=[0,1)^2$ is chosen, and we add the clusters $G^{2,\frac14}_{v_1}$ (colored in green).
		The algorithm continues to create clusters in these manner until the entire space is clustered.\\~\\~\\~\\
		\label{fig:AI08clustering}}
	\includegraphics[width=.75\textwidth]{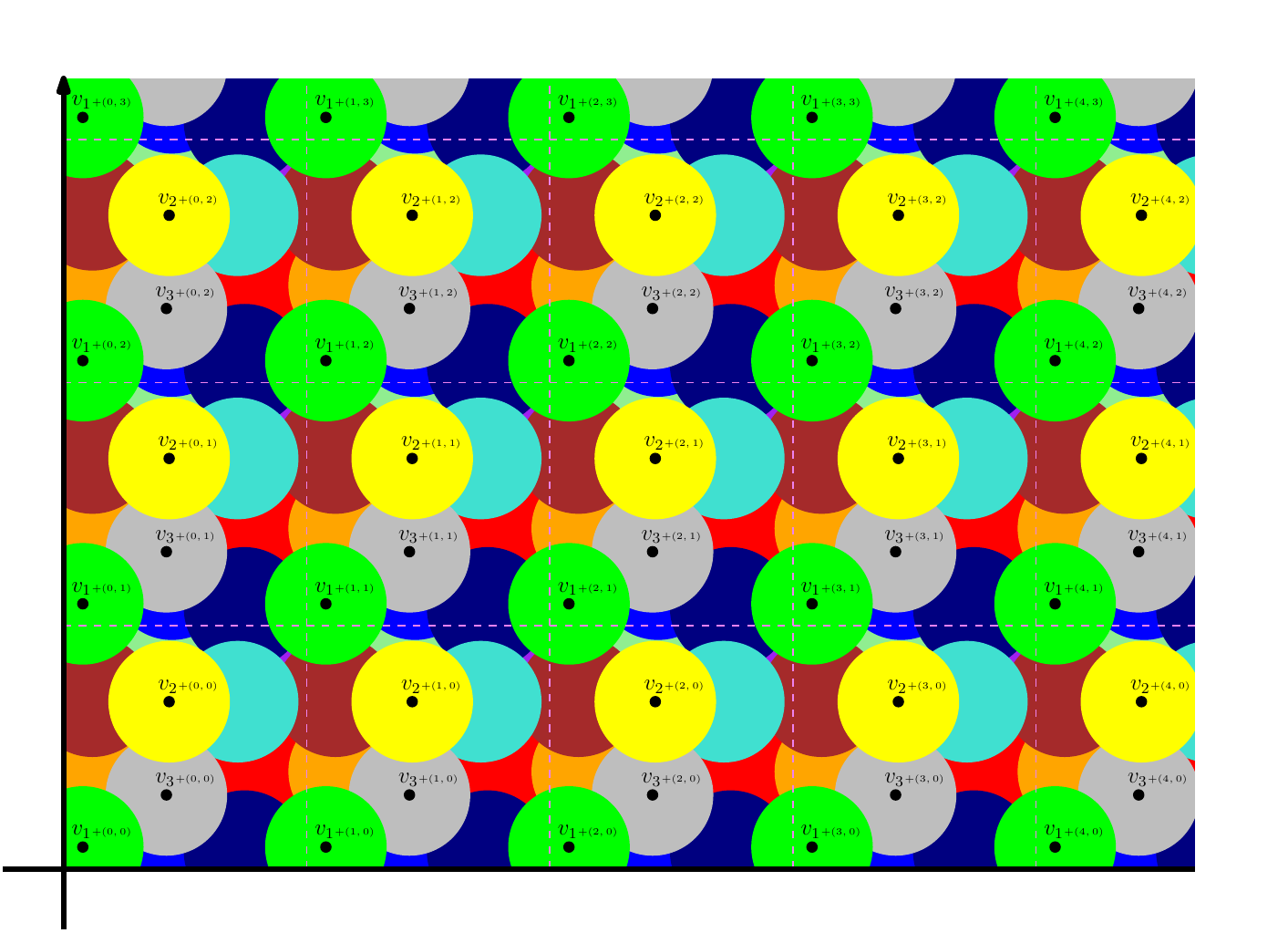}	
\end{SCfigure}


Next, we define a random order $\preceq$ over the entire space $\mathbb{R}^{d}$.
For every $i\in\mathbb{Z}$, let $w_{i}=\xi^{i}$ for $\red{\xi=\frac{12\sqrt{d}}{t}}$ (which will denote the gap between two consecutive scales), and let $\mathcal{C}^{w_{i}}$
be a clustering drawn randomly as above. Given two different points $p,q$,
note that they must belong to different clusters for small enough $i$.
Let $i_{p,q}$ be the maximal 
index such that $p$ and $q$ belong
to different clusters $C_{p},C_{q}\in\mathcal{C}^{w_{i_{p,q}}}$.
Then $p\preceq q$ iff $C_{p}\preceq_{\mathcal{C}^{w_{i_{p,q}}}}C_{q}$,
and $q\preceq p$ otherwise. 
\footnote{If the clusterings $\mathcal{C}^{w_{i}}$ are drawn independently for each scale $i$, such a maximal index $i_{p,q}$ exist with probability $1$. However, we will construct clusters for different scales with dependencies, and ensure that every pair will have such an index.}
The random order $\preceq$ is now defined. 

\paragraph*{Computational aspects}
Our LSO is defined w.r.t the entire space $\R^d$. Formally, given a pair of points, one is required to check infinite number of scales to find the first place they are separated. In all the applications in this paper we will be dealing with finite $n$-point subsets. Here it will be enough to create $\log \Phi$ scales of clusters ($\Phi$ is the aspect ratio), and there is no need to define the LSO w.r.t. the entire $\R^d$. Thus computing the order between a pair of points could be done efficiently (though exponential in $d$ time).

That being said, there is a small tweak to our construction that enable us to define an ordering on the entire space without the need to check infinite number of scales.
Sample a radius $r\in[\frac12,1]$ uniformly at random. For scale $w_i$, add all the points in the ball $B(\vec{0},r\cdot w_i)$ into a special first cluster. The rest of the points will be clustered as described above, and the random ordering defined w.r.t. clustering as above. One can show that the probability that a pair of points $p,q$ clustered together is only slightly changed. On the other hand, for every scale $i$ such that $\xi^i\ge2\cdot\max\{\|p\|_2,\|q\|_2\}$, $p$ and $q$ will necessarily be clustered together, and thus it will be enough to check only finite number of scales (which can be only smaller than the representation length of $p,q$ as binary strings).

\subsection{The volume of ball intersection}\label{subsec:BallIntersection}
Consider a pair of points $p,q\in \R^d$ and the random partition of the space $\mathcal{C}^{w_i}$. 
The points $p$ and $q$ will belong to the same cluster if and only if the first center chosen in $B(p,w_i)\cup B(q,w_i)$ (that is the first ball clustering either $p$ or $q$) will be in the intersection $B(p,w_i)\cap B(q,w_i)$ (that is clustering them both). As the distribution of centers in $B(p,w_i)\cup B(q,w_i)$ is uniform, the probability of $p$ and $q$ to be clustered together is equal to the ratio $\frac{\text{Vol}_{d}\left(B(p,w_{i})\cap B(q,w_{i})\right)}{\text{Vol}_{d}\left(B(p,w_{i})\cup B(q,w_{i})\right)}$.

Charikar \etal \cite{CCGGP98} analyzed this ratio (see Theorem 3.1.), as they essentially used the same partition created here. They proved the following fact:
\begin{fact}[\cite{CCGGP98}]\label{fact:diffClustersSmall}
	For every pair of points $p,q\in\R^d$, the probability that $p$ and $q$ belong to the same clusters in $\mathcal{C}^{w_i}$ is at least $\frac{\text{Vol}_{d}\left(B(p,w_{i})\cap B(q,w_{i})\right)}{\text{Vol}_{d}\left(B(p,w_{i})\cup B(q,w_{i})\right)}\ge1-\sqrt{d}\cdot\frac{\|p-q\|_2}{w_i}$.
\end{fact}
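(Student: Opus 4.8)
The statement to prove is \Cref{fact:diffClustersSmall}: a lower bound of $1 - \sqrt{d}\cdot\frac{\|p-q\|_2}{w_i}$ on the ratio of the volumes of the intersection and union of two radius-$w_i$ balls at distance $r := \|p-q\|_2$. Write $R := w_i$ and, after rescaling, assume $R = 1$ so we want $\frac{\mathrm{Vol}_d(B(p,1)\cap B(q,1))}{\mathrm{Vol}_d(B(p,1)\cup B(q,1))} \ge 1 - \sqrt{d}\cdot r$. Since $\mathrm{Vol}_d(A\cup B) = \mathrm{Vol}_d(A) + \mathrm{Vol}_d(B) - \mathrm{Vol}_d(A\cap B)$ and both balls have the same volume $V_d$, the ratio equals $\frac{I}{2V_d - I}$ where $I := \mathrm{Vol}_d(B(p,1)\cap B(q,1))$. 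The function $x\mapsto \frac{x}{2V_d - x}$ is increasing on $[0, V_d]$, so it suffices to show that $I/V_d$ is at least $1 - \sqrt{d}\cdot r$ to get that the ratio is at least $\frac{1-\sqrt d\, r}{1 + \sqrt d\, r} $; but actually I should be slightly more careful and track constants — the cleanest route is to lower-bound $I/V_d$ directly and then observe $\frac{x}{2-x}\ge x - (1-x)$ type manipulations, or simply note $\frac{I}{2V_d-I}\ge \frac{I}{2V_d - I}$ and check the target bound survives. I expect the real content to be entirely in estimating $I/V_d$; this is exactly the quantity Charikar \etal bound, so I would lean on their computation.

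The key geometric step is to compute (or lower-bound) the volume of the lens $B(p,1)\cap B(q,1)$ as a fraction of the full ball volume. Place $p$ and $q$ symmetrically about the origin along the first coordinate axis, at $\pm r/2$. The complement $B(p,1)\setminus B(q,1)$ is the region inside $B(p,1)$ that lies in the halfspace $\{z_1 < r/2 - \text{something}\}$... more precisely, $B(p,1)\setminus B(q,1) = B(p,1)\cap \{z : \|z - q\| > 1\}$, which is the portion of $B(p,1)$ cut off by the hyperplane $\{z_1 = 0\}$ (the perpendicular bisector passes through the origin) on the far side from $q$, i.e. a spherical cap of $B(p,1)$ of "height" $1 - r/2$ measured from the far pole, equivalently the cap cut at distance $r/2$ from the center $p$ on the side away from $q$. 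So $V_d - I = \mathrm{Vol}_d(\text{cap of } B(p,1) \text{ at signed distance } r/2 \text{ from center})$. The plan is then to bound this cap volume by $\sqrt{d}\cdot r \cdot V_d$, i.e. to show that a cap of a unit $d$-ball cut at distance $s$ from the center has volume at most $\sqrt{d}\, s$ times the whole ball volume (for $s = r/2$, giving $\le \frac{\sqrt d}{2} r$, which is even a factor $2$ better than needed — the slack absorbs the $\frac{I}{2V_d - I}$ vs $\frac{I}{V_d}$ loss).

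To bound the cap volume I would integrate slices: $\mathrm{Vol}_d(\text{cap at distance} \ge s) = \omega_{d-1}\int_s^1 (1-u^2)^{(d-1)/2}\, du$ where $\omega_{d-1}$ is the volume of the unit $(d-1)$-ball, while $V_d = \omega_{d-1}\int_{-1}^1 (1-u^2)^{(d-1)/2}\, du$. Hence the cap fraction is $\frac{\int_s^1 (1-u^2)^{(d-1)/2}du}{\int_{-1}^1 (1-u^2)^{(d-1)/2}du}$. Using $(1-u^2)^{(d-1)/2}\le 1$ on the numerator gives an upper bound of $\frac{1-s}{\int_{-1}^1(1-u^2)^{(d-1)/2}du}$, and the denominator integral is $\sqrt{\pi}\,\Gamma(\tfrac{d+1}{2})/\Gamma(\tfrac{d+2}{2}) = \Theta(1/\sqrt d)$ by standard Gamma-function / Wallis estimates, so the fraction is $O(\sqrt d)$ times something like $(1-s)$; to get the clean linear-in-$s$ bound $\sqrt d\, s$ one instead uses that $\int_0^s(1-u^2)^{(d-1)/2}du \ge s\cdot(1-s^2)^{(d-1)/2}$ is comparable to a constant fraction of $\int_0^1$, pushing the cap fraction below $\sqrt d\, s$. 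The \textbf{main obstacle} is precisely pinning down these one-dimensional integral asymptotics with the right absolute constant so the final bound reads exactly $1 - \sqrt d\cdot\frac{\|p-q\|_2}{w_i}$ rather than $1 - C\sqrt d\cdot\frac{\|p-q\|_2}{w_i}$; since \cite{CCGGP98} (their Theorem 3.1) already carries out this estimate, I would cite it for the sharp constant and only sketch the slicing argument above to make the note self-contained. (Note also the bound is only meaningful when $\sqrt d\, r / w_i \le 1$; otherwise the right-hand side is nonpositive and there is nothing to prove.)
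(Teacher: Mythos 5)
Your overall plan is the right one, and it is worth noting that the paper itself does not prove this statement: the first half (probability $\ge$ volume ratio) is argued in the paragraph immediately preceding the fact (the two points are clustered together iff the first center landing in $B(p,w_i)\cup B(q,w_i)$ lands in the intersection, and centers are uniform), and the volume inequality is quoted from \cite{CCGGP98}, Theorem 3.1. So falling back on that citation for the sharp constant, as you do, is exactly what the paper does; a self-contained sketch is a bonus, but as written yours contains a concrete geometric error.

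The gap is in the identification of $V_d - I$. You claim $B(p,1)\setminus B(q,1)$ equals the portion of $B(p,1)$ cut off by the perpendicular bisector, and then that $V_d-I$ is the volume of a single spherical cap of $B(p,1)$ cut at distance $r/2$ from its center. Neither is true: for $r<1$ the point $p$ itself lies on the $p$-side of the bisector yet belongs to $B(q,1)$, so the lune is a proper subset of that half-ball, and it is bounded by the sphere around $q$, not by a hyperplane. The correct decomposition (the one the paper itself uses just before \Cref{clm:hypersphericalCup}) is that the \emph{intersection} is the union of two caps, one from each ball, each cut at distance $r/2$ from its own center; hence $V_d - I = 2\,\omega_{d-1}\int_0^{r/2}(1-u^2)^{(d-1)/2}\,du \le \omega_{d-1}\,r$, where $\omega_{d-1}$ is the volume of the unit $(d-1)$-ball. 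This matters because the lemma your plan rests on --- ``a cap cut at distance $s$ from the center has volume at most $\sqrt{d}\,s\cdot V_d$'' --- is false precisely in the regime where the fact is applied: for $s \lesssim 1/\sqrt{d}$ such a cap has volume close to $V_d/2$, far exceeding $\sqrt{d}\,s\,V_d$, so the integral manipulations in your last paragraph (bounding $\int_s^1$ rather than $\int_0^{r/2}$) cannot be pushed below $\sqrt{d}\,s$. With the corrected identity the rest of your outline does go through: $V_d = \omega_{d-1}\int_{-1}^{1}(1-u^2)^{(d-1)/2}\,du = \Theta(\omega_{d-1}/\sqrt{d})$ gives $1 - I/V_d = O(\sqrt{d}\,r)$, the monotone map $x\mapsto x/(2V_d-x)$ together with $(1-\beta)/(1+\beta)\ge 1-2\beta$ transfers this to the intersection-over-union ratio, and the exact constant $1$ in $1-\sqrt{d}\cdot\|p-q\|_2/w_i$ can be delegated to \cite{CCGGP98} as you propose.
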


\begin{wrapfigure}{r}{0.25\textwidth}
	\begin{center}
		\vspace{-20pt}
		\includegraphics[width=0.9\textwidth]{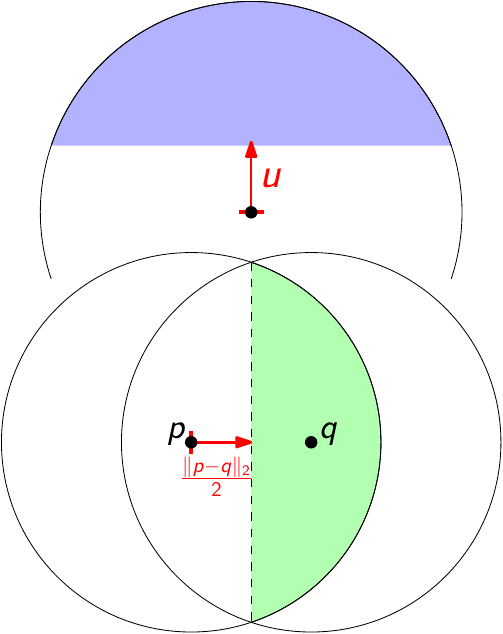}
		\vspace{-5pt}
	\end{center}
	\vspace{-15pt}
\end{wrapfigure}
\Cref{fact:diffClustersSmall} is only meaningful in the regime where $\|p-q\|_2\le\frac{w_i}{\sqrt{d}}$. For farther $p,q$ we will follow an analysis similar to Andoni \cite{AndoniThesis} (see Fact 2.3.1). 
We provide here a direct analysis to save on a second degree factor $\sqrt{d}$, that becomes significant in the important regime $t=\sqrt{d}$.
A hyperspherical cup of radius $r$ at distance $u$ from the origin is the set of all the points in the ball $B(\vec{0},r)$ with the value of the first coordinate being at least $u$. See the blue area in the illustration on the right. 
The volume of the intersection of two balls of radius $r$, is equivalent to
the volume of the union of two hyperspherical cups of radius $r$ at distance $\frac{\|p-q\|_2}{2}$ from the origin (see illustration on the right).
We provide a proof for the following technical claim in \Cref{appendix:Missing}.

\begin{restatable}{claim}{HypersphericalCup}
	\label{clm:hypersphericalCup} Denote by $V_{d}$ the volume of a unit ball in $\mathbb{R}^{d}$.
	The volume of a hyperspherical cup of radius $r$ at distance $\alpha\cdot r$
	(for $\alpha\in(\frac{1}{2\sqrt{d}},\frac{1}{2})$) from the origin
	is $\Omega(1)\cdot\frac{1}{\alpha\cdot\sqrt{d}}\cdot(1-\alpha^{2})^{\frac{d+1}{2}}\cdot V_{d}\cdot r^{d}$.
\end{restatable}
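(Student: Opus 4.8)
The plan is to reduce the cap volume to a one–dimensional integral and then sandwich that integral between two easy estimates. First I would use homogeneity: the volume of a cap of radius $r$ at distance $\alpha r$ is exactly $r^d$ times the volume of the cap of radius $1$ at distance $\alpha$, so it suffices to treat the unit cap $K=\{x\in\R^d:\|x\|_2\le 1,\ x_1\ge \alpha\}$. Slicing $K$ by the first coordinate $x_1=s$, the slice is a $(d-1)$-dimensional ball of radius $\sqrt{1-s^2}$, so by Fubini
\[
\mathrm{Vol}(K)=V_{d-1}\int_{\alpha}^{1}(1-s^2)^{\frac{d-1}{2}}\,ds,
\qquad
V_d=V_{d-1}\int_{-1}^{1}(1-s^2)^{\frac{d-1}{2}}\,ds .
\]
Hence the claim is equivalent to the inequality $\int_{\alpha}^{1}(1-s^2)^{\frac{d-1}{2}}ds\ \ge\ \Omega(1)\cdot\frac{1}{\alpha\sqrt d}\,(1-\alpha^2)^{\frac{d+1}{2}}\cdot\int_{-1}^{1}(1-s^2)^{\frac{d-1}{2}}ds$.

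For the denominator I would use the crude bound $(1-s^2)^{\frac{d-1}{2}}\le e^{-(d-1)s^2/2}$ (valid for $|s|\le1$) together with the Gaussian integral, giving $\int_{-1}^{1}(1-s^2)^{\frac{d-1}{2}}ds\le\sqrt{2\pi/(d-1)}=O(1/\sqrt d)$. For the numerator I would discard most of the range and keep only a short interval $[\alpha,\alpha+\delta]$ with $\delta=c\cdot\frac{1-\alpha^2}{(d-1)\alpha}$ for a suitable absolute constant $c$, and argue that the integrand stays within a factor $2$ of its left-endpoint value $(1-\alpha^2)^{\frac{d-1}{2}}$ on this interval. Concretely $\bigl(\tfrac{1-s^2}{1-\alpha^2}\bigr)^{\frac{d-1}{2}}\ge\tfrac12$ holds whenever $(s-\alpha)(s+\alpha)\le\frac{2\ln 2}{d-1}(1-\alpha^2)$, and on $[\alpha,\alpha+\delta]$ the left side is at most $3\alpha\delta$ as soon as $\delta\le\alpha$; both requirements come down to a numerical bound on $c$ which is met because the hypothesis $\alpha<\tfrac12$ gives $1-\alpha^2\ge\tfrac34$ and the hypothesis $\alpha>\tfrac1{2\sqrt d}$ gives $(d-1)\alpha^2\ge\tfrac14\bigl(1-\tfrac1d\bigr)$, so $\delta\le\alpha$ and everything is uniform in $d$. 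This yields $\int_{\alpha}^{1}(1-s^2)^{\frac{d-1}{2}}ds\ge\tfrac{\delta}{2}(1-\alpha^2)^{\frac{d-1}{2}}=\Omega(1)\cdot\frac{(1-\alpha^2)^{\frac{d+1}{2}}}{d\,\alpha}$, and dividing by the $O(1/\sqrt d)$ bound for the full integral gives exactly the ratio $\Omega(1)\cdot\frac{1}{\alpha\sqrt d}(1-\alpha^2)^{\frac{d+1}{2}}$; multiplying back through by $V_d$ and by $r^d$ recovers \Cref{clm:hypersphericalCup}.

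The only real subtlety is the choice of the interval length $\delta$ (equivalently the constant $c$): it must be small enough that the ``no significant decay over $[\alpha,\alpha+\delta]$'' estimate holds and at the same time $\alpha+\delta<1$, both uniformly in $d$ — and this is precisely where the two-sided restriction $\tfrac1{2\sqrt d}<\alpha<\tfrac12$ is used (without the lower bound on $\alpha$ one cannot guarantee $\delta\le\alpha$, and indeed the stated formula would be false). A minor technical point is that the bound $(d-1)\alpha^2\ge\tfrac14(1-\tfrac1d)$ degrades in the smallest dimensions, so I would fix an absolute constant $d_0$ (say $d_0=4$), carry out the argument above for $d\ge d_0$, and handle the finitely many remaining dimensions separately by noting that there $\alpha$ ranges over a fixed compact interval bounded away from $0$, on which $\mathrm{Vol}(K)$, $\frac1{\alpha\sqrt d}$, and $(1-\alpha^2)^{\frac{d+1}{2}}$ are all bounded between positive absolute constants, so the inequality holds with a small enough constant.
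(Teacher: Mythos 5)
Your proposal is correct and follows essentially the same route as the paper: rescale to $r=1$, slice along the first coordinate, lower-bound $\int_\alpha^1(1-s^2)^{\frac{d-1}{2}}ds$ by the contribution of a short interval just above $\alpha$, and convert to a fraction of $V_d$ via $V_{d-1}/V_d=\Theta(\sqrt d)$ (the paper integrates over $s\in[\alpha,\frac32\alpha]$ after substituting $y=1-s^2$, while you use the flat interval $[\alpha,\alpha+\Theta(\frac{1-\alpha^2}{d\alpha})]$ and a Gaussian bound for the full integral — the same idea, with the hypothesis $\alpha>\frac{1}{2\sqrt d}$ entering in the analogous place). One nit: the auxiliary claim that $(s-\alpha)(s+\alpha)\le\frac{2\ln 2}{d-1}(1-\alpha^2)$ already forces $\bigl(\frac{1-s^2}{1-\alpha^2}\bigr)^{\frac{d-1}{2}}\ge\frac12$ is not literally true near the boundary of that range (e.g. $d=3$), but replacing the threshold by $\frac{1}{d-1}(1-\alpha^2)$ via Bernoulli fixes it and is absorbed into your adjustable constant $c$, so the argument stands.
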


The following fact immediately follows:
\begin{fact}\label{fact:diffClustersLarge}
	For every pair of points $p,q\in\R^d$ at distance $\|p-q\|_2\in[\frac{w_i}{2\sqrt{d}},w_i]$, the probability that $p$ and $q$ belong to the same clusters in $\mathcal{C}^{w_i}$ is bounded by
	\[
	\frac{\text{Vol}_{d}\left(B(p,w_{i})\cap B(q,w_{i})\right)}{\text{Vol}_{d}\left(B(p,w_{i})\cup B(q,w_{i})\right)}=\Omega(\frac{1}{\sqrt{d}}\cdot\frac{w_i}{\|p-q\|_2})\cdot\left(1-(\frac{\|p-q\|_{2}}{2w_{i}})^{2}\right)^{\frac{d}{2}}~.
	\]
\end{fact}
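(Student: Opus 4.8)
The plan is to derive Fact~\ref{fact:diffClustersLarge} by combining the geometric reduction described just before Claim~\ref{clm:hypersphericalCup} with the volume estimate of that claim. First I would set $r = w_i$ and $\alpha = \frac{\|p-q\|_2}{2w_i}$, noting that the hypothesis $\|p-q\|_2\in[\frac{w_i}{2\sqrt d},w_i]$ translates exactly into $\alpha\in[\frac{1}{4\sqrt d},\frac12]$, which (up to the harmless constant in the lower endpoint) is the range where Claim~\ref{clm:hypersphericalCup} applies. The key geometric observation, already stated in the text, is that $\mathrm{Vol}_d(B(p,w_i)\cap B(q,w_i))$ equals the volume of the union of two disjoint hyperspherical caps of radius $w_i$ at distance $\frac{\|p-q\|_2}{2}=\alpha w_i$ from the center; since the two caps are on opposite sides of the bisecting hyperplane and hence disjoint, this union volume is exactly $2$ times the single-cap volume from Claim~\ref{clm:hypersphericalCup}, namely $\Omega(1)\cdot\frac{1}{\alpha\sqrt d}\cdot(1-\alpha^2)^{\frac{d+1}{2}}\cdot V_d\cdot w_i^d$.

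Next I would handle the denominator. Since each ball has volume $V_d\, w_i^d$ and the union of the two balls has volume at least $V_d\, w_i^d$ and at most $2V_d\, w_i^d$, we have $\mathrm{Vol}_d(B(p,w_i)\cup B(q,w_i)) = \Theta(V_d\, w_i^d)$. Dividing, the factors $V_d\, w_i^d$ cancel and we are left with a ratio that is $\Omega(1)\cdot\frac{1}{\alpha\sqrt d}\cdot(1-\alpha^2)^{\frac{d+1}{2}}$. Substituting back $\alpha = \frac{\|p-q\|_2}{2w_i}$ gives $\Omega\!\big(\frac{1}{\sqrt d}\cdot\frac{w_i}{\|p-q\|_2}\big)\cdot\big(1-(\frac{\|p-q\|_2}{2w_i})^2\big)^{\frac{d+1}{2}}$. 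Finally I would absorb the exponent $\frac{d+1}{2}$ into $\frac{d}{2}$: since $1-\alpha^2\in[\tfrac34,1]$ in our range, $(1-\alpha^2)^{\frac{d+1}{2}} = (1-\alpha^2)^{\frac d2}\cdot(1-\alpha^2)^{\frac12} \ge \tfrac{\sqrt3}{2}\,(1-\alpha^2)^{\frac d2}$, so the extra half-power costs only a constant, which is swallowed by the $\Omega(\cdot)$. This yields precisely the claimed bound. The identification of the ratio with the clustering probability is already established in the paragraph opening Section~\ref{subsec:BallIntersection}, so nothing further is needed there.

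I do not expect any real obstacle here: the statement is essentially a bookkeeping corollary of Claim~\ref{clm:hypersphericalCup} together with the trivial two-sided bound on the union volume. The only point requiring a little care is matching the exact constant in the range of $\alpha$ — the claim is stated for $\alpha\in(\frac{1}{2\sqrt d},\frac12)$ while the lower bound $\|p-q\|_2\ge\frac{w_i}{2\sqrt d}$ gives $\alpha\ge\frac{1}{4\sqrt d}$, so one either re-checks that the proof of Claim~\ref{clm:hypersphericalCup} tolerates the slightly smaller lower endpoint, or simply notes that the constant hidden in $\Omega(\cdot)$ is uniform over $\alpha\in[c/\sqrt d,1/2]$ for any fixed constant $c$. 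Either way this is routine, and the main content is entirely deferred to the (separately proved) Claim~\ref{clm:hypersphericalCup}.
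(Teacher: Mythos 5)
Your proposal is correct and takes essentially the same route as the paper, which treats the fact as an immediate consequence of \Cref{clm:hypersphericalCup} via exactly the reduction you describe: the intersection equals the union of two congruent caps cut by the bisecting hyperplane, the union of the two balls has volume at most $2V_d w_i^d$, and the stray half-power $(1-\alpha^2)^{1/2}\ge\sqrt{3}/2$ is absorbed into the $\Omega(\cdot)$. Your remark about the lower endpoint ($\alpha\ge\frac{1}{4\sqrt{d}}$ versus the claim's stated range $\alpha>\frac{1}{2\sqrt{d}}$) is a legitimate bookkeeping point the paper glosses over, and your resolution — that the proof of \Cref{clm:hypersphericalCup} is uniform over $\alpha\ge c/\sqrt{d}$ for any fixed constant $c$ — is the right one.
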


\subsection{Proof of \Cref{thm:LSOEuclideanLargeStretch}}\label{subsec:RdLSOproof}

We will assume here that $d$ is large enough. Note that for constant $d$, $O(1)^d$ is a constant, and hence \Cref{thm:LSOEuclideanLargeStretch} follows from \cite{CHJ20}.
Further, we will assume that $t<\sqrt{d}$ , as otherwise, one can plug in $t=\sqrt{d}$ in \Cref{thm:LSOEuclideanLargeStretch}, and get a better stretch, with the same asymptotic number of orderings.
Fix $t\in[2,\sqrt{d}]$. Pick $\red{\gamma=c_{\gamma}\cdot\frac{d}{t^{2}}\in\N}$,
and 
$\red{\eps=\frac{1}{c_{\epsilon}}\cdot\min\left\{ \frac{1}{d},\frac{\delta }{t}\right\}=\Omega\left(\frac{\delta}{d}\right)}$
 for large enough constants $c_{\gamma},c_{\epsilon}$, to be determined later.
We begin by analyzing the success probability of a single pair in a random ordering. 

A natural way to construct the LSO will be to use fresh random bits to choose cluster centers in each scale. However, it is unclear how to succeed with all possible point pairs using such an approach (and it is also computationally inferior).
Instead, we will reuse the same choices (scaled) in many different weight scales.
Specifically, we will create well defined orderings with finite random seeds, by choosing random centers for the scales $0,1,\dots,\gamma-1$. Then, for a scale $i$ such that $i\mod\gamma=j$, we will ``reuse'' the centers created for the $j$'th scale.
More formally, in order to sample an ordering $\preceq$, for every $j\in\{0,1,\dots,\gamma-1\}$, sample centers $v^j_1,\dots,v^j_{\ell_j}\in[0,4\cdot w_j^d)$ for $w_j=\xi^j$ (where $\xi=\frac{28\sqrt{d+1}}{t}$).
Then for every distance scale $i\in\Z$, such that $i=j+k\cdot\gamma$ for some $k\in\Z$, we will simply use the centers $\xi^{k\cdot\gamma}\cdot v^j_1,\dots,\xi^{k\cdot\gamma}\cdot v^j_{\ell_j}\in[0,4\cdot w_i)^d$. The ordering $\preceq$ is now well defined.

Sample an ordering $\preceq$ as above.
Given two points $p,q$ at distance $\|p-q\|_{2}\le\frac{1}{t}$,
we say that the pair $(p,q)$ is \emph{$i$-successful} (w.r.t.  $\preceq$) if in the clustering $\mathcal{\mathcal{C}}^{w_{i}}$, the balls $B(p,\epsilon),B(q,\epsilon)$ contained in the same cluster (rather than only $p,q$). 
We say that $(p,q)$ is \emph{very-successful} (w.r.t.  $\preceq$), if $(p,q)$ is $i$-successful for every $i\ge 0$.
Note that the radius $\eps$ does not scale with $i$ here.

Fix $i\ge0$, following the discussion in \Cref{sec:highDimLSOconstruction}, and the triangle inequality, $(p,q)$ is $i$-successful if and only if the first cluster center chosen from $B(p,\xi^{i}+\eps)\cup B(q,\xi^{i}+\eps)$ is in $B(p,\xi^{i}-\eps)\cap B(q,\xi^{i}-\eps)$.
Denote by $\Psi_0$ the event that $(p,q)$ is $0$-successful. The probability that $(p,q)$ is $0$-successful is thus
\begin{align}
	\Pr\left[\Psi_{0}\right] & =\frac{\text{Vol}_{d}\left(B(p,1-\epsilon)\cap B(q,1-\epsilon)\right)}{\text{Vol}_{d}\left(B(p,1+\epsilon)\cup B(q,1+\epsilon)\right)}\nonumber\\
	& =\frac{\text{Vol}_{d}\left(B(p,1)\cup B(q,1)\right)}{\text{Vol}_{d}\left(B(p,1+\epsilon)\cup B(q,1+\epsilon)\right)}\cdot\frac{\text{Vol}_{d}\left(B(p,1)\cap B(q,1)\right)}{\text{Vol}_{d}\left(B(p,1)\cup B(q,1)\right)}\cdot\frac{\text{Vol}_{d}\left(B(p,1-\epsilon)\cap B(q,1-\epsilon)\right)}{\text{Vol}_{d}\left(B(p,1)\cap B(q,1)\right)}\nonumber\\
	& \stackrel{(*)}{=}(1+\epsilon)^{-d}\cdot\Omega(\frac{1}{\|p-q\|\cdot\sqrt{d}})\cdot\left(1-(\frac{\|p-q\|_{2}}{2})^{2}\right)^{\frac{d}{2}}\cdot(1-\epsilon)^{d}\nonumber\\
	& \stackrel{(**)}{=}\Omega(\frac{t}{\sqrt{d}})\cdot\left(1-\frac{1}{4t^{2}}\right)^{\frac{d}{2}}\ge\Omega(\frac{t}{\sqrt{d}})\cdot e^{-\frac{d}{2}\cdot\frac{1}{4t^{2}}\cdot(1+\frac{1}{2t^{2}})}~,\label{eq:Psi0}
\end{align}
Here the inequality $^{(*)}$ follows by \Cref{fact:diffClustersLarge},
and the observations:
\begin{itemize}
	\item $\frac{\text{Vol}_{d}\left(B(p,1)\cup B(q,1)\right)}{\text{Vol}_{d}\left(B(p,1+\epsilon)\cup B(q,1+\epsilon)\right)}\ge\frac{\text{Vol}_{d}\left(B(p,1)\cup B(q,1)\right)}{\text{Vol}_{d}\left(B(p,1+\epsilon)\cup B(q',1+\epsilon)\right)}=(1+\epsilon)^{-d}$,
	here $q'$ is a point at distance $(1+\epsilon)\cdot\|p-q\|_{2}$
	from $p$. The inequality holds as taking farther centers will increase
	the union, while the equality holds by symmetry.
	\item $\frac{\text{Vol}_{d}\left(B(p,1-\epsilon)\cap B(q,1-\epsilon)\right)}{\text{Vol}_{d}\left(B(p,1)\cap B(q,1)\right)}\ge\frac{\text{Vol}_{d}\left(B(p,1-\epsilon)\cap B(q,1-\epsilon)\right)}{\text{Vol}_{d}\left(B(p,1)\cap B(q'',1)\right)}=(1-\epsilon)^{d}$,
	here $q''$ is a point at distance $(1-\epsilon)\cdot\|p-q\|_{2}$
	from $p$. The inequality holds as taking closer centers will increase
	the intersection, while the equality holds by symmetry.
\end{itemize}
Equality $^{(**)}$ follows by our choice of $\eps=O(\frac1d)$ (which implies $(1+2\epsilon)^{-d},(1-2\epsilon)^{d}\le e^{-2\epsilon\cdot d}=\Omega(1)$) and $\|p-q\|_{2}\le\frac{1}{t}$. The last equality follows by the fact $1-x\ge e^{-\frac{x}{1-x}}\ge e^{-x\cdot(1+2x)}$.
%

For $i\ge1$,
following the same logic as above, and using \Cref{fact:diffClustersSmall}, the probability that $(p,q)$ is $i$-successful is 

\begin{align}
	\frac{\text{Vol}_{d}\left(B(p,\xi^{i}-\epsilon)\cap B(q,\xi^{i}-\epsilon)\right)}{\text{Vol}_{d}\left(B(p,\xi^{i}+\epsilon)\cup B(q,\xi^{i}+\epsilon)\right)} & \ge(1+\frac{\epsilon}{\xi^{i}})^{-d}\cdot(1-\sqrt{d}\cdot\frac{\|p-q\|_{2}}{\xi^{i}})\cdot(1-\frac{\epsilon}{\xi^{i}})^{d}>e^{-\left(\frac{2\sqrt{d}}{t\cdot\xi^{i}}+\frac{3\epsilon d}{\xi^{i}}\right)}>e^{-\frac{3\sqrt{d}}{t\cdot\xi^{i}}}~,\label{eq:i-succ}
\end{align}
where the second inequality follows by the facts $1-x\ge e^{-2x}$
and $1+x\le e^{x}$, and the last inequality hold for large enough
$c_{\eps}$ (as $\eps\le\frac{1}{c_{\eps}\cdot d}$).
For $i\in[1,\gamma-1]$, all these probabilities are independent.
Denote by $\Psi_1$ the event that $(p,q)$ is $i$-successful in all $i\in[1,\gamma-1]$. Then 
\begin{align}
\Pr\left[\Psi_{1}\right]\overset{(\ref{eq:i-succ})}{\ge}\Pi_{i=1}^{\gamma-1}e^{-\frac{3\sqrt{d}}{t\cdot\xi^{i}}}=e^{-\sum_{i=1}^{\gamma-1}\frac{3\sqrt{d}}{t\cdot\xi^{i}}}\ge e^{-\frac{6\sqrt{d}}{t\cdot\xi}}=e^{-\frac{1}{2}}\ge\frac{1}{2}~.\label{eq:Psi1}
\end{align}
Denote by $\Psi_2$ the event that $(p,q)$ is $i$-successful for all $i\ge\gamma$. By union bound (and $e^{-x}>1-x$)
\begin{align}
\Pr\left[\overline{\Psi_{2}}\right]\le\sum_{i\ge\gamma}1-e^{-\frac{3\sqrt{d}}{t\cdot\xi^{i}}}\le\sum_{i\ge\gamma}\frac{3\sqrt{d}}{t\cdot\xi^{i}}\le\frac{6\sqrt{d}}{t\cdot\xi^{\gamma}}\le\frac{1}{4}\cdot\Pr\left[\Psi_{0}\right]~,\label{eq:Psi2}
\end{align}
for large enough constant $c_{\gamma}$ (recall that $\xi=\frac{12\sqrt{d}}{t}$ and $\gamma=c_{\gamma}\cdot\frac{d}{t^{2}}$).
Finally, as $\Psi_0$ and $\Psi_1$ are independent, and using union bound for $\Psi_2$, we conclude 
\begin{align}
	\Pr\left[p,q\text{ are very successful}\right] & =\Pr\left[\Psi_{0}\wedge\Psi_{1}\wedge\Psi_{2}\right]\ge\Pr\left[\Psi_{0}\wedge\Psi_{1}\right]-\Pr\left[\overline{\Psi_{2}}\right]\nonumber\\
	& \ge\frac{1}{2}\cdot\Pr\left[\Psi_{0}\right]-\frac{1}{4}\Pr\left[\Psi_{0}\right]=\frac{1}{4}\cdot\Pr\left[\Psi_{0}\right]=\Omega(\frac{t}{\sqrt{d}})\cdot e^{-\frac{d}{8t^{2}}\cdot(1+\frac{1}{2t^{2}})}~.\label{eq:VerySucc}
\end{align}

Denote this probability by $p_\vsucc$.
Our analysis up to this point holds for $p,q$ such that $\|p-q\|_2\le\frac1t$, and $(p,q)$ is said to be very successful if it is $i$-successful for $i\ge 0$.
In general, for $p,q$ such that $\frac{w_{j-1}}{t}=\frac{\xi^{j-1}}{t}<\|p-q\|_2\le\frac{\xi^{j}}{t}=\frac{w_{j}}{t}$, we say that the pair $(p,q)$ is $i$-successful (w.r.t.  $\preceq$) if in the clustering $\mathcal{\mathcal{C}}^{w_{i}}$, the balls $B(p,\epsilon\cdot w_j),B(q,\epsilon\cdot w_j)$ belong to the same cluster. 
The pair $(p,q)$ is very successful if it is $i$-successful for $i\ge j$ (note that $j$ is fixed for each pair).
By symmetry, for every pair $(p,q)$, it is very successful with probability at least
$\Pr\left[p,q\text{ are very successful}\right] \ge p_\vsucc$.

For every $i\in\{0,\dots,\gamma-1\}$, consider the box $[0,8\cdot w_i)^{d}$, and let $N_i\subseteq[0,8\cdot w_i)^{d}$ be an $\eps_i=\epsilon\cdot w_i$-net: a set of points such that every two are at distance at least
$\epsilon_i$, and every point in $[0,8\cdot w_i)^{d}$ has a net point at distance at most $\epsilon_i$.
By volume argument, 
as all the balls of radius $\frac{\eps_i}{2}$ around net points are disjoint, and contained in the box $[-w_i,9\cdot w_i]^d$, we have that
\begin{align*}
	|N_{i}| & \le\frac{\text{Vol}\left([-w_{i},9\cdot w_{i}]^{d}\right)}{\text{Vol}\left(B(\vec{0},\frac{\eps_{i}}{2})\right)}=\frac{(10\cdot w_{i})^{d}}{\nicefrac{\pi^{\frac{d}{2}}(\frac{\epsilon\cdot w_{i}}{2})^{d}}{\Gamma(\frac{d}{2}+1)}}\\
	& =\epsilon^{-d}\cdot2^{O(d\log d)}\le\Omega(\frac{d}{\delta})^{d}\cdot2^{O(d\log d)}=2^{O(d\log\frac{d}{\delta})}~.
\end{align*}
Denote by $N=\bigcup_{i=0}^{\gamma-1}N_i$ the union of all net points. Then $|N|=\sum_{i=0}^{\gamma-1}|N_i|=O(\frac{d}{t^2})\cdot 2^{O(d\log \frac{d}{\delta})}=2^{O(d\log \frac{d}{\delta})}$.
We sample i.i.d. 
\[
m\coloneqq p_{\vsucc}^{-1}\cdot2\cdot\ln|N|=O\left(\frac{d^{1.5}}{t}\cdot\log\frac{d}{\delta}\cdot e^{\frac{d}{8t^{2}}\cdot(1+\frac{1}{2t^{2}})}\right)
\]
 orderings $\preceq_{1},\preceq_{2},\dots,\preceq_{m}$.
Given $i\in\{0,\dots,\gamma-1\}$ and two net points $p,q\in N_i$, the probability that $(p,q)$ is not very successful in any of the orderings is bounded by $\left(1-p_\vsucc\right)^{m}\le e^{-2\ln|N|}=|N|^{-2}$, while the number of such triples is smaller than $\frac{|N|^{2}}{2}$. 
Hence by union bound, we can assume that we picked a set of orderings such that for every $i\in\{0,\dots,\gamma-1\}$, and $p,q\in N_i$, $(p,q)$ is very successful in some ordering.

Next, we argue that the created set of orderings is an $(m,\frac{2\xi\cdot t}{1-2\eps t})$-LSO for the entire space $\R^d$ (we will later improve  the stretch guarantee).
Consider a pair $p,q\in\R^d$.
First assume that $p,q\in [1,7]^d$ and $\|p-q\|_{2}\in\left(\frac{1}{\xi\cdot t}-\frac{2\eps}{\xi},\frac{1}{t}-2\eps\right]=\left(\frac{1-2\eps t}{\xi\cdot t},\frac{1-2\eps t}{t}\right]$.
Then there are net points $p',q'\in N_0$ such that $\|p-p'\|_2,\|q-q'\|_2\le\eps$, and in particular $\|q'-p'\|_2\le \|p-q\|_2+2\eps\le \frac{1}{t}$.
Let $\preceq$ be some ordering where $p',q'$ are very successful. Hence the balls $B(p',\eps),B(q',\eps)$ are clustered together in all the hierarchical scales for $i\ge 0$. These clusters contain $p$ and $q$ as well. In particular, all the points between $p$ to $q$ in the ordering $\preceq$ belong to a single ball of radius $1$. It follows that the distance between each such pair is at most $2\le\frac{2\xi\cdot t}{1-2\eps t}\cdot\|p-q\|_{2}$. We conclude that $(p,q)$ is satisfied by the LSO.

Next consider the case where
$p,q\in[w_i,7\cdot w_i]^d$ and $\|p-q\|_{2}\in\left((\frac{1}{\xi\cdot t}-\frac{2\eps}{\xi})\cdot w_{i},(\frac{1}{t}-2\eps)\cdot w_{i}\right]$.
If $i\in\{0,\dots,\gamma-1\}$, then the same argument as above shows that $(p,q)$ is satisfied by the LSO (because there are close very successful net points in $N_i$).
For general $i$, let $j,k\in \Z$ such that $i=j+k\cdot\gamma$.
By symmetry, and the fact that we reused the same centers every $\gamma$ scales, $(p,q)$ is satisfied by the LSO if and only if $(\xi^{-k\gamma}\cdot p,\xi^{-k\gamma}\cdot q)$ is satisfied. But this follow from the previous case.

Finally, consider the case where $\|p-q\|_{2}\in\left((\frac{1}{\xi\cdot t}-\frac{2\eps}{\xi})\cdot w_{i},(\frac{1}{t}-2\eps)\cdot w_{i}\right]$ (but not necessarily $p,q\in[w^i,7\cdot w^i]^d$).
There is a vector $x\in4w_i\cdot\Z^d$ such that $p+x,q+x\in [2^i,7\cdot 2^i]^d$.
By the second case, there is an ordering $\preceq$ which satisfies $(p+x,q+x)$. By symmetry, the same ordering satisfies $(p,q)$ as well.

Next, we improve the stretch guarantee to $(1+\delta)2t$.
Set $\delta'=\frac\delta3$. For every $s\in{0,1,\dots,\lfloor\log_{1+\delta'}\xi\rfloor}$, construct the same LSO as above, with the only change that for every $i$ we will use a slightly shifted scale $w^s_i=\xi^i\cdot(1+\delta')^s$ (instead of $w_i=\xi^i$). Call the resulting LSO $\mathcal{L}_s$.
Consider a pair $(p,q)$, and let $i,s$ be the unique indices such that
\[
(\frac{1}{t}-2\eps)\cdot\xi^{i}\cdot(1+\delta')^{s-1}<\|p-q\|_{2}\le(\frac{1}{t}-2\eps)\cdot\xi^{i}\cdot(1+\delta')^{s}~.
\]
In the LSO $\mathcal{L}_s$, there is an ordering $\preceq$, where the pair $(p,q)$ is $i'$-successful for every $i'\ge i$. In particular, by the analysis above, for every $i'\ge i$, both $p,q$ are contained in a cluster of $\mathcal{C}^s_{i'}$. Further, in $\preceq$, all the points between $p$ to $q$ (and including $p,q$) are contained in a ball of radius $\xi^i\cdot (1+\delta')^s$, and thus are at distance at most $2\cdot\xi^{i}\cdot(1+\delta')^{s}\le\frac{1+\delta'}{1-2\eps t}\cdot2t\cdot\|p-q\|_{2}$.
As this holds for every pair $(p,q)$, it follows that the union  
$\cup_{s=0}^{\lfloor\log_{1+\delta'}2\rfloor}\mathcal{L}_s$ has $\frac{1+\delta'}{1-2\eps t}\cdot 2t\le\frac{1+\nicefrac{\delta}{3}}{1-\nicefrac{\delta}{3}}\cdot 2t\le(1+\delta)\cdot 2t$ stretch guarantee.
The total number of orderings is $\lfloor\log_{1+\delta'}\xi\rfloor\cdot m=O\left(\frac{d^{1.5}}{\delta\cdot t}\cdot\log(\frac{2\sqrt{d}}{t})\cdot\log\frac{d}{\delta}\cdot e^{\frac{d}{8t^{2}}\cdot(1+\frac{1}{2t^{2}})}\right)$. 
\Cref{thm:LSOEuclideanLargeStretch} now follows by replacing $t$ with $\frac t2$.

\subsection{Generalization for $\ell_p$ spaces for $p\in[1,2]$.}\label{sec:ell-p-LSO}
in this subsection we generalize \Cref{thm:LSOEuclideanLargeStretch} to general $\ell_p$ spaces. We begin with a simple corollary for the maximal stretch regime. Afterwards, we obtain the full \#orderings-stretch tradeoff for $p\in[1,2]$ 
\begin{corollary}\label{cor:LpLSO}
	The is a collection of $O(d\cdot\log d)$ orderings $\Sigma$ of all the points in $\R^d$, such that for every $p\in[2,\infty]$,  $\Sigma$ is an $\left(O\left(d\cdot\log d\right),d^{1-\frac{1}{p}}\right)$-triangle LSO for $d$-dimensional $\ell_p$ space, and for every $p\in[1,2]$, $\Sigma$ is an $\left(O\left(d\cdot\log d\right),d^{\frac{1}{p}}\right)$-triangle LSO for  $d$-dimensional $\ell_p$ space.
\end{corollary}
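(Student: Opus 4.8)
The plan is to deduce \Cref{cor:LpLSO} directly from \Cref{thm:LSOEuclideanLargeStretch} together with the elementary comparison between $\ell_p$ and $\ell_2$ norms on $\R^d$; there is no new construction. First I would instantiate \Cref{thm:LSOEuclideanLargeStretch} with $t=\frac23\sqrt{d}$ and $\delta=\frac12$, so that the stretch parameter is $(1+\delta)t=\sqrt d$ and a short computation shows the number of orderings is $O(d\log d)$ (the prefactor $\frac{d^{1.5}}{\delta t}$ becomes $O(d)$, the term $\log\frac{2\sqrt d}{t}$ is $O(1)$, the term $\log\frac d\delta$ is $O(\log d)$, and the exponential $e^{\frac{d}{2t^2}(1+\frac{2}{t^2})}$ is $O(1)$). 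This yields one fixed family $\Sigma$ of $O(d\log d)$ orderings of $\R^d$ that is an $(O(d\log d),\sqrt d)$-triangle LSO for $\ell_2^d$. The key observation is that an ordering — and in particular the relative order of any two points — does not depend on which norm we use to measure distances, so to prove the corollary it suffices to re-examine only the stretch (distance) condition of \Cref{def:TriangleLSO} under the $\ell_p$ norm, using the very same $\Sigma$.

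The next step is to record the two standard inequalities: for every $z\in\R^d$, if $1\le p\le 2$ then $\|z\|_2\le\|z\|_p\le d^{\frac1p-\frac12}\|z\|_2$, and if $2\le p\le\infty$ then $\|z\|_p\le\|z\|_2\le d^{\frac12-\frac1p}\|z\|_p$. Now fix $x,y\in\R^d$ and let $\sigma\in\Sigma$ be an ordering that satisfies $\{x,y\}$ in the $\ell_2$ triangle-LSO sense, say with $x\prec_\sigma y$; then for every $a,b$ with $x\preceq_\sigma a\preceq_\sigma b\preceq_\sigma y$ we have $\|a-b\|_2\le\sqrt d\cdot\|x-y\|_2$. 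For $p\in[1,2]$, chaining the inequalities gives
\[
\|a-b\|_p\le d^{\frac1p-\frac12}\|a-b\|_2\le d^{\frac1p-\frac12}\cdot\sqrt d\cdot\|x-y\|_2=d^{\frac1p}\|x-y\|_2\le d^{\frac1p}\|x-y\|_p,
\]
which is exactly the $d^{\frac1p}$-stretch triangle-LSO condition for $\ell_p^d$. For $p\in[2,\infty]$ the parallel chain is
\[
\|a-b\|_p\le\|a-b\|_2\le\sqrt d\cdot\|x-y\|_2\le\sqrt d\cdot d^{\frac12-\frac1p}\|x-y\|_p=d^{1-\frac1p}\|x-y\|_p,
\]
giving the $d^{1-\frac1p}$-stretch condition. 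Since the chosen $\sigma$ depends only on the pair $\{x,y\}$ and not on $p$, a single collection $\Sigma$ serves all $p$ simultaneously, with $|\Sigma|=O(d\log d)$, establishing the corollary.

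I do not expect a genuine obstacle here — the argument is essentially bookkeeping — but the one point worth a sentence of care is that the ``w.l.o.g.'' direction clause in \Cref{def:TriangleLSO} (that $x\prec_\sigma y$) is a purely combinatorial property of $\sigma$, hence is inherited verbatim when we switch from $\ell_2$ to $\ell_p$ and needs no re-derivation; the only quantity that changes between the norms is the multiplicative stretch, which is precisely what the norm-equivalence constants $d^{1/p-1/2}$ and $d^{1/2-1/p}$ control.
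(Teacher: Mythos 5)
Your proposal is correct and matches the paper's own proof essentially verbatim: instantiate \Cref{thm:LSOEuclideanLargeStretch} with $t=\frac23\sqrt d$, $\delta=\frac12$ to get an $\left(O(d\log d),\sqrt d\right)$-triangle LSO for $\ell_2^d$, then transfer the stretch bound to $\ell_p$ via the standard norm-comparison inequalities, with exactly the same two chains of inequalities for $p\in[1,2]$ and $p\in[2,\infty]$. No gaps; nothing further is needed.
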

\begin{proof}
	\sloppy Apply \Cref{thm:LSOEuclideanLargeStretch} with parameters $\delta=\frac12$ and $t=\frac23\sqrt{d}$ to obtain a $\left(O\left(d\cdot\log d\right),\sqrt{d}\right)$-triangle LSO $\Sigma$ for the $d$-dimensional Euclidean space. We argue that $\Sigma$ fulfills the properties stated in \Cref{cor:LpLSO}.
	The following folklore fact will be crucial to our proof:	
	\begin{fact}\label{fact:norms}
		For $1\le q\le p$, and every vector $x\in\R^d$, it holds that $\|x\|_p\le \|x\|_q\le d^{\frac1q-\frac1p}\|x\|_p$.
	\end{fact}
	Consider a pair of points $x,y\in \R^d$.  There is
	an ordering $\sigma\in\Sigma$ such that w.l.o.g. $x\prec_{\sigma}y$, and for every $a,b\in \R^d$ such that $x\preceq_{\sigma}a\preceq_{\sigma}b\preceq_{\sigma}y$ it holds that $\|a-b\|_2\le d^{\frac12}\cdot \|x-y\|_2$.
	Then for every $p\in[2,\infty]$ and  $a,b\in \R^d$ such that $x\preceq_{\sigma}a\preceq_{\sigma}b\preceq_{\sigma}y$, by \Cref{fact:norms} it holds that 
	\[
	\|a-b\|_{p}\le\|a-b\|_{2}\le d^{\frac{1}{2}}\cdot\|x-y\|_{2}\le d^{1-\frac{1}{p}}\cdot\|x-y\|_{p}~.
	\]
	Similarly, for $p\in[1,2]$, and  $a,b\in \R^d$ such that $x\preceq_{\sigma}a\preceq_{\sigma}b\preceq_{\sigma}y$ it holds that
	\[
	\|a-b\|_{p}\le d^{\frac{1}{p}-\frac{1}{2}}\cdot\|a-b\|_{2}\le d^{\frac{1}{p}}\cdot\|x-y\|_{2}\le d^{\frac{1}{p}}\cdot\|x-y\|_{p}~.
	\]
	The corollary now follows.
\end{proof}

Next we prove \Cref{thm:LSOLp12}, restated bellow for convenience.
\HighDimLSOLP*
\begin{proof}
	
Denote by $B_p(x,r)=\{y\mid\|x-y\|_p\le r\}$ the $\ell_p$ ball around $x$ of radius $r$.
We will repeat the same algorithm from \Cref{sec:highDimLSOconstruction} with small changes. First, instead of using standard Euclidean balls we will use $\ell_p$ balls. Second, the gap between consecutive scales will be $\xi_p=\frac{36d}{t}$ instead of $\xi=\frac{12\sqrt{d}}{t}$ (this introduces additional $\poly(d)$ factor to the number of orderings).
Thus we will obtain random clustering $\mathcal{C}^{w_i}_p$ in each scale $w_i$, which will induce an LSO. We begin with analyzing the probability that nearby points will be clustered together. 
This boils down to the volume of the intersection between two $\ell_p$ balls. The proof of the following lemma follows the lines of a similar  proof in \cite{Nguyen13}, and is differed to \Cref{appendix:Missing}.
\begin{restatable}{lemma}{BallIntersectionLpLarge}
	\label{lem:diffClustersLargeLp} 	For $p\in[1,2]$, and every pair of points $x,y\in\R^d$ at distance $\|x-y\|_p=\frac1t\in[d^{-\frac1p},\frac13]\cdot w_i$, the probability that $x$ and $y$ belong to the same clusters in $\mathcal{C}^{w_i}_p$ is bounded by
	\[
	\frac{\text{Vol}_{d}\left(B_{p}(x,w_{i})\cap B_{p}(y,w_{i})\right)}{\text{Vol}_{d}\left(B_{p}(x,w_{i})\cup B_{p}(y,w_{i})\right)}=e^{-\Omega(\frac{d}{t^{p}})}
	\]
\end{restatable}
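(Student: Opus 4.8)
The plan is to estimate the ratio of volumes directly, in analogy with the Euclidean case (Fact \ref{fact:diffClustersLarge}), but now for $\ell_p$ balls with $p\in[1,2]$. First I would reduce to estimating the intersection volume, since $\text{Vol}_d(B_p(x,w_i)\cup B_p(y,w_i)) \le 2\text{Vol}_d(B_p(\vec 0,w_i))$ trivially, so it suffices to show $\text{Vol}_d(B_p(x,w_i)\cap B_p(y,w_i)) \le e^{-\Omega(d/t^p)}\cdot\text{Vol}_d(B_p(\vec 0,w_i))$. By scaling we may take $w_i=1$, so $\|x-y\|_p = 1/t \in [d^{-1/p},1/3]$, and by translation invariance we may center the configuration so that $x = -\tfrac12(x-y)$, $y = \tfrac12(x-y)$. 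Write $v = x-y$, so $\|v\|_p = 1/t$. A point $z$ lies in the intersection iff $\|z - v/2\|_p \le 1$ and $\|z + v/2\|_p \le 1$.

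The key step is a ``slab'' argument. By convexity of $\|\cdot\|_p^p$, any $z$ in the intersection satisfies $\tfrac12\|z-v/2\|_p^p + \tfrac12\|z+v/2\|_p^p \le 1$, and by the parallelogram-type inequality for $\ell_p$ with $p\in[1,2]$ (Clarkson's inequality, or more simply the coordinatewise elementary inequality $\tfrac12|a-b|^p + \tfrac12|a+b|^p \ge |a|^p + c_p|b|^p$ for a constant $c_p>0$ depending on $p$, which holds since the function $t\mapsto |t|^p$ is not too far from quadratic near the relevant scale), we get $\|z\|_p^p + c_p\|v/2\|_p^p \le 1$, i.e. $\|z\|_p^p \le 1 - c_p\cdot 2^{-p}\cdot t^{-p}$. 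Hence the intersection is contained in the ball $B_p(\vec 0, (1 - \Omega(t^{-p}))^{1/p})$, whose volume is $(1-\Omega(t^{-p}))^{d/p}\cdot\text{Vol}_d(B_p(\vec 0,1)) \le e^{-\Omega(d/(p\,t^p))}\cdot\text{Vol}_d(B_p(\vec 0,1)) = e^{-\Omega(d/t^p)}\cdot\text{Vol}_d(B_p(\vec 0,1))$, using $1-x \le e^{-x}$ and that $p\le 2$ is a constant factor. This gives exactly the claimed bound once we divide by the union volume, which is at least $\text{Vol}_d(B_p(\vec 0,1))$.

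I expect the main obstacle to be pinning down the right elementary inequality: for $p\in[1,2]$ the function $|t|^p$ is concave-ish away from the origin, so the naive "replace by quadratic" move needs care, and the constant $c_p$ must not degrade as $p\to 1$ in a way that kills the bound (it should not, since at $p=1$ one can argue directly that the two $\ell_1$-balls shifted by $v$ with $\|v\|_1 = 1/t$ overlap in a region cut off by a slab of relative width $\approx 1/t$ in the direction where $v$ is concentrated, again losing a factor $(1-\Omega(1/t))^{\Theta(d)}$ in volume — but here one must be careful because $v$ could be spread over all $d$ coordinates). The robust way around this, matching the cited argument of \cite{Nguyen13}, is to use the coordinatewise bound $\tfrac12|a-b|^p + \tfrac12|a+b|^p \ge |a|^p + |b/2|^p$ — wait, that is false in general for $p<2$; instead the correct statement is $|a+b|^p + |a-b|^p \ge 2|a|^p + 2|b|^p$ fails, but $\ge 2^{2-p}(|a|^2+|b|^2)^{p/2}\ge\dots$; the cleanest route is Clarkson's inequality $\|f+g\|_p^p + \|f-g\|_p^p \ge 2(\|f\|_p^p + (p-1)\|g\|_p^p)$-type bounds (valid for $1<p\le 2$ it is the other Clarkson inequality in the dual form) — so the real work is to extract from the two ball constraints a clean bound $\|z\|_p^p \le 1 - \Omega(\|v\|_p^p)$ and then exponentiate. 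Once that inequality is in hand, the volume comparison and the final exponential estimate are routine.
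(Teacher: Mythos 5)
The central problem is the direction of your bound. Despite the loose ``$=e^{-\Omega(d/t^{p})}$'' notation in the statement, this lemma plays the same role for $\ell_p$ that \Cref{fact:diffClustersLarge} plays for $\ell_2$: it must give a \emph{lower} bound on the collision probability, i.e.\ $\frac{\mathrm{Vol}_{d}(B_{p}(x,w_{i})\cap B_{p}(y,w_{i}))}{\mathrm{Vol}_{d}(B_{p}(x,w_{i})\cup B_{p}(y,w_{i}))}\ge e^{-O(d/t^{p})}$. Indeed, in the proof of \Cref{thm:LSOLp12} this ratio lower-bounds $\Pr[\Psi_{0}]$ and hence $p_{\vsucc}$, and the number of sampled orderings is $m_{p}=p_{\vsucc}^{-1}\cdot 2\ln|N|=\tilde{O}(d)\cdot e^{O(d/t^{p})}$; the paper's appendix proof accordingly ends with ``$\mathrm{Vol}(\cap)\ge\Omega(1)\cdot e^{-O(d/t^{p})}\cdot V_{p}$ and $\mathrm{Vol}(\cup)\le 2V_{p}$.'' Your very first reduction, ``it suffices to show $\mathrm{Vol}_{d}(B_{p}(x,w_{i})\cap B_{p}(y,w_{i}))\le e^{-\Omega(d/t^{p})}\cdot\mathrm{Vol}_{d}(B_{p}(\vec{0},w_{i}))$,'' proves an \emph{upper} bound on the collision probability, which is the opposite inequality and cannot feed into the ordering count (it would only be relevant for proving that many orderings are \emph{necessary}). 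Note also that a lower bound of the weaker form $e^{-O(d/t)}$ is trivial, since $B_{p}\bigl(\frac{x+y}{2},\,1-\frac{1}{2t}\bigr)\subseteq B_{p}(x,1)\cap B_{p}(y,1)$; the entire content of the lemma is improving the exponent from $d/t$ to $d/t^{p}$ from below.

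Even judged purely as an upper-bound argument, the proposal does not close. The coordinatewise inequality $\frac12|a-b|^{p}+\frac12|a+b|^{p}\ge|a|^{p}+c_{p}|b|^{p}$ is false for $p<2$ (take $a=1$ and $b\to 0$: the left side is $1+\Theta(b^{2})$, and exactly $1$ when $p=1$, while the right side is $1+c_{p}b^{p}$), as you yourself notice mid-proof; and the correct Clarkson-type substitute for $p\in(1,2]$ only gains a term of order $\|v\|_{p}^{p'}$ with $p'=\frac{p}{p-1}\ge 2$, i.e.\ it yields $\|z\|_{p}^{p}\le 1-\Omega(t^{-p'})$ and hence only $e^{-\Omega(d/t^{p'})}$, weaker than the claimed $e^{-\Omega(d/t^{p})}$. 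The paper instead uses the $p$-uniform smoothness inequality $\frac{\|a\|_{p}^{p}+\|b\|_{p}^{p}}{2}\le\|\frac{a+b}{2}\|_{p}^{p}+\|\frac{a-b}{2}\|_{p}^{p}$ (with $a=x-z$, $b=y-z$) in the opposite way: it shows that every point of $B_{p}\bigl(\frac{x+y}{2},(1-\frac{3}{(2t)^{p}})^{1/p}\bigr)$ lies either in $B_{p}(x,1)\cap B_{p}(y,1)$ or in one of the two balls $B_{p}\bigl(x,(1-\frac{4}{(2t)^{p}})^{1/p}\bigr)$, $B_{p}\bigl(y,(1-\frac{4}{(2t)^{p}})^{1/p}\bigr)$, so subtracting the two small-ball volumes from the midpoint ball gives $\mathrm{Vol}(\cap)\ge\Omega(1)\cdot\bigl(1-\frac{3}{(2t)^{p}}\bigr)^{d/p}V_{p}=\Omega(1)\cdot e^{-O(d/t^{p})}V_{p}$, and dividing by $\mathrm{Vol}(\cup)\le 2V_{p}$ finishes. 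That midpoint-ball covering step is the idea missing from your write-up.
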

The following lemma is quite straightforward, and far from being tight. However, the loss will not be significant in our analysis. The proof is differed to \Cref{appendix:Missing}.
\begin{restatable}{observation}{BallIntersectionLpSmall}
	\label{obs:diffClustersSmallLp} 
	For $p\in[1,2]$, and every pair of points $x,y\in\R^d$ at distance $\|x-y\|_p\le \frac{w_i}{4d}$, the probability that $x$ and $y$ belong to the same clusters in $\mathcal{C}^{w_i}_p$ is bounded by
	\[
	\frac{\text{Vol}_{d}\left(B_{p}(x,w_{i})\cap B_{p}(y,w_{i})\right)}{\text{Vol}_{d}\left(B_{p}(x,w_{i})\cup B_{p}(y,w_{i})\right)}\ge1-4d\cdot\frac{\|x-y\|_{p}}{w_{i}}~.
	\]
\end{restatable}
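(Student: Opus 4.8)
The plan is to reduce the whole statement to a one‑line volume comparison. As explained for the Euclidean case in \Cref{subsec:BallIntersection} (and verbatim for the $\ell_p$ clustering $\mathcal{C}^{w_i}_p$, since $z$ lies in the cluster of a center $c$ iff $c\in B_p(z,w_i)$), the probability that $x$ and $y$ land in the same cluster equals the ratio $\frac{\text{Vol}_d(B_p(x,w_i)\cap B_p(y,w_i))}{\text{Vol}_d(B_p(x,w_i)\cup B_p(y,w_i))}$, so it suffices to lower bound this ratio. By homogeneity of volume we may rescale and assume $w_i=1$; write $\delta=\|x-y\|_p\le\frac1{4d}$, let $V=\text{Vol}_d(B_p(x,1))=\text{Vol}_d(B_p(y,1))$ (the two balls are congruent), and let $I=\text{Vol}_d(B_p(x,1)\cap B_p(y,1))$. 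Then $\text{Vol}_d(B_p(x,1)\cup B_p(y,1))=2V-I$, so the ratio is $\frac{I}{2V-I}$, and the whole task is to bound $I$ from below.

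The key step is to control the ``sliver'' $V-I=\text{Vol}_d(B_p(x,1)\setminus B_p(y,1))$ using only the triangle inequality for $\|\cdot\|_p$. Indeed, if $\|z-x\|_p\le 1-\delta$ then $\|z-y\|_p\le\|z-x\|_p+\|x-y\|_p\le 1$, so $B_p(x,1-\delta)\subseteq B_p(y,1)$, whence $B_p(x,1)\setminus B_p(y,1)\subseteq B_p(x,1)\setminus B_p(x,1-\delta)$. The volume of the latter is $V\bigl(1-(1-\delta)^d\bigr)\le V\cdot d\delta$ by Bernoulli's inequality, so $I\ge V(1-d\delta)$.

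Plugging this in and using $\frac{1-a}{1+a}\ge 1-2a$ for $a\ge 0$, $\frac{I}{2V-I}\ge\frac{V(1-d\delta)}{2V-V(1-d\delta)}=\frac{1-d\delta}{1+d\delta}\ge 1-2d\delta$, which after undoing the rescaling is $1-2d\cdot\frac{\|x-y\|_p}{w_i}\ge 1-4d\cdot\frac{\|x-y\|_p}{w_i}$, as claimed. (So the argument in fact yields the slightly stronger bound $1-2d\,\|x-y\|_p/w_i$; the hypothesis $\|x-y\|_p\le w_i/(4d)$ is needed only to make the stated bound nonnegative.) There is no genuine obstacle here — the only point to keep in mind is that the estimate must be insensitive to the shape of the $\ell_p$ ball, which it is, since it invokes nothing about $p$ beyond the triangle inequality and the scaling identity $\text{Vol}_d(B_p(x,r))=r^d\,\text{Vol}_d(B_p(x,1))$; this is precisely why the crude argument suffices, and, as remarked in the technical overview, costs nothing significant in the final count of orderings.
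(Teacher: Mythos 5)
Your proposal is correct and follows essentially the same route as the paper: the triangle inequality shows the shrunken ball $B_p(x,w_i-\|x-y\|_p)$ is contained in the intersection (equivalently, the sliver lies in an annulus), the scaling identity gives the $(1-\delta)^d$ factor, and the ratio is finished via $\frac{1-\beta}{1+\beta}\ge 1-2\beta$. Your use of Bernoulli's inequality even yields the marginally sharper constant $1-2d\,\|x-y\|_p/w_i$, whereas the paper loses a factor of $2$ through its $e^{-x}$ estimates — an immaterial difference.
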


The rest of the analysis follows the exact same lines as the proof of \Cref{thm:LSOEuclideanLargeStretch}, while replacing the usage of \Cref{fact:diffClustersSmall} and \Cref{fact:diffClustersLarge}  by \Cref{obs:diffClustersSmallLp} and \Cref{lem:diffClustersLargeLp}  respectively, and using a larger gap $\xi_p=\frac{36d}{t}$ between consecutive scales in a single LSO. 
The other parameters, $\red{\gamma=c_{\gamma}\cdot\frac{d}{t^{p}}\in\N}$,
and 
$\red{\eps=\frac{1}{c_{\epsilon}}\cdot\min\left\{ \frac{1}{d},\frac{\delta }{t}\right\}=\Omega\left(\frac{\delta}{d}\right)}$ remain the same, while we fix $\delta=\frac14$.

Specifically, consider a pair $x,y$ at distance $\|x-y\|_p\le\frac1t$. The definition of $i$-successful, very successful, and $\Psi_0, \Psi_1, \Psi_2$ remains the same. Following \Cref{eq:Psi0} and using \Cref{lem:diffClustersLargeLp}, 
\[
\Pr\left[\Psi_{0}\right]=(1+\epsilon)^{-d}\cdot e^{-\Omega(\frac{d}{t^{p}})}\cdot(1-\epsilon)^{d}=e^{-\Omega(\frac{d}{t^{p}})}~.
\]
Following \Cref{eq:i-succ} and using \Cref{obs:diffClustersSmallLp}, the probability that $(x,y)$ is  $i$-successful is 
\[
\frac{\text{Vol}_{d}\left(B_{p}(x,\xi_{p}^{i}-\epsilon)\cap B_{p}(y,\xi_{p}^{i}-\epsilon)\right)}{\text{Vol}_{d}\left(B_{p}(x,\xi_{p}^{i}+\epsilon)\cup B_{p}(y,\xi_{p}^{i}+\epsilon)\right)}\ge(1+\frac{\epsilon}{\xi_{p}^{i}})^{-d}\cdot(1-4d\cdot\frac{\|p-q\|_{p}}{\xi_{p}^{i}})\cdot(1-\frac{\epsilon}{\xi_{p}^{i}})^{d}>e^{-\left(\frac{8d}{t\cdot\xi_{p}^{i}}+\frac{3\epsilon d}{\xi_{p}^{i}}\right)}>e^{-\frac{9d}{t\cdot\xi_{p}^{i}}}~.
\]
Hence by independence (following \Cref{eq:Psi1}), and by union bound (following \Cref{eq:Psi2}) we have 
\begin{align*}
	\Pr\left[\Psi_{1}\right] & \ge\Pi_{i=1}^{\gamma-1}e^{-\frac{9d}{t\cdot\xi_{p}^{i}}}=e^{-\sum_{i=1}^{\gamma-1}\frac{9d}{t\cdot\xi_{p}^{i}}}\ge e^{-\frac{18d}{t\cdot\xi_{p}}}=e^{-\frac{1}{2}}\ge\frac{1}{2}~.\\
	\Pr\left[\overline{\Psi_{2}}\right] & \le\sum_{i\ge\gamma}1-e^{-\frac{9d}{t\cdot\xi_{p}^{i}}}\le\sum_{i\ge\gamma}\frac{9d}{t\cdot\xi_{p}^{i}}\le\frac{18d}{t\cdot\xi_{p}^{\gamma}}\le\frac{1}{4}\cdot\Pr\left[\Psi_{0}\right]~.
\end{align*}

Following \Cref{eq:VerySucc}, the probability that $(x,y)$ is very successful is at least $p_\vsucc\ge\frac{1}{4}\cdot\Pr\left[\Psi_{0}\right]=e^{-\Omega(\frac{d}{t^{p}})}$.
We pick $N_i\subseteq[0,8\cdot w_i)^{d}$ an $\eps_i=\epsilon\cdot w_i$-net (w.r.t. $\|\cdot\|_p$) of size $2^{O(d\log \frac{d}{\delta})}$, and denote $N=\bigcup_{i=0}^{\gamma-1}N_i$. 
Sampling $m_{p}\coloneqq p_{\vsucc}^{-1}\cdot2\cdot\ln|N|=d\cdot\log d\cdot e^{O(\frac{d}{t^{p}})}$ i.i.d. orderings we can ensure that every pair  of points $x,y\in N_i$ is very successful in some ordering. Following the same arguments as in the proof of \Cref{thm:LSOEuclideanLargeStretch} (triangle inequality), we conclude that this ordering is an $(m,\frac{2\xi_p\cdot t}{1-2\eps t})$-LSO for the entire space $(\R^d,\|\cdot\|_p)$. By adding $\log \xi_p=O(\log\frac{2d}{t})$ different shifts of this construction, we obtain an $\left(m_{p}\cdot\log\xi_{p},(1+\delta)\cdot2t\right)=\left(d\cdot\log d\cdot\log\frac{2d}{t}\cdot e^{O(\frac{d}{t^{p}})},(1+\delta)\cdot2t\right)
$-triangle LSO.
\Cref{thm:LSOLp12} now follows by replacing $t$ with $\frac{2t}{5}$.

\end{proof}

\section{LSO for high dimensional doubling spaces}\label{sec:LSOdoubling}
This section is devoted to proving the following theorem (restated for convenience):
\DoublingLSO*

We begin by introducing the notion of \emph{padded partition cover scheme}.
This notion is closely related to padded decompositions and sparse covers (see e.g. \cite{Fil19padded} and references therein).%
\begin{definition}[Padded Partition Cover Scheme]\label{def:PaddedPartitionCover}
	A partition $\mathcal{P}$ of a metric space $(X,d_X)$ is $\Delta$-bounded if the diameter of every cluster $C\in\mathcal{P}$ is
	at most $\Delta$. 
	A collection of partitions $\mathcal{P}_{1},\dots,\mathcal{P}_{s}$
	is $(\tau,\rho,\Delta)$-padded partition cover if (a) $s\le \tau$, (b) every partition $\mathcal{P}_{i}$ is $\Delta$-bounded, and (c) for every point $x$, there is a cluster $C$ in one of the partitions $\mathcal{P}_{i}$ such that $B(x,\frac{\Delta}{\rho})\subseteq C$.\\
	A space $(X,d_{X})$ admits a $(\tau,\rho)$-\emph{padded partition cover scheme} if for every $\Delta$, it admits a  $(\tau,\rho,\Delta)$-padded  partition cover.
\end{definition}

Filtser \cite{Fil19padded} implicitly constructed padded partition cover scheme for doubling metrics. Explicitly \cite{Fil19padded} only argued that the construction is sparse cover,  \footnote{$(\tau,\rho,\Delta)$-	Sparse cover is a collection $\mathcal{C}$ of non-disjoint clusters of diameter at most $\Delta$, such that every vertex belongs to at most $\tau$ clusters, and every ball of radius $\nicefrac{\Delta}{\rho}$ is contained in some cluster. Note that the union of all the partitions in a $(\tau,\rho,\Delta)$-padded partition cover is a $(\tau,\rho,\Delta)$-cover (but not necessarily vice versa).} however it is actually a padded partition cover scheme.
\begin{theorem}[\cite{Fil19padded}]\label{thm:DdimCover}
	Let $(X,d_X)$ be a metric space with doubling dimension $d$ and parameter $t=\Omega(1)$. Then $G$ admits a $\left(t,O(2^{O(\nicefrac{d}{t})}\cdot d\cdot\log t)\right)$-padded partition cover scheme. 		
\end{theorem}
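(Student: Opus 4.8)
The plan is to revisit the sparse‑cover construction of Filtser~\cite{Fil19padded} and check that it is in fact a padded partition cover scheme in the sense of \Cref{def:PaddedPartitionCover}, rather than merely a sparse cover. Fix a scale $\Delta$ and let $\gamma$ be the intended padding radius, with $\rho = \Delta/\gamma$ to be calibrated at the end. The construction begins from a $\gamma$‑net $N$ of $(X,d_X)$; the only structural feature of $X$ that is used afterwards is the packing property of doubling metrics, namely that any ball of radius $R$ contains at most $(R/\gamma)^{O(d)}$ net points. Together with the standard $\Delta$‑boundedness locality --- whether the small ball around a point is cut depends only on the part of the construction lying inside the $O(\Delta)$‑ball around that point --- this lets one reason about only finitely many local ``configurations'', so that the number of partitions produced will not depend on $|X|$.

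The heart of the construction is an iterated layering of the net: one partitions $N$ into $t$ layers $N_1,\dots,N_t$ and, for each layer $N_j$, fixes a rule assigning every point of $X$ to the cluster of a nearby layer‑$j$ net point so that (i) within a single layer the chosen radius‑$\Theta(\Delta)$ balls are pairwise disjoint, and (ii) every net point is well served by its own layer, i.e.\ its $\gamma$‑ball sits inside the layer cluster it is assigned to. Property~(i) is where the doubling dimension is spent: a full proper colouring of the ``distance‑$\le\Delta$'' graph of $N$ would need $(\Delta/\gamma)^{\Theta(d)}$ colours, so restricting to only $t$ layers forces the effective per‑layer separation to shrink, and iterating a base colouring (splitting the dimension budget $d$ into $t$ chunks of size $d/t$, each contributing a multiplicative $2^{O(d/t)}$, with the extra $d\cdot\log t$ absorbing net granularity and the number of scales a single layer must cover) yields exactly the factor $2^{O(d/t)}\cdot d\cdot\log t$ in $\rho$. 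Once the layering is in hand, each layer $N_j$ gives a family of pairwise‑disjoint clusters of diameter $\le\Delta$, and we complete it to a genuine partition $\mathcal{P}_j$ of $X$ by making every uncovered point its own singleton cluster (singletons have diameter $0\le\Delta$, so $\Delta$‑boundedness is preserved). This is precisely the observation promised in the footnote: the cover of~\cite{Fil19padded} is already organised into $t$ disjoint layers, and padding each layer out to a full partition of $X$ costs nothing. The padding guarantee then transfers verbatim: given $x\in X$, pick a net point $y\in N$ with $d_X(x,y)\le\gamma$; then $y$ lies in some layer $N_j$, and since $d_X(x,y)+\gamma\le 2\gamma$ is at most the radius of $y$'s layer cluster, the whole ball $B_X(x,\tfrac{\Delta}{\rho})$ is contained in a cluster of $\mathcal{P}_j$. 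As $\Delta$ was arbitrary, this gives a $(t,\rho)$‑padded partition cover scheme with $\rho=O(2^{O(d/t)}\cdot d\cdot\log t)$.

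The main obstacle is the combinatorial step~(i)–(ii): showing that $t$ layers suffice with per‑layer separation degrading only like $2^{O(d/t)}$, rather than like $2^{O((d/t)\log d)}$, which is what a naive ``repeat a base padded decomposition and union‑bound over a net'' argument would give. This requires iterating/recursing the colouring so that the dimension budget is split \emph{additively} over the $t$ layers, and simultaneously exploiting the $\Delta$‑bounded locality of the cut events to keep the layer count independent of $|X|$ (or of the aspect ratio). Both ingredients are already supplied by~\cite{Fil19padded}; the genuinely new content here is only the routine verification that the resulting cover is organised into disjoint layers, hence extends to a bona fide partition cover, which is what \Cref{thm:DdimCover} asserts.
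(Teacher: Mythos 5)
Your proposal proves (at best) the transpose of the statement the paper actually needs. In \Cref{def:PaddedPartitionCover} the first parameter counts the partitions and the second is the padding, and the way \Cref{thm:DdimCover} is consumed pins the intended reading down: it is plugged into \Cref{lem:PartitionCoverSchemeToUltrametric} with $t'=t/(1+\eps)$ to produce a $\bigl(2^{O(d/t)}\cdot d\cdot\log^{2}t,\,t\bigr)$-ultrametric cover, which requires a scheme with $\tau=O\bigl(2^{O(d/t)}\cdot d\cdot\log t\bigr)$ partitions and padding $\rho=O(t)$ (see also the remark closing \Cref{sec:LSOdoubling}: ``padding parameter $t$ requires $\approx 2^{O(d/t)}$ partitions''). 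You instead construct exactly $t$ partitions and let the padding degrade to $2^{O(d/t)}\cdot d\cdot\log t$; fed into \Cref{lem:PartitionCoverSchemeToUltrametric}, that would yield ultrametrics of stretch $2^{O(d/t)}\cdot d\cdot\log t$ rather than $t$, and \Cref{thm:LSOdoublingLargeStretch} would not follow. (The pair in the printed theorem is admittedly listed in the opposite order to \Cref{def:PaddedPartitionCover}, but the application and the remark disambiguate it.)

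Independently of the transposition, the heart of your construction --- a $t$-layer colouring of a $\gamma$-net in which every layer admits a single $\Delta$-bounded partition padding all of its net points, with $\Delta/\gamma=2^{O(d/t)}\cdot d\cdot\log t$ --- is precisely the step you never prove, and it is not ``already supplied by'' \cite{Fil19padded}. What \cite{Fil19padded} supplies is a strong-diameter padded decomposition for doubling metrics with an exponential padding guarantee: a random $\Delta$-bounded partition in which any fixed ball of radius $\Delta/t$ lies inside a single cluster with probability $2^{-O(d/t)}$. The cover is then obtained by sampling $2^{O(d/t)}\cdot O(d\log t)$ independent such partitions, the $d\log t$ coming from a union bound over the $t^{O(d)}$ points of a $\Theta(\Delta/t)$-net inside a ball of radius $O(\Delta)$, so that every such ball is padded in at least one sample. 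Since each sample is already a genuine partition, the collection is a padded partition cover scheme with $\tau=2^{O(d/t)}\cdot d\cdot\log t$ and $\rho=O(t)$; this one-line observation (no completion by singletons is needed) is the only content the paper adds on top of the citation. Your iterated-colouring route, whatever its independent merits, is a different and unproved tradeoff, so as written the proposal has a genuine gap: its key combinatorial lemma is missing and is misattributed to the cited work.
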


An \emph{ultrametric} $\left(X,d\right)$ is a metric space satisfying a strong form of the triangle inequality, that is, for all $x,y,z\in X$,
$d(x,z)\le\max\left\{ d(x,y),d(y,z)\right\}$. A related notion is \emph{hierarchical well-separated tree} (HST).
\begin{definition}[HST]\label{def:HST}
	A metric $(X,d_X)$ is an hierarchical well-separated tree (HST) if there exists a bijection $\varphi$ from $X$ to leaves of a rooted tree $T$ in which:
	\begin{enumerate}[noitemsep]
		\item Each node $v\in T$ is associated with a label $\Gamma_{v}$ such that $\Gamma_{v} = 0$ if $v$ is a leaf and $\Gamma_{v}\geq \Gamma_{u}$ if $v$ is an internal node and $u$ is any child of $v$.
		\item $d_X(x,y) = \Gamma_{\lca(\varphi(x),\varphi(y))}$ where $\lca(u,v)$ is the least common ancestor of any two given nodes $u,v$ in $T$. 
	\end{enumerate}
\end{definition}
It is well known that any ultrametric an HST are equivalent definitions (see \cite{BLMN05}).

\paragraph*{Ultrametric cover.~}  
Filtser and Hung \cite{FL22} introduced the notion of ultrametric cover.
Consider a  metric space $(X,d_X)$, a distance measure $d_Y$ is said to be dominating if $\forall x,y\in X$, $d_X(x,y)\le d_Y(x,y)$. 
\begin{definition}[Ultrametric Cover]\label{def:UltrametricCover}
	A \emph{$(\tau,\rho)$-ultrametric cover} for a space $(X,d)$ is a collection of at most $\tau$ dominating ultrametrics $\mathcal{U} = \{(U_i,d_{U_i})\}_{i=1}^{\tau}$ over $X$, such that for every $x,y\in X$, $ d_X(x,y)\le \min_{1\le i\le\tau}d_{U_i}(x,y)\le \rho\cdot d_X(x,y)$.
\end{definition} 

In \cite{FL22} it was shown that given an ultrametric cover, one can construct a triangle-LSO.
\begin{restatable}{lemma}{CoverToTriangleLSO}[\cite{FL22}]
	\label{thm:CoverToTriangleLSO} If a metric $(X,d_X)$ admits a $(\tau,\rho)$-ultrametric cover $\mathcal{U}$, then it has a $\left(\tau, \rho\right)$-triangle-LSO.
\end{restatable}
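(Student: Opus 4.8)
The plan is to show that a single ultrametric already yields a one-ordering $(1,\rho)$-triangle-LSO for the pairs it handles well, and then simply collect one such ordering per ultrametric in the cover. So the whole argument reduces to understanding the canonical linear order on the leaves of an HST.

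First I would recall this canonical order. Fix an ultrametric $(U,d_U)$ over $X$ together with its HST representation $(T,\varphi,\{\Gamma_v\}_v)$ from \Cref{def:HST}, and fix once and for all an arbitrary ordering of the children of every internal node of $T$; let $\sigma_U$ be the order in which the leaves of $T$ are visited by the associated depth-first (in-order) traversal. Via $\varphi$ this is a linear ordering of $X$. The one structural fact I need is that for every node $v\in T$ the leaves of the subtree rooted at $v$ occupy a contiguous block in $\sigma_U$ — a standard property of rooted trees, independent of the arbitrary child-orderings. Consequently, given $x,y\in X$ with $x\prec_{\sigma_U}y$ and any $a,b$ with $x\preceq_{\sigma_U}a\preceq_{\sigma_U}b\preceq_{\sigma_U}y$, writing $v=\lca(\varphi(x),\varphi(y))$, both $\varphi(a)$ and $\varphi(b)$ lie in the block of $v$, so $\lca(\varphi(a),\varphi(b))$ is $v$ or a descendant of $v$; by monotonicity of the labels ($\Gamma_u\le\Gamma_v$ when $u$ is a descendant of $v$) and the LCA characterization of ultrametric distance this gives
\[
 d_U(a,b)=\Gamma_{\lca(\varphi(a),\varphi(b))}\le\Gamma_v=d_U(x,y).
\]

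Now I would assemble the LSO. Let $\mathcal{U}=\{(U_i,d_{U_i})\}_{i=1}^{\tau}$ be the given $(\tau,\rho)$-ultrametric cover and set $\Sigma=\{\sigma_{U_i}\}_{i=1}^{\tau}$, so $|\Sigma|\le\tau$. For $x,y\in X$, pick the index $i$ attaining $\min_i d_{U_i}(x,y)$, so by the cover guarantee $d_X(x,y)\le d_{U_i}(x,y)\le\rho\cdot d_X(x,y)$. Take $\sigma=\sigma_{U_i}$ with (w.l.o.g.) $x\prec_\sigma y$. For any $a,b$ with $x\preceq_\sigma a\preceq_\sigma b\preceq_\sigma y$, combining the displayed inequality with the fact that $U_i$ is dominating ($d_X\le d_{U_i}$) yields
\[
 d_X(a,b)\le d_{U_i}(a,b)\le d_{U_i}(x,y)\le\rho\cdot d_X(x,y),
\]
which is exactly the condition in \Cref{def:TriangleLSO}; hence $\Sigma$ is a $(\tau,\rho)$-triangle-LSO.

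I do not expect a genuine obstacle here — the result is essentially a repackaging of [FL22]. The two places that need care are: (i) stating precisely that the leaves below any HST node form an interval of the DFS order, which is where the ``points between $x$ and $y$'' are controlled; and (ii) bookkeeping the two inequalities that drive the estimate, namely domination ($d_X\le d_{U_i}$), used to convert the $d_{U_i}$-bound on the in-between pairs into a $d_X$-bound, and label monotonicity plus the LCA formula for $d_{U_i}$, used to bound $d_{U_i}(a,b)$ by $d_{U_i}(x,y)$. Everything else is routine.
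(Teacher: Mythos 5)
Your proof is correct and follows essentially the same route as the paper (which cites \cite{FL22} and notes that the preorder of an ultrametric's HST is a $(1,1)$-triangle LSO): one DFS/preorder ordering per ultrametric, contiguity of subtree leaves in that order, label monotonicity at the lca, and domination to pass from $d_{U_i}$ back to $d_X$. No gaps to flag.
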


Thus the main technical part of this section is the construction of ultrametric covers for doubling spaces. 
We will prove a reduction from padded partition cover to ultrametric cover. 
In essence, ultrametric is simply a hierarchical partition.
Thus, this reduction takes unrelated partitions in all possible distance scales, and combines them into hierarchical/laminar partition. Reductions similar in spirit were constructed in the context of the Steiner point removal problem \cite{Fil20scattering}, stochastic Steiner point removal \cite{EGKRTT14}, universal Steiner tree \cite{JLNRS05}, as well as for ultrametric covers \cite{FL22}, and others.
We follow here a bottom-up approach, where the ratio between consecutive scales in a single hierarchical partition (a.k.a. ultrametric) is $O(\frac\rho\eps)$. When constructing the next level in the  hierarchical partition, we take partitions from a padded partition cover of the current scale, and slightly ``round'' them around the ``borders'' so that no previously created cluster will be divided (see \Cref{fig:Laminar}). The argument is that due to the large ratio between consecutive scales, the effects of this rounding are marginal. The proof appears bellow
\begin{lemma}\label{lem:PartitionCoverSchemeToUltrametric}
	Suppose that a metric space $(X,d_{X})$ admits a $(\tau,\rho)$-padded
	partition cover scheme. Then for every $\epsilon\in(0,\frac{1}{2})$, $(X,d_{X})$  admits an  $\left(O(\frac{\tau}{\epsilon}\log\frac{\rho}{\epsilon}),\rho(1+\epsilon)\right)$-ultrametric cover.
\end{lemma}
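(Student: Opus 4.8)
The plan is to build the ultrametric cover bottom-up, one laminar (hierarchical) partition at a time, where \emph{inside} a single hierarchy the ratio between consecutive scales is a large value $\lambda=\Theta(\rho/\epsilon)$. Fix $\lambda:=\lceil c\rho/\epsilon\rceil$ for a sufficiently large absolute constant $c$. I will use $O(\log_{1+\epsilon}\lambda)=O(\frac1\epsilon\log\frac\rho\epsilon)$ families of scales, one per offset $a\in\{0,1,\dots,\lceil\log_{1+\epsilon}\lambda\rceil\}$, where the $a$-th family is $\{\Delta^{(a)}_i:=(1+\epsilon)^a\cdot\lambda^i\}_{i\in\Z}$; the union of all these scale sets is $(1+\epsilon)$-dense in $\R_{>0}$. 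For each offset $a$ and each index $k\in\{1,\dots,\tau\}$ I construct one dominating ultrametric $U_{a,k}$, for a total of $O(\frac\tau\epsilon\log\frac\rho\epsilon)$ ultrametrics, and I will argue stretch $\rho(1+O(\epsilon))$ --- which gives $\rho(1+\epsilon)$ after rescaling $\epsilon$ by an absolute constant, leaving the asymptotic count unchanged.

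Construction of $U_{a,k}$: for every scale $\Delta^{(a)}_i$ invoke the padded partition cover scheme (\Cref{def:PaddedPartitionCover}) to get a $(\tau,\rho,\Delta^{(a)}_i)$-padded partition cover, and let $\mathcal P_i$ be its $k$-th partition; thus $\mathcal P_i$ is $\Delta^{(a)}_i$-bounded. These partitions are a priori unrelated across scales, so I process $i$ from small to large and \emph{round} $\mathcal P_i$ so that it does not split any cluster of the already built level-$(i-1)$ partition: pick for every level-$(i-1)$ cluster $Q$ a representative $r_Q\in Q$, and in the rounded partition $\mathcal P'_i$ put all of $Q$ into the unique $\mathcal P_i$-cluster containing $r_Q$. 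Then $\mathcal P'_i$ is coarser than the level-$(i-1)$ partition, so the hierarchy stays laminar, and each $\mathcal P'_i$-cluster, being a union of level-$(i-1)$ clusters whose representatives lie in a common $\mathcal P_i$-cluster, has diameter at most $D_i:=\Delta^{(a)}_i+2D_{i-1}$, where $D_{i-1}$ bounds the level-$(i-1)$ diameters. Unrolling this recursion gives $D_i\le\frac{1}{1-2/\lambda}\,\Delta^{(a)}_i\le(1+O(\epsilon/\rho))\,\Delta^{(a)}_i$. The laminar family $\{\mathcal P'_i\}_i$ is an HST (\Cref{def:HST}) once we label level-$i$ nodes by $D_i$: labels do not increase toward the leaves and each node's label bounds its cluster's diameter, so the induced ultrametric $U_{a,k}$ is dominating (two points in a common level-$i$ cluster are at distance $\le D_i$, the label of their least common ancestor).

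For the stretch, fix $x,y\in X$ and set $r:=\rho\, d_X(x,y)/(1-O(\epsilon))$, constant to be fixed. By $(1+\epsilon)$-density pick an offset $a$ and an index $i$ with $\Delta^{(a)}_i\in[r,(1+\epsilon)r]$. Since $\lambda\ge 2\rho$ (true for $c$ large and $\epsilon<\frac12$) we get $D_{i-1}\le 2\Delta^{(a)}_i/\lambda\le\Delta^{(a)}_i/\rho$ and $2\rho D_{i-1}\le 4\rho\Delta^{(a)}_i/\lambda=O(\epsilon)\Delta^{(a)}_i$, hence, taking $c$ large enough, $d_X(x,y)\le\Delta^{(a)}_i(1-O(\epsilon))/\rho\le\Delta^{(a)}_i/\rho-2D_{i-1}$; in particular $d_X(x,y)\le\Delta^{(a)}_i/\rho$. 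By property (c) of the padded partition cover applied to the point $x$ at scale $\Delta^{(a)}_i$, there is an index $k$ such that a single cluster $C$ of the $k$-th partition at that scale contains $B(x,\Delta^{(a)}_i/\rho)$, and so $y\in C$ too. In $U_{a,k}$: the representative of $x$'s level-$(i-1)$ cluster is within $D_{i-1}\le\Delta^{(a)}_i/\rho$ of $x$, hence lies in $C$; the representative of $y$'s level-$(i-1)$ cluster is within $d_X(x,y)+D_{i-1}\le\Delta^{(a)}_i/\rho$ of $x$, hence also lies in $C$; therefore these two level-$(i-1)$ clusters are merged in $\mathcal P'_i$, so $x$ and $y$ share a level-$i$ cluster of $U_{a,k}$. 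Thus $d_{U_{a,k}}(x,y)\le D_i\le(1+O(\epsilon/\rho))\Delta^{(a)}_i\le(1+O(\epsilon/\rho))(1+\epsilon)\cdot\frac{\rho\, d_X(x,y)}{1-O(\epsilon)}\le\rho(1+O(\epsilon))\, d_X(x,y)$, using $\rho\ge1$. Together with domination this shows that $\{U_{a,k}\}_{a,k}$ is an $\bigl(O(\frac\tau\epsilon\log\frac\rho\epsilon),\rho(1+\epsilon)\bigr)$-ultrametric cover, as claimed.

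The crux, and essentially the only delicate point, is the rounding step and the bookkeeping of the errors it introduces: one must simultaneously (i) make each level's partition laminar over the hierarchy built so far, (ii) keep diameter growth geometric, so that $D_i=(1+O(\epsilon/\rho))\Delta^{(a)}_i$ and does not blow up over the many levels, and (iii) ensure that a ball of radius $\Delta^{(a)}_i/\rho$ which sat inside an un-rounded cluster loses only $O(D_{i-1})$ of its radius under rounding. All three requirements force the scale gap $\lambda=\Theta(\rho/\epsilon)$, which is exactly what inflates the number of ultrametrics by the extra $\frac1\epsilon\log\frac\rho\epsilon$ factor (one family of scales per index $k\in\{1,\dots,\tau\}$ and per offset $a$). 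A minor, standard point is to make each $U_{a,k}$ well defined on all of $X$; only finitely many scales are relevant in the finite instances arising in the applications, as in the Euclidean construction.
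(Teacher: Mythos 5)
Your proof is correct and follows essentially the same route as the paper: take one partition per distance scale from the padded partition cover, space the scales by $\Theta(\rho/\epsilon)$, force laminarity bottom-up while losing only $O(D_{i-1})$ in diameter and padding, turn each laminar hierarchy into an HST, and take $O(\frac1\epsilon\log\frac\rho\epsilon)$ multiplicative shifts of the scales. The only cosmetic difference is your rounding rule (assigning each lower-level cluster by a representative point rather than by intersection with the greedily trimmed upper cluster, and arguing the pair merges directly via the two representatives instead of re-proving a padded-cover property), which is immaterial to correctness.
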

By plugging \Cref{thm:DdimCover} (\cite{Fil19padded})  into \Cref{lem:PartitionCoverSchemeToUltrametric} (with $\eps=\frac14$ and $t'=\frac{t}{1+\eps}$) we obtain that every metric space with doubling dimension $d$ admits an $\left(2^{O(\nicefrac{d}{t})}\cdot d\cdot\log^{2}t,t\right)$-ultrametric cover. Now, \Cref{thm:LSOdoublingLargeStretch} follows by \Cref{thm:CoverToTriangleLSO} (\cite{FL22}).
\begin{proof}[Proof of \Cref{lem:PartitionCoverSchemeToUltrametric}]
	Assume w.l.o.g. that the minimal pairwise distance in $X$ is $\red{1}$, while the maximal  pairwise distance is $\red{\Phi}$.	
	Fix $\red{\epsilon}\in(0,\frac{1}{4})$ and $\red{c}\ge 1$ to be determined later.
	For $i\ge0$, set
	$\red{\Delta_{i}=c\cdot(\frac{4\rho}{\epsilon})^{i}}$, and let
	$\red{\mathbb{P}_i=\{\mathcal{P}_{1}^{i},\dots,\mathcal{P}_{\tau}^{i}\}}$ be a $(\tau,\rho,\Delta_{i})$-padded partition cover (we assume that $\mathbb{P}_i$ has exactly $\tau$ partitions; we can enforce this assumption by duplicating partitions if necessary). 
	Fix some $j$, let $\mathcal{P}^{-1}_{j}$ be the partition where each vertex is a singleton, and consider $\{\mathcal{P}^{i}_{j}\}_{i\ge -1}$. We will inductively define a new set of partitions, enforcing it to be a laminar system. The basic idea is to produce a tree of partitions where the lower level is a refinement of the higher level, and we do so by grouping a cluster at a lower level to one of the clusters at a  higher level separating it. 
	
	The lowest level $\mathcal{P}^{-1}_{j}$ where each set in the partition is a singleton, stays as-is. 
	Inductively, for any $i\geq 0$, after constructing $\tilde{\mathcal{P}}_{j}^{i-1}$ from $\mathcal{P}_{j}^{i-1}$, we will construct $\tilde{\mathcal{P}}_{j}^{i}$
	from $\mathcal{P}_{j}^{i}$ using $\tilde{\mathcal{P}}_{j}^{i-1}$.
	Let $\mathcal{P}_{j}^{i}=\left\{ C_{1},\dots,C_{\phi}\right\}$ be all sets in the partition $\mathcal{P}_{j}^{i}$. For each $q\in[1,\phi]$, let $\red{Y_{q}}=X\setminus\cup_{a<q}\tilde{C}_{a}$ be the set of unclustered points (w.r.t. level $i$). 
	Let	$\red{C'_{q}}=C_{q}\cap Y_{q}$, and let $\red{S_{C'_{q}}}=\left\{ C\in\tilde{\mathcal{P}}_{j}^{i-1}\mid C\cap C'_{q}\ne\emptyset\right\} $ be  the set of new level-$(i-1)$ clusters intersecting $C'_{q}$.
	We set $\red{\tilde{C}_{q}}=\cup S_{C'_{q}}$ to be the union of these clusters and continue iteratively.
	See \Cref{fig:Laminar} for illustration.
	Clearly, $\tilde{\mathcal{P}}_{j}^{i-1}$
	is a refinement of $\tilde{\mathcal{P}}_{j}^{i}$. We conclude that
	$\left\{ \tilde{\mathcal{P}}_{j}^{i}\right\} _{i\ge-1}$ is a laminar hierarchical set of partitions that refine each other.
	
	\begin{figure}[t]
		\centering
		\includegraphics[width=1\textwidth]{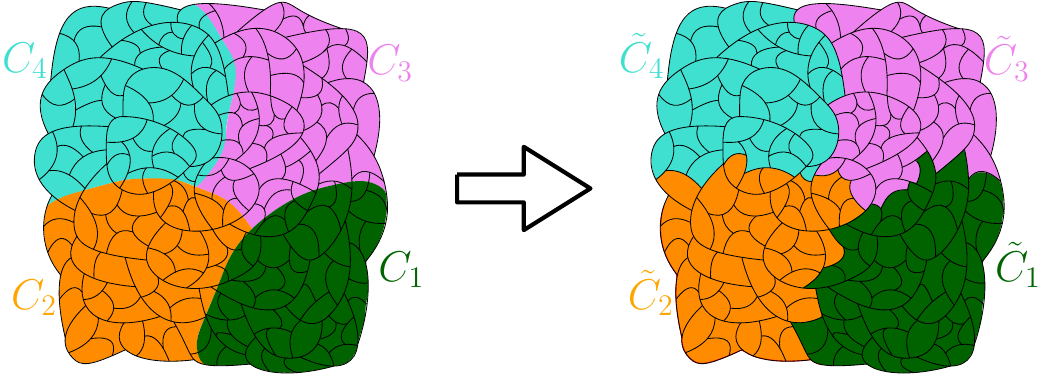}
		\caption{\footnotesize{Illustration of the construction of the partition $\tilde{\mathcal{P}}_{j}^{i}$ given  $\mathcal{P}_{j}^{i}$ and $\tilde{\mathcal{P}}_{j}^{i-1}$.
				The black lines in both the left and right parts of the figure border clusters in $\tilde{\mathcal{P}}_{j}^{i-1}$. On the left illustrated the partition $\mathcal{P}_{j}^{i}=\{C_1,C_2,C_3,C_4\}$, where different clusters colored by different colors. On the right illustrated the modified partition $\mathcal{P}_{j}^{i}=\{\tilde{C}_1,\tilde{C}_2,\tilde{C}_3,\tilde{C}_4\}$. $\tilde{C}_1$ contains all the clusters in $\tilde{\mathcal{P}}_{j}^{i-1}$ intersecting $C_1$. $C_2$ contains all the clusters in $\tilde{\mathcal{P}}_{j}^{i-1}\setminus \tilde{C}_1$ intersecting $C_2$, and so on.
		}}
		\label{fig:Laminar}
	\end{figure}
	
	We argue by induction that $\tilde{\mathcal{P}}_{j}^{i}$ has diameter $\Delta_{i}(1+\epsilon)$.
	Consider $\tilde{C}_{q}\in\tilde{\mathcal{P}}_{j}^{i}$; it consists
	of $C'_{q}\subseteq C_{q}\in\mathcal{P}_{j}^{i}$ and of clusters in $\tilde{\mathcal{P}}_{j}^{i-1}$
	intersecting $C'_{q}$. 
	As the diameter of $C'_{q}$ is bounded by $\diam(C_{q})\le \Delta_i$, and by the induction hypothesis, the diameter of each cluster $C\in \tilde{\mathcal{P}}_{j}^{i-1}$ is bounded by $(1+\eps)\Delta_{i-1}$, we conclude that the diameter of $\tilde{C}_{q}$ is bounded by
	\[
	\Delta_{i}+2\cdot(1+\epsilon)\Delta_{i-1}=\Delta_{i}\left(1+\frac{2(1+\epsilon)}{4\rho}\cdot\epsilon\right)\le\Delta_{i}(1+\epsilon)~,
	\]
	since $\rho \geq 1$ and $\eps < 1$.
	
	Next we argue that 
	$ \tilde{\mathbb{P}}_i = \{\tilde{\mathcal{P}}_{1}^{i},\dots,\tilde{\mathcal{P}}_{\tau}^{i}\}$ is a $(\tau,(1+\eps)^2\rho,(1+\eps)\Delta)$-padded partition cover. Observe that it contains $\tau$ partitions, and we have shown that all the clusters have diameter at most $(1+\eps)\Delta$. Thus,  it remains  to prove the padding property.
	Consider a vertex $v\in X$, there is some index $j$ such that $B(x,\frac{\Delta_{i}}{\rho})\subseteq C_{i}\in\mathcal{P}_{j}^{i}$.
	We argue that $B(x,\frac{(1+\eps)\Delta_{i}}{(1+\eps)^{2}\rho})\subseteq\tilde{C}_{i}\in\mathcal{P}_{j}^{i}$.
	Note that if a cluster $\tilde{C}_{i-1}$ is not contained in $\tilde{C}_{i}$, then it must contain a vertex out of $C_{i}$.
	In other words, every vertex $y\notin \tilde{C}_{i}$ belongs to a cluster $\tilde{C}_{y}\in\mathcal{P}_{j}^{i-1}$ containing  vertex $z\notin C_{i}$.
	Necessarily $d_{X}(x,z)>\frac{\Delta_{i}}{\rho}$. Using the triangle inequality 
	\[
	d_{X}(x,y)\ge d_{X}(x,z)-d_{X}(z,y)>\frac{\Delta_{i}}{\rho}-(1+\eps)\Delta_{i-1}=\frac{\Delta_{i}}{\rho}\left(1-\frac{(1+\eps)\epsilon}{4}\right)\ge\frac{\Delta_{i}}{\rho(1+\epsilon)}=\frac{(1+\eps)\Delta_{i}}{(1+\eps)^{2}\rho}~.
	\]
	If follows that  $B(x,\frac{\Delta_{i}(1+\epsilon)}{\rho(1+\epsilon)^{2}})\subseteq\tilde{C}_{i}\in\mathcal{\tilde{P}}_{j}^{i}$.

	Finally, we construct an ultrametric cover.
	Fix an index $j\in [1,\tau]$; we construct an HST $U_j$ as follows. Leaves of $U_j$ bijectively correspond to points in $X$ and have label $0$. For each $i \in [0,I]$ where $I = \lceil \log_{\nicefrac{4\rho}{\epsilon}}\nicefrac\Phi c\rceil$, internal nodes at level $i$ bijectively correspond to the clusters $\tilde{\mathcal{P}}_j^i$
	(leaves of $U_j$ is at level $-1$), and have label $(1+\eps)\Delta_i$. There is an edge from each node corresponding to a cluster $\tilde{C}_{i-1}\in\tilde{\mathcal{P}}_j^{i-1}$ to the node corresponding the unique cluster  $\tilde{C}_{i}\in\tilde{\mathcal{P}}_j^{i}$ containing $\tilde{C}_{i-1}$.
	The root of $U_j$ is the unique single cluster in $\tilde{\mathcal{P}}_j^{I}$. 
	Clearly, the ultrametric cover $\{U_j\}_{j=1}^{\tau}$ is dominating.
	
	To bound the stretch, we will construct such an ultrametric cover with  $c=(1+\eps)^l$ for every $l \in [0, \lfloor\log_{1+\epsilon}\frac{4\rho}{\epsilon}\rfloor]$. The final ultrametric cover will be a union of these $O(\log_{1+\epsilon}\frac{4\rho}{\epsilon})$ ultrametric covers. Clearly, their cardinality is bounded by $\tau\cdot O(\log_{1+\epsilon}\frac{4\rho}{\epsilon})=O(\frac{\tau}{\eps}\log\frac{\rho}{\epsilon})$.
	
	Consider a pair $x,y\in X$.
	Let $l\in[0,\lfloor\log_{1+\epsilon}\frac{4\rho}{\epsilon}\rfloor]$, and $i\ge0$ be the unique indices such that $(1+\epsilon)^{l-1}(\frac{4\rho}{\epsilon})^i\leq (1+\epsilon)\rho \cdot d_{X}(x,y)\le(1+\epsilon)^{l}(\frac{4\rho}{\epsilon})^i$. 
	For the ultrametric cover constructed w.r.t. $c=(1+\epsilon)^{l}$, $\Delta_{i}=c\cdot(\frac{4\rho}{\epsilon})^{i}$, and hence there is some index $j$, and a cluster $\tilde{C}_i\in\tilde{\mathcal{P}}_j^i$ such that $B(x,\frac{\Delta_{i}(1+\epsilon)}{\rho(1+\epsilon)^{2}})\subseteq\tilde{C}_{i}\in\mathcal{\tilde{P}}_{j}^{i}$. It holds that
	\[
	d_{X}(x,y)=\frac{(1+\epsilon)\rho\cdot d_{X}(x,y)}{(1+\epsilon)\rho}\le\frac{(1+\epsilon)^{l}\cdot(\frac{4\rho}{\epsilon})^{i}}{(1+\epsilon)\rho}=\frac{c\cdot(\frac{4\rho}{\epsilon})^{i}}{(1+\epsilon)\rho}=\frac{(1+\epsilon)\Delta_{i}}{(1+\epsilon)^{2}\rho}~.
	\]
	We conclude that in $U_j$, both $x,y$ are decedents of a node with  label $(1+\epsilon)\Delta_{i}\le(1+\epsilon)^{2}\rho\cdot d_{X}(x,y)$; the stretch guarantee follows.
	
	In summary, we have constructed an $\left(O(\frac{\tau}{\epsilon}\log\frac{\rho}{\epsilon}),\rho(1+\epsilon)^2\right)$-ultrametric cover.
	The lemma follows by rescaling $\eps$.
\end{proof}

\begin{remark}
	Our construction of triangle-LSO's for doubling metrics is much simpler than for the case of Euclidean space. One may ask, why did we went through an elaborate direct proof for the  Euclidean case, instead of simply constructing padded partition cover and using \Cref{lem:PartitionCoverSchemeToUltrametric}?
	Interestingly, padded covers (as well as padded partitions \cite{Fil19padded}) in Euclidean spaces have similar tradeoff to general doubling spaces: padding parameter $t$ requires $\approx2^{O(\frac{d}{t})}$ partitions.
	The crucial structural property we use here, is that if one wishes to ensure that every pair at distance $\frac\Delta t$ is clustered together (instead of the entire ball $B(x,\frac\Delta t)$), then in Euclidean space $\approx2^{O(\frac{d}{t^2})}$ partitions are enough, a significant improvement!
\end{remark}

\section{Labeled Nearest Neighbor Search}\label{sec:LabeledNNS}
In\aidea{Our techniques should work for $k$-NNS as well} this section we construct our labeled NNS data structures for various metric spaces. See \Cref{tab:LabeledNNSFamilies} for a summary of our results.
In \Cref{subsec:NNSfromRooted} we prove a meta \Cref{thm:rootedToNNS} turning rooted LSO into labeled NNS. We conclude efficient labeled NNS for planar and fixed minor free graphs (\Cref{cor:PlanarNNS}), treewidth graphs (\Cref{cor:TreeWidthNNS}), and metric spaces with small correlation dimension (\Cref{cor:CorrelationNNS}).
Then, in \Cref{subsec:NNSfromTriangle} we prove a meta \Cref{thm:rootedToNNS} turning rooted LSO into labeled NNS.
We conclude a labeled NNS for graphs with relatively large doubling dimension (\Cref{cor:doublingNNS}), and for general metric spaces (\Cref{cor:GeneralNNSfromLSO}).
Later in \Cref{subsec:NNSGeneral} we directly construct labeled NNS for general metrics to obtain a very small query time (\Cref{thm:NNSfromRamsey}). We then focus on the $O(\log N)$-stretch regime, and construct very efficient labeled NNS structures, under various relaxations (\Cref{thm:NNSfromClan} and Corollaries \ref{cor:NNSfromRamseyAspectRatio}, \ref{cor:NNSfromClanAspectRatio}, \ref{cor:NNSfromRamseyFailureProb}, \ref{cor:NNSfromClanFailureProb}), see \Cref{tab:GeneralNNS} for a summary.
Finally, in \Cref{sec:NNSGeneralLB} we show that the classic information theoretic lower bound holds for labeled NNS as well (\Cref{thm:NNS-LB}).

\subsection{Data Structures preliminaries}\label{subsec:NNSprelims}
In our labeled NNS construction, we will be using the following data structure by Willard \cite{Wil83}:%
\begin{theorem}[Y-fast trie, \cite{Wil83}]\label{thm:Y-fast-trie}
	Given a subset $S\subseteq [N]$ of $n$ elements, there is a data structure with space $O(n)$, and such that given a query $q\in [N]$, returns in $O(\log\log N)$ time the predecessor and successor of $q$ in $S$.
	Further, the data structure supports insertion and deletion of elements in $S$ in amortized $O(\log\log N)$ time.
\end{theorem}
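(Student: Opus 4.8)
The plan is to build the structure in two layers: an \emph{X-fast trie}, and then the \emph{Y-fast trie} on top of it via indirection. For the X-fast trie, view each element of $[N]$ as a binary string of length $\log N$ and consider the complete binary trie $T$ of depth $\log N$ whose leaves are $[N]$. Mark every node of $T$ that is an ancestor of some element of $S$, and store, for each level $\ell\in\{0,\dots,\log N\}$, a hash table of the marked nodes at level $\ell$ keyed by their bit-prefix. Keep the elements of $S$ in a sorted doubly linked list, with a pointer from each marked leaf into it; in addition, at each unmarked node that is the child of a marked node, store a pointer to the smallest (resp.\ largest) element of $S$ in the sibling subtree. A predecessor/successor query for $q$ then binary-searches over the $\log N$ levels to find the deepest marked ancestor $v$ of $q$ (each test is one hash lookup, so $O(\log\log N)$ total); from $v$ the stored pointers give one of $\{\mathrm{pred}(q),\mathrm{succ}(q)\}$ in $O(1)$ time, and the linked list gives the other. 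This already yields $O(\log\log N)$ query time, but with $O(n\log N)$ space and $\Theta(\log N)$ update time, since an insertion or deletion touches one node per level.

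To reach $O(n)$ space and $O(\log\log N)$ amortized update time, apply the standard indirection trick. Partition $S$ into $\Theta(n/\log N)$ buckets of consecutive elements, each of size between $\tfrac12\log N$ and $2\log N$, and store each bucket in its own balanced binary search tree; since a bucket has $O(\log N)$ elements, predecessor, successor, insert and delete within a bucket cost $O(\log\log N)$. Pick one representative per bucket and store only the $\Theta(n/\log N)$ representatives in the X-fast trie; its space is now $\Theta(n/\log N)\cdot\log N=O(n)$, and the bucket BSTs add another $O(n)$, for $O(n)$ total. A query for $q$ first locates its predecessor and successor among the representatives via the X-fast trie in $O(\log\log N)$ time, which pins down the (at most two) buckets that can contain $\mathrm{pred}(q)$ and $\mathrm{succ}(q)$, and searching those buckets' BSTs finishes in $O(\log\log N)$.

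For updates, route the operation to the appropriate bucket (found by an X-fast-trie predecessor query on the representatives) and perform it on that bucket's BST in $O(\log\log N)$. It remains to maintain the bucket-size invariant: when a bucket exceeds $2\log N$ elements, split it; when it drops below $\tfrac12\log N$, merge it with a neighbour (re-splitting if the result is too large). Each rebalancing changes $O(1)$ representatives, hence costs $O(\log N)$ in the X-fast trie, but a bucket can trigger a rebalancing only after $\Omega(\log N)$ insertions/deletions into it since its last one, so the charge per update is $O(\log N)/\Omega(\log N)=O(1)$ amortized; together with the $O(\log\log N)$ for the BST operation this gives amortized $O(\log\log N)$ per update.

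The main obstacle is the X-fast trie internals — in particular the auxiliary sibling pointers that recover $\mathrm{pred}(q)$ or $\mathrm{succ}(q)$ in $O(1)$ from the deepest marked ancestor, and keeping those pointers consistent under insertions and deletions; this is precisely the source of the $\Theta(\log N)$ worst-case update cost, and the reason the indirection layer (which invokes the X-fast trie only once per $\Omega(\log N)$ updates) is needed. A secondary subtlety is that the $O(\log\log N)$ bounds are stated in the word-RAM model and rely on $O(1)$-time hashing of $O(\log N)$-bit keys via a dynamic (perfect) hashing scheme with $O(1)$ amortized/expected update cost, which one should record as part of the model.
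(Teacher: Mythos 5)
The paper does not prove this statement at all — it is quoted as a known result and cited to Willard [Wil83] — so there is nothing internal to compare against. Your write-up is the standard construction behind that citation (X-fast trie with level-wise hash tables and descendant pointers, plus indirection into $\Theta(\log N)$-size buckets whose representatives alone live in the trie), and the argument, including the amortization of bucket splits/merges and the explicit caveat that $O(1)$ hash lookups rest on dynamic hashing in the word-RAM model, is correct as a proof of the theorem as stated.
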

We will use the Y-fast trie  to obtain additional information:
\begin{observation}\label{obs:Y-fast-trie}
	When constructing the Y-fast trie, one can also answer minimum element in $S$ queries in $O(1)$ time, while all the other parameters remain unchanged.
\end{observation}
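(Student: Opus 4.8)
The plan is to augment the Y-fast trie of \cite{Wil83} with a single extra field holding (a pointer to) the current minimum element of $S$, and to show that this field can be maintained within the existing time and space bounds.

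First I would recall the relevant internal structure of the Y-fast trie: it partitions $S$ into $O(n/\log N)$ consecutive buckets of size $\Theta(\log N)$, stored in balanced binary search trees, together with an X-fast trie over one representative per bucket; moreover the leaves (and, within each bucket, the elements) are threaded in a doubly linked list in increasing order. Consequently the head of this list is exactly $\min S$, so a minimum query can be answered in $O(1)$ time by returning it; equivalently, one may simply keep $\min S$ in a dedicated auxiliary variable.

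Next I would verify that this auxiliary variable is maintained under updates without changing the amortized $O(\log\log N)$ bounds. On an insertion of $x$, after the standard insertion we compare $x$ against the stored minimum and overwrite it if $x$ is smaller, which is $O(1)$ extra work. On a deletion of $x$: if $x \neq \min S$ the minimum is unchanged and nothing need be done; if $x = \min S$, then after the standard deletion we recover the new minimum by a single successor query (equivalently, by reading the new head of the leaf list), which costs $O(\log\log N)$ — no more than the deletion already costs. Hence the space remains $O(n)$, predecessor/successor queries remain $O(\log\log N)$, updates remain amortized $O(\log\log N)$, and the new minimum query runs in $O(1)$ time.

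There is essentially no obstacle here; the only mild point to be careful about is that deleting the current minimum triggers one additional trie operation, but since this occurs only on deletions and is dominated by the deletion cost itself, all stated parameters are preserved.
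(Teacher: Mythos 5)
Your proposal is correct and follows essentially the same route as the paper: explicitly maintain the current minimum, update it in $O(1)$ on insertion, and on deletion of the minimum recover it via one successor query (the paper implements this by shifting the universe so $0\notin S$ and querying the successor of $0$), which is absorbed into the amortized $O(\log\log N)$ deletion cost.
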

\begin{proof}
	For simplicity, we will assume that the data structure works for the elements $\{0,\dots,N\}$ instead of $\{1,\dots,N\}$, and that the element $0$ is never part of $N$. Such assumption could be ensured by building the data structure for $[N+1]$, and shifting all the elements one position to the right.
	
	When constructing the data structure in the first time we simply store the minimum element $m_S$. 
	On an insert operation of an element $x$, we check whether $x<m_S$, if this is the case we update $m_S\leftarrow x$ (otherwise do nothing).
	On a remove operation of an element $x$, we check whether $x=m_S$. If no, do nothing. Else, we query the successor of $0$, and set it to be the new $m_S$. Note that the total removal time is still $O(\log\log N)$ amortized time. 
	Clearly the other parameters (space, predecessor/successor queries) remain the same, while we can answer minimum element queries in $O(1)$ time.
\end{proof}

\subsection{Labeled NNS From Rooted LSO}\label{subsec:NNSfromRooted}

In this subsection we prove a meta theorem, turning rooted-LSO's into labeled NNS. As a corollaries we obtain labeled NNS for planar graphs, fixed minor free graphs, treewidth graphs, and for metric spaces with small correlation dimension.
\begin{theorem}[Meta theorem: rooted LSO to labeled NNS]\label{thm:rootedToNNS}
Consider a metric space $(X,d_X)$ admitting a $(\tau,\rho)$-rooted NNS.
Than there is a dynamic labeled $\rho$-NNS with label size $O(\tau)$, space $O(n\cdot \tau)$,
query time $O(\tau)$, and deletion/insertion in 
$O(\tau\cdot\log\log N)$ amortized time.
\end{theorem}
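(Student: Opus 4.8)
The plan is to exploit the three defining properties of a $(\tau,\rho)$-rooted LSO directly, without any auxiliary $2$-hop spanner machinery (which will only be needed for the triangle-LSO version). First I would describe the \emph{labels}. For each point $x\in X$, consider all orderings $\sigma\in\Sigma$ that contain $x$; by the first property there are at most $\tau$ of them. For each such $\sigma$, the label of $x$ records: (i) an identifier of $\sigma$, (ii) the rank/position of $x$ in $\sigma$ (so that order comparisons reduce to integer comparisons), and (iii) the value $d_X(x,x_\sigma)$, i.e. the distance from $x$ to the root of $\sigma$. This is $O(\tau)$ words. The crucial observation, used throughout, is the second property: within $\sigma$, ranks are monotone in distance from $x_\sigma$, so the point of smallest rank among any subset $Q\subseteq\sigma$ is exactly the point of $Q$ closest to $x_\sigma$.

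Next I would describe the \emph{data structure} for a subset $P\subseteq X$ of size $n$, given the labels $\{l_x\}_{x\in P}$. For each ordering $\sigma$, let $P_\sigma = P\cap\sigma$; store the set of ranks $\{\text{rank}_\sigma(p): p\in P_\sigma\}$ in a Y-fast trie (\Cref{thm:Y-fast-trie}) augmented by \Cref{obs:Y-fast-trie} so that the minimum-rank element is retrievable in $O(1)$ time; alongside each stored rank keep a pointer to the corresponding point of $P$. Since each $p\in P$ appears in at most $\tau$ orderings, the total space is $O(n\cdot\tau)$, and an insertion or deletion of a point $p$ touches at most $\tau$ tries, costing $O(\tau\cdot\log\log N)$ amortized time, matching the claimed bounds.

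For a \emph{query} $q\in X$ with label $\ell_q$: for each ordering $\sigma$ containing $q$ (at most $\tau$ of them, read off from $\ell_q$), retrieve in $O(1)$ time the minimum-rank point $y_\sigma\in P_\sigma$; by monotonicity $y_\sigma$ is the point of $P_\sigma$ closest to $x_\sigma$, and we can compute the estimate $\widehat d(q,y_\sigma) := d_X(q,x_\sigma)+d_X(x_\sigma,y_\sigma)$ using the root-distances stored in $\ell_q$ and in the label of $y_\sigma$. Return the $y_\sigma$ minimizing this estimate; total query time $O(\tau)$. \emph{Correctness:} since each ultrametric is dominating along the LSO root, $\widehat d(q,y_\sigma)\ge d_X(q,y_\sigma)$ by the triangle inequality, so we never under-report the true distance to the returned point. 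Conversely, let $p^\star\in P$ be the true nearest neighbor of $q$. By the third property there is $\sigma$ containing both $q$ and $p^\star$ with $d_X(q,x_\sigma)+d_X(x_\sigma,p^\star)\le\rho\cdot d_X(q,p^\star)$; since $y_\sigma$ is the point of $P_\sigma$ of minimum rank, $d_X(x_\sigma,y_\sigma)\le d_X(x_\sigma,p^\star)$, hence $\widehat d(q,y_\sigma)=d_X(q,x_\sigma)+d_X(x_\sigma,y_\sigma)\le d_X(q,x_\sigma)+d_X(x_\sigma,p^\star)\le\rho\cdot d_X(q,p^\star)$, so the returned point is within $\rho$ of optimal.

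I expect the only delicate points — not real obstacles — to be bookkeeping ones: making sure that $x\in\sigma$ can be tested from $\ell_x$ alone and that ordering identifiers are consistent between $\ell_q$ and the stored data structure (so that $q$ and $y_\sigma$ can be matched to a common $\sigma$), and verifying that the ``unknown in advance'' requirement on $P$ is respected — it is, because the data structure is built purely from the labels $\{l_x\}_{x\in P}$, with no access to $(X,d_X)$. One should also note $d_X(q,x_\sigma)$ is available from $\ell_q$ only if $q\in\sigma$, which is exactly the case we use; and the dynamic bound follows since each update is $\tau$ Y-fast-trie operations. The main conceptual step is simply recognizing that the minimum-rank element of $P_\sigma$ is the right candidate, which is immediate from the monotonicity built into \Cref{def:RootedLeftSided}.
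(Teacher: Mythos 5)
Your proposal is correct and follows essentially the same route as the paper's proof: the same labels (ordering identifier, rank, distance to the root $x_\sigma$), a Y-fast trie per ordering with $O(1)$ minimum extraction, and the same correctness argument combining the third rooted-LSO property with the monotonicity of ranks in distance from $x_\sigma$. Your explicit remark that the estimate $d_X(q,x_\sigma)+d_X(x_\sigma,y_\sigma)$ upper-bounds $d_X(q,y_\sigma)$ is a nice touch the paper leaves implicit.
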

\begin{proof}
Let $\Sigma$ be the promised collection of orderings of the rooted LSO.
For every point $y\in X$, let $\Sigma_y\subseteq \Sigma$ be the subset of orderings $y$ belongs to. 
We construct a label $\ell_y$ for $y$ as follows: for every $\sigma\in\Sigma_y$ store the name of $\sigma$, the position of $y$ in the order $\sigma$, and the distance $d_X(y,x_\sigma)$ from $y$ to the first vertex $x_\sigma$ in $\sigma$. Note that the label size consist of $O(|\Sigma_y|)=O(\tau)$ words.

We will use the Y-fast trie from \Cref{thm:Y-fast-trie}, together with \Cref{obs:Y-fast-trie}.
Consider a subset $P\subseteq X$ of $n$ points.	To construct the labeled NNS, we will first store all the labels of $P$ points. Next, for every ordering $\sigma\in \Sigma$ that contains at least one point from $P$, we will construct a Y-fast trie with the inputs being the positions in $\sigma$ of $P$ points. Each point in the Y-fast trie will be linked to the relevant label. 
This finishes the data structure. Note that the total space is $O(n\cdot \tau)$. Moreover, we can add/remove a point by simply adding/removing its label, and updating the Y-fast tries of all the orderings it belongs to. Such an update will take $O(\tau\cdot\log\log N)$ amortized time.

Given a query $q\in X$ with label $\ell_q$, we go over all the orderings $\Sigma_q$ containing $q$. For each $\sigma\in \Sigma_q$ we use the Y-fast trie to obtain the minimal element $y_\sigma$ in the order $\sigma$ (among the points in $P$). Denote by $x_\sigma$ the first element in the order (globally w.r.t. $X$). Our answer will be the element $y_\sigma$ minimizing $d_X(q,x_\sigma)+d_X(x_\sigma,y_\sigma)$ over all (non empty) $\sigma\in \Sigma_q$. Note that extracting the minimal element and computing the sum of distances will take us $O(1)$ time. Thus the total query time is $O(\tau)$.

Finally, we argue that the returned point is a $\rho$ approximation to the nearest neighbor. Let $p\in P$ be the closest point to $q$. Then there is an ordering $\sigma'\in\Sigma_q$ such that $d_{X}(q,x_{\sigma'})+d_{X}(x_{\sigma'},p)\le\rho\cdot d_{X}(q,p)$. Note that $y_{\sigma'}\preceq_{\sigma'}p$, thus $d_{X}(x_{\sigma},y_{\sigma'})\le d_{X}(x_{\sigma},p)$. We conclude
\begin{align*}
	\min_{\sigma\in\Sigma_{q}}\left\{ d_{X}(q,x_{\sigma})+d_{X}(x_{\sigma},y_{\sigma})\right\}  & \le d_{X}(q,x_{\sigma'})+d_{X}(x_{\sigma'},y_{\sigma'})\\
	& \le d_{X}(q,x_{\sigma'})+d_{X}(x_{\sigma'},p)\quad\le(1+\epsilon)\cdot d_{X}(q,p)~.
\end{align*}

\end{proof}

\begin{remark}
	If one wishes to obtain a worst case update time (instead of amortized) one can replace the Y-fast tries with heaps. All the parameters will remain the same, other that the deletion/insertion that will become $O(\tau\log n)$ (worst case).
\end{remark}
By plugging in the rooted-LSO's of \cite{FL22} (see \Cref{tab:LSO}) into the \Cref{thm:rootedToNNS} we conclude:
\begin{corollary}\label{cor:PlanarNNS}
\sloppy For every $\eps\in(0.\frac12)$, planar graphs (and fixed minor free graphs) admit labeled $(1+\eps)$-NNS 
with $O(\frac{1}{\epsilon}\log^{2}N)$ label size, space
$n\cdot O(\frac{1}{\epsilon}\log^{2}N)$, query time $O(\frac{1}{\epsilon}\log^{2}N)$,
and deletion/insertion $O(\frac{1}{\epsilon}\log^{2}N\log\log N)$ amortized time. 
\end{corollary}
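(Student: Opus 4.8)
The plan is to obtain \Cref{cor:PlanarNNS} as a direct instantiation of the meta \Cref{thm:rootedToNNS} with the rooted LSO of Filtser and Le \cite{FL22}. Recall from \Cref{tab:LSO} that the shortest path metric of any planar graph, and more generally of any graph excluding a fixed minor, on $N$ vertices admits a $\left(O(\frac{1}{\epsilon}\log^{2}N),1+\epsilon\right)$-rooted LSO. Hence the shortest path metric $(X,d_X)$ of such a graph on $N$ vertices admits a $(\tau,\rho)$-rooted LSO with $\tau=O(\frac{1}{\epsilon}\log^{2}N)$ and $\rho=1+\epsilon$.

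Feeding these parameters into \Cref{thm:rootedToNNS} immediately gives a dynamic labeled $(1+\epsilon)$-NNS with label size $O(\tau)=O(\frac{1}{\epsilon}\log^{2}N)$, space $O(n\cdot\tau)=n\cdot O(\frac{1}{\epsilon}\log^{2}N)$, query time $O(\tau)=O(\frac{1}{\epsilon}\log^{2}N)$, and amortized deletion/insertion time $O(\tau\cdot\log\log N)=O(\frac{1}{\epsilon}\log^{2}N\log\log N)$, which are exactly the claimed bounds.

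There is essentially no further technical obstacle here; the only points worth checking are compatibility with the labeled NNS model. First, the rooted LSO of \cite{FL22} is built from the graph itself (via a hierarchy of shortest-path / cycle separators) and is therefore defined with respect to the entire vertex set $X$, independently of the (a priori unknown) query-able subset $P$; this obliviousness is precisely what allows us to assign the labels $\{\ell_x\}_{x\in X}$ up front, as the model demands. Second, everything that the construction of \Cref{thm:rootedToNNS} stores in a label $\ell_y$ — the identity of each of the at most $\tau$ orderings $\sigma$ containing $y$, the rank of $y$ within $\sigma$, and the scalar $d_X(y,x_\sigma)$ to the root $x_\sigma$ of $\sigma$ — is a function of the rooted LSO alone, so $O(\tau)$ machine words per label suffice. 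Finally, if a worst-case rather than amortized update time is preferred, one replaces the Y-fast tries inside \Cref{thm:rootedToNNS} by heaps as in the remark following that theorem, at the cost of $O(\tau\log n)$ worst-case updates.
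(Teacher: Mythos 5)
Your proposal is correct and matches the paper's own argument: the corollary is obtained exactly by plugging the $\left(O(\frac{1}{\epsilon}\log^{2}N),1+\epsilon\right)$-rooted LSO of Filtser and Le for planar and fixed minor free graphs into the meta theorem converting rooted LSOs to labeled NNS. The additional compatibility remarks you include are sound but not needed beyond what the paper already states.
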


\begin{corollary}\label{cor:TreeWidthNNS}
	\sloppy Every treewidth $k$ graph admits an (exact) labeled $1$-NNS 
	with label size $O(k\log N)$, space $O(n\cdot k\log N)$,
	query time $O(k\log N)$, and deletion/insertion in 
	$O(k\log N\log\log N)$ amortized time.
\end{corollary}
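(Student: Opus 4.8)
The plan is to derive Corollary~\ref{cor:TreeWidthNNS} as an immediate instantiation of the meta theorem (Theorem~\ref{thm:rootedToNNS}) on the rooted LSO for bounded-treewidth metrics of Filtser and Le~\cite{FL22}. Recall from \Cref{tab:LSO} that the shortest path metric of any treewidth-$k$ graph admits a $(k\cdot\log n,1)$-rooted LSO, where $n$ is the number of points of the metric. In the labeled model the ambient metric space $(X,d_X)$ has $N$ points, so this gives a $(\tau,\rho)$-rooted LSO with $\tau=O(k\log N)$ and $\rho=1$.

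I would then simply plug these parameters into Theorem~\ref{thm:rootedToNNS}. With $\tau=O(k\log N)$ this yields label size $O(\tau)=O(k\log N)$, data-structure space $O(n\cdot\tau)=O(n\cdot k\log N)$, query time $O(\tau)=O(k\log N)$, and amortized deletion/insertion time $O(\tau\cdot\log\log N)=O(k\log N\log\log N)$, exactly as claimed. Since the stretch of the rooted LSO is $\rho=1$ and the reduction of Theorem~\ref{thm:rootedToNNS} returns a $\rho$-NNS verbatim, the resulting scheme is an \emph{exact} ($1$-)NNS, matching the stretch in the statement.

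I do not expect a genuine obstacle here: the rooted LSO of \cite{FL22} assigns to each point of $X$ only $O(k\log N)$ many $(\text{ordering},\text{position},\text{root-distance})$ triples, and these depend only on $(X,d_X)$ and not on the queried subset $P$, which is precisely the hypothesis already built into Theorem~\ref{thm:rootedToNNS} (labels are fixed on all of $X$; $P$ is revealed afterwards). The only point needing a word of care is bookkeeping — replacing the $\log n$ in the LSO cardinality bound of \cite{FL22} by $\log N$ in the labeled setting — together with checking that exactness survives the reduction, which it does.
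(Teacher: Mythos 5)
Your proposal is correct and is exactly the paper's argument: the paper obtains \Cref{cor:TreeWidthNNS} by plugging the $(O(k\log N),1)$-rooted LSO for treewidth-$k$ graphs of \cite{FL22} into the meta reduction of \Cref{thm:rootedToNNS}, with stretch $\rho=1$ yielding exactness. Your bookkeeping of the parameters (label, space, query, and update time) matches the paper's instantiation.
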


\subsubsection{Correlation dimension}\label{subsec:correlation}
Correlation dimension was introduced by Chan and Gupta \cite{CG12} as a generalization of doubling dimension. While in doubling dimension the defining property is local (every ball of radius $r$ can be covered by at most $2^d$ balls of radius $\frac r2$), in correlation dimension this property holds only on average. In particular, a metric might have a constant correlation dimension, while containing a $\sqrt{N}$-size clique.

Given a metric space $(X,d_X)$, an $r$-net is a subset $N\subseteq X$ such that every pair of net points is at distance at least $r$: $\min_{x,y\in N}d_X(x,y)\ge r$, and every point is at distance at most $r$ from some net point: $\max_{x\in X}\min_{y\in N}d_X(x,y)\le r$.
$(X,d_X)$ is said to have \emph{correlation dimension} $k$ if for every net $N$ and $s>0$, it holds that 
$$\sum_{x\in N}\left|N\cap B_{X}(x,2s)\right|\le2^{k}\cdot\sum_{x\in N}\left|N\cap B_{X}(x,s)\right|~.$$
Chan and Gupta \cite{CG12} 
(generalizing the embedding of \cite{Talwar04}) 
showed that every $N$-point metric space $(X,d_X)$ with correlation dimension $k$ embeds into a graph with treewidth $\tilde{O}_{k,\eps}(\sqrt{N})$ and multiplicative distortion $1+\eps$. Specifically, there is a map $f$ from $X$ into the vertices of a graph $G$, where $G$ has treewidth $\tilde{O}(\sqrt{N})\cdot \alpha(k,\eps)$ ($\alpha$ being an arbitrary function of $\eps,k$), such that for every $x,y\in X$, $d_X(x,y)\le d_G(f(x),f(y)) \le (1+\eps)\cdot d_X(x,y)$.
Using this embedding, and \Cref{cor:TreeWidthNNS}, we conclude:
\begin{corollary}\label{cor:CorrelationNNS}
	\sloppy Every graph with correlation dimension $k$ admit labeled $(1+\eps)$-NNS 
	with label size $\tilde{O}_{k,\eps}(\sqrt{N})$, space $\tilde{O}_{k,\eps}(n\cdot\sqrt{N})$,
	query time $\tilde{O}_{k,\eps}(\sqrt{N})$, and deletion/insertion in 
	$\tilde{O}_{k,\eps}(\sqrt{N})$ time.
\end{corollary}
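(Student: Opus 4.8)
The plan is to reduce to the treewidth case via the Chan--Gupta embedding \cite{CG12}. Let $(X,d_X)$ be an $N$-point metric with correlation dimension $k$. By \cite{CG12} (building on \cite{Talwar04}), there is a graph $G$ of treewidth $w=\tilde{O}_{k,\eps}(\sqrt N)$ together with an efficiently computable map $f\colon X\to V(G)$ satisfying $d_X(x,y)\le d_G(f(x),f(y))\le (1+\eps)\cdot d_X(x,y)$ for all $x,y\in X$. A point I would pin down first is that this embedding produces a host graph with $|V(G)|=\poly(N)$ vertices, so that $\log|V(G)|=O(\log N)$ and the logarithmic overhead of the treewidth scheme does not inflate beyond $O(\log N)$.

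Given this, I would apply \Cref{cor:TreeWidthNNS} to $G$ to obtain a dynamic exact labeled $1$-NNS $\mathbb{D}_G$ for $d_G$, with label size $O(w\log|V(G)|)=\tilde{O}_{k,\eps}(\sqrt N)$ and the matching space/query/update bounds. The label of a point $x\in X$ is then defined to be the $\mathbb{D}_G$-label of $f(x)$, augmented with $x$'s own identifier; this has size $\tilde{O}_{k,\eps}(\sqrt N)$. For a subset $P\subseteq X$, the data structure is simply $\mathbb{D}_G$ built on $\{f(x):x\in P\}$ (the required $\mathbb{D}_G$-labels are read off from the given labels of the points of $P$), where each element inserted into $\mathbb{D}_G$ is linked back to the identifier of the corresponding point of $P$. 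Insertions and deletions in $P$ translate directly to insertions and deletions in $\mathbb{D}_G$, so all complexity bounds (space $\tilde{O}_{k,\eps}(n\sqrt N)$, query and update time $\tilde{O}_{k,\eps}(\sqrt N)$) are inherited.

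To answer a query $q\in X$: extract the $\mathbb{D}_G$-label of $f(q)$ from $q$'s label, query $\mathbb{D}_G$ to find the exact nearest neighbor $f(p)$ of $f(q)$ among $\{f(x):x\in P\}$, and return the corresponding $p$ via the back-pointer. Correctness follows from the distortion bound: for every $x'\in P$,
\[
d_X(p,q)\ \le\ d_G(f(p),f(q))\ \le\ d_G(f(x'),f(q))\ \le\ (1+\eps)\cdot d_X(x',q),
\]
hence $d_X(p,q)\le (1+\eps)\cdot\min_{x'\in P} d_X(x',q)$; a final rescaling of $\eps$ yields the stated $(1+\eps)$-approximation.

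I do not expect a genuine obstacle here — this is a clean black-box composition. The only thing that actually needs checking is that a labeled NNS reduction behaves well under a small-distortion metric embedding with polynomially-bounded host size, which is exactly what makes the argument go through: the labeled-NNS model manipulates only labels and never touches $(X,d_X)$ itself, so routing everything through $f$ and back costs only the $(1+\eps)$ distortion and an $O(\log N)$ factor coming from $\log|V(G)|$.
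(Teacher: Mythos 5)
Your proposal is correct and follows essentially the same route as the paper: apply the Chan--Gupta embedding into a treewidth-$\tilde{O}_{k,\eps}(\sqrt N)$ graph with distortion $1+\eps$ and then invoke \Cref{cor:TreeWidthNNS} as a black box, with the labels of points in $X$ being the treewidth-NNS labels of their images. The extra details you spell out (polynomial host size, back-pointers, the distortion-based correctness chain) are exactly the implicit steps the paper leaves to the reader.
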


Note that by our lower bound (\Cref{thm:NNS-LB}), there is a $\sqrt{N}$-vertex graph $G$ such that every labeled $t$-NNS for $t<3$ has label size $\tilde{\Omega}(\sqrt{N})$. There is an $N$-vertex graph $H$ with constant correlation dimension, containing $G$ as a subgraph, and such that the induced shortest path metric on $G$ vertices remains the same. It follows that every $t$-labeled NNS for $t<3$ for $H$ has label size $\tilde{\Omega}(\sqrt{N})$.
It follows that \Cref{cor:CorrelationNNS} is tight up to second order terms.

\subsection{Labeled NNS From Triangle LSO}\label{subsec:NNSfromTriangle}
In this subsection we prove a meta theorem, turning triangle-LSO's into labeled NNS:
\begin{theorem}[Meta theorem: triangle LSO to labeled NNS]\label{thm:triangleToNNS}
Consider a metric space $(X,d_X)$ admitting a $(\tau,\rho)$-triangle NNS.
Then there is a dynamic labeled $2\rho$-NNS with label size $O(\tau\cdot\log N)$, space $O(n\cdot \tau\cdot\log N)$,
query time $O(\tau\cdot\log\log N)$, and deletion/insertion in 
$O(\tau\cdot\log\log N)$ amortized time.
\end{theorem}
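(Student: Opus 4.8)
Let me think about how to prove this. We have a $(\tau,\rho)$-triangle LSO $\Sigma$. For each point $y$, and each ordering $\sigma\in\Sigma$, $y$ has a position in $\sigma$. The label of $y$ should encode: for each $\sigma$, the position of $y$ in $\sigma$, plus the $2$-hop path spanner edges from Theorem~\ref{thm:2hopPath} applied to the path graph induced by $\sigma$. That's $O(\log N)$ edges per ordering, each edge storing the other endpoint's name (= its position in $\sigma$) and the weight $d_X(y,\cdot)$. So label size $O(\tau\cdot\log N)$.

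The plan is as follows. First I would set up the data structure: given the subset $P$, store all labels; for each ordering $\sigma$ that meets $P$, build a Y-fast trie (Theorem~\ref{thm:Y-fast-trie}) on the positions of $P$-points in $\sigma$. Space $O(n\tau\log N)$. Insertion/deletion of a point means updating its label-store and the $O(\tau)$ relevant Y-fast tries, in $O(\tau\log\log N)$ amortized time.

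Second, the query. Given $q$ with its label, for each $\sigma\in\Sigma$: use the Y-fast trie to find the predecessor $a_\sigma$ and successor $b_\sigma$ of $q$'s position in $\sigma$ among $P$-points. Each of $a_\sigma,b_\sigma$ is a candidate nearest neighbor. The issue is estimating $d_X(q,a_\sigma)$ without access to $d_X$. Here I use the $2$-hop spanner: $q$ and $a_\sigma$ both carry their edge-lists for $\sigma$; since $a_\sigma\preceq_\sigma q$ (w.l.o.g.), by Theorem~\ref{thm:2hopPath} in $O(1)$ time we find a common vertex $z$ in the implicit path with $\{a_\sigma,z\}$ stored by $a_\sigma$'s label and $\{q,z\}$ stored by $q$'s label, i.e.\ $a_\sigma\preceq_\sigma z\preceq_\sigma q$; then $\widehat{d}(q,a_\sigma):=d_X(q,z)+d_X(z,a_\sigma)$ is an upper estimate on $d_X(q,a_\sigma)$, and also $\widehat{d}(q,a_\sigma)\le\rho\cdot d_X(q,a_\sigma)$ by the triangle-LSO property applied to the pair $(a_\sigma,q)$ (all of $z$ lies between them, so $d_X(a_\sigma,z),d_X(z,q)\le\rho\, d_X(a_\sigma,q)$, giving $\widehat{d}\le 2\rho\,d_X(a_\sigma,q)$). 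Return the candidate over all $2\tau$ of $\{a_\sigma,b_\sigma\}_\sigma$ minimizing $\widehat d$. Query time $O(\tau\log\log N)$ for the trie lookups plus $O(\tau)$ for the $O(1)$-per-ordering estimation.

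Third, correctness. Let $p^*$ be the true nearest neighbor, $r^*=d_X(q,p^*)$. The triangle-LSO gives an ordering $\sigma^*$ with (w.l.o.g.) $q\preceq_{\sigma^*}p^*$ (or the reverse) such that every pair between $q$ and $p^*$ in $\sigma^*$ is at distance $\le\rho r^*$. The $P$-point nearest to $q$'s position in $\sigma^*$ — call it $c$ — lies between $q$ and $p^*$ in $\sigma^*$ (since $p^*$ is a $P$-point on that side), hence $d_X(q,c)\le\rho r^*$, and $c\in\{a_{\sigma^*},b_{\sigma^*}\}$. Then $\widehat d(q,c)\le 2\rho\,d_X(q,c)\le 2\rho^2 r^*$... wait, that's too lossy. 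Let me reconsider: I should bound $\widehat d(q,c)$ directly. The witness $z$ between $q$ and $c$ in $\sigma^*$ is also between $q$ and $p^*$, so $d_X(q,z),d_X(z,c)\le\rho r^*$? No — the triangle-LSO bound is $\rho\cdot d_X(q,p^*)=\rho r^*$ for pairs between $q$ and $p^*$; $z$ and $q$ are such a pair, $z$ and $c$ are such a pair, so $\widehat d(q,c)=d_X(q,z)+d_X(z,c)\le 2\rho r^*$. And the returned answer $p$ satisfies $d_X(q,p)\le \widehat d(q,p)\le\widehat d(q,c)\le 2\rho r^*$. That gives stretch $2\rho$, as claimed. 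The main obstacle here is making the $2$-hop witness argument airtight — specifically checking that Theorem~\ref{thm:2hopPath}'s guarantee (finding $z$ with $i\le z\le j$ such that both endpoints stored the edge to $z$) applies to the implicit path of $\sigma$ restricted appropriately, and that the weights $d_X(\cdot,z)$ needed are exactly the ones a point stores in its label for the $O(\log N)$ edges it is "responsible" for. Once that bookkeeping is pinned down, the rest is routine.

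\medskip

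\noindent\textbf{Proof.}
Let $\Sigma$ be the promised collection of orderings of the triangle-LSO, $|\Sigma|\le\tau$. For each ordering $\sigma\in\Sigma$, consider the implicit path graph $P^\sigma$ whose vertices are the points of $X$ in the order $\sigma$ (with unit edges between consecutive points), and apply \Cref{thm:2hopPath} to $P^\sigma$: each point $y$ becomes "responsible" for a set $E^\sigma_y$ of $O(\log N)$ edges of the $2$-hop $1$-spanner of $P^\sigma$, and given any two points $y\preceq_\sigma y'$ we can in $O(1)$ time find a point $z$ with $y\preceq_\sigma z\preceq_\sigma y'$, $\{y,z\}\in E^\sigma_y$ and $\{y',z\}\in E^\sigma_{y'}$.

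\emph{Labels.}
For a point $y\in X$, let $\Sigma_y=\Sigma$ (every ordering; if $\Sigma$ is only over subsets, take those $\sigma$ containing $y$). The label $\ell_y$ stores, for each $\sigma\in\Sigma_y$: the name of $\sigma$, the position $\pi_\sigma(y)$ of $y$ in $\sigma$, and for each edge $\{y,z\}\in E^\sigma_y$ the pair $\big(\pi_\sigma(z),\,d_X(y,z)\big)$. Since $|E^\sigma_y|=O(\log N)$ and $|\Sigma_y|\le\tau$, the label consists of $O(\tau\cdot\log N)$ words.

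\emph{Data structure.}
Given $P\subseteq X$ of size $n$, store all labels $\{\ell_y\}_{y\in P}$. For every $\sigma\in\Sigma$ that contains at least one point of $P$, build a Y-fast trie (\Cref{thm:Y-fast-trie}) on the set $\{\pi_\sigma(y)\mid y\in P\}\subseteq[N]$, with each stored position linked to the corresponding label. The total space is $O(n\cdot\tau\cdot\log N)$. To insert or delete a point $x$, add or remove $\ell_x$ and update the Y-fast tries of the $\le\tau$ orderings containing $x$; this takes $O(\tau\cdot\log\log N)$ amortized time.

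\emph{Query.}
Given a query $q\in X$ with label $\ell_q$: for each $\sigma\in\Sigma_q$ use the Y-fast trie to obtain, in $O(\log\log N)$ time, the predecessor $a_\sigma$ and successor $b_\sigma$ of $\pi_\sigma(q)$ among $\{\pi_\sigma(y)\mid y\in P\}$ (either may be absent). For each present candidate $c\in\{a_\sigma,b_\sigma\}$, assume w.l.o.g.\ $c\preceq_\sigma q$ (the other case is symmetric); using $\ell_q$ and $\ell_c$ and \Cref{thm:2hopPath}, in $O(1)$ time find $z$ with $c\preceq_\sigma z\preceq_\sigma q$, $\{c,z\}\in E^\sigma_c$, $\{q,z\}\in E^\sigma_q$, and set
\[
\widehat{d}_\sigma(q,c)\;=\;d_X(q,z)+d_X(z,c),
\]
where $d_X(q,z)$ is read from $\ell_q$ and $d_X(z,c)$ from $\ell_c$. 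Return the candidate $p$ minimizing $\widehat{d}_\sigma(q,c)$ over all $\sigma\in\Sigma_q$ and $c\in\{a_\sigma,b_\sigma\}$. The total query time is $O(\tau\cdot\log\log N)$.

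\emph{Correctness.}
By the triangle inequality, $d_X(q,c)\le\widehat{d}_\sigma(q,c)$ for every candidate, so the returned point $p$ satisfies $d_X(q,p)\le\widehat{d}_\sigma(q,p)$. Let $p^*\in P$ be a true nearest neighbor of $q$ and $r^*=d_X(q,p^*)$. By \Cref{def:TriangleLSO} there is $\sigma^*\in\Sigma$ with (w.l.o.g.)\ $q\preceq_{\sigma^*}p^*$ such that every pair $a,b$ with $q\preceq_{\sigma^*}a\preceq_{\sigma^*}b\preceq_{\sigma^*}p^*$ satisfies $d_X(a,b)\le\rho\cdot r^*$. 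The successor $b_{\sigma^*}$ of $\pi_{\sigma^*}(q)$ among $P$-points exists (since $p^*$ is such a point with $\pi_{\sigma^*}(p^*)\ge\pi_{\sigma^*}(q)$) and satisfies $q\preceq_{\sigma^*}b_{\sigma^*}\preceq_{\sigma^*}p^*$. Let $z^*$ be the corresponding $2$-hop witness, so $q\preceq_{\sigma^*}z^*\preceq_{\sigma^*}b_{\sigma^*}\preceq_{\sigma^*}p^*$. Both pairs $(q,z^*)$ and $(z^*,b_{\sigma^*})$ lie between $q$ and $p^*$ in $\sigma^*$, hence
\[
\widehat{d}_{\sigma^*}(q,b_{\sigma^*})=d_X(q,z^*)+d_X(z^*,b_{\sigma^*})\le\rho\cdot r^*+\rho\cdot r^*=2\rho\cdot r^*.
\]
Therefore $d_X(q,p)\le\widehat{d}_{\sigma^*}(q,b_{\sigma^*})\le 2\rho\cdot r^* = 2\rho\cdot\min_{x\in P}d_X(q,x)$, which is the claimed $2\rho$ approximation. $\qed$
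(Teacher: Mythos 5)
Your proposal is correct and follows essentially the same route as the paper's proof: per-ordering $2$-hop $1$-spanners of the implicit path (via \Cref{thm:2hopPath}) stored in the labels, Y-fast tries on positions for predecessor/successor queries, distance estimation through the $2$-hop witness, and the same stretch argument using the special ordering in which the successor and the witness both lie between $q$ and the true nearest neighbor. Your correctness write-up is, if anything, slightly more explicit than the paper's, but there is no substantive difference.
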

\begin{proof}
Let $\Sigma$ be the promised collection of orderings of the triangle LSO.
For every ordering $\sigma\in \Sigma$, we form an unweighted path graph $P_{\sigma}$ with vertex set $X$ and the order of vertices along the path is $\sigma$. We construct a $2$-hop $1$ spanner $H_\sigma$ for $P^{\sigma}_{n}$ using \Cref{thm:2hopPath}. 
For every point $y\in X$, let $E_{\sigma,y}$ be the collection of edges $y$ is responsible for.
We construct a label $\ell_y$ for a point $y\in X$ as follows: for every $\sigma\in\Sigma$ store the position of $y$ in the order $\sigma$, and all the edges in $E_{\sigma,y}$ together with their weight (i.e. for $\{x,y\}\in E_{\sigma,y}$ store $(\{x,y\},d_X(x,y))$). The label size is thus $O(\tau\cdot\log N)$ words.

Consider a subset $P\subseteq X$ of $n$ points.	To construct the labeled NNS, we will first store all the labels of $P$ points. Next, for every ordering $\sigma\in \Sigma$, we will construct a Y-fast trie (\Cref{thm:Y-fast-trie})
with the inputs being the positions in $\sigma$ of $P$ points. Each point in the Y-fast trie will be linked to the relevant label. 
This finishes the data structure. Note that the total space is $O(n\cdot \tau\cdot\log N)$. Moreover, we can add/remove a point by simply adding/removing its label, and updating the Y-fast tries of all the orderings it belongs to. Such an update will take $O(\tau\cdot\log\log N)$ amortized time (we assume that just coping the label itself takes $O(1)$ times as we can store a link. But to insert in to the Y-fast tries, takes $O(\log\log N)$ per ordering).

Given a query $q\in X$ with its label $\ell_q$, we go over all the orderings. For each $\sigma\in\Sigma$ we use the Y-fast trie to obtain predecessor $y_{\sigma,p}$ and successor $y_{\sigma,s}$ of $q$ w.r.t. $\sigma$ in $O(\log\log n)$ time. 
By knowing the position of $q$ and $y_{\sigma,s}$ in $\sigma$, we can compute in $O(1)$ time a point $\hat{y}_{\sigma,s}$ such that $q\prec_{\sigma}\hat{y}_{\sigma,s}\prec_{\sigma}y_{\sigma,s}$, and $\{q,\hat{y}_{\sigma,s}\}\in E_{\sigma,y}$ , $\{\hat{y}_{\sigma,s},y_{\sigma,s}\}\in E_{\sigma,y_{\sigma,s}}$ (note that $\hat{y}_{\sigma,s}$ is not necessarily in $P$). We estimate the distance from $q$ to $y_{\sigma,s}$ by $d_X(q,\hat{y}_{\sigma,s})+d_X(\hat{y}_{\sigma,s},y_{\sigma,s})$. We estimate the distance from $q$ to $y_{\sigma,p}$ in a symmetric fashion.
Overall we estimated the distance from $q$ to at most $2\tau$ points. Our answer will be the point with smallest estimated distance among those. Clearly the query time is $O(\tau\log\log N)$.

Finally, we argue that the returned point is a $2\rho$ approximation to the nearest neighbor. Let $x\in P$ be the closest point to $q$. 
Suppose w.l.o.g. that in the special ordering $\sigma'\in\Sigma$ it holds that $q\prec_{\sigma'}x$ (the other case is symmetric). For every pair of points $q\prec_{\sigma'}a\prec_{\sigma'}b\prec_{\sigma'}x$, it holds that $d_X(a,b)\le \rho\cdot d_X(q,x)$.
Suppose w.l.o.g. that we returned the point  $y_{\tilde{\sigma},s}$. It holds that 
\begin{align*}
	d_{X}(q,y_{\tilde{\sigma},s}) & \le d_{X}(q,\hat{y}_{\tilde{\sigma},s})+d_{X}(\hat{y}_{\tilde{\sigma},s},y_{\tilde{\sigma},s})\\
	& \le d_{X}(q,\hat{y}_{\sigma',s})+d_{X}(q,\hat{y}_{\sigma',s},y_{\sigma',s})\\
	& \le\rho\cdot d_{X}(q,x)+\rho\cdot d_{X}(q,x)=2\rho\cdot d_{X}(q,x)~.
\end{align*}
The theorem now follows.
\end{proof}

By plugging in the triangle-LSO's of \Cref{thm:LSOdoublingLargeStretch}, or that of \cite{FL22} (\Cref{tab:LSO}) into the \Cref{thm:triangleToNNS} we conclude:
\begin{corollary}\label{cor:doublingNNS}
Consider an $N$-point metric space $(X,d_X)$ of doubling dimension $d$.
Then for every $t\in[\Omega(1),d]$, $X$ admits a dynamic labeled $t$-NNS with label size $O(2^{O(\nicefrac{d}{t})}\cdot d\cdot\log^{2}t\cdot\log N)$, space $O(n\cdot 2^{O(\nicefrac{d}{t})}\cdot d\cdot\log^{2}t\cdot\log N)$,
query time $O(2^{O(\nicefrac{d}{t})}\cdot d\cdot\log^{2}t\cdot\log\log N)$, and deletion/insertion in 
$O(2^{O(\nicefrac{d}{t})}\cdot d\cdot\log^{2}t\cdot\log\log N)$ amortized time.
\end{corollary}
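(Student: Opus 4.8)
The plan is to derive \Cref{cor:doublingNNS} by composing the two tools already in hand: the triangle-LSO for doubling metrics (\Cref{thm:LSOdoublingLargeStretch}) and the meta reduction from triangle-LSO to labeled NNS (\Cref{thm:triangleToNNS}). Fix a target stretch $t \in [\Omega(1), d]$. First I would apply \Cref{thm:LSOdoublingLargeStretch} to $(X, d_X)$ with parameter $t/2$; since $t$ is bounded below by a constant and at most $d$, the value $t/2$ also lies in the admissible range $[\Omega(1), d]$. This produces a $(\tau, t/2)$-triangle LSO with $\tau = 2^{O(d/(t/2))} \cdot d \cdot \log^2(t/2) = 2^{O(d/t)} \cdot d \cdot \log^2 t$, where the constant-factor change inside the exponent and the change from $\log^2(t/2)$ to $\log^2 t$ are absorbed into the $O(\cdot)$ notation (using $t = \Omega(1)$).

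Next I would plug this LSO into \Cref{thm:triangleToNNS}, which turns any $(\tau, \rho)$-triangle LSO into a dynamic labeled $2\rho$-NNS with label size $O(\tau \log N)$, space $O(n \cdot \tau \cdot \log N)$, and query time as well as amortized deletion/insertion time $O(\tau \cdot \log\log N)$. Here $\rho = t/2$, so the resulting stretch is $2\rho = t$, matching the corollary. Substituting the bound on $\tau$ yields label size $O(2^{O(d/t)} \cdot d \cdot \log^2 t \cdot \log N)$, space $O(n \cdot 2^{O(d/t)} \cdot d \cdot \log^2 t \cdot \log N)$, and query and update time $O(2^{O(d/t)} \cdot d \cdot \log^2 t \cdot \log\log N)$, exactly as stated.

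There is no real obstacle here: the argument is a parameter substitution, and the only steps deserving a sentence of justification are (i) that $t/2$ remains in the admissible parameter window of \Cref{thm:LSOdoublingLargeStretch}, and (ii) that rescaling $t$ by a constant does not change the asymptotic form of $\tau$ — both immediate from $t \in [\Omega(1), d]$. If one does not wish to rescale, the same argument gives a labeled $2t$-NNS with the identical resource bounds.
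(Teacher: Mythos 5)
Your proposal is correct and matches the paper's own derivation, which simply plugs the triangle-LSO of \Cref{thm:LSOdoublingLargeStretch} into the meta reduction of \Cref{thm:triangleToNNS}. The only difference is that you make explicit the factor-$2$ rescaling of the stretch parameter (and the observation that it does not change the asymptotic number of orderings), which the paper absorbs silently.
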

A case of special interest is when the stretch is $d$. Here we get label size  $\tilde{O}(d)\cdot\log N$, space $\tilde{O}(d)\cdot n\cdot\log N$, query time $\tilde{O}(d)\cdot\log\log N$, and deletion/insertion in 
$\tilde{O}(d)\cdot\log\log N$ amortized time.

\begin{corollary}\label{cor:GeneralNNSfromLSO}
Consider an $N$-point metric space $(X,d_X)$.
Then for every integer $k$, and $\eps\in(0,\frac12)$, $X$ admits a dynamic labeled $4k+\eps$-NNS with label size $O(N^{\frac{1}{k}}\cdot\log^{2}N\cdot\frac{k^{2}}{\eps}\cdot\log\frac{k}{\eps})$, space $O(n\cdot \N^{\frac{1}{k}}\cdot\log^{2}N\cdot\frac{k^{2}}{\eps}\cdot\log\frac{k}{\eps})$,
query time $O(N^{\frac{1}{k}}\cdot\log N\log\log N\cdot\frac{k^{2}}{\eps}\cdot\log\frac{k}{\eps})$, and deletion/insertion in 
$O(N^{\frac{1}{k}}\cdot\log N\log\log N\cdot\frac{k^{2}}{\eps}\cdot\log\frac{k}{\eps})$ amortized time.
\end{corollary}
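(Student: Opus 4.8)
The plan is to obtain \Cref{cor:GeneralNNSfromLSO} by directly composing the general-metric triangle-LSO of Filtser and Le \cite{FL22} with the meta \Cref{thm:triangleToNNS}. First I would invoke the triangle-LSO listed in \Cref{tab:LSO} for general metrics: instantiated on the ambient $N$-point metric $(X,d_X)$ with accuracy parameter $\eps'=\frac\eps2$, it gives a $(\tau,\rho)$-triangle-LSO with
\[
\tau=O\!\left(N^{\frac1k}\cdot\log N\cdot\frac{k^2}{\eps'}\cdot\log\frac{k}{\eps'}\right)=O\!\left(N^{\frac1k}\cdot\log N\cdot\frac{k^2}{\eps}\cdot\log\frac{k}{\eps}\right)
\]
and stretch $\rho=2k+\eps'$.

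Next I would plug this LSO into \Cref{thm:triangleToNNS}, which turns any $(\tau,\rho)$-triangle-LSO into a dynamic labeled $2\rho$-NNS with label size $O(\tau\cdot\log N)$, space $O(n\cdot\tau\cdot\log N)$, and query as well as amortized insertion/deletion time $O(\tau\cdot\log\log N)$. Since $2\rho=4k+2\eps'=4k+\eps$, the stretch is exactly as claimed. Substituting the value of $\tau$ gives label size $O(N^{\frac1k}\cdot\log^2 N\cdot\frac{k^2}{\eps}\cdot\log\frac{k}{\eps})$, space $n$ times that, and query/update time $O(N^{\frac1k}\cdot\log N\cdot\log\log N\cdot\frac{k^2}{\eps}\cdot\log\frac{k}{\eps})$, matching the statement.

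There is no real obstacle here: the corollary is a pure composition of two results already established in the excerpt. The only points requiring a moment's care are bookkeeping --- rescaling $\eps$ so that the doubling of the stretch in \Cref{thm:triangleToNNS} lands exactly at $4k+\eps$ (rather than $4k+2\eps$), and observing that the ``$n$'' appearing in the \cite{FL22} bound from \Cref{tab:LSO} refers to the size of the metric on which the LSO is built, i.e. $N$, whereas the ``$n$'' in the NNS parameters is the size of the query set $P\subseteq X$. With these two observations the bounds line up and the corollary follows.
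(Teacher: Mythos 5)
Your proposal is correct and follows exactly the paper's own route: the paper obtains \Cref{cor:GeneralNNSfromLSO} by plugging the general-metric $(O(N^{\frac1k}\log N\cdot\frac{k^2}{\eps}\log\frac{k}{\eps}),\,2k+\eps)$-triangle-LSO of \cite{FL22} into the meta \Cref{thm:triangleToNNS}, just as you do. Your extra bookkeeping (rescaling $\eps$ so that the doubling of the stretch lands at $4k+\eps$, and distinguishing $N$ from $n$) is the right way to make the composition precise.
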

A case of special interest is where the have stretch $\log N$, label size $\tilde{O}(\log^4 N)$, space $n\cdot\tilde{O}(\log^4 N)$, query time $\tilde{O}(\log^3 N)$, and deletion/insertion in 
$\tilde{O}(\log^3 N)$ amortized time.
We make significant improvements on this in the next subsection.

\subsection{General Metrics Spaces}\label{subsec:NNSGeneral}

In \Cref{cor:GeneralNNSfromLSO} we obtained a labeled NNS for general metric spaces. However, for the case of stretch $\approx k$, for $k\ll\log N$, the query time is rather large: $\tilde{O}(N^{\frac1k})$.
In this section we show that by slightly increasing the stretch to $8(1+\eps)k$, we can construct labeled NNS with extremely small query time: $O(\frac{1}{\eps}\cdot\log\log N)$ (\Cref{thm:NNSfromRamsey}).
We then focus on the case of stretch $\log N$. Here \Cref{cor:GeneralNNSfromLSO} provides label size $\tilde{O}(\log^4 N)$ and query time $\tilde{O}(\log^3 N)$.
We show a host of possible trade-offs for this stretch case, starting with label size $O(\log^2 N)$ and query time $O(\log\log N)$ (\Cref{thm:NNSfromRamsey}), and culminating with label size $O(1)$ and query time $O(\log\log N)$ (\Cref{cor:NNSfromClanAspectRatio}), under the assumption that the aspect ratio is polynomial, and the label size is bounded only in expectation. See \Cref{tab:GeneralNNS} for a summary of results.

Our approach here is a reduction of the general case to that of ultrametrics (in similar manner to the distance labeling of Mendel Naor \cite{MN07}), using either the classic Ramsey trees \cite{MN07}, or the novel clan embeddings \cite{FL21}. 
We also look into the improved guarantees that we can obtain using the following relaxations: (1) allowing the data structure to fail with some small probability, (2) giving an upper bound on the label size only in expectation (3) assuming that the metric space has polynomial aspect ratio. 
In \Cref{sec:NNSUltra} we solve the problem for the important special case of ultrametrics. 
In \Cref{sec:NNSGerneralRamsey,sec:NNSGerneralClan} we reduce from general space to ultrametrics using Ramsey trees, and clan embedding respectively. 
See \Cref{tab:GeneralNNS} for a summary of our results.

\begin{table}[t]
	\begin{tabular}{|l|l|l|l|l|l|}
		\hline
		Stretch & Label & Update & Query & Ref & Notes \\ \hline
		$8(1+\eps)k$ & $O(\frac{k}{\eps}N^{\frac{1}{k}}\cdot\log N)$       & $O(\frac{k}{\eps}N^{\frac1k}\cdot\log\log N)$        & $O(\frac1\eps\cdot\log\log N)$       & \theoremref{thm:NNSfromRamsey}    &       \\ \hline
		$16k$& $O(N^{\frac1k}\cdot\log N)$      & $O(N^{\frac1k}\cdot\log\log N)$        &  $O(\log\log N)$      & \theoremref{thm:NNSfromClan}    & $\mathbb{E}$      \\ \hline
		
		$8(1+\eps)k$ & $O(\frac{k}{\eps}N^{\frac{1}{k}}\cdot\log \frac1\eps)$       & $O(\frac{k}{\eps}N^{\frac1k}\cdot\log\log N)$        & $O(\frac1\eps\cdot\log\log N)$       & \cororef{cor:NNSfromRamseyAspectRatio}    & $\Phi$      \\ \hline
		$16(1+\eps)k$& $O(N^{\frac1k}\cdot\log \frac1\eps)$      & $O(N^{\frac1k}\cdot\log\log N)$        &  $O(\log\log N)$      & \cororef{cor:NNSfromClanAspectRatio}     & $\mathbb{E},\Phi$      \\ \hline
		
		$8(1+\eps)k$ & $O(\frac{k}{\eps}N^{\frac{1}{k}}\cdot\frac{\log\frac1\delta}{\eps^2})$       & $O(\frac{k}{\eps}N^{\frac1k}\cdot \log\log N)$        & $\frac1\eps\cdot O(\log\log N+\frac{\log\frac1\delta}{\eps^2})$       & \cororef{cor:NNSfromRamseyFailureProb}    & $\delta$      \\ \hline
		$16(1+\eps)k$& $O(N^{\frac1k}\cdot\frac{\log\frac1\delta}{\eps^2})$      & $O(N^{\frac1k}\cdot \log\log N)$        &  $O(\log\log N+\frac{\log\frac1\delta}{\eps^2})$      & \cororef{cor:NNSfromClanFailureProb}    & $\mathbb{E},\delta$      \\ \hline\hline
		
		$O(\log N)$ & $O(\log^2 N)$       & $\tilde{O}(\log N)$        & $O(\log\log N)$       & \theoremref{thm:NNSfromRamsey}    &       \\ \hline
		$O(\log N)$& $O(\log N)$      & $O(\log\log N)$        &  $O(\log\log N)$      & \theoremref{thm:NNSfromClan}    & $\mathbb{E}$      \\ \hline
		$O(\log N)$ & $O(\log N)$       & $\tilde{O}(\log N)$        & $O(\log\log N)$       &     \cororef{cor:NNSfromRamseyAspectRatio} & $\Phi$      \\ \hline
		$O(\log N)$ & $O(1)$       & $\tilde{O}(\log N)$        & $O(\log\log N)$       &     \cororef{cor:NNSfromClanAspectRatio} & $\mathbb{E},\Phi$       \\ \hline
		
		$O(\log N)$ & $O(\log N\cdot \log\frac1\delta)$       & $\tilde{O}(\log N)$        & $O(\log\log N+ \log\frac1\delta)$       & \cororef{cor:NNSfromRamseyFailureProb}    & $\delta$      \\ \hline
		$O(\log N)$& $O(\log \frac1\delta)$      & $O(\log\log N)$        &  $O(\log\log N+\log \frac1\delta)$      & \cororef{cor:NNSfromClanFailureProb}    & $\mathbb{E},\delta$      \\ \hline
	\end{tabular}
	\caption{Different labeled NNS data structures for general metrics and their parameters. 
		We also illustrate all the results for the regime of stretch $O(\log N)$.	
		Here $k\in\N$, and $\eps,\delta\in(0,1)$. The space in all the cases above equals $n$ times the label size.
		$\mathbb{E}$ denotes that the labels size, space, and update time guarantees are only in expectation (the query time is always worst case).
		$\Phi$ denotes that the result holds only for metric spaces with polynomial aspect ratio.
		$\delta$ denotes that the data structure can return a wrong answer with probability $\delta$. \label{tab:GeneralNNS}
	}
\end{table}
\begin{table}[t]
	\begin{tabular}{|l|l|l|l|l|l|}
		\hline
		Stretch  & Label                                &  Update                                                                                                                                                                        & Query                                                                                                                                                                         & Ref                            & Notes                        \\ \hline
		1        & $O(\log N)$                          &  $O(\log\log N)$                                                                                                                                                               & $O(\log\log N)$                                                                                                                                                               & \Cref{lem:UltraDeterministic} &                              \\ \hline
		$1+\eps$ & $O(\log \frac1\eps)$                 & $O(\log\log N)$                                                                                                                                                               & $O(\log\log N)$                                                                                                                                                               & \Cref{lem:UltraLabeling}      & Polynomial aspect ratio      \\ \hline
		$1+\eps$ & $O(\frac{\log\frac1\delta}{\eps^2})$                 &  $O(\log\log N)$
        & $O(\log\log N+\frac{\log\frac1\delta}{\eps^2})$                                                                                                                                                               & \Cref{lem:UltraJL}      & Failure probability $\delta$\\ \hline
	\end{tabular}
	\caption{Different labeled NNS data structures for ultrametrics and their parameters. The space in all the cases above equals $n$ times the label size.\label{tab:UltraNNS}}
\end{table}

\subsubsection{Ultrametric}\label{sec:NNSUltra}
A key component for the NNS of general metric space is the case of ultrametrics. 
We refer to \Cref{sec:LSOdoubling} for the introduction and definition of ultrametric.
Let $\pi$ be a pre order of the leaves in $U$. For a vertex $x$, let $\pi_x$ be the position (index) of $x$ in $\pi$. We will treat $\pi$ an an order $\preceq_\pi$.
Let $P\subseteq U$ be a subset of the points of size $n$. We construct a Y-fast-trie of the vertices w.r.t. the indices $\pi_x$. Note that the space required by the Y-fast-trie is $O(n)$, while in $O(\log\log N)$ time we can query predecessor/successor of a point, and add/remove a point.

Consider a query point $q\in U$, we use the Y-fast-trie (in $O(\log\log N)$ time) to find the predecessor $p$, and successor $s$ of $q$ w.r.t. $\pi$. 
For every vertex $x$ such that $q\preceq_{\pi} s\preceq_{\pi}  x$, let $y$ be the lca of $q$ and $x$, it holds that $s$ is also a descendant of $y$. Thus, $d_U(q,s)\leq d_U(q,x)$.
The same argument holds for $p$, and vertices $x$ such that $x\preceq_{\pi} p\preceq_{\pi}  q$. 
We conclude that $\min\{d_U(q,s),d_U(q,p)\}=\min_{x\in P}\{d_U(q,x)\}$.%
\footnote{Alternative formulation of this fact due to \cite{FL22} is that the preorder of an ultrametric is a $(1,1)$-triangle LSO.}

Until now we only used $O(n)$ space, and the labels are the indices $\pi_x$ (a single word).
To solve the labeled NNS problem, it remains to add some information to help us distinguish between $d_U(q,s)$ and $d_U(q,p)$.
We present three different solutions to the problem of estimating distances, under different assumptions. Each yielding a labeled NNS for ultrametrics. Our results are summarized in \Cref{tab:UltraNNS}.

\paragraph{Least common ancestor labeling - exact}
Our first approach is to retrieve the exact distances  $d_U(q,s),d_U(q,p)$.
Given a rooted tree, least common ancestor (LCA) labeling is a scheme where each vertex $v$ is assigned a label $\ell_v$, and there is an algorithm that given two labels $\ell_v,\ell_u$, can determine the LCA of $v$ and $u$.
We will be using the following scheme by Peleg \cite{Peleg05}. The query time is implicit using binary search.
\begin{theorem}[\cite{Peleg05}]\label{thm:PelegLCAlabels}
	Given an $N$ vertex tree $T$ where each vertex has unique identifier (that can be stored as machine word), there is an algorithm that assign $O(\log N)$ (words) label to each vertex, such that given two labels, in $O(\log\log N)$ time the algorithm return the unique identifier of their lca.
\end{theorem}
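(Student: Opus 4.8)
The plan is to prove \Cref{thm:PelegLCAlabels} via the classical heavy path decomposition. First I would root $T$ at an arbitrary vertex $r$ and, for every internal vertex, designate the child whose subtree is largest as its \emph{heavy} child (breaking ties arbitrarily) and all other children as \emph{light}. The heavy edges then decompose into vertex-disjoint \emph{heavy paths}, each path $\pi$ having a distinguished topmost vertex $\mathrm{head}(\pi)$ whose identifier I will use as the identifier of $\pi$. The key combinatorial fact is that every root-to-vertex path traverses at most $\log_2 N$ light edges — because crossing a light edge at least halves the size of the current subtree, and a subtree has size $\ge 1$ — and hence intersects at most $\log_2 N + 1$ distinct heavy paths.

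Next I would define the label. For a vertex $v$, walk from $r$ down to $v$ and let $\pi_1,\dots,\pi_k$ be the heavy paths visited in order ($\pi_1$ being the one containing $r$, $\pi_k$ the one containing $v$), and let $w_i$ be the deepest vertex of $\pi_i$ still lying on the $r$–$v$ path (so $w_i$ is the vertex at which the walk leaves $\pi_i$ along a light edge, and $w_k=v$). The label of $v$ is the array of triples $\big(\mathrm{id}(\pi_i),\ \mathrm{id}(w_i),\ \mathrm{depth}(w_i)\big)_{i=1}^{k}$. By the fact above $k \le \log_2 N + 1$, and each triple occupies $O(1)$ machine words, so $|\ell_v| = O(\log N)$ words; the labels are trivially computed by one DFS of $T$.

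For the query, given $\ell_u$ and $\ell_v$ I would first locate the largest index $i^\star$ with $\mathrm{id}(\pi^u_{i^\star}) = \mathrm{id}(\pi^v_{i^\star})$. This index is well defined, since $\pi^u_1 = \pi^v_1$ is the heavy path of $r$, and the set of matching indices is a prefix $\{1,\dots,i^\star\}$: if the two $r$-paths share the $(i{+}1)$-st heavy path then they share all earlier ones. Hence $i^\star$ can be found by binary search over $i \in [1,\min(|\ell_u|,|\ell_v|)]$ with $O(1)$ work per probe, i.e. in $O(\log\log N)$ time. On the common heavy path $\pi_{i^\star}$ the two root-paths coincide from $\mathrm{head}(\pi_{i^\star})$ down to depth $\min(d^u_{i^\star}, d^v_{i^\star})$ and then split (to distinct light children of that vertex, or one of $u,v$ ends there), so the $\lca$ of $u$ and $v$ is exactly the stored vertex $w^u_{i^\star}$ when $d^u_{i^\star} \le d^v_{i^\star}$ and $w^v_{i^\star}$ otherwise; the query returns its identifier.

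The remaining work is to verify that this rule indeed yields the $\lca$ in the boundary situations — when $w^u_{i^\star} = w^v_{i^\star}$, or when one of $u,v$ lies on $\pi_{i^\star}$ itself and is therefore an ancestor of the other — which I expect to be the only genuinely delicate point; in every case the displayed rule selects the deepest vertex common to the two root-paths, which is the definition of the $\lca$. Establishing this correctness, together with the prefix/monotonicity property that legitimizes the binary search and the $\log_2 N + 1$ bound on the number of heavy paths, are the structural ingredients; once they are in place the claimed label size $O(\log N)$ and query time $O(\log\log N)$ follow immediately.
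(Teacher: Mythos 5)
Your construction is correct, and it gives a complete, self-contained proof of a statement that the paper itself does not prove: the paper simply cites Peleg's labeling scheme and remarks that the $O(\log\log N)$ query time ``is implicit using binary search.'' Your heavy-path-decomposition argument is exactly the kind of scheme underlying that citation, and every step checks out. The prefix/monotonicity property you invoke for the binary search is genuine: a root-to-vertex path always enters a heavy path at its head (the head is reached via a light edge, or is the root), so if the two label sequences agree at position $i{+}1$ they describe the same root-to-head segment and hence agree at all earlier positions; this also shows a common heavy path occupies the same index in both labels, so comparing ids position-by-position is legitimate. At the last common index $i^\star$ the two root-paths coincide along $\pi_{i^\star}$ down to the shallower of the two stored exit vertices and then diverge (either one of $u,v$ terminates there, or the two paths leave through distinct light children --- they cannot leave through the same child, since that would force $\pi^u_{i^\star+1}=\pi^v_{i^\star+1}$, contradicting maximality of $i^\star$), so returning the shallower exit vertex is correct in all the boundary cases you flag, including the ancestor case $w^u_{i^\star}=u$. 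With $k\le \log_2 N+1$ triples of $O(1)$ words each, the label size is $O(\log N)$ words, and the binary search over at most $\log_2 N+1$ positions costs $O(\log\log N)$ word operations, matching the claimed bounds exactly.
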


When there are no unique identifiers, and one only required to return the assigned label, much sparser schemes exist (see \cite{AGKR04,AHL14}).
We obtained a labeled-NNS for ultrametric.
\begin{lemma}\label{lem:UltraDeterministic}
	For every ultrametric $U$, one can construct exact labeled $1$-NNS with label size $O(\log N)$, space $O(n\cdot\log N)$, and query and insertion/deletion time $O(\log\log N)$.
	Furthermore, the data structure also outputs the exact distance to the closest neighbor.
\end{lemma}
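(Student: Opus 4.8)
The plan is to augment the pre-order indices $\pi_x$, already introduced above, with the least-common-ancestor labels of Peleg (\Cref{thm:PelegLCAlabels}), so that the exact distances $d_U(q,p)$ and $d_U(q,s)$ can be recovered from the labels alone. Recall from \Cref{def:HST} that in the HST $T$ underlying $U$ we have $d_U(x,y)=\Gamma_{\lca(x,y)}$, so it suffices to locate $\lca(q,p)$ and $\lca(q,s)$ together with their $\Gamma$-labels. Since $T$ has $O(N)$ vertices (contracting degree-two internal nodes if necessary), I first assign each internal node $v$ the unique identifier $(\mathrm{id}_v,\Gamma_v)$, where $\mathrm{id}_v$ is an arbitrary serial number; each identifier fits in $O(1)$ machine words. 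Applying \Cref{thm:PelegLCAlabels} to $T$ with these identifiers, every leaf $x$ receives a label $\mathtt{L}_x$ of $O(\log N)$ words, and given $\mathtt{L}_x,\mathtt{L}_y$ one obtains in $O(\log\log N)$ time the identifier of $\lca(x,y)$, hence its label $\Gamma_{\lca(x,y)}=d_U(x,y)$. The NNS-label of $x$ is $\ell_x=(\pi_x,\mathtt{L}_x)$, of size $O(\log N)$ words.

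Given $P\subseteq U$, the data structure stores $\{\ell_x\}_{x\in P}$ and the Y-fast trie (\Cref{thm:Y-fast-trie}) over $\{\pi_x:x\in P\}$ with each trie element linked to its label, for total space $O(n\log N)$. On a query $q$ with label $\ell_q=(\pi_q,\mathtt{L}_q)$ I use the trie to find, in $O(\log\log N)$ time, the predecessor $p$ and successor $s$ of $q$ with respect to $\pi$ among the points of $P$, together with their labels (if $q\in P$ return $q$ at distance $0$; if one of $p,s$ does not exist, ignore that candidate). I then compute $d_U(q,p)$ and $d_U(q,s)$ exactly, each in $O(\log\log N)$ time, via Peleg's scheme on $(\mathtt{L}_q,\mathtt{L}_p)$ and $(\mathtt{L}_q,\mathtt{L}_s)$, and output the candidate achieving $\min\{d_U(q,p),d_U(q,s)\}$ together with this value. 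Correctness and exactness follow immediately from the identity $\min\{d_U(q,s),d_U(q,p)\}=\min_{x\in P}d_U(q,x)$ established above, so the query time is $O(\log\log N)$.

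For updates, inserting or deleting $x$ amounts to inserting or deleting $\pi_x$ in the Y-fast trie, which is $O(\log\log N)$ amortized by \Cref{thm:Y-fast-trie}, plus storing or discarding the label $\ell_x$, which is $O(1)$ since labels are kept by reference. This yields amortized update time $O(\log\log N)$ and establishes all the claimed bounds. The construction is essentially routine once Peleg's LCA-labeling and the pre-order observation are in hand; the only point needing a little care is that \Cref{thm:PelegLCAlabels} returns the \emph{identifier} of the least common ancestor rather than its $\Gamma$-value, which is precisely why I fold $\Gamma_v$ into the identifier of each internal node $v$.
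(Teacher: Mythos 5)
Your proposal is correct and follows essentially the same route as the paper: store the pre-order index together with a Peleg LCA label whose internal-node identifiers carry the $\Gamma$-values, use the Y-fast trie to find the predecessor and successor, and decide between them by two LCA queries. The minor points you add (contracting degree-two internal nodes, handling a missing predecessor/successor) are harmless refinements of the same argument.
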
 
\begin{proof}
	Consider an ultrametric $U$, where each vertex $x$ has a label $\ell_{U}(x)$. 
	The unique identifier of an internal node $x$ will be $(x,\ell_{U}(x))$, while the unique identifier of a leaf vertex $x$ will be $(x,\pi_x)$.
	Next, we use \Cref{thm:PelegLCAlabels} to construct lca-labeling scheme $\{L_U(x)\}_{x\in U}$.
	Finally, in our labeled NNS scheme, the label of each vertex $x$ will be a union of the unique identifier, and the lca-label.
	
	Given a subset $P$, in addition to the Y-fast-trie above, we will also store all the labels.
	Given a query point $q\in U$, we will proceed as above. Given the predecessor $p$, and successor $s$, we will simply compute $d_U(q,s),d_U(q,p)$ (as they are the stored labels of the lca's), and return the minimum (together with its distance).

	The query time consist of querying the Y-fast-trie, and the lca data structures, taking $O(\log\log N)$ in total.
	The space is simply storing the labels and identifiers, which is $O(\log N)$ words per vertex and $O(n\cdot\log N)$ overall.
	One can add and remove vertices from $P$ by simply updating the Y-fast-trie, and the list of vertices in $P$, which is $O(\log\log N)$ per operation.
\end{proof}

\paragraph{Distance labeling in trees - $1+\eps$ approximation}\label{subsec:UltraDistanceLabeling}
Given a metric space $(X,d)$,
\emph{distance labeling} is an algorithm that 
assigns to each point $x\in X$ a label $\ell(x)$, such that there is an algorithm $\mathcal{A}$ (oblivious to $(X,d)$) that provided labels $\ell(x),\ell(y)$ of arbitrary $x,y \in X$, can compute $d_X(u,v)$.
Specifically, a distance labeling is said to have \emph{stretch} $t\ge 1$ if
\[
\forall x,y\in X,
\qquad
d(x,y)\le \mathcal{A}\left(l(x),l(y)\right) \le t\cdot d(x,y).
\]

We refer to \cite{FGK20} for an overview of distance labeling schemes in different regimes (and comparison with metric embedding, see also \cite{Peleg00Labling,GPPR04,TZ05,EFN18}). 
Freedman, Gawrychowski, Nicholson, and Weimann  \cite{FGNW17}, (improving upon \cite{AGHP16,GKKPP01}) showed that for any $n$-vertex unweighted tree, and $\eps\in(0,1)$, one can construct an $(1+\eps)$-labeling scheme with labels of $O(\log\frac1\eps)$ words.
\begin{theorem}[\cite{FGNW17}]\label{thm:tree-label}
	For any $n$-vertex unweighted tree $T=(V,E)$, and parameter $\eps\in(0,1)$, there is a distance labeling scheme with stretch $1+\eps$, $O(\log\frac1\eps)$ label size, and constant query time.
\end{theorem}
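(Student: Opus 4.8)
The plan is to reduce distance computation in an unweighted tree $T$ to a longest‑common‑prefix computation on short strings, using the heavy‑path decomposition, and then to compress those strings from length $\Theta(\log n)$ down to length $O(\log\frac1\eps)$ by exploiting the multiplicative slack in the approximation.

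First I would recall the exact scheme this builds on. Decompose $T$ by heavy paths, so that the root‑to‑$x$ path passes through at most $\log n$ heavy paths $P_1^x,\dots,P_{k_x}^x$, leaving $P_j^x$ at depth $\ell_j^x$ with $\ell_1^x<\dots<\ell_{k_x}^x=\mathrm{depth}(x)$. Give each heavy path a short identifier and let the \emph{signature} of $x$ be $\big((\mathrm{id}(P_1^x),\ell_1^x),\dots,(\mathrm{id}(P_{k_x}^x),\ell_{k_x}^x)\big)$. For $u,v$, if $j^\star$ is the last index on which the identifier coordinates of the two signatures agree, then $w:=\mathrm{lca}(u,v)\in P_{j^\star}^u$, $\mathrm{depth}(w)=\min(\ell_{j^\star}^u,\ell_{j^\star}^v)$, and $d_T(u,v)=\mathrm{depth}(u)+\mathrm{depth}(v)-2\,\mathrm{depth}(w)$. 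This already yields an exact labeling of $O(\log^2 n)$ bits with $O(1)$ query time, where $j^\star$ is found by a most‑significant‑differing‑bit scan of the concatenated identifier strings, exactly in the word‑RAM style of \Cref{thm:2hopPath}.

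The new ingredient I would add is a sparsification of the signature down to $O(\log\frac1\eps)$ entries, driven by two observations. First, it suffices to recover $\mathrm{depth}(w)$ to additive error $\eps\,d_T(u,v)$, and since $d_T(u,v)\ge \max(\mathrm{depth}(u),\mathrm{depth}(v))-\mathrm{depth}(w)$, the resolution needed at $w$ scales with the longer of the two tree‑arms. Second, going up a root‑to‑$x$ path each heavy‑path hop at least doubles the relevant subtree size, so if $w$ lies on the $j$‑th heavy path counted up from $x$ with $j>\log\frac1\eps$, then the arm $\mathrm{depth}(x)-\mathrm{depth}(w)$ is already $\Omega(\log\frac1\eps)$. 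Accordingly, each vertex $x$ stores (a) its $O(\log\frac1\eps)$ deepest signature entries verbatim, and (b) a scale‑sensitive \emph{coarse} summary of the remaining, "high" entries — for each dyadic scale, the heavy‑path identifier crossing it together with the corresponding leave‑depth rounded to a multiple of $\eps$ times that scale, which costs $O(\log\frac1\eps)$ bits per scale and $O(\log\frac1\eps)$ words total. When $w$ falls in the verbatim part we read $\mathrm{depth}(w)$ exactly; when it falls higher, part (b) of the deeper of $u,v$ pins $\mathrm{depth}(w)$ down to additive error $O(\eps\cdot\text{arm})$, inside the budget.

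The main obstacle will be exactly this compression. One has to (i) prove that collapsing the high region never costs more than $\eps\,d_T(u,v)$ in the estimate of $\mathrm{depth}(w)$, via a case analysis according to whether $w$ lands in the verbatim part or the summarized part of the signature of the \emph{deeper} endpoint, and according to how unbalanced the two arms are; (ii) choose the identifiers and the dyadic summary so that $j^\star$ — or an index good enough for the $(1+\eps)$ guarantee — and the two relevant (approximate) leave‑depths are extractable in $O(1)$ word‑RAM operations, again through the bit‑scan of \Cref{thm:2hopPath}; and (iii) dispatch the degenerate cases $w=u$, $w=v$, $u$ and $v$ sharing a heavy path, and extreme arm imbalance. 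Once these are in place, $O(\log\frac1\eps)$ words for the verbatim entries plus $O(\log\frac1\eps)$ words for the coarse summary plus $O(1)$ words for $\mathrm{depth}(u)$ give the stated label size, and the query is $O(1)$ by construction.
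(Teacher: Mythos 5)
You should first note that the paper does not prove \Cref{thm:tree-label} at all: it is imported verbatim from [FGNW17] and used as a black box (in \Cref{lem:UltraLabeling}), so there is no in-paper proof to compare against; your sketch has to be judged against what a full proof of the cited result requires. Your warm-up is fine: heavy-path signatures, the longest-common-prefix characterization of the lca, and the resulting exact $O(\log^2 n)$-bit labeling are all correct and standard, and the idea of rounding each leave-depth to additive precision proportional to the arm $\mathrm{depth}(x)-\ell_j$ is indeed the right flavor. The entire content of the theorem, however, is the compression to $O(\log\frac1\eps)$ words, and that is exactly where the proposal has genuine gaps rather than deferred details.

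Concretely: (i) a heavy-path identifier costs $\Theta(\log n)$ bits, so your coarse summary, which stores one identifier per dyadic scale over $\Theta(\log n)$ scales, occupies $\Theta(\log^2 n)$ bits --- no saving over the exact scheme; the asserted ``$O(\log\frac1\eps)$ bits per scale'' is precisely the step that needs a new idea, and if you drop the identifiers you can no longer locate the divergence index $j^\star$, while a mismatched entry can make the estimate too small, violating the one-sided requirement $d\le \mathcal{A}(\ell(x),\ell(y))$. (ii) One identifier per dyadic scale cannot disambiguate anyway: within arm range $[2^s,2^{s+1})$ the root-to-$x$ path may cross up to $2^s$ distinct heavy paths (e.g.\ a ``broom'' with $\log n$ consecutive light edges), and attributing the lca to the wrong crossing in that scale introduces error on the order of $2^s\approx \mathrm{arm}$, i.e.\ a constant multiplicative error, not $1+\eps$. (iii) Your quantitative engine is too weak: hop index $>\log\frac1\eps$ only yields $\mathrm{arm}\ge\log\frac1\eps$, and the admissible additive error is then $\eps\cdot\mathrm{arm}\le\eps\log\frac1\eps<1$, so no rounding whatsoever is permitted there; rounding only starts to pay once the arm is $\Omega(\frac1\eps)$, and there can be $\min(\log n,\frac1\eps)\gg\log\frac1\eps$ heavy-path crossings closer to $x$ than that, so ``deepest $O(\log\frac1\eps)$ entries verbatim plus a coarse rest'' simply does not cover all lca positions. (iv) The claimed $O(1)$ query time also does not follow: the msb trick of \Cref{thm:2hopPath} operates inside a single machine word, whereas your labels span $\omega(1)$ words, so finding the divergence point in constant time needs its own mechanism. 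These are the real difficulties that [FGNW17] (building on [AGHP16]) resolves with a substantially more delicate encoding; as written, your scheme either re-inflates to $\Theta(\log n)$ words or loses the $1+\eps$ guarantee.
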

Given a tree metric $T$ of polynomial aspect ratio, by subdividing the edges one can show that there is an unweighted tree $T'$ with $\poly(\frac n\eps)$ vertices, and a mapping $f:T\rightarrow T'$ with a constant $c$ such that $\forall x,y\in T$, $d_T(x,y)\le c\cdot d_{T'}(f(x),f(y))\le (1+\eps)\cdot d_T(x,y)$.
It follows that (assuming $\eps\ge n^{-O(1)}$) \Cref{thm:tree-label} holds also for trees with polynomial aspect ratio.

\begin{lemma}\label{lem:UltraLabeling}
	For every ultrametric $U$ with polynomial aspect ratio and $N^{-O(1)}<\eps<1$, one can construct labeled $(1+\eps)$-NNS with label size $O(\log \frac1\eps)$, space $O(n\cdot\log \frac1\eps)$, and query and insertion/deletion time $O(\log\log N)$.
\end{lemma}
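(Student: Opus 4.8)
The plan is to follow the same scaffolding as in the proof of \Cref{lem:UltraDeterministic}, but to replace the exact LCA‑labeling component (\Cref{thm:PelegLCAlabels}) with the approximate tree distance labeling of \Cref{thm:tree-label}. The first step is to realize the ultrametric $U$ as a weighted tree metric. Take the HST tree $T$ underlying $U$ (\Cref{def:HST}); for every non‑root node $v$ with parent $u$ give the tree edge $(v,u)$ weight $\frac{\Gamma_u-\Gamma_v}{2}\ge 0$. A telescoping sum along the path from a leaf $x$ up to $z=\lca(\varphi(x),\varphi(y))$ has total weight $\frac{\Gamma_z}{2}$, hence the path metric satisfies $d_T(x,y)=\Gamma_z=d_U(x,y)$, so $U$ is isometric to $(T,d_T)$, which inherits the polynomial aspect ratio. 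Then, exactly as remarked after \Cref{thm:tree-label}, since $T$ has polynomial aspect ratio and $\eps\ge N^{-O(1)}$, subdividing edges yields an unweighted tree $T'$ with $\poly(N/\eps)$ vertices together with a map $f:T\to T'$ and a scalar $c$ with $d_T(x,y)\le c\cdot d_{T'}(f(x),f(y))\le(1+\eps)d_T(x,y)$. Applying \Cref{thm:tree-label} to $T'$ gives a distance labeling $\{\ell(f(x))\}_{x\in U}$ of stretch $1+\eps$, label size $O(\log\frac1\eps)$, and $O(1)$ query time; composing with $f$ and rescaling $\eps$ by a constant produces a $(1+\eps)$‑stretch distance labeling for $U$ itself with the same size and query bounds.

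Next I reuse the preorder‑plus‑Y‑fast‑trie machinery already described for ultrametrics at the start of \Cref{sec:NNSUltra}. Fix a preorder $\pi$ of the leaves of $U$ and define the NNS label of a point $x\in U$ to be the pair $(\pi_x,\ell(f(x)))$, of size $O(\log\frac1\eps)$ words. Given $P\subseteq U$ with $|P|=n$, store all labels of $P$ together with a Y‑fast trie (\Cref{thm:Y-fast-trie}) on the indices $\{\pi_x:x\in P\}$, each trie element pointing to the corresponding stored label; the total space is $O(n\cdot\log\frac1\eps)$, and insertion/deletion reduces to a Y‑fast trie update, $O(\log\log N)$ amortized. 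On a query $q$ with label $(\pi_q,\ell(f(q)))$, use the trie to find the $\pi$‑predecessor $p$ and $\pi$‑successor $s$ of $q$ among $P$ in $O(\log\log N)$ time; by the preorder property established earlier, $\min\{d_U(q,p),d_U(q,s)\}=\min_{x\in P}d_U(q,x)$. Compute the distance‑labeling estimates $\widehat d(q,p)$ and $\widehat d(q,s)$ from $\ell(f(q)),\ell(f(p)),\ell(f(s))$ in $O(1)$ time and return whichever of $p,s$ has the smaller estimate; the query time is $O(\log\log N)$.

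For correctness, recall that a distance labeling satisfies $d_U(\cdot,\cdot)\le\widehat d(\cdot,\cdot)\le(1+\eps)d_U(\cdot,\cdot)$. Suppose without loss of generality that the true nearest neighbor attains $\min_{x\in P}d_U(q,x)=d_U(q,s)$. If the algorithm returns $s$ the answer is exact; if it returns $p$, then $d_U(q,p)\le\widehat d(q,p)\le\widehat d(q,s)\le(1+\eps)d_U(q,s)=(1+\eps)\min_{x\in P}d_U(q,x)$. Hence in either case the returned point is a $(1+\eps)$‑approximate nearest neighbor.

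The step I expect to require the most care is the reduction in the first paragraph: verifying that the HST‑to‑weighted‑tree conversion is genuinely isometric rather than merely bi‑Lipschitz, that polynomial aspect ratio is preserved by it, and that the edge‑subdivision step legitimately produces an \emph{unweighted} tree of $\poly(N/\eps)$ size so that the $O(\log\frac1\eps)$ label size and $O(1)$ query time of \Cref{thm:tree-label} actually apply — this is precisely where the hypotheses ``polynomial aspect ratio'' and ``$\eps>N^{-O(1)}$'' are consumed. Everything after that is a direct transcription of the ultrametric NNS scaffolding already in place, so no new ideas should be needed there.
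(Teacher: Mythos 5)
Your proposal is correct and follows essentially the same route as the paper: treat the ultrametric as a tree metric, invoke the distance labeling of \Cref{thm:tree-label} (via the polynomial-aspect-ratio subdivision remark), and combine it with the preorder plus Y-fast-trie scaffolding to estimate the distances to the predecessor and successor. The only difference is that you spell out the HST-to-weighted-tree isometry explicitly, which the paper leaves implicit.
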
 
\begin{proof}
	Consider an ultrametric $U$ with polynomial aspect ratio. We treat it as a tree and use \Cref{thm:tree-label} to obtain a $1+\eps$ distance labeling.
	In our labeled NNS scheme, the label of each vertex $x$ will contain (in addition to previous information) the distance label.
	
	Given a subset $P$, in addition to the Y-fast-trie above, we will also store all the labels.
	Given a query point $q\in U$, we will proceed as above. Given the predecessor $p$, and successor $s$, we will simply estimate $d_U(q,s),d_U(q,p)$ using the distance label, and return the minimum. Clearly, we obtained a $(1+\eps)$-approximate nearest neighbor.	
	
	The query time consist of querying the Y-fast-trie, and the distance labeling, taking $O(\log\log N)$ in total.
	The space usage consist of storing the labels and identifiers, which is $O(\log \frac{1}{\eps})$ words per vertex and $O(n\cdot\log \frac{1}{\eps})$ overall.
	One can add and remove vertices from $P$ by simply updating the Y-fast-trie, and the list of vertices in $P$, which is $O(\log\log N)$ per operation.
\end{proof}

\paragraph{Embedding into $\ell_2$ and dimension reduction - success with high probability}
Here we deal with the case of unbounded aspect ratio, to cope with this situation, we will allow for failure.
It is well known fact that every ultrametric embeds isometrically into $\ell_2$ (see e.g. \cite{fkmvww14}). Specifically, there is a function $f:U\rightarrow\R^d$ (for some $d>0$) such that for every $x,y\in U$, $\|f(x)-f(y)\|_2=d_U(x,y)$.
However, the dimension $d$ could be very large.
To reduce the dimension, we will use the Johnson Lindenstrauss dimension reduction \cite{JL84,DG03}. The JL transform is a random linear map of the $\R^d$ into  $\R^{O(\frac{\log\frac1\delta}{\eps^2})}$ such that for every $\vec{x},\vec{y}\in \R^d$ it holds that
\[
	\Pr[(1-\eps)\|\vec{x}-\vec{y}\|_2\le\|g(\vec{x})-g(\vec{y})\|_2\le (1+\eps)\|\vec{x}-\vec{y}\|_2]\ge 1-\delta~.
\]%
\begin{lemma}\label{lem:UltraJL}
	For every ultrametric $U$ and parameters $\eps,\delta\in[0,1]$, one can construct $(1+\eps)$-labeled-NNS with error probability $\delta$. Specifically, given a query $q$ the data structure will return a $(1+\eps)$-approximate near neighbor with probability $1-\delta$. 
	The label size is $O(\frac{\log\frac1\delta}{\eps^2})$, the space is $O(n\cdot\frac{\log\frac1\delta}{\eps^2})$, query time $O(\log\log N+\frac{\log\frac1\delta}{\eps^2})$,
	and insertion/deletion time $O(\log\log N)$. 
\end{lemma}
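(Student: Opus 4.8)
The plan is to combine the preorder reduction already established in this subsection with the isometric embedding of ultrametrics into $\ell_2$ and a single Johnson--Lindenstrauss sketch. First I would fix the isometric embedding $f:U\to\R^{d}$ (recall every ultrametric embeds isometrically into Euclidean space, with $d$ possibly as large as $n-1$), and then draw one JL map $g:\R^{d}\to\R^{O(\log(1/\delta)/\eps^{2})}$. The label $\ell_x$ of a point $x$ consists of the preorder index $\pi_x$ (one word, exactly as in the preamble of this subsection) together with the sketch vector $g(f(x))$, which already gives label size $O(\log(1/\delta)/\eps^{2})$. Given $P\subseteq U$, we build the Y-fast-trie of \Cref{thm:Y-fast-trie} on $\{\pi_x\}_{x\in P}$ and also store the labels of the points of $P$; insertions and deletions only touch the Y-fast-trie, so they cost $O(\log\log N)$ amortized time, and the total space is $O(n\cdot\log(1/\delta)/\eps^{2})$.

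For a query $q$ we use the Y-fast-trie to find, in $O(\log\log N)$ time, the preorder predecessor $p$ and successor $s$ of $q$ among $P$. By the preorder argument already recorded in this subsection, $\min\{d_U(q,p),d_U(q,s)\}=\min_{x\in P}d_U(x,q)$, so one of $p,s$ is the exact nearest neighbor. We then estimate $d_U(q,p)$ and $d_U(q,s)$ by the sketched Euclidean distances $\|g(f(q))-g(f(p))\|_2$ and $\|g(f(q))-g(f(s))\|_2$ (each computable in $O(\log(1/\delta)/\eps^{2})$ time from the labels, since $\|f(x)-f(y)\|_2=d_U(x,y)$), and return the point achieving the smaller estimate. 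The total query time is $O(\log\log N+\log(1/\delta)/\eps^{2})$.

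For correctness I would apply the JL guarantee to the two pairs $(f(q),f(p))$ and $(f(q),f(s))$: by a union bound, with probability at least $1-2\delta$ both sketched distances lie within a $1\pm\eps$ factor of the true distances. On this event, if we return $s$ it is the exact nearest neighbor, and if we return $p$ then $(1-\eps)\,d_U(q,p)\le\|g(f(q))-g(f(p))\|_2\le\|g(f(q))-g(f(s))\|_2\le(1+\eps)\,d_U(q,s)$, so $d_U(q,p)\le\frac{1+\eps}{1-\eps}\,d_U(q,s)$; rescaling $\eps$ by a constant and $\delta$ by a factor $2$ (which changes the label size only by a constant) yields a $(1+\eps)$-approximate near neighbor with failure probability $\delta$.

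The only genuinely delicate point is that the labels reveal the random map $g$, so the $1-\delta$ guarantee is meaningful only for a query $q$ (and a set $P$) chosen without knowledge of $g$ --- i.e. the per-query, oblivious reading of the statement; I would state this explicitly rather than attempt a stronger adaptive guarantee. (If $\log(1/\delta)/\eps^{2}\ge d$ one can simply skip the JL step and use $f$ directly, at no cost.) Everything else --- existence of the isometric embedding, the properties of the Y-fast-trie, and the JL bound --- is off the shelf, so the remainder of the proof is essentially bookkeeping.
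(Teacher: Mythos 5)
Your proposal matches the paper's own proof essentially step for step: isometrically embed the ultrametric into $\ell_2$, apply a single JL sketch of dimension $O(\log(1/\delta)/\eps^2)$, store the sketch alongside the preorder index, use the Y-fast-trie to reduce the query to the predecessor/successor pair, and decide between them by comparing sketched distances with a union bound over the two pairs. Your extra remarks (rescaling $\eps$, the oblivious-query caveat, skipping JL when the sketch dimension exceeds $d$) are harmless refinements, not a different route.
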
 
\begin{proof}
	Consider an ultrametric $U$, let $f$ be the isometric embedding of $U$ into $\R^d$, and $g$ be a random JL transform into $\R^{O(\frac{\log\frac1\delta}{\eps^2})}$.
	In our labeled NNS scheme, the label of each vertex $x$ will contain (in addition to previous information) the point  $g(f(x))$.
	
	Given a subset $P$, in addition to the Y-fast-trie above, we will also store all the images.
	Given a query point $q\in U$, we will proceed as above. Given the predecessor $p$, and successor $s$, we will simply compute $\|g(f(q))-g(f(s))\|_2,\|g(f(q))-g(f(p))\|_2$, and return $s$ or $p$ according to the achieving the minimum.
	With probability $1-2\delta$, $(1-\eps)\|g(f(q)-g(f(s)\|_2\le d_U(q,s)\le (1+\eps)\|g(f(q)-g(f(s)\|_2$ and $(1-\eps)\|g(f(q)-g(f(p)\|_2\le d_U(q,p)\le (1+\eps)\|g(f(q)-g(f(p)\|_2$. If this is indeed the case, than clearly we will obtain a $1+O(\eps)$-approximate solution.

	The query time consist of querying the Y-fast-trie, and computing the Euclidean distance: $O(\log\log N+\frac{\log\frac1\delta}{\eps^2})$ time in total.
	The space is simply storing the images and identifiers, which is $O(\frac{\log\frac1\delta}{\eps^2})$ words per vertex and $O(n\cdot\frac{\log\frac1\delta}{\eps^2})$ overall.
	One can add and remove vertices from $P$ by simply updating the Y-fast-trie, and the list of vertices in $P$, which is $O(\log\log N)$ per operation (we count here storing the image in $\R^d$ as a single operation).
\end{proof}

\subsubsection{Deterministic labeled NNS via Ramsey trees}\label{sec:NNSGerneralRamsey}
The following theorem is implicit from the distance oracle construction in Abraham \etal \cite{ACEFN20}.%
\begin{theorem}[\cite{ACEFN20}]\label{thm:Ramsey}
	Given an $N$-point metric space $(X,d_X)$ and parameters $k\in\N$, $\eps\in(0,1)$, there is a deterministic algorithm that construct a collection $\{U_i\}_{i\in\mathcal{I}}$ of $|\mathcal{I}|=\frac k\eps\cdot N^{\frac1k}$ ultrametrics over the points of $X$, such that:
	\begin{enumerate}
		\item For every $x,y\in X$ and $i\in\mathcal{I}$, $d_X(x,y)\le d_{U_i}(x,y)$.
		\item For every $x\in X$, there is a subset $\home(x)\subseteq \mathcal{I}$ of $O(\frac1\eps)$ ultrametrics, such that for every $y\in X$, $\min_{i\in\home(x)}d_{U_i}(x,y)\le8(1+\eps)k\cdot d_X(x,y)$.
	\end{enumerate}
\end{theorem}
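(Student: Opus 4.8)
The plan is to build each $U_i$ as a laminar chain of partitions at geometrically spaced distance scales, produced by a deterministic region-growing step, and to control $|\mathcal I|$ and the home-set sizes by a weighting argument; this is the Ramsey-partition machinery of Mendel and Naor, in the derandomized, scale-banded form of Abraham, Chechik, Elkin, Filtser and Neiman. By scaling, assume the minimum interpoint distance is $1$ and the aspect ratio is $\Phi$; fix a constant scale gap $\rho$ (say $8$), scales $\Delta_j=\rho^{\,j}$ for $0\le j\le J:=\lceil\log_\rho\Phi\rceil$, and a padding parameter $\alpha=\Theta(1/k)$ tuned so the final stretch is $8(1+\eps)k$.

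\emph{Single scale.} The workhorse is the CKR/Bartal ball-carving lemma: for a finite metric and a scale $\Delta$, draw a uniform radius $r\in[\Delta/4,\Delta/2]$ and a uniform ordering of the points and assign each point to the first ball $B(z,r)$ (in that order) containing it; the result is a partition into clusters of diameter $<\Delta$ in which, for every $x$, $\Pr\!\big[B(x,\alpha\Delta)\not\subseteq\mathrm{cl}(x)\big]=O(\alpha)\cdot\ln\frac{|B(x,\Delta/2)|}{|B(x,\Delta/8)|}$. A weighted version of this bound (arbitrary nonnegative vertex weights replacing cardinalities) lets one derandomize by conditional expectations and select deterministically one partition whose weighted mass of cut points is at most the bound.

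\emph{Chaining and stretch.} Run the single-scale step at every $\Delta_j$, then force the partitions to be laminar exactly as in the proof of \Cref{lem:PartitionCoverSchemeToUltrametric}; since $\rho$ is a constant this perturbs diameters and padding radii by an $O(1)$ factor, absorbed into $\alpha$. A laminar chain with level-$j$ clusters of diameter $<\Delta_j$ induces a dominating ultrametric $U$ with $d_U(x,y)=\Delta_j$ for the minimal level $j$ at which $x,y$ are co-clustered (dominating because then $d_X(x,y)<\Delta_j$). If $x$ is $\alpha$-padded at every level, then for any $y$ the smallest $j$ with $\Delta_j\ge d_X(x,y)/\alpha$ co-clusters $y$ with $x$, so $d_U(x,y)\le\Delta_j\le\rho\cdot d_X(x,y)/\alpha$; with $\rho=O(1)$ and $\alpha=\Theta(1/k)$ suitably tuned this is $8(1+\eps)k\cdot d_X(x,y)$. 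Thus the theorem's guarantee for $x$ towards a given $y$ holds as soon as $x$ is $\alpha$-padded at every level of some $U_i$ in its home set.

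\emph{The hard part — the number of ultrametrics.} Over the product of the per-scale choices, $\Pr[x\ \alpha\text{-padded at every level of a random }U]\ge\prod_j\big(1-O(\alpha)\ln\tfrac{|B(x,\Delta_j/2)|}{|B(x,\Delta_j/8)|}\big)$, and since $\rho>4$ makes the intervals $[\Delta_j/8,\Delta_j/2]$ pairwise disjoint the exponent telescopes to $O(\alpha\ln N)=O(\tfrac1k\ln N)$, so this probability is $\ge N^{-O(1/k)}$ at levels of bounded local growth — but at the (few) levels of large local growth a single ultrametric cannot serve $x$, which is exactly why a \emph{set} of homes is needed. The main obstacle is to convert this into a deterministic family of exactly $|\mathcal I|=\tfrac k\eps N^{1/k}$ ultrametrics in which each $x$ has a home set $\home(x)$ of only $O(1/\eps)$ ultrametrics covering all of its levels. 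Following ACEFN, one partitions the $J$ levels into $O(1/\eps)$ bands of consecutive scales and runs the weighted greedy cover \emph{per band} — repeatedly pick the partition chain maximizing the weight of points it pads throughout the current band, then shrink those weights — letting $\home(x)$ take the one chosen ultrametric per band; per band the padding probability need only beat $N^{-O(1/k)}$, which simultaneously caps the count at $N^{1/k}$ (times $\tfrac k\eps$ for the bands and the covering factor) and forces $|\home(x)|=O(1/\eps)$. Domination of every $U_i$ over $d_X$ was already secured. Making the band decomposition interact cleanly with the telescoping, and pinning down the constant $8(1+\eps)k$ and the exact cardinality, is the delicate bookkeeping; the remaining ingredients are standard padded-decomposition technology.
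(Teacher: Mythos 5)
Note first that the paper does not prove \Cref{thm:Ramsey} at all: it is imported as a black box, stated as ``implicit from the distance oracle construction in \cite{ACEFN20}''. So there is no in-paper proof to match your argument against; what can be judged is whether your sketch would actually deliver the stated parameters, and there I see real gaps, concentrated exactly where the theorem's specific constants come from.

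You correctly identify the right toolkit (CKR-type ball carving, telescoping padding probabilities, deterministic selection, a greedy covering argument), but two central points are off. First, your stated reason for needing a set $\home(x)$ of several ultrametrics --- that ``at levels of large local growth a single ultrametric cannot serve $x$'' --- comes from multiplying the additive form $1-O(\alpha)\ln\frac{|B(x,\Delta/2)|}{|B(x,\Delta/8)|}$ of the padding bound, which is the wrong form: using the multiplicative form $\bigl(|B(x,\Delta/8)|/|B(x,\Delta/2)|\bigr)^{O(\alpha)}$ the product telescopes to $N^{-O(\alpha)}$ through high-growth scales with no difficulty, and a single fully padded hierarchy per point suffices (this is exactly Mendel--Naor). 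Consequently your ``bands of consecutive scales'' device is solving a non-problem, and it is not even well defined as stated: the number of levels is $\Theta(\log_\rho\Phi)$, which is not controlled by $\eps$ or $N$, and you never show how one chosen tree per band yields $|\home(x)|=O(\frac1\eps)$, nor how the per-band greedy cover gives total cardinality $\frac k\eps N^{\frac1k}$. Second, the constant $8(1+\eps)k$ is precisely the delicate part you defer to ``tuning''. With your geometric grid of gap $\rho\ge 4$ (which you need for the telescoping to be disjoint), the rounding to the next scale costs a factor $\rho$ on top of $1/\alpha$, and the CKR exponent forces $\alpha\approx\frac{1}{8k}$, so the argument as written gives stretch $\Omega(\rho\cdot 8k)\ge 32k$, not $8(1+\eps)k$. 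The actual role of $\eps$ in the \cite{ACEFN20}-style construction is to run $O(\log_{1+\eps}\rho)=O(\frac1\eps)$ multiplicatively shifted copies of the scale grid, with one fully padded home tree per shift; for a given $y$ some shift has a scale within a $(1+\eps)$ factor of $d_X(x,y)/\alpha$, which is what replaces the loss $\rho$ by $1+\eps$ and is simultaneously the true source of both $|\home(x)|=O(\frac1\eps)$ and the $\frac1\eps$ factor in $|\mathcal I|$. Finally, obtaining the log-free count $\frac k\eps N^{\frac1k}$ (rather than $N^{\frac1k}\log N$ from a naive greedy/set-cover bound) needs the Mendel--Naor shrinking-set (or weighted) iteration, and reconciling that iteration with the requirement that every $U_i$ is an ultrametric over all of $X$ and that the guarantee holds for every $y\in X$ (not just points surviving to $x$'s iteration) is a genuine piece of bookkeeping your sketch does not address. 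As it stands, the proposal reproves something close to the classical $O(kN^{1/k})$-tree, $O(k)$-stretch, single-home Ramsey theorem, not the statement of \Cref{thm:Ramsey}.
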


Using \Cref{thm:Ramsey}, we can construct an labeled NNS:
\begin{theorem}\label{thm:NNSfromRamsey}
	For general metric space $(X,d_{X})$, and parameters $k\in\N$ and $\eps\in(0,1)$, there is
	a labeled $8(1+\eps)k$-NNS  with $O(\frac{k}{\eps}N^{\frac{1}{k}}\cdot\log N)$
	label size, space $O(n\cdot\frac{k}{\eps}N^{\frac1k}\cdot\log N)$,
	insertion/deletion time $O(\frac{k}{\eps}N^{\frac1k}\cdot\log\log N)$, and
	query time $O(\frac1\eps\cdot\log\log N)$.
\end{theorem}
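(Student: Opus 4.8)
The plan is to combine the Ramsey-tree decomposition of \Cref{thm:Ramsey} with the ultrametric labeled-NNS data structure of \Cref{lem:UltraDeterministic}. First I would invoke \Cref{thm:Ramsey} with the given parameters $k,\eps$ to obtain the collection $\{U_i\}_{i\in\mathcal{I}}$ of $|\mathcal{I}|=\frac k\eps N^{\frac1k}$ dominating ultrametrics, together with, for each point $x\in X$, its ``home'' set $\home(x)\subseteq\mathcal{I}$ of size $O(\frac1\eps)$. The label $\ell_x$ of a point $x$ will consist of: (a) for every $i\in\mathcal{I}$, the information needed to locate $x$ in $U_i$ (its preorder index $\pi^i_x$), and (b) for every $i\in\home(x)$, the full ultrametric-NNS label of $x$ in $U_i$ coming from \Cref{lem:UltraDeterministic} (the unique identifier plus the LCA-label), which costs $O(\log N)$ words each. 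Since every point needs (a) for all $|\mathcal{I}|=\frac k\eps N^{\frac1k}$ ultrametrics and (b) only for its $O(\frac1\eps)$ home ultrametrics, the total label size is $O(\frac k\eps N^{\frac1k}) + O(\frac1\eps\cdot\log N) = O(\frac k\eps N^{\frac1k}\cdot\log N)$ words, and the space is $n$ times this.

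Next I would describe the data structure for a given subset $P\subseteq X$: for every $i\in\mathcal{I}$, build the Y-fast trie of \Cref{thm:Y-fast-trie} (with \Cref{obs:Y-fast-trie}) over the preorder indices $\{\pi^i_x : x\in P\}$, and store, attached to each such index, a pointer to the corresponding point's label. Insertion/deletion of a point $x$ updates all $|\mathcal{I}|$ tries in which $x$ participates, costing $O(\frac k\eps N^{\frac1k}\cdot\log\log N)$ amortized time. For a query $q\in X$ with label $\ell_q$: for each of the $O(\frac1\eps)$ indices $i\in\home(q)$, use the Y-fast trie of $U_i$ to find the predecessor and successor of $q$ in the preorder of $U_i$ restricted to $P$, giving $O(\frac1\eps)$ candidate points; for each candidate, compute the ultrametric distance $d_{U_i}(q,\cdot)$ using the LCA-labels (both $q$'s, which is in $\ell_q$, and the candidate's, which is stored), in $O(\log\log N)$ time each via \Cref{thm:PelegLCAlabels}; return the candidate minimizing this estimate. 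The query time is $O(\frac1\eps)$ trie/LCA operations, i.e. $O(\frac1\eps\cdot\log\log N)$.

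For correctness, let $x^*\in P$ be the true nearest neighbor of $q$. By property 2 of \Cref{thm:Ramsey} applied to the point $q$, there is some $i\in\home(q)$ with $d_{U_i}(q,x^*)\le 8(1+\eps)k\cdot d_X(q,x^*)$. As argued in \Cref{sec:NNSUltra}, in the ultrametric $U_i$ the preorder predecessor/successor of $q$ among $P$ realizes $\min_{x\in P}d_{U_i}(q,x)$, so one of the two candidates we extracted for this $i$ — call it $y$ — satisfies $d_{U_i}(q,y)\le d_{U_i}(q,x^*)$. Combining with property 1 ($d_X\le d_{U_i}$ for all pairs), the point $\hat y$ we actually return (the global estimate-minimizer) satisfies $d_X(q,\hat y)\le d_{U_i}(q,\hat y)\le d_{U_i}(q,y)\le d_{U_i}(q,x^*)\le 8(1+\eps)k\cdot d_X(q,x^*)$, as required. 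The only mild subtlety I anticipate — and the step I would be most careful about — is bookkeeping the label/space accounting so that the $|\mathcal{I}|$-many preorder-index words (the dominant $N^{1/k}$ term) are clearly separated from the $O(\frac1\eps)$-many $O(\log N)$-word LCA-labels, and confirming that the LCA-labeling of \Cref{thm:PelegLCAlabels}, whose identifiers we set to be $(x,\pi^i_x)$ for leaves and the ultrametric label for internal nodes exactly as in the proof of \Cref{lem:UltraDeterministic}, indeed lets us recover $d_{U_i}(q,\cdot)$ from the two stored labels alone without any access to $U_i$ or to $P$ at query construction time.
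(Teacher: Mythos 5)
Your overall architecture (Ramsey trees of \Cref{thm:Ramsey} $+$ the ultrametric labeled NNS of \Cref{lem:UltraDeterministic}, one Y-fast trie per ultrametric, query over $\home(q)$, correctness via domination, property 2, and the preorder predecessor/successor argument) is exactly the paper's proof. However, your attempt to shave the label — storing the full per-ultrametric NNS label (identifier plus LCA-label) only for the $O(\frac1\eps)$ ultrametrics in $\home(x)$, and a bare preorder index for the rest — creates a genuine gap, and it is precisely at the step you flagged. At query time you take $i\in\home(q)$ and extract the predecessor/successor $y\in P$ of $q$ in the preorder of $U_i$; to evaluate the estimate you must compute $d_{U_i}(q,y)$, which via \Cref{thm:PelegLCAlabels} requires the LCA-labels \emph{of both} $q$ and $y$ in $U_i$. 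You have $q$'s (since $i\in\home(q)$), but $y$ stores its LCA-label for $U_i$ only if $i\in\home(y)$, and nothing in \Cref{thm:Ramsey} makes $\home(q)\subseteq\home(y)$ or vice versa. The preorder index of $y$ alone carries no distance information, so the candidate's distance in $U_i$ is simply not computable from the data you stored; and switching to the home sets of the data points instead would force the query to touch all $|\mathcal{I}|=\frac{k}{\eps}N^{1/k}$ orderings, destroying the $O(\frac1\eps\log\log N)$ query time.

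The fix is what the paper does: every point stores its \Cref{lem:UltraDeterministic} label (identifier and LCA-label) for \emph{all} ultrametrics in $\mathcal{I}$, plus the names of its home ultrametrics. This costs $O(\frac{k}{\eps}N^{\frac1k}\cdot\log N)$ words per point, which is exactly the label size claimed in the theorem, so your attempted savings (which only improved a term you were anyway allowed to spend) are unnecessary. With that single change, the rest of your argument — the update cost over all $|\mathcal{I}|$ tries, the $O(\frac1\eps)$ pred/succ queries, and the correctness chain $d_X(q,\hat y)\le d_{U_{\hat i}}(q,\hat y)\le d_{U_{i^*}}(q,y)\le d_{U_{i^*}}(q,x^*)\le 8(1+\eps)k\cdot d_X(q,x^*)$ — goes through and coincides with the paper's proof.
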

\begin{proof}
	Using \Cref{thm:Ramsey}, we obtain the collection of ultrametrics $\{U_i\}_{i\in\mathcal{I}}$. For every ultrametric $U_i$, we will use \Cref{lem:UltraDeterministic} to construct a labeled (exact)-NNS for $U_i$ with labels $\{\ell_i(x)\}_{x\in X}$. 
	In our labeled NNS, the label of a point $x\in X$ will consist of all the labels of the ultrametrics $\{\ell_i(x)\}_{i\in\mathcal{I}}$, together with the names of all the ultrametrics in $\home(x)$.
	Thus the label size will be $O(\frac{k}{\eps}N^{\frac1k}\cdot\log N)$.
	
	Given a subset $P$ of points, our data structure will simply be the union of the data structures of all the ultrametrics. Thus a total of $O(n\cdot\frac{k}{\eps}N^{\frac1k}\cdot\log N)$ space.
	In particular, the insertion/deletion time will be $O(\frac{k}{\eps}N^{\frac1k}\cdot\log\log N)$ (we count the storage of the lca label as a single operation).
	Given a query $q\in X$, we will go over all the labeled NNS data structures of all the ultrametrics in $\home(q)$, and retrieve the nearest neighbor in each one of them (together with it's distance to $q$). We will return the retrieved point minimizing the distance to $q$. The correctness is straightforward. The query consist of just querying $O(\frac1\eps)$ data structures, and thus $ O(\frac1\eps\cdot\log\log N)$ overall.
\end{proof}
By replacing \Cref{lem:UltraDeterministic} in the construction above with either \Cref{lem:UltraLabeling} or \Cref{lem:UltraJL}, we obtain the following:
\begin{corollary}\label{cor:NNSfromRamseyAspectRatio}
	For general metric space $(X,d_{X})$ with polynomial aspect ratio, and parameters $k\in\N$ and $\eps\in(0,1)$, there is
	a labeled $8(1+\eps)k$-NNS  with $O(\frac{k}{\eps}N^{\frac{1}{k}}\cdot\log \frac1\eps)$
	label size, space $O(n\cdot\frac{k}{\eps}N^{\frac1k}\cdot\log  \frac1\eps)$,
	insertion/deletion time $O(\frac{k}{\eps}N^{\frac1k}\cdot\log\log N)$, and
	query time $O(\frac1\eps\cdot\log\log N)$.
\end{corollary}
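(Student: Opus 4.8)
The plan is to mirror the proof of \Cref{thm:NNSfromRamsey} essentially verbatim, substituting the ultrametric data structure of \Cref{lem:UltraLabeling} for that of \Cref{lem:UltraDeterministic}. First I would invoke \Cref{thm:Ramsey} on $(X,d_X)$ with the given $k$ and $\eps$ to obtain the collection $\{U_i\}_{i\in\mathcal{I}}$ of $|\mathcal{I}|=\frac{k}{\eps}N^{\frac1k}$ dominating ultrametrics together with the home sets $\home(x)$ of size $O(\frac1\eps)$. Since each $U_i$ dominates $d_X$, its minimum pairwise distance is at least that of $X$, and since the construction of \Cref{thm:Ramsey} never inflates distances beyond $O(k)\cdot\diam(X)$, each $U_i$ has aspect ratio polynomial in the aspect ratio of $X$, hence polynomial in $N$. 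Consequently \Cref{lem:UltraLabeling} applies to every $U_i$ (using $\eps\ge N^{-O(1)}$), yielding a labeled $(1+\eps)$-NNS for $U_i$ with label size $O(\log\frac1\eps)$, space $O(n\cdot\log\frac1\eps)$, and $O(\log\log N)$ query and update time.

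Next I would assemble the global structure exactly as in \Cref{thm:NNSfromRamsey}: the label of a point $x$ is the concatenation of its $|\mathcal{I}|$ ultrametric labels $\{\ell_i(x)\}_{i\in\mathcal{I}}$ together with the names of the $O(\frac1\eps)$ ultrametrics in $\home(x)$, for total label size $O(\frac{k}{\eps}N^{\frac1k}\cdot\log\frac1\eps)$. The data structure for an input set $P$ is the disjoint union of the $|\mathcal{I}|$ per-ultrametric structures, occupying $O(n\cdot\frac{k}{\eps}N^{\frac1k}\cdot\log\frac1\eps)$ space; insertion/deletion touches all $|\mathcal{I}|$ of them at $O(\log\log N)$ each, for $O(\frac{k}{\eps}N^{\frac1k}\cdot\log\log N)$ total. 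On a query $q$ I would query only the $O(\frac1\eps)$ structures indexed by $\home(q)$, each in $O(\log\log N)$ time, and return the closest reported candidate, giving $O(\frac1\eps\cdot\log\log N)$ query time.

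For correctness, let $p\in P$ be the true nearest neighbor of $q$. By property~(2) of \Cref{thm:Ramsey} there is some $i\in\home(q)$ with $d_{U_i}(q,p)\le 8(1+\eps)k\cdot d_X(q,p)$, so the $(1+\eps)$-NNS for $U_i$ returns a point $p'$ with $d_X(q,p')\le d_{U_i}(q,p')\le (1+\eps)\min_{x\in P}d_{U_i}(q,x)\le(1+\eps)d_{U_i}(q,p)\le 8(1+\eps)^2 k\cdot d_X(q,p)$, where the first inequality uses that $U_i$ dominates $d_X$. Rescaling $\eps$ by a constant factor converts the $8(1+\eps)^2 k$ stretch into $8(1+\eps)k$, proving the corollary.

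I expect the only genuine point requiring care is the claim that each Ramsey ultrametric $U_i$ inherits a polynomial aspect ratio, which is what lets us apply \Cref{lem:UltraLabeling}. If the statement of \Cref{thm:Ramsey} does not literally bound diameters by $O(k)\cdot\diam(X)$, one can truncate each $U_i$ at the scale $\diam(X)$ (merging all levels above it into the root) without affecting either the domination property or the home-set approximation guarantee, which restores the bound; everything else is a routine re-run of the proof of \Cref{thm:NNSfromRamsey}.
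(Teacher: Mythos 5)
Your proposal is correct and matches the paper's own route exactly: the paper obtains this corollary by rerunning the proof of \Cref{thm:NNSfromRamsey} with the ultrametric data structure of \Cref{lem:UltraDeterministic} replaced by that of \Cref{lem:UltraLabeling}, which is precisely what you do, including the implicit rescaling of $\eps$ to absorb the extra $(1+\eps)$ factor. Your care about each Ramsey ultrametric inheriting a polynomial aspect ratio (with the truncation-at-$\diam(X)$ fallback) is a valid observation on a point the paper leaves implicit, and it does not change the argument.
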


\begin{corollary}\label{cor:NNSfromRamseyFailureProb}
	For general metric space $(X,d_{X})$, and parameters $k\in\N$ and $\eps,\delta\in(0,1)$, there is a labeled $8(1+\eps)k$-NNS  with $O(\frac{k}{\eps}N^{\frac{1}{k}}\cdot \frac{\log\frac1\delta}{\eps^2})$
	label size, space $O(n\cdot\frac{k}{\eps}N^{\frac1k}\cdot \frac{\log\frac1\delta}{\eps^2})$,
	insertion/deletion time $O(\frac{k}{\eps}N^{\frac1k}\cdot\log\log N)$, and
	query time $O(\frac1\eps\cdot(\frac{\log\frac1\delta}{\eps^2}+\log\log N))$.
	Each query is answered correctly with probability $1-\delta$ (however there are dependencies).
\end{corollary}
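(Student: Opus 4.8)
The plan is to run the proof of \Cref{thm:NNSfromRamsey} essentially verbatim, replacing the exact ultrametric labeled NNS of \Cref{lem:UltraDeterministic} by the JL-based one of \Cref{lem:UltraJL}: this trades exactness for a controlled failure probability, but in return removes any dependence on the aspect ratio and shrinks the per-ultrametric label from $O(\log N)$ to $O(\tfrac{\log(1/\delta')}{\eps^2})$ words. Concretely, apply \Cref{thm:Ramsey} with parameters $k,\eps$ to obtain the collection $\{U_i\}_{i\in\mathcal I}$ of $|\mathcal I|=\tfrac k\eps N^{1/k}$ dominating ultrametrics together with the sets $\home(x)$ of size $O(1/\eps)$. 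For each $U_i$ build the labeled NNS of \Cref{lem:UltraJL} with approximation parameter $\eps$ and a failure parameter $\delta'$ to be fixed below, yielding per-point labels $\{\ell_i(x)\}_{x\in X}$. The label of $x\in X$ is the concatenation of all the $\ell_i(x)$ together with the names of the ultrametrics in $\home(x)$; the data structure for a query set $P$ is the disjoint union of the $|\mathcal I|$ ultrametric data structures, each with its own Y-fast trie.

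The parameter bounds are then routine bookkeeping. Each $\ell_i(x)$ is $O(\tfrac{\log(1/\delta')}{\eps^2})$ words (a JL image in $\R^{O(\log(1/\delta')/\eps^2)}$), so the total label and the space are $O(\tfrac k\eps N^{1/k}\cdot\tfrac{\log(1/\delta')}{\eps^2})$ and $n$ times that. Insertion/deletion updates all $|\mathcal I|$ Y-fast tries, where copying the stored image counts as $O(1)$ and inserting a key costs $O(\log\log N)$, for a total of $O(\tfrac k\eps N^{1/k}\cdot\log\log N)$ amortized. A query $q$ touches only the $O(1/\eps)$ ultrametrics in $\home(q)$, each query costing $O(\log\log N+\tfrac{\log(1/\delta')}{\eps^2})$ by \Cref{lem:UltraJL}, hence $O(\tfrac1\eps(\log\log N+\tfrac{\log(1/\delta')}{\eps^2}))$ in total. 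We will take $\delta'=\Theta(\eps\delta)$, so $\log(1/\delta')=\log(1/\delta)+O(\log\tfrac1\eps)$, and absorb the lower-order $\log\tfrac1\eps$ term into the stated bounds.

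For correctness, fix a query $q$ and let $p^\star=\argmin_{x\in P}d_X(q,x)$. By \Cref{thm:Ramsey}(2) there is $i^\star\in\home(q)$ with $d_{U_{i^\star}}(q,p^\star)\le 8(1+\eps)k\cdot d_X(q,p^\star)$, and as in \Cref{sec:NNSUltra} the $U_{i^\star}$-structure, queried on $q$, compares only the predecessor and successor of $q$ along the preorder of $U_{i^\star}$, one of which is a true $U_{i^\star}$-nearest neighbor of $q$ in $P$. Let $\mathcal E$ be the event that the JL maps distort within a $1\pm\eps$ factor each of the at most $2|\home(q)|=O(1/\eps)$ pairs formed by $q$ with its predecessor/successor across the ultrametrics in $\home(q)$; since each such pair is distorted with probability at most $\delta'$ by \Cref{lem:UltraJL}, a union bound gives $\Pr[\overline{\mathcal E}]\le O(\delta'/\eps)\le\delta$. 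On $\mathcal E$, the estimate reported by $U_{i^\star}$ is at most $(1+\eps)\,d_{U_{i^\star}}(q,p^\star)$, so the global minimum estimate $\tilde d$ over $\home(q)$ is at most $(1+\eps)\cdot 8(1+\eps)k\cdot d_X(q,p^\star)$; and the returned point $y$, coming from some $U_j$ with $j\in\home(q)$, satisfies on $\mathcal E$ that $d_X(q,y)\le d_{U_j}(q,y)\le\tfrac1{1-\eps}\tilde d\le\tfrac{1+\eps}{1-\eps}\cdot 8(1+\eps)k\cdot d_X(q,p^\star)=8(1+O(\eps))k\cdot d_X(q,p^\star)$, the first inequality using that every $U_j$ dominates $d_X$ (\Cref{thm:Ramsey}(1)). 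Rescaling $\eps$ by a constant yields stretch $8(1+\eps)k$.

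The main subtlety — and the reason for the ``however there are dependencies'' caveat — is that we must \emph{not} union-bound the JL guarantee over all of $X$: that would force $\delta'\le N^{-O(1)}$ and inflate each per-ultrametric label to $\Theta(\tfrac{\log N}{\eps^2})$. Instead we argue correctness only for a fixed query, which inspects just $O(1/\eps)$ point pairs, so a union bound over precisely those pairs suffices. The price is that all queries reuse the same JL maps (frozen at label-assignment time), so distinct queries are not independent and an adaptive adversary can force failure; but any single oblivious query is answered correctly with probability $1-\delta$. A second, smaller point the argument has to pin down is that a non-home ultrametric might report a spuriously small estimate and thereby become the $\argmin$; this is exactly why $\mathcal E$ is defined to require JL accuracy for \emph{all} inspected candidates and not merely for the good ultrametric $U_{i^\star}$.
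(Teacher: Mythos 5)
Your proposal is correct and follows essentially the same route as the paper, which obtains \Cref{cor:NNSfromRamseyFailureProb} simply by substituting the JL-based ultrametric structure of \Cref{lem:UltraJL} for \Cref{lem:UltraDeterministic} in the construction of \Cref{thm:NNSfromRamsey}. Your extra bookkeeping (setting $\delta'=\Theta(\eps\delta)$ and union-bounding only over the $O(1/\eps)$ inspected pairs, costing a negligible $\log\frac1\eps$ inflation of the label) makes explicit the failure-probability and ``dependencies'' caveat that the paper states without proof, and is consistent with its intent.
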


\subsubsection{Labeled NNS via Clan Embeddings} \label{sec:NNSGerneralClan}
An interesting case for the labeled NNS problem is for stretch $k=O(\log N)$, as here, the label size becomes logarithmic.
Specifically, from \Cref{thm:NNSfromRamsey} one can obtain a labeled $O(\log N)$-NNS with $O(\log^2N)$ label size, space $O(n\cdot \log^2N)$, insertion/deletion time $O(\log N\cdot\log\log N)$, and query time $O(\log\log N)$.

If one allows for the space bounds to be only in expectation, we can do much better. We will use the recent clan embedding of Filtser and Le \cite{FL21}.
\begin{theorem}[\cite{FL21}]\label{thm:clan}
	Given an $N$-point metric space $(X,d_X)$ and parameter $k\in\N$, there is a distribution $\mathcal{D}$ over maps $f$ into ultrametrics $U$, where each point $x$ is mapped into a non-empty subset $f(x)$ of the points, where the subsets $\{f(x)\}_{x\in X}$ are disjoint. It holds that:
	\begin{enumerate}
		\item For every $x,y\in X$, and $(f,U)\in\supp(\mathcal{D})$, $d_X(x,y)\le\min_{x'\in f(x),y'\in f(y)}d_U(x',y')$.
		\item For every point $x\in X$, and $(f,U)\in\supp(\mathcal{D})$, there is a special vertex $\chi(x)\in f(x)$ called the chief of $x$.
		For every $y\in X$, it holds that $\min_{y'\in f(y)}d_U(\chi(x),y')\le16k\cdot d_X(x,y)$.
		\item For every $x\in X$, $\mathbb{E}_{(f,U)\sim{\cal D}}[|f(x)|]=O(N^{\frac1k})$.
	\end{enumerate}
\end{theorem}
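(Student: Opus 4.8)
The plan is to adapt the hierarchical region-growing partition of Mendel and Naor (the one also underlying the distance oracle of \cite{ACEFN20}) into a single-tree embedding by duplicating boundary points. Fix scales $\Delta_i = 2^{-i}\cdot\diam(X)$ and build a laminar family of clusters top-down: the root is $X$, and a cluster $C$ at scale $\Delta_i$ is refined into scale-$\Delta_{i+1}$ clusters by sampling a uniformly random permutation of the points of $C$ together with a radius $r$ from the Mendel--Naor distribution on $[\Delta_i/8,\Delta_i/4]$ (tuned below). Processing the permutation in order, whenever a point $u$ is still unassigned we open a child $C_u$ containing every unassigned point within distance $r$ of $u$ — these receive their \emph{principal} assignment to $C_u$ — together with a fresh \emph{copy} of every point of $X$ (even one already assigned elsewhere) lying within $\Delta_{i+1}/(16k)$ of some principal point of $C_u$; then we recurse on each child. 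The ultrametric $U$ is this laminar tree, a scale-$\Delta_i$ node getting label $\Theta(\Delta_i)$ and the leaves being the copies; the chief $\chi(x)$ is the copy of $x$ obtained by always following principal assignments.

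Property~1 (domination) is routine: every scale-$\Delta_i$ cluster sits inside a ball of radius $O(\Delta_i)$, so if two copies first meet at a scale-$\Delta_i$ node then the underlying points are at distance $O(\Delta_i)$ in $X$; a large enough constant in the node labels makes $d_U$ dominate $d_X$. Property~2 (the chief guarantee) is the structural heart: fix $x,y$ and let $i$ be largest with $\Delta_i \ge 16k\cdot d_X(x,y)$. I claim $\chi(x)$ and some copy of $y$ share a cluster at every scale $\ge \Delta_i$. Inductively, if they share a cluster $C$ at scale $\Delta_j$ with $j\le i$ and $\chi(x)$'s principal child is $C_u$, then $x$ lies within $r$ of $u$, hence within $r$ of a principal point of $C_u$; since $d_X(x,y)\le\Delta_{j+1}/(16k)$, the point $y$ is within $\Delta_{j+1}/(16k)$ of that principal point, so a copy of $y$ is inserted into $C_u$ and the pair survives to scale $\Delta_{j+1}$. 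Hence $d_U(\chi(x),y')=\Theta(\Delta_i)=O(k\cdot d_X(x,y))$, and tuning the constants gives exactly $16k$.

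The real obstacle is Property~3, the bound $\mathbb{E}[|f(x)|]=O(N^{1/k})$. I would run a chain-rule argument over the $O(\log\Phi)$ scales: conditioned on the laminar structure down to scale $\Delta_i$ and on a cluster $C$ at that scale containing a given copy of $x$, the expected number of scale-$\Delta_{i+1}$ children of $C$ into which $x$ gets copied equals $1$ (its principal child) plus the expected number of earlier-opened children whose principal region reaches within $\Delta_{i+1}/(16k)$ of $x$. The whole point of the Mendel--Naor radius distribution is to make this conditional expectation at most $(M_i/m_i)^{1/k}$, where $m_i\le M_i$ are the numbers of points of $C$ in two nested balls around $x$ of radii $\Theta(\Delta_{i+1})$ and $\Theta(\Delta_i)$; equivalently, the chance of spawning an extra copy is $1-(m_i/M_i)^{1/k}\le\frac1k\ln(M_i/m_i)$. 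Taking the tower of these conditional expectations, and using that consecutive ball counts along a root-to-leaf path essentially telescope down to $1$, the product telescopes to $(N/1)^{1/k}=N^{1/k}$. The delicate parts are pinning down the radius density so the exponent is exactly $1/k$ (not merely $O(1/k)$) and justifying the telescoping of conditional expectations through the random laminar tree — but this is precisely the dual of the Mendel--Naor padding-probability computation and should carry over with the same bookkeeping.
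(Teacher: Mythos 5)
First, note that the paper you are working from does not prove this statement at all: \Cref{thm:clan} is imported verbatim from the clan-embedding paper of Filtser and Le \cite{FL21}, whose construction is a Bartal/Ramsey-style ball-growing that cuts at a radius chosen by a potential/size argument and duplicates an annulus of width $\Theta(\Delta/k)$ into both sides, with the expectation guarantee obtained from that size accounting. Your proposal instead builds a top-down CKR/Mendel--Naor permutation partition with ``principal'' assignments plus copies of points near principal points. That is a genuinely different route, and Properties~1 and~2 of your sketch are essentially fine (up to the usual constant-chasing between the domination constant in the node labels and the advertised $16k$).

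The genuine gap is Property~3, which is the technical heart of the theorem, and your argument for it does not go through as stated. Two concrete problems. First, the quantity you must control at a single scale is the \emph{expected number} of children of $C$ that grab a copy of $x$, i.e.\ the expected number of earlier-opened centers $u$ with a principal point within $\Delta_{i+1}/(16k)$ of $x$; the Mendel--Naor radius distribution gives a padding-probability bound of the form $\Pr[\text{$B(x,\delta)$ not contained in $x$'s cluster}]\le 1-(m_i/M_i)^{1/k}$, which bounds the probability of \emph{at least one} extra copy, not the expected number of extra copies --- a point can be grabbed by many children, and bounding that expectation requires a separate argument (controlling, for each candidate center $w$ near $x$, the event that $w$ is still unassigned and precedes $x$'s assignment, which is not a consequence of the padding lemma). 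Second, even granting a per-scale factor of $(M_i/m_i)^{1/k}$, the total clan size is the leaf count of a branching process, not a product along a single chain: each scale-$i$ copy of $x$ lives in a different cluster, the ball counts $m_i,M_i$ are cluster-relative, and $\mathbb{E}[\prod]$ does not split into the product of per-scale bounds without a supermartingale/induction over the random laminar tree in which the bound attached to a copy depends only on nested balls of $x$ in $X$ (so that the ratios telescope uniformly over all branches). You flag this as ``bookkeeping,'' but it is precisely where the $O(N^{1/k})$ bound can fail or degrade (e.g.\ to $N^{O(1/k)}$ with an uncontrolled constant in the exponent), and it is the part \cite{FL21} spends most of its effort on; as written, the proposal asserts the conclusion of that computation rather than proving it.
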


Using \Cref{thm:Ramsey}, we can construct a labeled NNS:
\begin{theorem}\label{thm:NNSfromClan}
	For general metric space $(X,d_{X})$, and parameters $k\in\N$ and $\eps\in(0,1)$, there is
	a labeled $16k$-NNS  with $O(N^{\frac1k}\cdot\log N)$ label size, space $O(n\cdot N^{\frac1k}\cdot\log N)$,
	insertion/deletion time $O(N^{\frac1k}\cdot\log\log N)$, and
	query time $O(\log\log N)$.\\
	The bounds on label size, space, and insertion/deletion time are only in expectation.
\end{theorem}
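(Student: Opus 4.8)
The plan is to follow the proof of \Cref{thm:NNSfromRamsey} almost verbatim, but to replace the collection of Ramsey ultrametrics by a \emph{single} ultrametric drawn from the clan embedding of \Cref{thm:clan}. First I would sample $(f,U)\sim\mathcal{D}$, obtaining an ultrametric $U$ whose points are the disjoint copies $\bigcup_{x\in X}f(x)$, together with the chief $\chi(x)\in f(x)$ of each point $x$. Then I would build the exact labeled NNS for the ultrametric $U$ using \Cref{lem:UltraDeterministic}, producing labels $\{\ell_U(x')\}_{x'}$ for the copies. The labeled-NNS label of a point $x\in X$ will consist of the $|f(x)|$ labels $\{\ell_U(x')\}_{x'\in f(x)}$, each tagged so that the originating point $x$ can be recovered from the copy, plus a pointer marking which of these copies is the chief $\chi(x)$. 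By property~(3) of \Cref{thm:clan}, $\mathbb{E}[|f(x)|]=O(N^{\frac1k})$, so the expected label size is $O(N^{\frac1k}\cdot\log N)$ words. (The parameter $\eps$ in the statement is vestigial and plays no role.)

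Given a subset $P\subseteq X$, the data structure is simply the ultrametric structure of \Cref{lem:UltraDeterministic} instantiated on the copy set $\bigcup_{p\in P}f(p)$, with each copy linked to its originating point in $P$. Inserting or deleting a point $x$ amounts to inserting/deleting all $|f(x)|$ of its copies, costing $O(|f(x)|\cdot\log\log N)$, i.e.\ $O(N^{\frac1k}\cdot\log\log N)$ in expectation, and the total space is $O(n\cdot N^{\frac1k}\cdot\log N)$ in expectation. On a query $q\in X$ (presented with its label), I would read off the chief $\chi(q)$ together with $\ell_U(\chi(q))$, issue a single query to the ultrametric NNS with the point $\chi(q)$ to obtain the copy $z'\in\bigcup_{p\in P}f(p)$ minimizing $d_U(\chi(q),z')$, and return the point $p^\star\in P$ with $z'\in f(p^\star)$. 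This is one call to \Cref{lem:UltraDeterministic}, so the query time is $O(\log\log N)$.

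For correctness, let $p\in P$ be a true nearest neighbor of $q$. By property~(2) of \Cref{thm:clan} applied to $x=q$ and $y=p$, we have $\min_{y'\in f(p)}d_U(\chi(q),y')\le 16k\cdot d_X(q,p)$, so the nearest copy $z'$ to $\chi(q)$ in $U$ satisfies $d_U(\chi(q),z')\le 16k\cdot d_X(q,p)$. Since $\chi(q)\in f(q)$ and $z'\in f(p^\star)$, the domination property~(1) gives $d_X(q,p^\star)\le d_U(\chi(q),z')\le 16k\cdot d_X(q,p)=16k\cdot\min_{x\in P}d_X(q,x)$, so $p^\star$ is a $16k$-approximate nearest neighbor.

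I do not expect a genuine obstacle here: the only points needing care are that the domination and chief-distance guarantees of the clan embedding are applied to the correct pair of copies ($\chi(q)$ on the query side, an arbitrary copy on the answer side, exploiting that the chief lies in $f(q)$), and that the bookkeeping — tagging each copy with its originating point so the answer is reported in $O(1)$ time — fits inside the stated label size. The expectation bounds on label size, space, and update time transfer directly from property~(3) by linearity of expectation, while the query time is worst case as it makes no use of the random $|f(\cdot)|$. Replacing \Cref{lem:UltraDeterministic} by \Cref{lem:UltraLabeling} or \Cref{lem:UltraJL} in this construction then yields the corollaries in \Cref{tab:GeneralNNS}.
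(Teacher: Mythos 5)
Your proposal is correct and follows essentially the same route as the paper: sample a single clan embedding $(f,U)$, build the exact ultrametric labeled NNS of \Cref{lem:UltraDeterministic} on the copy set, store all copy labels plus the chief in each point's label, query from the chief $\chi(q)$, and conclude the stretch bound by combining properties~(1) and~(2) of \Cref{thm:clan} exactly as the paper does. The expectation bookkeeping via property~(3) and the worst-case $O(\log\log N)$ query also match the paper's argument.
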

\begin{proof}
We sample an ultrametric and a map $(f,U)$ using \Cref{thm:NNSfromClan}. Next, we use \Cref{lem:UltraDeterministic} to construct a labeled (exact)-NNS for $U$.
In our labeled NNS, the label of a point $x\in X$ will consist of the labels of all the points in $f(x)\subset U$, and the identity of the chief $\chi(x)\in f(x)$. Thus the label size is $O(N^{\frac1k}\cdot\log N)$ in expectation.

Given a subset $P$ of points, our data structure will simply be the data structure from \Cref{lem:UltraDeterministic}, for all the points $\cup_{x\in P}f(x)$. Thus a total of $O(n\cdot N^{\frac1k}\cdot\log N)$ space in expectation.
In particular, the insertion/deletion time will be $O(N^{\frac1k}\cdot\log\log N)$ time in expectation.
Given a query $q\in X$, we will retrieve the identity of the chief $\chi(q)$, and find the nearest neighbor $y'$ of $\chi(q)$ in $\cup_{x\in P}f(x)$. We will return the point $y\in P$ such that $y'\in f(y)$. The query time is $O(\log\log N)$ (in worst case).
To verify correctness, note it holds that
\begin{align*}
	d_{X}(q,y) & \le d_{U}(\chi(q),y')=\min_{z'\in\cup_{z\in P}f(z)}d_{U}(\chi(q),z')\le16k\cdot\min_{z\in P}d_{X}(q,z)~.
\end{align*}
\end{proof}
By replacing \Cref{lem:UltraDeterministic} in the construction above with either \Cref{lem:UltraLabeling} or \Cref{lem:UltraJL}, we obtain the following:
\begin{corollary}\label{cor:NNSfromClanAspectRatio}
	For general metric space $(X,d_{X})$ with polynomial aspect ratio, and parameters $k\in\N$ and $\eps\in(0,1)$, there is
	a labeled $16(1+\eps)k$-NNS  with $O(N^{\frac{1}{k}}\cdot\log \frac1\eps)$
	label size, space $O(n\cdot N^{\frac1k}\cdot\log  \frac1\eps)$,
	insertion/deletion time $O(N^{\frac1k}\cdot\log\log N)$, and
	query time $O(\log\log N)$.\\
	The bounds on label size, space, and insertion/deletion time are only in expectation.
\end{corollary}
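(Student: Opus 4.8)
The plan is to mirror the proof of \Cref{thm:NNSfromClan} verbatim, substituting the exact ultrametric NNS of \Cref{lem:UltraDeterministic} by the low-label $(1+\eps)$-approximate scheme of \Cref{lem:UltraLabeling}, which is precisely the scheme the polynomial aspect ratio assumption unlocks. Concretely, I would first sample a single map/ultrametric pair $(f,U)$ from the clan-embedding distribution of \Cref{thm:clan} with parameter $k$. The label of a point $x\in X$ stores the \Cref{lem:UltraLabeling}-distance-label of every copy in $f(x)\subseteq U$, together with a pointer identifying the chief $\chi(x)\in f(x)$. Since $\mathbb{E}[|f(x)|]=O(N^{\frac1k})$ and each distance label occupies $O(\log\frac1\eps)$ words, the expected label size is $O(N^{\frac1k}\cdot\log\frac1\eps)$, and the space and insertion/deletion bounds follow exactly as in \Cref{thm:NNSfromClan}: maintain the single Y-fast-trie-based structure of \Cref{lem:UltraLabeling} over $\bigcup_{x\in P} f(x)$, paying $O(\log\log N)$ per copy on updates.

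For a query $q$ with its assigned label, I would read off $\chi(q)$ and invoke the ultrametric $(1+\eps)$-NNS to obtain a point $y'\in\bigcup_{z\in P} f(z)$ that is a $(1+\eps)$-approximate nearest neighbor of $\chi(q)$ in $U$, then return the unique $y\in P$ with $y'\in f(y)$; this costs $O(\log\log N)$. Correctness chains the three guarantees of \Cref{thm:clan}: $d_X(q,y)\le d_U(\chi(q),y')$ by domination (property 1, using $\chi(q)\in f(q)$ and $y'\in f(y)$); $d_U(\chi(q),y')\le(1+\eps)\cdot\min_{z\in P}\min_{z'\in f(z)}d_U(\chi(q),z')$ by the approximate query; and $\min_{z'\in f(z)}d_U(\chi(q),z')\le 16k\cdot d_X(q,z)$ by property 2 applied with $x=q$. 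Composing these yields $d_X(q,y)\le 16(1+\eps)k\cdot\min_{z\in P}d_X(q,z)$, matching the claimed stretch.

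The only point requiring genuine attention — and the (mild) main obstacle — is confirming that \Cref{lem:UltraLabeling} is actually applicable to $U$, i.e.\ that the ultrametric produced by the clan embedding inherits a polynomial aspect ratio from $X$ (and that $\eps$ stays above $N^{-O(1)}$). This holds because the HST underlying $U$ has $O(\log\Phi)$ geometrically spaced levels whose labels lie between a constant times the minimum and maximum interpoint distances of $X$; if one wishes to be safe, rounding the label set to $O(\log N)$ distinct scales perturbs all distances by at most a $1+\eps$ factor, which is absorbed into the stretch. With that in hand the argument is pure bookkeeping, and the same substitution with \Cref{lem:UltraJL} in place of \Cref{lem:UltraLabeling} yields the failure-probability variant \Cref{cor:NNSfromClanFailureProb} identically.
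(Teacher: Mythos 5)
Your proposal matches the paper's own (very terse) proof: the paper obtains \Cref{cor:NNSfromClanAspectRatio} exactly by rerunning the proof of \Cref{thm:NNSfromClan} with \Cref{lem:UltraDeterministic} replaced by \Cref{lem:UltraLabeling}, and your stretch chaining, label/space/update accounting, and query procedure are the same argument spelled out. Your extra paragraph verifying that the clan-embedding ultrametric inherits a polynomial aspect ratio (so that \Cref{lem:UltraLabeling} indeed applies) is a point the paper leaves implicit, and your handling of it is sound.
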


\begin{corollary}\label{cor:NNSfromClanFailureProb}
	For general metric space $(X,d_{X})$, and parameters $k\in\N$ and $\eps,\delta\in(0,1)$, there is a labeled $16(1+\eps)k$-NNS  with $O(N^{\frac{1}{k}}\cdot \frac{\log\frac1\delta}{\eps^2})$
	label size, space $O(n\cdot N^{\frac1k}\cdot \frac{\log\frac1\delta}{\eps^2})$,
	insertion/deletion time $O(N^{\frac1k}\cdot\log\log N)$, and
	query time $O(\log\log N+\frac{\log\frac1\delta}{\eps^2})$.\\
	The bounds on label size, space, and insertion/deletion time are only in expectation.
\end{corollary}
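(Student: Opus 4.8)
The plan is to reuse, essentially verbatim, the construction in the proof of \Cref{thm:NNSfromClan}, and to swap out only the ultrametric subroutine: instead of equipping the sampled ultrametric $U$ with an \emph{exact} labeled NNS via \Cref{lem:UltraDeterministic}, equip it with the $(1+\eps)$-approximate, failure-probability-$\delta$ labeled NNS of \Cref{lem:UltraJL}. Concretely: first sample a map $f$ into an ultrametric $U$ using \Cref{thm:clan}; then instantiate \Cref{lem:UltraJL} on $U$ with the given $\eps,\delta$, obtaining a label $\ell_U(x')$ of size $O(\frac{\log\frac1\delta}{\eps^2})$ for every $x'\in U$. The labeled-NNS label of a point $x\in X$ then consists of its clan $f(x)$ together with the $\ell_U$-labels of all points in $f(x)$, plus the identity of the chief $\chi(x)\in f(x)$. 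By property~3 of \Cref{thm:clan}, $\mathbb{E}_{(f,U)\sim\mathcal{D}}[|f(x)|]=O(N^{\frac1k})$, so the expected label size is $O(N^{\frac1k}\cdot\frac{\log\frac1\delta}{\eps^2})$, and likewise the expected total space is $O(n\cdot N^{\frac1k}\cdot\frac{\log\frac1\delta}{\eps^2})$.

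The data structure and query procedure are identical to those of \Cref{thm:NNSfromClan}. Given a subset $P\subseteq X$, store the \Cref{lem:UltraJL} structure for the point set $\bigcup_{x\in P}f(x)\subseteq U$; on a query $q$, read $\chi(q)$ from $q$'s label, run the \Cref{lem:UltraJL} query to find a $(1+\eps)$-approximate nearest neighbor $y'$ of $\chi(q)$ among $\bigcup_{x\in P}f(x)$, and return the unique $y\in P$ with $y'\in f(y)$. Inserting or deleting a point $x$ reduces to inserting/deleting the $|f(x)|$ points of $f(x)$ into the underlying Y-fast trie, costing $O(N^{\frac1k}\cdot\log\log N)$ in expectation; a query is a single \Cref{lem:UltraJL} query, hence $O(\log\log N+\frac{\log\frac1\delta}{\eps^2})$ in the worst case (independent of the clan randomness), matching the claimed bounds.

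For correctness, condition on the event of probability at least $1-\delta$ (guaranteed by \Cref{lem:UltraJL}) that the query returns a genuine $(1+\eps)$-approximate nearest neighbor $y'$ of $\chi(q)$ in $U$. Let $z\in P$ be a true nearest neighbor of $q$ in $X$. Property~2 of \Cref{thm:clan} gives a $z''\in f(z)$ with $d_U(\chi(q),z'')\le 16k\cdot d_X(q,z)$, so the true nearest neighbor of $\chi(q)$ in $\bigcup_{x\in P}f(x)$ lies within distance $16k\cdot d_X(q,z)$, whence $d_U(\chi(q),y')\le 16(1+\eps)k\cdot d_X(q,z)$. Finally, since $\chi(q)\in f(q)$ and $y'\in f(y)$, property~1 of \Cref{thm:clan} yields $d_X(q,y)\le d_U(\chi(q),y')\le 16(1+\eps)k\cdot\min_{x\in P}d_X(q,x)$; that is, $y$ is a $16(1+\eps)k$-approximate nearest neighbor. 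This conclusion fails only on the complement of the conditioning event, which has probability at most $\delta$.

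The step that needs the most care is the bookkeeping of the two independent randomness sources. The clan embedding governs \emph{only} the expected label size, space, and update time — it is never a source of error, since properties~1 and~2 of \Cref{thm:clan} hold for every $(f,U)\in\supp(\mathcal{D})$ — whereas the JL transform inside \Cref{lem:UltraJL} is the \emph{sole} source of the $\delta$ failure probability. One must therefore state the size bounds unconditionally on the JL randomness (they are expectations over $\mathcal{D}$ only) and carry the $\delta$ only in the correctness statement, and observe that a single query touches just the one JL instance, so no union bound over several ultrametrics is needed — this is precisely why \Cref{cor:NNSfromClanFailureProb}, unlike the Ramsey-tree variant \Cref{cor:NNSfromRamseyFailureProb}, carries no ``dependencies'' caveat. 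A routine final point is that the composition delivers stretch $16(1+\eps)k$ outright, with no rescaling of $\eps$ required.
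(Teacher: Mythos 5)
Your proposal is correct and follows exactly the paper's route: the paper obtains \Cref{cor:NNSfromClanFailureProb} by taking the construction of \Cref{thm:NNSfromClan} verbatim and replacing \Cref{lem:UltraDeterministic} with \Cref{lem:UltraJL}, which is precisely what you do. Your fleshed-out accounting of the stretch $16(1+\eps)k$, the separation of the clan-embedding randomness (size/update bounds in expectation) from the JL randomness (the $\delta$ failure probability), and the observation that a single query touches only one ultrametric are all consistent with the paper's (terser) argument.
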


\subsection{Lower Bound for General Metrics}\label{sec:NNSGeneralLB}

In this section we show that even though labeled NNS suppose to return only an approximate nearest neighbor, and not pairwise distances, the classic information theoretic lower bound due to the girth still holds. In particular, it follows that the label size-stretch tradeoff in \Cref{cor:GeneralNNSfromLSO} is tight up to the constant $4$, which could be improved to $2$ (see \Cref{rem:LBNNStight}).

The girth of an unweighted graph $G$ is the length of the shortest cycle in $G$. The Erd\H{o}s girth conjecture \cite{Erdos64} states that for any $g$ and $N$, there exist an $N$-vertex graph with girth $g$ and $\Omega(N^{1+\frac{2}{g-2}})$ edges. The conjecture is known to holds for $g=4,6,8,12$ (see \cite{Benson66,Wenger91}). However, the best known provable lower bound for general $k$ is due to Lazebnik \etal \cite{LUW95}, who constructed graph with girth $g$ and $\Omega(n^{1+\frac43\cdot\frac{1}{g-2}})$ edges.
The following lower bound is conditioned on the Erd\H{o}s girth conjecture, however it is straightforward that using \cite{LUW95}, one can obtain the unconditional lower bound of $\Omega(n^{\frac{2}{3}\cdot\frac1k})$, on the label size.
\begin{theorem}\label{thm:NNS-LB}
	Fix a stretch parameter $t<2k+1$. Assuming the Erd\H{o}s girth conjecture, for every labeling algorithm for the labeled $t$-NNS problem and $N$, there is an $N$-vertex unweighted graph $H$, for which the labeling algorithm assigns labels of total length $\Omega(n^{1+\frac1k})$ (in particular maximal label size is $\Omega(n^{\frac1k})$, the size count here is in bits and not words).
\end{theorem}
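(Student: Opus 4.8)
The plan is to adapt the classical girth-based incompressibility argument (used for approximate distance labelings and spanners) to the NNS setting. Fix the stretch $t<2k+1$. Assuming the Erd\H{o}s girth conjecture \cite{Erdos64}, let $G_0$ be an $n$-vertex graph of girth $g=2k+2$ with $m=\Omega(n^{1+\frac{2}{g-2}})=\Omega(n^{1+\frac1k})$ edges, and consider the $2^m$ subgraphs $G'=G_0\setminus F$, $F\subseteq E(G_0)$. For each $F$ I would build a connected unweighted graph $H_F$ on a common vertex set (of size $N=\tilde O(n)$, see below), run the given labeling algorithm on the metric space $(V(H_F),d_{H_F})$, and argue that the full tuple of produced labels already determines $F$. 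Since $F\mapsto(\text{labels of }H_F)$ would then be injective on a domain of size $2^m$, some $H_F$ must receive labels of total length $\ge\log_2 2^m=m=\Omega(n^{1+\frac1k})$ bits; dividing by $N=\tilde O(n)$ yields some single label of size $\tilde\Omega(n^{\frac1k})$. Given the target $N$ one sets $n$ so that $\tilde\Theta(n)=N$.

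The key structural fact, exactly as in the spanner lower bound, is the dichotomy $d_{H_F}(u,v)=1$ when $(u,v)\in E(G_0)\setminus F$, versus $d_{H_F}(u,v)\ge g-1=2k+1$ when $(u,v)\in F$ --- the latter because any $u$--$v$ path in $G_0\setminus F$ together with the missing edge closes a cycle in $G_0$ of length $\le(\text{path length})+1$, forcing path length $\ge g-1$. Since $t<2k+1$, a $t$-approximate nearest neighbour can never mix the two regimes, and this lets me make every informative query have a \emph{forced} (hence label-determined) answer. Concretely: add a root $r$ joined to each original vertex by a pendant path of length $k+1$ (to guarantee connectivity while keeping the girth $2k+2$, since any cycle through $r$ has length $\ge 2(k+1)+1$), and, for each ordered pair $(v,w)\in E(G_0)$, a dedicated vertex $\eta_{v,w}$ attached as a pendant of $v$ when $(v,w)\in F$, and as the far endpoint of a length-$(2k+1)$ pendant path from $v$ when $(v,w)\notin F$; these are tree-like attachments, so they do not disturb the girth, and $d_{H_F}(v,w)$ keeps the dichotomy above. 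Then the single query ``run the NNS on $P=\{w,\eta_{v,w}\}$ with query point $v$'' returns $w$ iff $(v,w)\notin F$: in that case $w$ is at distance $1$ and $\eta_{v,w}$ at distance $2k+1>t$, so the answer must be $w$; in the other case $\eta_{v,w}$ is at distance $1$ and $w$ at distance $\ge 2k+1>t$, so the answer must be $\eta_{v,w}$. The answer is a deterministic function of the labels of $\{w,\eta_{v,w},v\}$, so ranging over all of $E(G_0)$ recovers $F$.

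The main obstacle is precisely this ``forced answer'' engineering: because the NNS may return any $t$-approximate neighbour, information is extracted only from queries possessing a point strictly closer than all others by more than a factor $t$; the naive idea of reading a distance off one reference vertex only gives $t<\sqrt{2k+1}$, and the $1$-versus-$2k+1$ gadget above is what promotes this to the full range $t<2k+1$. A secondary issue is the vertex budget needed for the per-\emph{label} bound: a separate gadget per edge blows $N$ up to $\Theta(k\,n^{1+1/k})$, which only yields the total-length statement. To keep $N=\tilde O(n)$ (for $k=O(\log n)$) I would instead attach to each vertex $v$ only an $O(\log n)$-size gadget encoding $\deg_{G_0\setminus F}(v)$ in binary (each bit read by a forced two-point query as above), recover the order in which $v$'s $G_0$-neighbours leave $N_{G_0}(v)$ under repeated NNS queries on shrinking subsets of $N_{G_0}(v)$ with query point $v$ (each such answer is forced to be a genuine $(G_0\setminus F)$-neighbour of $v$ as long as one remains in the set, since every non-neighbour sits at distance $\ge 2k+1>t$), and then truncate that ordered list at position $\deg_{G_0\setminus F}(v)$ to read off $N_{G_0\setminus F}(v)$; this reconstructs $F$ while keeping $N$ near-linear, so that total length $\Omega(n^{1+1/k})$ forces a single label of size $\tilde\Omega(n^{1/k})$. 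Finally, substituting the unconditional girth construction of Lazebnik et al.~\cite{LUW95} for the Erd\H{o}s conjecture gives the stated unconditional $\Omega(n^{2/(3k)})$ bound in the same way. I would structure the write-up as: (i) the subgraph family and the injectivity-implies-long-labels reduction; (ii) the $H_F$ construction and the girth check; (iii) the forced-query lemma and recovery of $F$; (iv) the vertex-count bookkeeping and the unconditional variant.
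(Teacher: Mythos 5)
Your core idea---a girth-$(2k+2)$ graph, a $2^{\Omega(n^{1+1/k})}$-size family of subgraphs, the $1$ versus $\ge 2k+1$ distance dichotomy, and forced two-point NNS queries that make the labels determine the subgraph---is exactly the engine behind the paper's proof of \Cref{thm:NNS-LB}, and your counting step and the unconditional variant via \cite{LUW95} match the paper. Where you diverge is in how the forced queries are engineered, and this is where your proposal pays a real cost. Your primary construction (one gadget vertex $\eta_{v,w}$, possibly at the end of a length-$(2k+1)$ pendant path, per edge of $G_0$) inflates the vertex set to $\Theta(k\,n^{1+1/k})$, so the $\Omega(n^{1+1/k})$-bit total is averaged over far more than $n$ vertices and yields no meaningful per-label bound in terms of the actual vertex count---as you yourself note, it does not even give the theorem as stated, since the theorem's bounds are meant relative to the $N$ vertices of the constructed graph. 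Your secondary fix (per-vertex $O(\log n)$-bit degree gadgets plus recovering $N_{G_0\setminus F}(v)$ by repeated forced queries on shrinking subsets of $N_{G_0}(v)$) does appear to work, but it needs extra bookkeeping (identical vertex sets across all $F$, padding the two gadget shapes to equal size, checking that pendant trees and the root paths of length $k+1$ create no shortcuts below $2k+1$), loses polylogarithmic factors in the max-label bound, and is only tuned for $k=O(\log n)$ (the only interesting regime, but still a restriction). The paper avoids all of this with one clean device: it takes a maximal collection of $s=\Omega(N^{1+1/k})$ edge-disjoint length-$2$ paths $\{x,y\}\circ\{y,z\}$ in the girth graph itself, forms the family by keeping exactly one edge of each such path (no gadget vertices, no connectivity repairs, vertex set stays at exactly $N$), and for each path uses the single forced query with point set $P=\{x,z\}$ and query point $y$: the kept edge puts one of $x,z$ at distance $1$ and girth forces the other to distance $\ge 2k+1>t$, so the answer reveals the choice. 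This gives the total bound $\Omega(N^{1+1/k})$ and the max-label bound $\Omega(N^{1/k})$ directly and without polylog losses; you may want to adopt that restriction-to-adjacent-edge-pairs trick, since it is precisely what lets one dispense with your gadget and degree-recovery machinery.
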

\begin{proof}
	Fix $g=2k+2$, and let $G=(V,E)$ be the unweighted graph from the Erd\H{o}s girth conjecture with  girth $g$ and $\Omega(N^{1+\frac{2}{g-2}})=\Omega(N^{1+\frac{1}{k}})$
	 edges.
	 We construct a maximal disjoint collection of length $2$ paths in $G$: $P_1,P_2,\dots,P_s$. Such collection can be constructed greedily. Note that $G\setminus\cup P_i$ contains only isolated edges and vertices. It follows that the collection contains $s=\Omega(N^{1+\frac{1}{k}})$ paths.
	 Next, we construct a subgraph $G'$ of $G$, by taking each path $P_i=\{x,y\}\circ\{y,z\}$, and keeping only one of the edges $\{x,y\}$ or $\{y,z\}$ (we keep all the edges out of $\cup_i P_i$).
	 Note that if $\{x,y\}\in G'$, then $d_{G'}(y,x)=1$ while $d_{G'}(y,z)\ge g-1=2k+1$ (as $d_{G'}(x,y)< g-1$ will imply that there is a cycle with strictly less than $g$ edges).
	 Let $\mathcal{G}$ be all the subgraphs of $G$ that can be constructed with such a process. Note that $|\mathcal{G}|=2^s=2^{\Omega(N^{1+\frac{1}{k}})}$.

	Consider two different graphs $G_1,G_2\in \mathcal{G}$.
	Let $\{\ell_v^1\}_{v\in V}$, $\{\ell_v^2\}_{v\in V}$ be $t$-NNS labels produced by the algorithm for $G_1$ and $G_2$, respectively.
	We argue that $\{\ell_v^1\}_{v\in V}\ne \{\ell_v^2\}_{v\in V}$.
	As $G_1\ne G_2$, there is some triangle $P_i=\{x,y\}\circ\{y,z\}$ such that $\{x,y\}\in G_1$, while $\{y,z\}\in G_2$.
	Consider a $t$-NNS data structure constructed for the set $P=\{x,z\}$, and  queried on $y$. 	
	Then in $G_1$, as $d_{G_1}(y,z)\ge2k+1>t=t\cdot d_{G_1}(y,x)$ the data structure must return $x$ as the approximate nearest neighbor. On the other hand, in $G_2$, 
	$d_{G_2}(y,x)\ge2k+1>t=t\cdot d_{G_2}(y,z)$, thus the the data structure must return $z$ as the approximate nearest neighbor.
	As the two data structures return different answers, they must be different. The data structures answers based only on the labels. We conclude that $\{\ell_v^1\}_{v\in V}\ne \{\ell_v^2\}_{v\in V}$. 
	It follows that  every graph in $\mathcal{G}$ has a unique (w.r.t. $\mathcal{G}$) labelings. As $\left|\mathcal{G}\right|=2^{\Omega(n^{1+\frac1k})}$, there must be some graph $G'$, who's sum of label lengths is $\Omega(n^{1+\frac1k})$.
	In particular, the maximal label size of a vertex in $G'$ is $\Omega(n^{\frac1k})$.	
\end{proof}

\begin{remark}[Tightness of \Cref{thm:NNS-LB}]\label{rem:LBNNStight}
	Note that \Cref{thm:NNS-LB} is only an information theoretic bound on the size of NNS label, and do not take into account the query and update times. Ignoring this parameters, one can obtain a matching upper bound (up to second order terms) by using Matou{\v{s}}ek embedding \cite{Mat96} into $\ell_\infty$. Matou{\v{s}}ek showed that for every $N$-point metric space there is an embedding $f$ of $X$  into $\tilde(n^{\frac1k})$-dimensional $\ell_\infty$ such that for every $x,y\in X$, $d_X(x,y)\le\|f(x)-f(y)\|_\infty\le(2k-1)\cdot d_X(x,y)$.
	Indeed take the NNS-label of $x$ to be the vector $f(x)$, and on query $q$, return the point $x\in P$ minimizing $\|f(x)-f(q)\|_\infty$.	
\end{remark}

\section{Metric Spanners}
\subsection{Path Reporting Low Hop Spanners}
The results of this section are summarized in \Cref{tab:LowHopSpanners}.
We begin \Cref{subsub:PathReportingFromLSO} with proving three meta Theorems: \ref{thm:MetaLSOPathReportingSpanner}, \ref{thm:MetaTriangleLSOPathReportingSpanner}, and \ref{thm:MetaRootedLSOPathReportingSpanner}. Each type of LSO implies a certain quality of a path reporting $2$-hop spanner.
As a result, we conclude a host of such spanners for different metric spaces (\Cref{cor:LowHopSpannerFromLSO}).
However, the resulting path reporting time depends linearly on the number of orderings and could be somewhat large.
We then tune manually the data structures to obtain significant improvements to the path reporting time. 
In \Cref{subsub:PathReportingEuclidean} we improve the query time of the Euclidean case to $O_d(1)$ by observing that we can quickly compute a satisfying LSO instead of checking them one by one, see \Cref{cor:LowHopSpannerEuclidean}.
In \Cref{subsub:PathReportingMinorFree} we improve the query time of minor free graphs to $O(\eps^{-1}\cdot\log n)$ by identifying a small number of LSO's which will contain the right answer, see \Cref{cor:LowHopMinor}.
Similarly, in \Cref{subsub:PathReportingTreewidth} we improve the query time of treewidth $k$ graphs to $O(k)$, see \Cref{cor:LowHopTreewidth}.
Most striking is the case of planar graphs (see \Cref{subsub:PathReportingPlanar}), where we improve the query time all the way to $O(\eps^{-1})$. This is done by using the structure of shortest path separators to restrict our search for the right answer even further, see \Cref{cor:LowHopPlanar}.
Finally, in \Cref{subsub:PathReportingGeneral} we deal with general metrics, 
where we provide solutions using different techniques than LSO.
First we observe that one can use Thorup Zwick \cite{TZ05} distance oracle  to construct path reporting $2$-hop spanner with optimal stretch, but somewhat large query time (compered to \cite{KLMS22}). We then use sparse covers to construct spanner with slightly worse stretch and exponentially better query time see \Cref{thm:TZ_GeneralPathReportingSpanner,thm:GeneralSmallQueryPathReportingSpanner}.

\subsubsection{Path reporting $2$-hop spanners from LSO's}\label{subsub:PathReportingFromLSO}
In this subsection we prove three meta theorems, constructing path reporting $2$-hop spanners given LSO.
\begin{theorem}[Meta theorem: LSO implies $2$ hop path reporting spanner]\label{thm:MetaLSOPathReportingSpanner}
	Suppose that the metric space $(X,d_X)$ admits a $(\tau,\rho)$-LSO, for $\rho\in(0,\frac12)$. Then $X$ admits a path reporting $2$-hop $1+2\rho$-spanner with $O(n\tau\log n)$ edges and $O(\tau)$ query time.
\end{theorem}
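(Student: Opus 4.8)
The plan is to mimic the reduction to the $1$-dimensional line that drives every LSO application, combining it with the $2$-hop $1$-spanner for paths from \Cref{thm:2hopPath}. First I would fix a $(\tau,\rho)$-LSO $\Sigma$ for $(X,d_X)$. For each ordering $\sigma\in\Sigma$, form the path graph $P_\sigma$ whose vertex set is $X$ and whose vertex order along the path is exactly $\sigma$, and apply \Cref{thm:2hopPath} to obtain a $2$-hop $1$-spanner $H_\sigma$ of $P_\sigma$ with $O(n\log n)$ edges, together with its constant-query-time path-reporting data structure (recording, for each vertex $v$, the set $E_{\sigma,v}$ of edges it is responsible for). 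The spanner is $H=\bigcup_{\sigma\in\Sigma}H_\sigma$, but with a crucial reweighting: each edge $\{v_i,v_j\}$ of $H_\sigma$ (which is unweighted in $P_\sigma$) is assigned weight $d_X(v_i,v_j)$ in $H$. The edge count is $|\Sigma|\cdot O(n\log n)=O(n\tau\log n)$ as claimed. For the data structure we store, per ordering, the position of each point in $\sigma$ and the responsibility sets $E_{\sigma,v}$ with the weights of their edges.

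Next I would establish the stretch. Take any $x,y\in X$; by the LSO guarantee there is $\sigma\in\Sigma$ with, say, $x\preceq_\sigma y$, and the points strictly between $x$ and $y$ in $\sigma$ split into consecutive intervals $I_x\subseteq B_X(x,\rho\, d_X(x,y))$ and $I_y\subseteq B_X(y,\rho\, d_X(x,y))$. Using \Cref{thm:2hopPath} on $P_\sigma$, in $O(1)$ time we find $z$ with $x\preceq_\sigma z\preceq_\sigma y$ such that $\{x,z\}\in E_{\sigma,x}$ and $\{z,y\}\in E_{\sigma,y}$; these two edges give a $2$-hop path from $x$ to $y$ in $H_\sigma\subseteq H$. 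Its weight is $d_X(x,z)+d_X(z,y)$. Since $z$ lies between $x$ and $y$ in $\sigma$, either $z\in\{x,y\}$ (in which case the path has weight exactly $d_X(x,y)$) or $z\in I_x\cup I_y$. In the former sub-case of $z\in I_x$ we get $d_X(x,z)\le \rho\, d_X(x,y)$ and $d_X(z,y)\le d_X(z,x)+d_X(x,y)\le(1+\rho)d_X(x,y)$ by the triangle inequality, so the total is at most $(1+2\rho)d_X(x,y)$; the sub-case $z\in I_y$ is symmetric. Hence every pair is joined by a $2$-hop path of weight at most $(1+2\rho)d_X(x,y)$, and since all edge weights equal the true distances of their endpoints, $H$ is a genuine $(1+2\rho)$-spanner (the lower bound $d_H\ge d_X$ is automatic for a metric spanner).

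Finally, for the query: given $\{x,y\}$, for each of the $\tau$ orderings retrieve the stored positions of $x$ and $y$, run the $O(1)$-time procedure of \Cref{thm:2hopPath} to get the candidate $2$-hop path through some $z_\sigma$ and its weight $d_X(x,z_\sigma)+d_X(z_\sigma,y)$ (read off from the stored edge weights in $E_{\sigma,x}$ and $E_{\sigma,y}$), and return the lightest candidate over all $\sigma$; the analysis above guarantees at least one candidate has weight $\le(1+2\rho)d_X(x,y)$. This takes $O(\tau)$ time. I do not anticipate a serious obstacle here — the only mild point to be careful about is the reweighting of the path-spanner edges and the triangle-inequality bookkeeping in the stretch argument (the $1+2\rho$ rather than $1+\rho$ loss comes precisely from charging the detour to the "far" endpoint), and noting that $H_\sigma$ reporting a monotone-in-$\sigma$ $2$-hop path is exactly what \Cref{thm:2hopPath} provides.
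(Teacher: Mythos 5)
Your proposal is correct and follows essentially the same route as the paper's proof: build the $2$-hop $1$-spanner of \Cref{thm:2hopPath} on the path induced by each ordering, reweight edges by $d_X$, take the union, and for a query check all $\tau$ orderings and return the lightest $2$-hop candidate, with the $(1+2\rho)$ stretch coming from the same $I_x/I_y$ case split and triangle inequality. No gaps to report.
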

\begin{proof}
	Let $\Sigma$ be a $(\tau,\rho)$-LSO assumed by the theorem.
	For every ordering $\sigma\in \Sigma$, we form an unweighted path graph $P_{\sigma}$ with vertex set $X$ and the order of vertices along the path is $\sigma$. We construct a $2$-hop $1$ spanner $H_\sigma$ for $P^{\sigma}_{n}$ using \Cref{thm:2hopPath}. Note that for every edge $\{x,y\}\in H_{\sigma}$, $w(\{x,y\})=d_X(x,y)$.
	For every point $x\in \sigma$, we will store the set $E_{x,\sigma}$ as defined in \Cref{thm:2hopPath}.
	Our spanner will be $H=\cup_{\sigma\in\Sigma}H_\sigma$.
	Clearly the number of edges is $O(n\tau\log n)$. To argue stretch and $2$-hop, consider a pair of points $x,y$, and let $\sigma \in \Sigma$ be the ordering that satisfies the ordering property for $x$ and $y$. 
	There must be $z\in X$ such that $x\preceq_{\sigma}z\preceq_{\sigma}y$ and $\{x,z\}\in E_{x,\sigma}\subseteq H_\sigma$,$\{z,y\}\in E_{y,\sigma}\subseteq H_\sigma$. 
	Suppose w.l.o.g. that $d_X(x,z)\le\rho\cdot d_X(x,y)$. By triangle inequality, $d_X(y,z)\le(1+\rho)\cdot d_X(x,y)$.
	We conclude that 
	\[
	d_{H}(x,y)\le d_X(x,z) + d_X(z,y) = \rho\cdot d_X(x,y)+(1+\rho)\cdot d_X(x,y)=(1+2\rho)\cdot d_{X}(x,y)~.
	\]
	Finally, for the query time, in each of the orderings we have a unique $2$-hop path between $x$ and $y$, we can find each such path in $O(1)$ time and compute its weight (as guaranteed by \Cref{thm:2hopPath}). We return the path of minimum weight found in $O(\tau)$ -time.
\end{proof}

Using the exact same proof, we can also conclude the following meta theorem:
\begin{theorem}[Meta theorem: Triangle LSO implies $2$ hop path reporting spanner]\label{thm:MetaTriangleLSOPathReportingSpanner}
	Suppose that the metric space $(X,d_X)$ admits a $(\tau,\rho)$-triangle LSO. Then $X$ admits a path reporting $2$-hop $2\rho$-spanner with  $O(n\tau\log n)$ edges and $O(\tau)$ query time.
\end{theorem}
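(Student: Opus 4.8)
The plan is to mirror the proof of \Cref{thm:MetaLSOPathReportingSpanner} almost verbatim, since the only difference between a $(\tau,\rho)$-LSO and a $(\tau,\rho)$-triangle LSO is the exact distance guarantee that holds for points lying between $x$ and $y$ in the satisfying ordering. So first I would, for each ordering $\sigma\in\Sigma$, form the unweighted path graph $P_\sigma$ on vertex set $X$ with vertices ordered according to $\sigma$, and build a $2$-hop $1$-spanner $H_\sigma$ for it using \Cref{thm:2hopPath}; each vertex $x$ stores its responsibility set $E_{x,\sigma}$ as in that theorem. The global spanner is $H=\bigcup_{\sigma\in\Sigma}H_\sigma$, and since each $H_\sigma$ has $O(n\log n)$ edges and there are $\tau$ orderings, the edge count is $O(n\tau\log n)$. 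Every edge $\{x,y\}$ added has weight $w(\{x,y\})=d_X(x,y)$, so $H$ is a legitimate metric spanner.

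Next I would verify the stretch and hop bound. Given a pair $x,y\in X$, let $\sigma\in\Sigma$ be the ordering satisfying the triangle-LSO property, so w.l.o.g. $x\prec_\sigma y$ and for every $a,b$ with $x\preceq_\sigma a\preceq_\sigma b\preceq_\sigma y$ we have $d_X(a,b)\le\rho\cdot d_X(x,y)$. By \Cref{thm:2hopPath} applied to $P_\sigma$, there is a vertex $z$ with $x\preceq_\sigma z\preceq_\sigma y$ such that $\{x,z\}\in E_{x,\sigma}\subseteq H_\sigma$ and $\{z,y\}\in E_{y,\sigma}\subseteq H_\sigma$. Since $x\preceq_\sigma x\preceq_\sigma z\preceq_\sigma y$ and $x\preceq_\sigma z\preceq_\sigma y\preceq_\sigma y$, the triangle-LSO guarantee directly gives $d_X(x,z)\le\rho\cdot d_X(x,y)$ and $d_X(z,y)\le\rho\cdot d_X(x,y)$ — this is where the proof is actually cleaner than in the classic LSO case, because we do not need a triangle inequality step to bound $d_X(z,y)$; the ordering guarantee handles both legs symmetrically. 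Hence
\[
d_H(x,y)\le d_X(x,z)+d_X(z,y)\le\rho\cdot d_X(x,y)+\rho\cdot d_X(x,y)=2\rho\cdot d_X(x,y),
\]
and the path $(x,z,y)$ has exactly $2$ hops.

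Finally, for the query time, given $x,y$ we examine each ordering $\sigma\in\Sigma$: by \Cref{thm:2hopPath} the midpoint $z$ of the unique $2$-hop path in $H_\sigma$ between $x$ and $y$ can be found in $O(1)$ time from the binary representations of their positions, and its two edge weights are stored in $E_{x,\sigma}$ and $E_{y,\sigma}$. We compute the total weight of each such path and return the minimum over all $\tau$ orderings, for a total of $O(\tau)$ time; correctness of the returned weight follows from the stretch bound above applied to the satisfying $\sigma$. I do not anticipate a genuine obstacle here — the argument is a routine adaptation — but the one point deserving slight care is confirming that the instantiations $a=x$ and $b=z$ (resp. $a=z$, $b=y$) are legitimate uses of the triangle-LSO property, i.e. that the endpoints themselves count as "points between $x$ and $y$"; this is immediate from the $\preceq$ (non-strict) inequalities in \Cref{def:TriangleLSO}.
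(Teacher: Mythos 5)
Your proposal is correct and follows essentially the same route as the paper: reuse the construction from \Cref{thm:MetaLSOPathReportingSpanner} (a $2$-hop $1$-spanner via \Cref{thm:2hopPath} on the path induced by each ordering, union over all $\tau$ orderings), and replace only the stretch analysis, bounding both legs $d_X(x,z)$ and $d_X(z,y)$ by $\rho\cdot d_X(x,y)$ directly from the triangle-LSO guarantee. Your added remark that the endpoints themselves qualify as intermediate points under the non-strict $\preceq$ in \Cref{def:TriangleLSO} is a fine, if minor, clarification.
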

The only small difference in the proof is in the stretch analysis. Given $z\in X$ such that $x\preceq_{\sigma}z\preceq_{\sigma}y$ and $\{x,z\},\{z,y\}\in H_\sigma$, we have that 
 $d_X(x,z),d_X(y,z)\le\rho\cdot d_X(x,y)$, and hence by the triangle inequality $d_{H}(x,y)\le d_X(x,z) + d_X(z,y) = 2\rho\cdot d_X(x,y)$.

\begin{theorem}[Meta theorem: rooted LSO implies $2$ hop path reporting spanner]\label{thm:MetaRootedLSOPathReportingSpanner}
	Suppose that the metric space $(X,d_X)$ admits a $(\tau,\rho)$-rooted LSO. Then $X$ admits a path reporting $2$-hop $\rho$-spanner with $O(n\tau)$ edges and $O(\tau)$ query time.
\end{theorem}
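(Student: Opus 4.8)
The plan is to take the spanner to be a union of stars, one per ordering. Concretely, let $\Sigma$ be the promised $(\tau,\rho)$-rooted LSO; for each $\sigma\in\Sigma$ with root $x_\sigma$ (the first point of $\sigma$), add to $H$ the edge $\{x_\sigma,y\}$ of weight $d_X(x_\sigma,y)$ for every other point $y$ of $\sigma$. Since each point of $X$ belongs to at most $\tau$ orderings, it is incident to at most $\tau$ of these star edges, so $|E(H)|\le\sum_{x\in X}|\{\sigma\in\Sigma: x\in\sigma\}|\le n\tau$ as required (orderings consisting of a single point contribute nothing). This is exactly the ``single edge per ordering to the leftmost point'' idea mentioned in \Cref{subsec:techIdeas}.

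For the path-reporting structure I would store, with each point $x$, a dictionary keyed by ordering identifier that holds, for every $\sigma$ containing $x$, the pair $(x_\sigma,\,d_X(x,x_\sigma))$; this uses $O(\tau)$ words per point and $O(n\tau)$ in total. The stretch and hop bounds are then immediate from the defining property of a rooted LSO: for any $u,v$ there is an ordering $\sigma$ containing both with $d_X(u,x_\sigma)+d_X(x_\sigma,v)\le\rho\cdot d_X(u,v)$, and concatenating the two $H$-edges $\{u,x_\sigma\}$ and $\{x_\sigma,v\}$ gives a path from $u$ to $v$ using at most two hops (it collapses to one edge when $u$ or $v$ equals $x_\sigma$) whose weight equals $d_X(u,x_\sigma)+d_X(x_\sigma,v)\le\rho\cdot d_X(u,v)$. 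Hence $H$ is a $2$-hop $\rho$-spanner.

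Finally, to answer a query $(u,v)$ I would scan the at most $\tau$ orderings recorded in $u$'s dictionary; for each one I test in $O(1)$ whether it also appears in $v$'s dictionary, and if so read off $x_\sigma$, $d_X(u,x_\sigma)$ and $d_X(x_\sigma,v)$ and form the candidate $2$-hop path $(u,x_\sigma,v)$ of weight $d_X(u,x_\sigma)+d_X(x_\sigma,v)$. Returning the lightest candidate takes $O(\tau)$ time, and the argument above guarantees that at least one candidate exists and that its weight is at most $\rho\cdot d_X(u,v)$. There is no genuinely hard step here; the only thing to be careful about is organizing the per-point lists as dictionaries (rather than plain lists) so that ``is $\sigma$ common to $u$ and $v$, and what are the relevant root and the two distances'' is answered in constant time, which keeps the query at $O(\tau)$ rather than $O(\tau^2)$. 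Everything else is just the definition of rooted LSO together with the fact that edge weights in $H$ record the true distances.
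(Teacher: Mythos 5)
Your proposal is correct and follows essentially the same route as the paper: the same union-of-stars spanner (one edge from every point of an ordering to its root $x_\sigma$), the same $O(n\tau)$ edge count, the same stretch argument via the rooted-LSO guarantee $d_X(u,x_\sigma)+d_X(x_\sigma,v)\le\rho\cdot d_X(u,v)$, and the same $O(\tau)$ exhaustive scan over the orderings at query time. Your extra remark about keying the per-point lists as dictionaries so that the common-ordering test is $O(1)$ is just a slightly more explicit implementation of the paper's query step.
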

\begin{proof}
	Let $\Sigma$ be a $(\tau,\rho)$-rooted-LSO as assumed by the theorem.
	For every ordering $\sigma\in \Sigma$, we simply store the first point in the order $x_\sigma$, and add an edge from each point in $y$ in $\sigma$ to $x_\sigma$. This finished the construction of the spanner $H$. 
	Clearly the number of edges is $O(n\tau)$. To argue stretch and $2$-hop, consider a pair of points $u,v$, and let $\sigma \in \Sigma$ be the ordering where both $u,v$ participating, and $d_X(u,x_\sigma)+d_X(x_\sigma,v)\le\rho\cdot d_X(u,v)$. As $\{u,x_\sigma\},\{x_\sigma,v\}$ both belong to the spanner, we found a $2$-hop path of stretch $\rho$.
	To answer a path query, we can exhaustively search all the orderings $v$ belongs to in $O(\tau)$ total time.
\end{proof}

\begin{remark}
	The path reporting data structure in the constructions of Theorems \ref{thm:MetaLSOPathReportingSpanner}, \ref{thm:MetaTriangleLSOPathReportingSpanner} and \ref{thm:MetaRootedLSOPathReportingSpanner} can all be made distributed. Specifically, where $S$ is the number of edges in the spanner, we can give each point $x\in X$ a label of size $O(\frac{S}{n})$, such that given the $2$ labels of points $x,y$, in the same query time we can find the promised $2$-hop path in the spanner.
\end{remark}

Using the various existing LSO's (summarized in \Cref{tab:LSO}), and the meta  Theorems \ref{thm:MetaLSOPathReportingSpanner}, \ref{thm:MetaTriangleLSOPathReportingSpanner} and \ref{thm:MetaRootedLSOPathReportingSpanner}, we conclude:
\begin{corollary}\label{cor:LowHopSpannerFromLSO}
	Consider an $n$ point metric space $(X,d_X)$. Then:
	\begin{enumerate}
		\item For every integer $k$ and $\eps\in(0,\frac12)$, $X$ admits a path reporting $2$-hop $4k+\eps$-spanner with $\tilde{O}(n^{1+\frac1k}\cdot\eps^{-1})$ edges and $\tilde{O}(n^{\frac1k}\cdot\eps^{-1})$ query time.
		\item If $X$ has doubling dimension $d$, then for $\eps\in(0,\frac12)$ $X$ admits a path reporting $2$-hop $(1+\eps)$-spanner with $\eps^{-O(d)}\cdot n\log n$ edges and $\eps^{-O(d)}$ query time.
		\item If $X$ has doubling dimension $d$, then for $t=\Omega(1)$, $X$ admits a path reporting $2$-hop $t$-spanner with $2^{-O(\nicefrac dt)}\cdot \tilde{O}(n)$ edges and $2^{-O(\nicefrac dt)}\cdot d\cdot \log^2 t$ query time.
		\item If $X$ is taken from $d$ dimensional Euclidean space, then for $\eps\in(0,\frac12)$, $X$ admits a path reporting $2$-hop $(1+\eps)$-spanner with $O_d(\eps^{-d})\log\frac1\eps\cdot n\log n$ edges and $O_d(\eps^{-d})\cdot\log\frac1\eps$ query time.
		\item If $X$ are taken from $d$ dimensional Euclidean space, then for $t\in [8,\sqrt{d}]$ and $\eps\in(0,\frac12)$,  $X$ admits a path reporting $2$-hop $(1+\eps)t$-spanner with $\tilde{O}(\frac{d^{1.5}}{\eps\cdot t})\cdot e^{\frac{2d}{t^{2}}\cdot(1+\frac{8}{t^{2}})}\cdot n\log n$ edges and $\tilde{O}(\frac{d^{1.5}}{\eps\cdot t})\cdot e^{\frac{2d}{t^{2}}\cdot(1+\frac{8}{t^{2}})}$ query time.
		
		\item If $X$ are taken from $d$ dimensional $\ell_p$ space for $p\in[1,2]$, then for $t\ge 10$,  $X$ admits a path reporting $2$-hop $t$-spanner with $\tilde{O}(d)\cdot e^{O(\frac{d}{t^{p}})}\cdot n\log n$ edges and $\tilde{O}(d)\cdot e^{O(\frac{d}{t^{p}})}$ query time.
		
		\item If $X$ are taken from $d$ dimensional $\ell_p$ space for $p\in[2,\infty]$, then $X$ admits a path reporting $2$-hop $2\cdot d^{1-\frac1p}$-spanner with $\tilde{O}(d)\cdot n\log n$ edges and $\tilde{O}(d)$ query time.
		
		\item If $(X,d_X)$ is the shortest path metric of a graph with treewidth $k$ (in particular a tree for $k=1$), then  $X$ admits a path reporting $2$-hop $1$-spanner with $O(nk\log n)$ edges and $O(k\log n)$ query time.
		\item If $(X,d_X)$ is the shortest path metric of a fixed minor free graph (in particular planar graph), then for $\eps\in(0,\frac12)$,  $X$ admits a path reporting $2$-hop $(1+\eps)$-spanner with $O(\frac{n}{\eps}\log^2 n)$ edges and $O(\eps^{-1}\log^2 n)$ query time.
	\end{enumerate}
\end{corollary}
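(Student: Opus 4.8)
The plan is to treat each of the nine items uniformly: pair the relevant locality sensitive ordering --- either one of the previously known constructions collected in \Cref{tab:LSO} or one of the LSO's established earlier in this paper --- with whichever of the three meta theorems matches its \emph{type}, and then simply read off $\tau$ and $\rho$. Recall the three inputs: \Cref{thm:MetaLSOPathReportingSpanner} turns a $(\tau,\rho)$-(classic)-LSO into a path reporting $2$-hop $(1+2\rho)$-spanner with $O(n\tau\log n)$ edges and $O(\tau)$ query time; \Cref{thm:MetaTriangleLSOPathReportingSpanner} turns a $(\tau,\rho)$-triangle-LSO into a path reporting $2$-hop $2\rho$-spanner with $O(n\tau\log n)$ edges and $O(\tau)$ query time; and \Cref{thm:MetaRootedLSOPathReportingSpanner} turns a $(\tau,\rho)$-rooted-LSO into a path reporting $2$-hop $\rho$-spanner with $O(n\tau)$ edges and $O(\tau)$ query time. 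Hence the only work is arithmetic on $\tau$ and $\rho$, together with a harmless rescaling of the slack parameters so that the final stretch matches the entries of \Cref{tab:LowHopSpanners}.

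For the \emph{triangle-LSO driven} items I will feed the appropriate triangle-LSO into \Cref{thm:MetaTriangleLSOPathReportingSpanner} and absorb the factor-$2$ blow-up of the stretch by invoking the LSO with parameter $t/2$ in place of $t$ (and, where relevant, slack $\delta=\Theta(\eps)$). Concretely: item~1 uses the $\bigl(O(n^{1/k}\log n\cdot\tfrac{k^2}{\eps}\log\tfrac k\eps),\,2k+\eps\bigr)$-triangle-LSO of \cite{FL22}, giving stretch $2(2k+\eps)\le 4k+\eps$ after rescaling $\eps$, sparsity $O(n\tau\log n)=\tilde O(n^{1+1/k}\eps^{-1})$ and query time $O(\tau)=\tilde O(n^{1/k}\eps^{-1})$; item~3 uses \Cref{thm:LSOdoublingLargeStretch}; item~5 uses \Cref{thm:LSOEuclideanLargeStretch} with $\delta=\eps$, where the substitution $t\mapsto t/2$ is exactly what converts the exponent $e^{\frac{d}{2t^{2}}\cdot(1+\frac{2}{t^{2}})}$ into $e^{\frac{2d}{t^{2}}\cdot(1+\frac{8}{t^{2}})}$ and moves the admissible range from $t\in[4,2\sqrt d]$ to $t\in[8,\sqrt d]$; item~6 uses \Cref{thm:LSOLp12}, where $t\mapsto t/2$ moves the range from $t\in[5,d^{1/p}]$ to $t\ge 10$; and item~7 uses the $\bigl(O(d\log d),\,d^{1-1/p}\bigr)$-triangle-LSO of \Cref{cor:LpLSO}, for which no rescaling is needed and the stretch is $2d^{1-1/p}$ with $\tilde O(d)$ edges per point.

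For the \emph{classic-LSO driven} items~2 and~4 I will use \Cref{thm:MetaLSOPathReportingSpanner}: the $(\eps^{-O(d)},\eps)$-LSO of \cite{FL22} for doubling dimension $d$ (resp.\ the $(O_d(\eps^{-d})\log\tfrac1\eps,\eps)$-LSO of \cite{CHJ20} for $\R^d$) yields a $2$-hop $(1+2\eps)$-spanner, and one rescales $\eps\mapsto\eps/2$ to get stretch $1+\eps$, with $\eps^{-O(d)}\cdot n\log n$ (resp.\ $O_d(\eps^{-d})\log\tfrac1\eps\cdot n\log n$) edges and matching query time. For the \emph{rooted-LSO driven} items~8 and~9 I will use \Cref{thm:MetaRootedLSOPathReportingSpanner}: the $(k\log n,1)$-rooted-LSO of \cite{FL22} for treewidth-$k$ metrics gives a $2$-hop $1$-spanner with $O(nk\log n)$ edges and $O(k\log n)$ query time (the case $k=1$ being trees), and the $(O(\eps^{-1}\log^{2}n),1+\eps)$-rooted-LSO of \cite{FL22} for fixed-minor-free metrics gives a $2$-hop $(1+\eps)$-spanner with $O(\eps^{-1}n\log^{2}n)$ edges and $O(\eps^{-1}\log^{2}n)$ query time.

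The content is essentially bookkeeping, so there is no serious obstacle; the only place where care is required is propagating the slack parameters through the factor-$2$ stretch increase produced by the triangle-LSO meta theorem --- verifying that after the substitution $t\mapsto t/2$ (and the choice $\delta=\Theta(\eps)$ in the Euclidean case) the exponents in $e^{\Theta(d/t^{2})}$, $e^{\Theta(d/t^{p})}$, $2^{\Theta(d/t)}$ and the admissible ranges of $t$ line up precisely with the rows of \Cref{tab:LowHopSpanners}. Everything else is a direct quotation of the three meta theorems with $(\tau,\rho)$ copied from \Cref{tab:LSO} or from \Cref{thm:LSOEuclideanLargeStretch}, \Cref{thm:LSOLp12}, \Cref{thm:LSOdoublingLargeStretch}, \Cref{cor:LpLSO}.
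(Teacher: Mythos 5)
Your proposal is correct and is exactly the paper's own derivation: the paper proves this corollary by plugging the LSO's from \Cref{tab:LSO} (and Theorems \ref{thm:LSOEuclideanLargeStretch}, \ref{thm:LSOLp12}, \ref{thm:LSOdoublingLargeStretch}, \Cref{cor:LpLSO}) into the three meta theorems, with the same $t\mapsto t/2$ and $\delta=\Theta(\eps)$ rescalings you describe to absorb the factor-$2$ stretch of the triangle-LSO reduction. Your parameter bookkeeping (e.g.\ the exponent $e^{\frac{2d}{t^2}(1+\frac{8}{t^2})}$ and the shifted ranges $t\in[8,\sqrt d]$, $t\ge 10$) matches the stated bounds, so nothing further is needed.
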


\subsubsection{Euclidean Metrics}\label{subsub:PathReportingEuclidean}
Consider the Euclidean space and its $(O_d(\eps^{-d})\log\frac1\eps,\eps)$-LSO \cite{CHJ20}.
This LSO is defined for the entire space, regardless of the specific $n$ points at hand. In other words, it is ``hard coded''. 
As such, given a pair of points $x,y\in\R^d$, we can find the ordering $\sigma$ in the collection satisfying them in $O_d(1)$ time - independent of $\eps$ and $n$. As a result, in the path reporting $2$-hop spanner constructed by \Cref{thm:MetaLSOPathReportingSpanner}, we can efficiently return a path.

We provide here a very brief description of the LSO, and refer to \cite{CHJ20} for a precise description. We will describe here only the ordering of the box $[0,1)^d$ (the rest of the space is similar).
Chan \etal divide the box $[0,1)^d$ into $O(\eps)^{-d}$ boxes with side length $\eps$. Next they argue that there are  $O(\eps)^{-d}$ orderings on this smaller boxes, such that any two small boxes are consecutive it one of them. Assume by induction that we have an LSO with $O(\eps)^{-d}$ orderings for each of the small box (they are all actually symmetric). They simply implement these orderings in the global  $O(\eps)^{-d}$ orders of the boxes. As a result we obtain $O(\eps)^{-d}$ orderings. Finally, they take this entire construction, and make it with small $O_d(1)$ shifts of the form $(\frac{i}{O_d(1)},\frac{i}{O_d(1)}\dots,\frac{i}{O_d(1)})$.
Chan \etal showed that for every pair of points $x,y$, there is a shift such that $x,y$ contained together in a box of diameter $O_d(1)\cdot\|x-y\|_2$. Now, when we partition this box, both points will belong to small boxes of diameter  $O_d(\eps)\cdot\|x-y\|_2$. The ordering where these two boxes will be consecutive, will satisfy the pair $x,y$.

Our main observation about this construction, is first that the right shift could be found efficiently (a simple computation). Next, given the shift we can compute the small boxes containing $x,y$, and pre-compute the right ordering for them (there are only $O_d(\eps^{-2d})$ potential queries, we can store the solutions to all).
Once we know what is the satisfying ordering for $x,y$, we can use the data structure from \Cref{thm:MetaLSOPathReportingSpanner} to compute a $2$-hop path in $O(1)$ time. We conclude:
\begin{corollary}\label{cor:LowHopSpannerEuclidean}
	Consider a subset $X$ of $n$ point in $d$ dimensional Euclidean space, for $\eps\in(0,\frac12)$, $X$ admits a path reporting $2$-hop $(1+\eps)$-spanner with $m=O_d(\eps^{-d})\log\frac1\eps\cdot n\log n$ edges, 
	\\space $m+O_d(\eps^{-2d})\log\frac1\eps$ and $O_d(1)$ query time.
\end{corollary}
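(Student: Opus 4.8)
The plan is to follow the strategy hinted at in the paragraph preceding \Cref{cor:LowHopSpannerEuclidean}: take the explicit, ``hard-coded'' $(O_d(\eps^{-d})\log\frac1\eps,\eps)$-LSO of Chan \etal \cite{CHJ20}, feed it into the meta construction of \Cref{thm:MetaLSOPathReportingSpanner}, and then observe that the extra structure of that particular LSO lets us locate the satisfying ordering in $O_d(1)$ time, so there is no need to loop over all $\tau=O_d(\eps^{-d})\log\frac1\eps$ orderings at query time. First I would invoke \Cref{thm:MetaLSOPathReportingSpanner} with $\rho=\eps$ (after rescaling $\eps$ by a constant so the stretch is exactly $1+\eps$ rather than $1+2\eps$): this already yields a path reporting $2$-hop $(1+\eps)$-spanner $H=\cup_\sigma H_\sigma$ with $m=O(n\tau\log n)=O_d(\eps^{-d})\log\frac1\eps\cdot n\log n$ edges, where each $H_\sigma$ is the $2$-hop $1$-spanner of the path $P_\sigma$ from \Cref{thm:2hopPath}, and where the per-vertex edge-bookkeeping sets $E_{x,\sigma}$ are stored. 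The only thing to improve is the $O(\tau)$ query time.

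The key point is the structure of the \cite{CHJ20} LSO recalled in the text: to serve a pair $x,y$ one (i) computes the correct shift vector $(\tfrac{i}{O_d(1)},\dots,\tfrac{i}{O_d(1)})$ — this is determined by $\|x-y\|_2$ and the coordinates of $x,y$ by a direct arithmetic computation, hence $O_d(1)$ time; (ii) given that shift, identifies the two $\eps$-side boxes $b_x\ni x$ and $b_y\ni y$ at the appropriate scale — again a direct computation from the coordinates, $O_d(1)$ time; and (iii) reads off which of the $O_d(\eps^{-d})$ recursive orderings places $b_x,b_y$ consecutively. For step (iii) I would, at preprocessing time, precompute and store the answer for every pair of $\eps$-boxes that can ever arise as $(b_x,b_y)$: there are only $O_d(\eps^{-2d})$ relevant (box, box) pairs within a single parent box (since a satisfying ordering only exists when the two boxes lie in a common parent of bounded diameter), so a lookup table of size $O_d(\eps^{-2d})\log\frac1\eps$ suffices and each lookup is $O(1)$. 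This accounts for the stated additive $O_d(\eps^{-2d})\log\frac1\eps$ term in the space bound. Once the satisfying ordering $\sigma$ is known, I apply the $O(1)$-time path-reporting primitive of \Cref{thm:2hopPath} inside $H_\sigma$ to produce the $2$-hop path $(x,z,y)$ with $\{x,z\}\in E_{x,\sigma}$, $\{z,y\}\in E_{y,\sigma}$, and its weight $d_X(x,z)+d_X(z,y)\le(1+\eps)d_X(x,y)$ by the stretch argument already carried out in \Cref{thm:MetaLSOPathReportingSpanner}. Total query time: $O_d(1)$.

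I would then assemble the pieces: the spanner $H$ has $m=O_d(\eps^{-d})\log\frac1\eps\cdot n\log n$ edges (from \Cref{thm:MetaLSOPathReportingSpanner} applied to the \cite{CHJ20} LSO), the data structure stores $H$ together with the precomputed box-to-ordering lookup table, for total space $m+O_d(\eps^{-2d})\log\frac1\eps$, and the query time is $O_d(1)$; correctness (stretch $\le 1+\eps$, hop bound $2$) is inherited verbatim from \Cref{thm:MetaLSOPathReportingSpanner}, modulo the constant rescaling of $\eps$. I expect the main obstacle — really the only non-routine point — to be pinning down precisely why steps (i) and (ii) above are genuinely $O_d(1)$ and independent of $\eps$, i.e. spelling out from \cite{CHJ20} that the shift index and the containing $\eps$-boxes are computed by a fixed-depth arithmetic formula in the coordinates (using floor/rounding operations in the RAM model), rather than by a search over scales or over the $\eps^{-d}$ boxes. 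Everything else is bookkeeping: the edge count and the per-vertex $E_{x,\sigma}$ sets come straight from \Cref{thm:2hopPath} and \Cref{thm:MetaLSOPathReportingSpanner}, and the lookup-table size is an immediate volume/packing count of $\eps$-boxes inside a bounded-diameter parent box.
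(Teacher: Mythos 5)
Your proposal is correct and follows essentially the same route as the paper: apply \Cref{thm:MetaLSOPathReportingSpanner} to the hard-coded LSO of \cite{CHJ20}, then exploit its structure by computing the right shift and the two $\eps$-boxes containing $x,y$ in $O_d(1)$ time and looking up the satisfying ordering in a precomputed table of size $O_d(\eps^{-2d})\log\frac1\eps$, after which the $O(1)$-time primitive of \Cref{thm:2hopPath} reports the $2$-hop path. The only point the paper also leaves at the same level of detail is why the shift and box identification are genuinely $O_d(1)$ arithmetic computations, so your identification of that as the remaining non-routine step matches the paper's treatment.
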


\subsubsection{Fixed Minor free graphs}\label{subsub:PathReportingMinorFree}
Our construction closely follows the construction of rooted LSO's from \cite{FL22}. For clarity, we will use the notion of shortest path decomposition (abbreviated \SPD) introduced by Abraham \etal \cite{AFGN22} (see also \cite{Fil20}).

\begin{definition}[\SPD depth]\label{def:SPD-K} A graph has an \SPD of depth $1$ if and only if it is a (weighted) path. A graph $G$ has an \SPD of depth $k \geq 2$ if there exists a \emph{shortest path} $P$, such that deleting $P$ from the graph $G$	results in a graph whose connected components all have \SPD of depth at most $k-1$. 
\end{definition}

It is known that (see~\cite{AFGN22}) that fixed minor-free graphs have \SPD of depth $k = O(\log n)$. However, the family of bounded \SPDdepth graphs is much richer and contains dense graphs with $K_r$ as a minor, for arbitrarily large $r$.
The rest of the section is devoted to constructing a path reporting $2$-hop $1+\eps$ spanner for graph with \SPDdepth $k$. The number of edges will be $O(n\cdot k\cdot\eps^{-1}\cdot\log n)$, and the query time $O(k\cdot\eps^{-1})$. As a corollary, we will obtain the path reporting $2$-hop $(1+\eps)$-spanner for minor free graphs.

Our construction will rely on the following lemma from \cite{Kle02} (see also \cite{Thorup04,BFN19Ramsey}).
\begin{lemma}\label{lem:landmarks}
	Consider a weighted graph $G=(V,E,w)$ with parameter $\eps\in(0,1)$, and let $P$ be a shortest path in $G$. Then one can find for each vertex $v\in V$ a set of vertices, called  \emph{landmarks},  $L_v$ on $P$ of size $|L_v|=O(\frac1\eps)$, such that for every $v,u\in V$, if the shortest $u,v$ path intersects $P$, then there exists $x\in L_v$ and $y\in L_u$ satisfying
	$d_G(v,x)+d_P(x,y)+d_G(y,u)\le(1+\epsilon)\cdot d_G(v,u)$.
	Furthermore, $x$ and $y$ are consecutive w.r.t. $P$. That is, there are no additional vertices from $L_u\cup L_v$ on $P$ between $x$ and $y$.
\end{lemma}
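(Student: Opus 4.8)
The plan is to use the classical ``portals on a shortest path'' construction of Klein \cite{Kle02} and Thorup \cite{Thorup04}. Parametrize $P$ by arc length, writing $P\colon[0,\ell]\to V$ with $\ell=w(P)$, so that $d_P(P(a),P(b))=|a-b|$ for all $a,b$ (sub-paths of a shortest path are themselves shortest). For a vertex $v$ set $f_v\colon[0,\ell]\to\R_{\ge 0}$, $f_v(a)=d_G(v,P(a))$. First I would record two structural facts. (i) $f_v$ is $1$-Lipschitz, since $|d_G(v,P(a))-d_G(v,P(b))|\le d_P(P(a),P(b))=|a-b|$. (ii) For all $a\le b$, $|a-b|\le f_v(a)+f_v(b)$: the walk that follows $P$ from $P(0)$ to $P(a)$, detours through $v$ to $P(b)$, and follows $P$ to $P(\ell)$ has length $a+f_v(a)+f_v(b)+(\ell-b)$, which cannot be shorter than $\ell=d_G(P(0),P(\ell))$. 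Writing $\mu_v=d_G(v,P)=\min_a f_v(a)$, attained at $a_v$, fact (ii) gives $f_v(a)\ge\max\{\mu_v,\ |a-a_v|-\mu_v\}$; in particular $f_v$ grows at least linearly once $|a-a_v|$ exceeds $2\mu_v$ — this is the only place the hypothesis ``$P$ is a shortest path'' is used for more than Lipschitzness.

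For the landmark set $L_v$ I would place portals greedily outward from $a_v$: put $a_v\in L_v$, and from the current outermost portal $p$ (to the right, say) add as the next portal the leftmost $a>p$ with $f_v(p)+(a-p)>(1+\eps)f_v(a)$, stopping once $p$ dominates every point up to $\ell$; symmetrically to the left. By construction, for every $a\in[0,\ell]$ there is $p\in L_v$ with $f_v(p)+d_P(P(p),P(a))\le(1+\eps)f_v(a)$. For the cardinality: when a portal $p$ fails at $a$, the Lipschitz bound $f_v(a)\ge f_v(p)-(a-p)$ forces $a-p\ge\frac{\eps}{2+\eps}f_v(p)$, so consecutive portals are at least an $\Omega(\eps)$-fraction of $f_v$ apart; combined with fact (ii) — which makes $f_v$ grow linearly away from $a_v$, so that beyond arc-length $O(\mu_v/\eps)$ from $a_v$ the portal $a_v$ dominates automatically and no portal is placed — one gets $|L_v|=O(1/\eps)$ as in \cite{Thorup04}.

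For correctness, take $u,v$ whose shortest path meets $P$ at $z=P(c)$; then $d_G(v,u)=d_G(v,z)+d_G(z,u)$ with $d_G(v,z)=f_v(c)$ and $d_G(z,u)=f_u(c)$. Choose $x\in L_v$ dominating $c$ and $y\in L_u$ dominating $c$ that are closest together on $P$; then $d_P(P(x),P(y))\le d_P(P(x),z)+d_P(z,P(y))$, and
\begin{align*}
 d_G(v,x)+d_P(P(x),P(y))+d_G(y,u)
 &\le\bigl(f_v(x)+d_P(P(x),z)\bigr)+\bigl(f_u(y)+d_P(z,P(y))\bigr)\\
 &\le(1+\eps)f_v(c)+(1+\eps)f_u(c)=(1+\eps)\,d_G(v,u).
\end{align*}
To obtain the ``consecutive w.r.t.\ $P$'' condition, I would observe that if some landmark of $L_u\cup L_v$ lay strictly between $P(x)$ and $P(y)$ it would, together with $x$ or $y$, yield a valid dominating pair that is closer together, contradicting the choice of $x,y$. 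Finally, rescaling $\eps$ by a constant absorbs the $(1+\eps)\to(1+O(\eps))$ loss and the constant blow-up in $|L_v|$.

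The step I expect to be the main obstacle is the cardinality bound $|L_v|=O(1/\eps)$: it requires combining the Lipschitz lower bound (which controls step sizes from below) with the shortest-path inequality $|a-b|\le f_v(a)+f_v(b)$ (which forces $f_v$ to grow linearly past its minimizer, thereby both capping the radius in which portals are needed and making the far-out portals geometrically sparse), and a direct greedy rule must be tuned carefully — following \cite{Thorup04} — to avoid picking up an extra $\log\frac1\eps$ factor.
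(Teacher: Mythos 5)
First, note that the paper itself does not prove this lemma; it imports it verbatim from Klein \cite{Kle02} (see also \cite{Thorup04}), so your proposal is being measured against the standard portal construction rather than an in-paper argument. Your setup (parametrizing $P$ by arc length, the $1$-Lipschitz bound, and the inequality $|a-b|\le f_v(a)+f_v(b)$ coming from $P$ being a shortest path) and your correctness computation for a dominating pair are exactly the standard route and are fine. Two steps, however, do not yet close as written. The first is the one you flagged: the cardinality bound. From what you wrote — spacing at least $\frac{\eps}{2+\eps}f_v(p)$ between consecutive portals, plus the observation that no portals are needed beyond radius $O(\mu_v/\eps)$ from $a_v$ — one only gets $O(\eps^{-1}\log\eps^{-1})$, since in the outer region the spacing argument gives geometric growth of the distance from $a_v$, and the radius is $\Theta(\mu_v/\eps)$. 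The greedy you describe does in fact give $O(1/\eps)$, but the missing ingredient is a charging argument rather than sparsity: on each side of the minimizer $a_v$ the excess $h(a)=f_v(a)-\bigl(|a-a_v|-\mu_v\bigr)$ satisfies $0\le h\le 2\mu_v$ and is \emph{non-increasing} as one moves away from $a_v$ (because $f_v$ is $1$-Lipschitz while the lower envelope rises at slope $1$), and a short computation shows that every failure event at distance at least $2\mu_v$ from $a_v$ forces $h$ to drop by more than $\eps\mu_v$; together with the $\Omega(\eps\mu_v)$ spacing inside the window of radius $2\mu_v$, this caps the number of portals per side at $O(1/\eps)$. This matters for the paper's application, since an extra $\log\frac1\eps$ would propagate into the spanner size and query time of \Cref{thm:LowHopSpannerSPD}.

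The second issue is the consecutiveness argument. You take the closest-together pair among pairs that \emph{dominate} $c$ and claim an intermediate landmark would yield a closer valid dominating pair; but a landmark $z\in L_v$ lying strictly between $P(x)$ and $P(y)$ need not dominate $c$, so no contradiction follows as stated. The repair is to change the notion of "valid": minimize $d_P(x,y)$ over all pairs satisfying the final combined inequality $d_G(v,x)+d_P(x,y)+d_G(y,u)\le(1+\eps)d_G(u,v)$. If $z\in L_v$ lies strictly between such a minimizing $x$ and $y$, then $d_G(v,z)\le d_G(v,x)+d_P(x,z)$ gives $d_G(v,z)+d_P(z,y)+d_G(y,u)\le d_G(v,x)+d_P(x,y)+d_G(y,u)$, i.e.\ a strictly closer pair that is still valid — contradiction (and symmetrically for $z\in L_u$). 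Equivalently, one can slide $x$ toward $y$ inside $L_v$ and then $y$ toward the new $x$ inside $L_u$, using the same triangle inequality through the path. With these two repairs your proof is complete and matches the cited construction.
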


We will construct the spanner recursively. Let $P$ be a shortest path of $G$ such that every component of $G\setminus P$ has an \SPD of depth $k-1$.
Let $\{L_v\}_{v\in V}$ be the set of landmarks provided by \Cref{lem:landmarks} w.r.t. $P$. 
For every vertex $v\in V$, denote $l(v)=|L_v|$. 
We construct an auxiliary tree $T_P$ that initially contains the shortest path $P$ only. For every vertex $v\in V$, let $L_v=\{x_1,\dots,x_{l(v)}\}\subseteq P$. We will add the vertices $\{v_i\}_{i=1}^{l(v)}$, which are $l(v)$ copies of $v$, to $T_P$, and for every $i$ connect $v_i$ to $x_i\in L_v$ with an edge of weight $d_G(v,x_i)$.	Note that $T_P$ has  $\sum_{v\in V}l(v)=O(\frac{n}{\eps})$ vertices. The distances in $T_P$ dominates the distances in the original graph. That is for every $v_i,u_j$ copies of $v,u$ respectively, it holds that $d_{T_P}(v_i,u_j)\ge d_G(v,u)$. Additionally, by \Cref{lem:landmarks}, for every pair of vertices $u,v$, such that  the shortest path between them intersects $P$, it holds that $d_{T_P}(v_i,u_j)\le(1+\epsilon)\cdot d_G(v,u)$, for some pair of ``consecutive'' $v_i,u_j$.\footnote{Interesting to note, the construction of rooted LSO for graphs with \SPD $k$ in \cite{FL22} works by constructing these trees $T_P$ where each vertex belong to $k$ trees (with $O(\frac1\eps)$ copies per tree), and then constructing $(O(\log n),1)$-rooted LSO for each tree as a black box.} 
For the tree $T_P$ we will construct a path reporting $2$-hop $1$-spanner $H_P$ using \cite{KLMS22}:
\begin{theorem}[\cite{KLMS22}]\label{thm:LowHopSpannerTreeKLMS22}
	Consider an $n$ point tree metric $(X,d_X)$, then $X$ admits a path reporting $2$-hop $1$-spanner with $O(n\log n)$ edges and $O(1)$ query time.
\end{theorem}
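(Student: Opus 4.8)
The plan is to prove this by a recursive centroid decomposition of the tree, directly generalizing the proof of \Cref{thm:2hopPath} for paths (where the role of the centroid is played by the midpoint). Throughout I regard $(X,d_X)$ as the shortest-path metric $d_T$ of a weighted tree $T$ on vertex set $X$ with $|X|=n$, which is the setting in which the statement is used. First I would pick a centroid vertex $c\in X$ of $T$, namely one whose deletion splits $T$ into components $T_1,\dots,T_k$ each of size at most $n/2$; such a vertex always exists in a tree. I add to the spanner $H$ an edge $\{c,v\}$ of weight $d_T(c,v)$ for every $v\neq c$, place each such edge into the set $E_v$ (so it is ``owned'' by $v$), and then recurse on each $T_i$; note $T_1,\dots,T_k$ partition $X\setminus\{c\}$. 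It is convenient to record the recursion as a rooted \emph{centroid tree} $\mathcal{T}$ whose root is $c$, whose children are the centroids of $T_1,\dots,T_k$, and so on. Since each vertex of $T$ is the centroid of exactly one subproblem, $V(\mathcal{T})=X$, and the depth of $\mathcal{T}$ is $O(\log n)$ because subproblem sizes halve.

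The sparsity bound is immediate: if $t(n)$ is the number of edges contributed by an $n$-vertex subtree, then $t(n)=(n-1)+\sum_i t(n_i)$ with $\sum_i n_i=n-1$ and $n_i\le n/2$, which solves to $t(n)=O(n\log n)$. Moreover each vertex $v$ owns at most one edge per recursion level — the edge to the centroid of the unique active subtree containing $v$, unless $v$ is that centroid — so $|E_v|=O(\log n)$, as in \Cref{thm:2hopPath}. For the spanner property, fix $x\neq y$ and let $c'$ be their least common ancestor in $\mathcal{T}$, equivalently the centroid of the smallest subtree containing both. If $c'\in\{x,y\}$, say $c'=x$, then the edge $\{x,y\}$ of weight $d_T(x,y)$ was added while processing that subtree, so there is a $1$-hop exact path. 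Otherwise $x$ and $y$ lie in different components of $T\setminus\{c'\}$, hence $c'$ lies on the $x$--$y$ path in $T$; the edges $\{x,c'\}\in E_x$ and $\{c',y\}\in E_y$ were both added, so $d_H(x,y)\le d_T(x,c')+d_T(c',y)=d_T(x,y)$, while $d_H(x,y)\ge d_T(x,y)$ holds automatically since $H$ is a metric spanner. Hence $H$ is a $2$-hop $1$-spanner.

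For constant-time path reporting I would equip $\mathcal{T}$ with a standard $O(1)$-query least-common-ancestor data structure built in $O(n)$ time, and store each $E_v$ as an array indexed by recursion level, which coincides with the depth of the corresponding node in $\mathcal{T}$. On a query $\{x,y\}$ one computes $c'=\mathrm{lca}_{\mathcal{T}}(x,y)$ in $O(1)$, and then reads off in $O(1)$ either the single edge $\{x,y\}$ (if $c'\in\{x,y\}$) or the pair $\{x,c'\},\{c',y\}$ together with their weights, returning this as the reported path.

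I expect the only real obstacle to be nailing down the $O(1)$ query time rather than $O(\log n)$: this needs the constant-time LCA oracle on the centroid tree plus the level-indexed layout of the owned edges, and these two pieces must be set up consistently so that the relevant edge is located without a search. The remaining points — existence of the centroid, the recursion reaching depth $O(\log n)$, handling $n$ not a power of two, and computing the distances $d_T(c,\cdot)$ in each subproblem (one traversal per level, $O(n\log n)$ total, irrelevant to the spanner's size) — are routine.
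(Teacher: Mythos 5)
Your proof is correct: the centroid lies on the $x$--$y$ tree path whenever it separates them, so the two stored edges give an exact $2$-hop path, the recursion gives $O(n\log n)$ edges with $|E_v|=O(\log n)$, and an $O(1)$-time LCA oracle on the centroid tree (e.g.\ \cite{HT84}) plus depth-indexed edge arrays yields $O(1)$ query time. The paper itself does not prove this statement but imports it from \cite{KLMS22}, and your centroid-decomposition argument is essentially the standard proof of that result (the natural tree analogue of \Cref{thm:2hopPath}), so nothing further is needed.
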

The number of edges in $H_P$ is thus bounded by $O(\frac{n}{\eps}\cdot\log(\frac{n}{\eps}))=O(\frac{n}{\eps}\cdot\log n)$. \footnote{We can assume that $\eps>\frac{1}{n^2}$, as otherwise we can simply return the clique as a spanner (with $1$-hop, stretch $1$, and $O(1)$ query time).}
Note that in the actual spanner $H_P$ all the copies of each vertex will be identified. However, for the purpose of looking up the path, we will keep the distinctions in the data structure. 
 
Let $\{C_1,\ldots, C_s\}$ be the set of connected components of $G\setminus P$. For each $i$, let $G_i=G[C_i]$ be the graph induced by $C_i$, and let $H_i$ be a path reporting $2$-hop $1+\eps$ spanner for $G_i$ constructed recursively. Our final spanner will be $H_P$ union $\cup_iH_i$.
The total number of edges is bounded by  $O(n\cdot k\cdot\eps^{-1}\cdot\log n)$, as the depth of the recursion is $k$, and each vertex contributes $O(\eps^{-1}\cdot\log n)$ edges per level.

Next we describe our query algorithm.
The \SPD defines a hierarchical partition of the vertices.
This partition can be represented by a tree $\mathcal{T}$, where the leaves are real vertices, and each internal node in $\mathcal{T}$ represents a cluster.
Each such cluster $C$, represents some recursive step, and is associated with a shortest path $P(C)$ in $G[C]$.
Consider a query pair $u,v$. Starting from the root of $\mathcal{T}$, for every internal node $C$ containing both $u,v$ we will do the following:
consider the removed path $P(C)$, and let $L_v=\{x_1,\dots,x_{l(v)}\}\subseteq P(C)$ and $L_u=\{y_1,\dots,y_{l(v)}\}\subseteq P(C)$ be the respective landmarks of $v$ and $u$.
We go over the landmarks, and for every pair of consecutive landmarks $x_i,y_j$, we query the data structure of the spanner $H_{P(C)}$ to find a $2$-hop path between $v_i$ and $u_j$ (recall that in the tree $T_{P(C)}$, $v_i$ and $u_j$ are leaves attached to $x_i,y_j$ respectively).
Note that the returned path is associated with a $2$-hop path in the real spanner $H$ between $u$ and $v$ of weight $d_{T_{P(C)}}(v_i,u_j)=d_G(v,x_i)+d_G(x_i,y_j)+d_G(y_j,u)$.
We keep only the shortest observed $2$-hop path. Note that there are at most $O(\eps^{-1})$ consecutive pairs, and as each query of $H_{P(C)}$ takes $O(1)$, the total time spent on $C$ is only $O(\eps^{-1})$.
In total, we go over all clusters $C$ in $\mathcal{T}$ containing both $u,v$ and check $O(\eps^{-1})$ pairs for each such cluster. We return the shortest observed $2$-hop path. The total query time is $O(k\cdot\eps^{-1})$.

Finally, we argue that the returned $2$-hop path has weight at most $(1+\eps)\cdot d_G(u,v)$. It will be enough to show  that a path of such weight was considered.
Let $Q_{u,v}$ be the shortest $u$-$v$ path in $G$.
Let $C$ be the first cluster in $\mathcal{T}$ (i.e. closest to the root) where the removed path $P(C)$ and $Q_{u,v}$ intersect (note that $C$ is not necessarily the last cluster containing $u,v$, i.e. the lca of $u$ and $v$ in $\mathcal{T}$).
As $Q_{u,v}\subseteq C$, it follows that the distance between $u$ and $v$ in the induced graph $G[C]$ is $d_{G[C]}(u,v)\le w(Q_{u,v})=d_{G}(u,v)$.
By \Cref{lem:landmarks}, there are consecutive landmarks $x_i\in L_v,y_j\in L_u$ such that 
\begin{align*}
	d_{T_{P(C)}}(v_{i},u_{j}) & =d_{G[C]}(v,x_{i})+d_{G[C]}(x_{i},y_{j})+d_{G[C]}(y_{j},u)\\
	& \le(1+\epsilon)\cdot d_{G[C]}(v,u)\\
	& =(1+\epsilon)\cdot d_{G}(v,u)~.
\end{align*}
In particular, we queried the pair  $(v_{i},u_{j})$. As a result, one of the $2$-hop paths we checked had weight bounded by $(1+\epsilon)\cdot d_{G}(v,u)$. The theorem follows:
\begin{theorem}\label{thm:LowHopSpannerSPD}
	Consider an $n$ vertex graph $G=(V,E,w)$ with \SPDdepth $k$. Then for every $\eps\in(0,\frac12)$, the shortest path metric $(V,d_G)$ admits a path reporting $2$-hop $(1+\eps)$-spanner with $O(n\cdot k\cdot\eps^{-1}\cdot\log n)$ edges and $O(k\cdot\eps^{-1})$ query time.
\end{theorem}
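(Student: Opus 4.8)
The plan is to build the spanner by recursion on the SPD, reducing each recursive level to the tree case via the landmark lemma. Concretely, let $P$ be a shortest path of $G$ such that every connected component of $G\setminus P$ has SPDdepth at most $k-1$. First I would apply \Cref{lem:landmarks} with parameter $\eps$ to obtain, for each $v\in V$, a landmark set $L_v\subseteq P$ with $|L_v|=O(\eps^{-1})$, and form an auxiliary weighted tree $T_P$: take $P$ itself, and for each $v$ attach $|L_v|$ pendant copies $v_1,\dots,v_{|L_v|}$, where $v_i$ is joined to the $i$-th landmark $x_i\in L_v$ by an edge of weight $d_G(v,x_i)$. Then $T_P$ has $\sum_v|L_v|=O(n/\eps)$ vertices, its shortest-path metric dominates $d_G$ on copies, and by \Cref{lem:landmarks} whenever the shortest $u$–$v$ path meets $P$ there are \emph{consecutive} landmarks $x_i\in L_v$, $y_j\in L_u$ with $d_{T_P}(v_i,u_j)\le(1+\eps)\,d_G(u,v)$. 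I would feed $T_P$ into the tree construction \Cref{thm:LowHopSpannerTreeKLMS22} to get a path-reporting $2$-hop $1$-spanner $H_P$ of $T_P$ with $O((n/\eps)\log n)$ edges, identifying all copies of each vertex in the global spanner. Recurse on the components of $G\setminus P$ and take the union; since the recursion has depth $k$ and each vertex contributes $O(\eps^{-1}\log n)$ edges per level, the total is $O(nk\eps^{-1}\log n)$ edges.

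For queries I would keep the hierarchical partition tree $\mathcal{T}$ induced by the SPD, where each internal node is a cluster $C$ equipped with its removed shortest path $P(C)$. Given $u,v$, descend $\mathcal{T}$ through all clusters $C$ containing both, and at each such $C$, for every pair of consecutive landmarks $x_i\in L_v$, $y_j\in L_u$ on $P(C)$ (there are $O(\eps^{-1})$ of them, read off from the labels), query $H_{P(C)}$ in $O(1)$ time for the $2$-hop path between the copies $v_i$ and $u_j$; its middle vertex lies on $P(C)$ (it is on the tree-path from $v_i$ to $u_j$ in $T_{P(C)}$), so it maps to a legitimate $2$-hop path $u$–$z$–$v$ in the global spanner $H$ of weight $d_G(v,x_i)+d_{P(C)}(x_i,y_j)+d_G(y_j,u)$. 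Returning the lightest path seen over all $O(k)$ nested clusters costs $O(k\cdot\eps^{-1})$ time.

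Correctness is where the real work lies. Let $Q_{u,v}$ be a shortest $u$–$v$ path in $G$ and let $C^{*}$ be the \emph{topmost} cluster in $\mathcal{T}$ whose removed path $P(C^{*})$ intersects $Q_{u,v}$ — note that $C^{*}$ need not be the least common ancestor of $u$ and $v$. Since $Q_{u,v}$ stays inside $G[C^{*}]$ we have $d_{G[C^{*}]}(u,v)=d_G(u,v)$, so invoking \Cref{lem:landmarks} \emph{inside $G[C^{*}]$} produces consecutive landmarks $x_i\in L_v$, $y_j\in L_u$ on $P(C^{*})$ with $d_G(v,x_i)+d_{P(C^{*})}(x_i,y_j)+d_G(y_j,u)\le(1+\eps)\,d_{G[C^{*}]}(u,v)=(1+\eps)\,d_G(u,v)$; this exact pair is among those tried at $C^{*}$, so the algorithm returns a $2$-hop path of weight at most $(1+\eps)\,d_G(u,v)$. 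The main obstacle is exactly this bookkeeping: aligning the "consecutive landmarks" notion used when building $T_{P(C)}$ with the one the lemma delivers, confirming the middle vertex of the tree $2$-hop path lands on $P(C)$ so it lifts to a valid $2$-hop path in $H$, and checking that picking $C^{*}$ rather than the LCA is both necessary (the shortest path may only be fully contained higher up) and sufficient. The stated corollary for fixed minor-free (in particular planar) graphs then follows immediately, since those graphs have SPDdepth $k=O(\log n)$.
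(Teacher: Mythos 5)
Your proposal matches the paper's proof essentially step for step: the same recursion on the SPD, the same landmark-based auxiliary tree $T_P$ fed into the tree $2$-hop spanner of \cite{KLMS22} with copies identified, the same $O(nk\eps^{-1}\log n)$ edge count, and the same query/correctness argument via the topmost cluster whose removed path meets the shortest $u,v$ path (rather than the LCA), checking $O(\eps^{-1})$ consecutive landmark pairs per cluster. The bookkeeping issues you flag (alignment of the consecutive-landmark notion and the middle vertex lying on the tree path so it lifts to a genuine vertex of $P(C)$) are resolved exactly as you suggest, so there is no gap.
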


We conclude:
\begin{corollary}\label{cor:LowHopMinor}
	Consider an $n$ point graph $G$ which is a fixed minor free, then for every $\eps\in(0,\frac12)$, the shortest path metric $(V,d_G)$ admits a path reporting $2$-hop $(1+\eps)$-spanner with $O(n\cdot\eps^{-1}\cdot\log^2 n)$ edges and $O(\eps^{-1}\cdot\log n)$ query time.
\end{corollary}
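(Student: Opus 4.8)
\textbf{Proof proposal for \Cref{cor:LowHopMinor}.}
The plan is to simply combine \Cref{thm:LowHopSpannerSPD} with the structural fact that fixed minor-free graphs have shallow shortest-path decompositions. Concretely, I would first recall (from \cite{AFGN22}, as cited in the paragraph preceding \Cref{def:SPD-K}) that every $n$-vertex graph $G$ excluding a fixed minor admits an \SPD of depth $k=O(\log n)$. I would then invoke \Cref{thm:LowHopSpannerSPD} with this value of $k$: for any $\eps\in(0,\tfrac12)$, the shortest path metric $(V,d_G)$ admits a path reporting $2$-hop $(1+\eps)$-spanner with $O(n\cdot k\cdot\eps^{-1}\cdot\log n)$ edges and $O(k\cdot\eps^{-1})$ query time.

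Substituting $k=O(\log n)$ yields $O(n\cdot\eps^{-1}\cdot\log^2 n)$ edges and $O(\eps^{-1}\cdot\log n)$ query time, which is exactly the claimed bound. Since \Cref{thm:LowHopSpannerSPD} is already proved above (via the recursive landmark construction of \Cref{lem:landmarks} and the tree spanner of \Cref{thm:LowHopSpannerTreeKLMS22}), and the $O(\log n)$ \SPD depth bound for minor-free graphs is a cited black box, there is essentially no additional work: the corollary follows by a one-line parameter substitution.

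The only "obstacle" worth a sentence is making sure the \SPD-depth bound is invoked correctly: the decomposition depth must be $O(\log n)$ in the number of vertices $n$ (not in some other parameter), and this holds uniformly for the whole class of graphs excluding any fixed minor (the hidden constant in $O(\log n)$ may depend on the excluded minor $K_r$, but not on $n$). As long as that is stated, the edge count and query time follow directly from the recursion depth in \Cref{thm:LowHopSpannerSPD}, where each of the $k$ levels contributes $O(\eps^{-1}\log n)$ edges per vertex and $O(\eps^{-1})$ query time.
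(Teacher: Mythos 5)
Your proposal is correct and matches the paper's own derivation: the corollary is obtained exactly by citing the $O(\log n)$ \SPD depth of fixed minor-free graphs from \cite{AFGN22} and substituting $k=O(\log n)$ into \Cref{thm:LowHopSpannerSPD}. Your remark that the hidden constant depends only on the excluded minor is a fine (and accurate) clarification, but no further argument is needed.
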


\subsubsection{Planar graphs}\label{subsub:PathReportingPlanar}
The spanner construction for planar graphs will be (almost) the exact same one as for fixed minor free case, using the shortest path decomposition. 
However, the \SPD in planar graphs has additional structural properties that can be exploited to obtain $O(\eps^{-1})$ query time (in a similar manner to the fact that the distance oracles for planar graph \cite{Kle02,Thorup04} have $O(\eps^{-1})$ query time, while the very similar minor free counter part distance oracles \cite{AG06} has $O(\eps^{-1}\log n)$ query time). 
We will go briefly over the details.

The \SPD of planar graphs comes from cycle separators \cite{LT79}. Consider an $n$-vertex planar graph $G$ with a specific drawing on the plane. Fix an arbitrary root vertex $r$, and let $T_r$ be the shortest path tree  rooted in $r$. One can find a pair of neighboring vertices $a,b$ in $G$ such that the cycle $C_{a,b}$ consisting of the edge $\{a,b\}$ and the two unique paths $P_a$,$P_b$ from $r$ to $a$ and $b$ in $T_r$ is a separator. Specifically the cycle $C_{a,b}$ partitions the plane into interior and exterior parts, where at most $\frac23n$ vertices lay in the interior and the exterior of the cycle.
The crux is that all the paths in the \SPD are taken from cycle separators created from a single shortest path tree $T_r$. In particular, \textbf{all the removed paths are shortest paths in the original graph, and not only in the induced graphs}. We refer to \cite{Kle02} for more details \footnote{See also \cite{CFKL20} for a detailed and careful exposition, alas in \cite{CFKL20} the planar graph contains a vortex.}.

There is a very small change in the construction of the spanner from \Cref{thm:LowHopSpannerSPD}: when considering a cluster $C$ and a removed path $P(C)$, we will create the landmarks (and define the weights of the auxiliary edges to the landmarks $\{v_i,x_i\}$ in $T_{P(C)}$) w.r.t. to distance in the original graph $G$ and not the induced graph $G[C]$. This is possible as $P$ is an actual shortest path in $G$.
As a result, in the query algorithm for the pair $u,v$, instead of checking all the clusters $C$ in $\mathcal{T}$ containing both $u,v$ (as we don't know which is the last cluster containing the entire shortest path $Q_{u,v}$ from $u$ to $v$), it is enough to check the last two paths constituting the cycle separator $C_{a,b}=P_a\cup P_b$ that separated $u$ from $v$. Indeed, the shortest path $Q_{u,v}$ from $u$ to $v$ in $G$ necessarily intersects either $P_a$ or $P_b$. 
Thus applying \Cref{lem:landmarks} on one of $P_a,P_b$ will be enough, regardless of the intersection of $Q_{u,v}$ with previously deleted paths.
We conclude that in the query algorithm it is enough to check at most two clusters (instead of $k=O(\log n)$). These two clusters can be found efficiently in $O(1)$ time in $\mathcal{T}$ using an lca query \cite{HT84}.
We conclude:
\begin{corollary}\label{cor:LowHopPlanar}
	Consider an $n$ point planar graph $G=(V,E,w)$, then for every $\eps\in(0,\frac12)$, then the shortest path metric $(V,d_G)$ admits a path reporting $2$-hop $(1+\eps)$-spanner with $O(n\cdot\eps^{-1}\cdot\log^2 n)$ edges and $O(\eps^{-1})$ query time.
\end{corollary}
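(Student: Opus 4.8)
The plan is to follow the construction of \Cref{thm:LowHopSpannerSPD} almost verbatim, replacing the generic \SPD of a minor-free graph by the \SPD of a planar graph coming from cycle separators, and then exploiting one extra structural feature of this particular decomposition. Concretely, I would fix a planar drawing of $G$, a root vertex $r$, and the shortest path tree $T_r$. Each recursive step removes a fundamental cycle $C_{a,b}=P_a\cup P_b\cup\{a,b\}$, where $P_a,P_b$ are the $r$-to-$a$ and $r$-to-$b$ paths in $T_r$; the crucial point (see \cite{Kle02,LT79}) is that $P_a$ and $P_b$ are shortest paths in the \emph{whole} graph $G$, not merely in the induced subgraph currently being processed. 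As in \Cref{thm:LowHopSpannerSPD}, for every cluster $C$ of the decomposition tree $\mathcal{T}$ and its removed path $P(C)$ I invoke \Cref{lem:landmarks} to obtain landmark sets $\{L_v\}$ of size $O(\eps^{-1})$, build the auxiliary tree $T_{P(C)}$ (a copy $v_i$ of $v$ for each landmark, attached by an edge of the appropriate weight), and construct a path reporting $2$-hop $1$-spanner $H_{P(C)}$ of $T_{P(C)}$ via \Cref{thm:LowHopSpannerTreeKLMS22}. The one change relative to the minor-free case is that the landmarks, and the auxiliary-edge weights $\{v_i,x_i\}$, are defined with respect to $d_G$ rather than the induced-graph distance — which is legitimate precisely because $P(C)$ is a shortest path of $G$. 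The final spanner $H$ is the union of all the $H_{P(C)}$'s (with copies of a vertex identified), and since planar graphs have \SPDdepth $k=O(\log n)$ with each vertex contributing $O(\eps^{-1}\log n)$ edges per level, the edge count is $O(n\cdot\eps^{-1}\cdot\log^{2}n)$, matching \Cref{cor:LowHopMinor}.

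Next I would describe the improved query. Given $u,v$, let $C$ be the cluster of $\mathcal{T}$ at which $u$ and $v$ are first separated — equivalently their $\lca$ in $\mathcal{T}$ — and let $C_{a,b}=P_a\cup P_b$ be its cycle separator; $C$ is found in $O(1)$ time from an $\lca$ query \cite{HT84}. Unlike the minor-free case, I claim it suffices to consult only $P_a$ and $P_b$ of this single cluster: the shortest $u$-$v$ path $Q_{u,v}$ in $G$ is confined to the region of $C$ (since $u,v\in C$), yet it joins the interior and exterior of $C_{a,b}$, so it must intersect $P_a$ or $P_b$. On whichever of the two it hits, \Cref{lem:landmarks} (now with $d_G$-distances) yields consecutive landmarks $x_i\in L_v$, $y_j\in L_u$ with $d_G(v,x_i)+d_P(x_i,y_j)+d_G(y_j,u)\le(1+\eps)\cdot d_G(u,v)$, which in the auxiliary tree reads $d_{T_{P(C)}}(v_i,u_j)\le(1+\eps)\cdot d_G(u,v)$. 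Querying $H_{P(C)}$ for the unique $2$-hop $T_{P(C)}$-path between $v_i$ and $u_j$ takes $O(1)$ time and produces a $2$-hop path in $H$ of that weight; iterating over the $O(\eps^{-1})$ consecutive landmark pairs along $P_a$ and $P_b$ and keeping the lightest gives query time $O(\eps^{-1})$ and stretch $1+\eps$.

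The main obstacle is the correctness of restricting the query to the two separator paths of the single cluster $C$. One has to argue carefully that although $Q_{u,v}$ may cross paths removed earlier in the decomposition, it stays inside the region of $C$, and hence — being a connected curve from one side of the separator $C_{a,b}$ to the other within that region — must meet $P_a\cup P_b$; and that the guarantee of \Cref{lem:landmarks}, which is phrased for a shortest path intersected by $Q_{u,v}$, applies here with the \emph{global} metric $d_G$, which is exactly where the property ``all removed paths in the planar \SPD are shortest paths of $G$'' is used. Everything else — the edge count, the $O(1)$ behaviour of \Cref{thm:LowHopSpannerTreeKLMS22} and \Cref{thm:2hopPath}, and the $\lca$ structure on $\mathcal{T}$ — is routine bookkeeping carried over from \Cref{thm:LowHopSpannerSPD}.
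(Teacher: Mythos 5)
Your construction is the paper's construction: the cycle-separator \SPD built from a single shortest path tree $T_r$, landmarks and auxiliary-edge weights taken w.r.t.\ $d_G$ (legitimate because the removed paths are shortest paths of $G$ itself), and a query that consults only the two paths $P_a,P_b$ of the separator at the lca cluster of $u,v$; the edge count and the $O(\eps^{-1})$ query time are accounted for correctly. The gap is in how you propose to justify the one claim everything hinges on, namely that $Q_{u,v}$ meets $P_a\cup P_b$. You base this on the assertion that $Q_{u,v}$ ``stays inside the region of $C$'' because $u,v\in C$, and you flag proving this as the main obstacle. That assertion is false in general: the shortest $u$--$v$ path in $G$ may leave the region of $C$ through vertices lying on separators removed at earlier levels, wander outside, and re-enter; nothing in the metric prevents this, and the paper never claims it. If you set out to prove confinement, that step fails.

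Fortunately confinement is also unnecessary, and this is precisely the paper's point. The fundamental cycle $C_{a,b}=P_a\cup P_b\cup\{a,b\}$, where $P_a,P_b$ are the \emph{full} root-to-$a$ and root-to-$b$ paths of $T_r$, is a closed curve in the fixed planar embedding, and $u$ and $v$ lie on opposite sides of it. Hence \emph{any} $u$--$v$ path of the embedded graph --- wherever it travels, including through vertices deleted at earlier levels --- can cross this curve only at a vertex of the cycle, i.e.\ at a vertex of $P_a\cup P_b$. Since $P_a,P_b$ are shortest paths of the whole graph and the landmarks and auxiliary weights are defined w.r.t.\ $d_G$, \Cref{lem:landmarks} applies directly, ``regardless of the intersection of $Q_{u,v}$ with previously deleted paths,'' which is the paper's phrasing. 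Note that for this to work the landmarks at cluster $C$ must be taken on the full paths $P_a,P_b$, not merely on their portions inside $C$; otherwise the intersection point of $Q_{u,v}$ could fall on an earlier-removed prefix and the hypothesis of \Cref{lem:landmarks} would not be met. With your confinement claim replaced by this separation argument, your proof coincides with the paper's.
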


\subsubsection{Treewidth graphs}\label{subsub:PathReportingTreewidth}
We begin with recalling the definition of treewidth:
\begin{definition}[Tree decomposition]
	A tree decomposition of $G(V,E)$, denoted by $\mathcal{T}$, is a tree satisfying the following conditions: 
	\begin{enumerate} [noitemsep,nolistsep]
		\item Each node $i \in V(\mathcal{T})$ corresponds to a subset of vertices $X_i$ of $V$  (called bags), such that $\cup_{i \in V(\mathcal{T})}X_i = V$.
		\item For each edge $uv \in E$, there is a bag $X_i$ containing both $u,v$.
		\item For a vertex $v \in V$, all the bags containing $v$ make up a subtree of $\mathcal{T}$.  
	\end{enumerate}
	The \emph{width} of a tree decomposition $\mathcal{T}$ is $\max_{ i \in V(\mathcal{T})}|X_i| -1$ and the treewidth of $G$, is the minimum width among all possible tree decompositions of $G$.
\end{definition}

Consider a graph $G=(V,E,w)$ with tree decomposition $\mathcal{T}$ of width $k$.
Similarly to the case of planar and minor free graphs, the rooted LSO for treewidth graph is also constructed using separators.
Initially the number of bags is at most $n$. There is a bag $X_1$ such that if we remove it, the number of bags in each connected component of $\mathcal{T}\setminus X_1$ will be at most $\frac n2$. Thus by the removal of at most $k+1$ vertices we obtain that each connected component has a tree decomposition of width $k$, and at most $\frac n2$ bags. Continue this process for $\log n$ steps and we deleted every vertex.
In particular, an $n$-vertex graph $G$ with treewidth $k$ has \SPDdepth $O(k\cdot\log n)$, where all the removed shortest path are simply single vertices.

The rooted LSO is constructed w.r.t. this ``\SPD''. That is, initially we delete a bag $X_1$ and have $|X_1|\le k+1$ orderings w.r.t. distance from each vertex $x\in X_1$ (in the spanner we respectively added edges from all the points to $X_1$).
Then we continue recursively with each connected component in the tree decomposition $\mathcal{T}\setminus X_1$.
An important detail is that when we delete a bag $X_i$ at some step, we create ordering for every vertex initially in $X_i$, even if it was deleted at some previous step.\footnote{If a vertex was deleted at a previous step it implies that we already created an ordering for it, and the new ordering is redundant. However, for the sake of simplicity we will create a new one.}
Note that we look on components of the tree decomposition and not actual graph. 
Note also that each vertex could correspond to only a single connected component of the remaining tree decomposition (if it belongs to two components, it necessary belong to a deleted bag, a thus not correspond to anything).
This process also defines a laminar partition (exactly like in the previous cases).
Given a pair of points $x,y$, consider the last cluster $C$ in this laminar partition containing both $x,y$ ($C$ can be found in $O(1)$ time in $\mathcal{T}$ using an lca query \cite{HT84}).
Denote by $X_C$ the bag deleted from $C$.
As $C$ is the last cluster containing both $x,y$, every path between $x$ and $y$ in $G$ goes through a vertex (originally) in  $X_C$. In particular $d_G(x,y)=\min_{z\in X_C}d_G(x,z)+d_G(z,y)$.
We conclude that in order to answer the distance query $x,y$, it is enough to check only the orderings created for the cluster $C$ w.r.t. the bag $X_C$. Thus we check at most $k+1$ orderings.
\begin{corollary}\label{cor:LowHopTreewidth}
	Consider an $n$ point graph $G=(V,E,w)$ with treewidth $k$, then the shortest path metric $(V,d_G)$ admits a path reporting $2$-hop $1$-spanner with $O(n\cdot k\cdot\log n)$ edges and $O(k)$ query time.
\end{corollary}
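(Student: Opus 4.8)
The plan is to combine the rooted-LSO-to-spanner meta theorem with the extra structural information available in a tree decomposition, and then exploit that structure a second time to shave the query time from $O(k\log n)$ down to $O(k)$. Concretely, fix a width-$k$ tree decomposition $\mathcal{T}$ of $G$ and build a balanced recursive separator scheme on $\mathcal{T}$: since $\mathcal{T}$ has at most $n$ bags, there is a single bag $X_1$ whose removal leaves every connected component of $\mathcal{T}\setminus X_1$ with at most $n/2$ bags. Deleting the (at most $k+1$) vertices of $X_1$ from $G$, recursing on each component, and iterating for $O(\log n)$ levels yields a laminar family of vertex clusters; each cluster $C$ is associated with the bag $X_C$ removed when $C$ was processed, and $|X_C|\le k+1$.

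Next I would build the spanner exactly as in the rooted-LSO construction underlying \Cref{thm:MetaRootedLSOPathReportingSpanner}: for every cluster $C$ and every vertex $x$ that originally belongs to the bag $X_C$, add to $H$ the edges $\{x,u\}$ for all $u\in C$, each with weight $d_G(x,u)$ (equivalently, one ordering of $C$ by distance from $x$, rooted at $x$). A vertex lies in at most $\log n$ clusters, one per level, and in each of them receives at most $|X_C|\le k+1$ incident edges, so $|E(H)|=O(n\cdot k\cdot\log n)$. For the path-reporting data structure I would additionally store, for each vertex, the list of clusters it belongs to together with the corresponding bags, and preprocess the laminar tree (equivalently $\mathcal{T}$) for constant-time lowest-common-ancestor queries using \cite{HT84}.

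For a query pair $x,y$ I would first locate, in $O(1)$ time via the LCA structure, the last (deepest) cluster $C$ of the laminar family containing both $x$ and $y$. The key point is that $X_C$ separates $x$ from $y$ in $G$: at the next recursive level $x$ and $y$ either fall into different components of $\mathcal{T}\setminus X_C$ — in which case the defining property of tree decompositions forces every $x$–$y$ path of $G$ through a vertex of $X_C$ — or one of them lies in $X_C$. In both cases $d_G(x,y)=\min_{z\in X_C}\bigl(d_G(x,z)+d_G(z,y)\bigr)$, and for each of the at most $k+1$ candidates $z$ both edges $\{x,z\}$ and $\{z,y\}$ were inserted when $C$ was processed; returning the $2$-hop path $(x,z,y)$ of minimum total weight gives a $1$-spanner path in $O(k)$ time. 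Since only the single cluster $C$ is examined, stretch correctness reduces entirely to the separation claim above.

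The main obstacle is the correctness of that separation step, i.e. justifying that when $C$ is the deepest laminar cluster containing $x$ and $y$, the associated bag $X_C$ is a genuine vertex separator of $x$ and $y$ in the whole graph $G$ (not merely in $G[C]$, and despite the fact that a vertex may be re-processed in several bags along the recursion). This requires carefully tracking how the recursion on the bags of $\mathcal{T}$ induces the laminar partition of $V$, and invoking the property that the bags containing any fixed vertex form a subtree of $\mathcal{T}$, to argue that the subtree split produced by removing $X_C$ translates into the desired $G$-separation. Everything else — the edge count, the $O(\log n)$ recursion depth, the constant-time LCA lookup, and the $2$-hop path reporting — is routine.
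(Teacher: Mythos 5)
Your construction is essentially identical to the paper's: the same balanced recursive removal of bags from the tree decomposition, the same rooted-LSO-style edges from every vertex of a cluster $C$ to the vertices \emph{originally} in the deleted bag $X_C$, and the same LCA-based query that inspects only the at most $k+1$ candidates of the deepest common cluster. The separation step you flag does hold (and the paper likewise only asserts it): since the bags containing any fixed vertex form a subtree of $\mathcal{T}$, distinct components of $\mathcal{T}_C\setminus b_C$ lie in distinct components of $\mathcal{T}\setminus b_C$, so removing the original bag $X_C$ separates $x$ from $y$ in all of $G$, giving $d_G(x,y)=\min_{z\in X_C}\bigl(d_G(x,z)+d_G(z,y)\bigr)$.
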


\subsubsection{General Metrics}\label{subsub:PathReportingGeneral}
A distance oracle is a data structure that given a distance query $x,y$, returns an estimate $\est(x,y)$ of the distance between $x$ and $y$. 
The distance oracle has stretch $t$ if for every pair $x,y$ it holds that $d_X(x,y)\le \est(x,y)\le t\cdot d_X(x,y)$.
Thorup and Zwick \cite{TZ05} in a celebrated result, for every integer $k$, constructed a distance oracle with stretch $2k-1$, space $O(n^{1+\frac1k}\cdot k)$ and query time $O(k)$.
Later, Chechik \cite{Chechik15} improved their result and constructed a distance oracle with stretch $2k-1$, space $O(n^{1+\frac1k})$, and query time $O(1)$.

The path reporting $2$-hop spanner for general graphs constructed in \Cref{cor:LowHopSpannerFromLSO} has sub-optimal stretch, and very large query time.
Our first observation is that \cite{TZ05} implicitly defines a $2$-hop (metric) spanner with $O(n^{1+\frac1k}\cdot k)$ edges. In particular, we can use \cite{TZ05} query algorithm to find such a path in $O(k)$ time.
Note that this spanner is optimal up to second order terms (assuming the Erd\H{o}s girth conjecture \cite{Erdos64}).
See \Cref{thm:TZ_GeneralPathReportingSpanner} for details.

In the interesting regime of stretch $O(\log n)$, the resulting $2$-hop spanner has query time $O(\log n)$. One might hope to reduce it to a smaller query time.
Indeed, Kahalon \etal \cite{KLMS22} got $O(1)$ query time using Ramsey trees \cite{BLMN05b,MN07,ACEFN20,Fil21}. Here one have a collection of trees $\mathcal{T}$, and each point $x$ has a home tree $T_x\in{\cal T}$ which approximates the shortest path tree of $x$ (see also \Cref{thm:Ramsey}).
In particular, to find a $2$-hop path from $x$ to $y$ it is enough to look for it in $T_x$.
While this method is indeed very efficient, the down side is that the stretch is inherently larger.
The constant in the stretch is important, as it governs the exponent in the number of edges.

We bridge between the two extremes, and improve the $O(k)$ query time resulting from \cite{TZ05} exponentially to $O(\eps^{-1}\cdot\log 2k)$, while inquiring additional factor of $2$ to the stretch (getting stretch $(4k-2)(1+\eps)$).
Our approach is to use Awerbuch, Peleg \cite{AP90} sparse cover (similar to padded partition cover \Cref{def:PaddedPartitionCover}), which is a collection of clusters of radius $(2k-1)\Delta$, such that every ball of radius $\Delta$ is contained in one of the clusters. To find the distance from $x$ to $y$, we will be looking for a minimal diameter cluster containing them both, but due to the padding property, the number of clusters we will need to check will be very small.

\begin{theorem}[\cite{TZ05} based $2$ hop path reporting spanner]\label{thm:TZ_GeneralPathReportingSpanner}
	Consider an $n$ point metric space and integer parameter $k\ge1$. Then $X$ admits a path reporting $2$-hop $(2k-1)$-spanner with $O(n^{1+\frac{1}{k}}\cdot k)$ edges and $O(k)$ query time.
\end{theorem}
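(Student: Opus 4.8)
The plan is to unpack the Thorup--Zwick distance oracle \cite{TZ05} and observe that its query procedure always certifies the distance between $x$ and $y$ via a \emph{single intermediate vertex}, which is exactly a $2$-hop path in the spanner formed by the oracle's stored edges. Recall the TZ construction: one picks a nested sequence of sample sets $X = A_0 \supseteq A_1 \supseteq \dots \supseteq A_{k-1} \supseteq A_k = \emptyset$, where $A_{i+1}$ keeps each point of $A_i$ independently with probability $n^{-1/k}$. For each $v$ one defines the \emph{pivots} $p_i(v) \in A_i$ as the closest point of $A_i$ to $v$, and the \emph{bunch} $B(v) = \bigcup_{i=0}^{k-1} \{\, u \in A_i \setminus A_{i+1} : d_X(v,u) < d_X(v, A_{i+1}) \,\}$. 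The expected size of each bunch is $O(k \cdot n^{1/k})$, so storing an edge $\{v,u\}$ of weight $d_X(v,u)$ for every $v$ and every $u \in B(v) \cup \{p_0(v),\dots,p_{k-1}(v)\}$ yields a (metric) spanner $H$ with $O(n^{1+1/k}\cdot k)$ edges in expectation (and one can make this worst case by standard arguments, or simply state the bound in expectation as \cite{TZ05} does).

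The first key step is to recall why this edge set is a $2$-hop $(2k-1)$-spanner. The TZ query algorithm, given $x,y$, maintains $w \leftarrow x$, $i \leftarrow 0$, and repeatedly: if $w \notin B(y)$ then increment $i$, swap the roles of $x$ and $y$, and set $w \leftarrow p_i(x)$ (the new $x$); it terminates when $w \in B(y)$, returning $d_X(w,x) + d_X(w,y)$. The standard analysis shows the loop runs at most $k$ iterations (giving $O(k)$ query time) and that the returned value is at most $(2k-1)\,d_X(x,y)$ and at least $d_X(x,y)$. The crucial structural observation — which is exactly what makes this a $2$-hop path reporting spanner — is that the terminating witness $w$ satisfies $w = p_i(x')$ for one of the two endpoints $x'\in\{x,y\}$ (so $\{x',w\}\in H$ since pivots are stored) and $w \in B(y')$ for the other endpoint $y'$ (so $\{w,y'\}\in H$ since bunch neighbors are stored). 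Hence $(x', w, y')$ is a genuine $2$-hop path in $H$ of weight $d_X(x',w)+d_X(w,y') \le (2k-1)\,d_X(x,y)$. The only hiccup is the base case $i=0$ where $w=x$ itself if $x\in B(y)$; then the "path" is the single edge $\{x,y\}\in H$, which is trivially a $\le 2$-hop path of weight $d_X(x,y)$.

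The second step is to package this into a path reporting data structure. We store the oracle exactly as in \cite{TZ05} (each vertex $v$ keeps a hash table of $d_X(v,u)$ for $u\in B(v)$, plus its $k$ pivots with distances), which is $O(k)$ query time for testing membership $w\in B(y)$ and retrieving distances. Running the TZ query loop identifies the witness $w$ and the two endpoints $x',y'$ in $O(k)$ time; we then output the path $(x',w,y')$ together with its weight $d_X(x',w)+d_X(w,y')$, both already computed during the query. Correctness of the hop bound and stretch is immediate from the structural observation above, and the edge count is $O(n^{1+1/k}\cdot k)$ (in expectation, which can be made worst-case). I expect no serious obstacle here: the entire content is the observation that the TZ witness vertex $w$ is simultaneously stored by both $x'$ (as a pivot) and $y'$ (as a bunch element), so the certificate $d_X(w,x')+d_X(w,y')$ that the oracle already returns \emph{is} a realized $2$-hop path in the stored graph rather than merely an abstract upper bound. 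The mild care needed is in handling the degenerate first iteration and in stating whether the sparsity bound is worst-case or expected, matching the convention used elsewhere in the paper.
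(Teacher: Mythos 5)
Your proposal is correct and follows essentially the same route as the paper's proof: store edges to all bunch members and pivots, run the Thorup--Zwick query loop, and observe that the terminating witness $w$ is a stored neighbor of both endpoints, so the oracle's certificate is itself a $2$-hop path in the spanner. The paper handles the expected-size issue the same way you suggest (repeat the construction until the desired sparsity holds), so no further work is needed.
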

\begin{proof}[Proof sketch.]	
	We begin with a brief overview of the distance oracle of Thorup and Zwick \cite{TZ05}.
	Recall that in the distance oracle construction of \cite{TZ05}, a sequence of sets $X=A_0\supseteq A_1\supseteq\dots\supseteq A_t=\emptyset$ is sampled randomly, by choosing each element of $A_{i-1}$ to be in $A_i$ with probability $n^{-\frac1k}$. 
	For each $v\in X$ and $0\le i\le k-1$, define the $i$-th pivot $p_i(v)$ as the nearest point to $v$ in $A_i$, and $B_i(v)=\{w\in A_i~:~ d(v,w) < d(v,A_{i+1})\}$.\footnote{We assume that $d(v,\emptyset)=\infty$ (this is needed as $A_k=\emptyset$).} The {\em bunch} of $v$ is defined as $B(v)=\bigcup_{0\le i\le k-1}B_i(v)$. The distance oracle stores in a hash table, for each $v\in X$, all the distances to points in $B(v)$, and also to the pivot $p_i(v)$ vertices.
	The expect size of $B(v)$ is $O(k\cdot n^{\frac1k})$, and thus  $O(k\cdot n^{1+\frac1k})$ overall.
	We will construct a spanner $H$ that will accompany the distance oracle: for every point $x$, add edges to all the points $y$ in its bunch, and to it's pivots. The overall number of edges in the spanner is $O(k\cdot n^{\frac1k})$. \footnote{Even though apriori the guarantee on the size is only in expectation, we can repeat the construction several times until we obtain the desired size.}
	
	\begin{algorithm}[t]
		\caption{\cite{TZ05} query: $\texttt{Dist}(v,u,i)$}\label{alg:TZquery}
		\DontPrintSemicolon
		
		$w\leftarrow v$\;
		\While{$w\notin B(u)$}{
			$i\leftarrow i+1$\;
			$(u,v)\leftarrow (v,u)$\;
			$w\leftarrow p_i(v)$\;
		}
		\Return $d_X(w,u)+d_X(w,v)$\;
	\end{algorithm}

	The query algorithm for the distance between $u,v$ in \cite{TZ05} is described in \Cref{alg:TZquery}, and takes $O(k)$ time.
	Crucially, the algorithm returns $d_X(w,u)+d_X(w,v)$, where $w$ is a pivot, or in the bunch of both $u$ and $v$.
	In particular, both edges $\{w,u\},\{w,v\}$ belong to our spanner $H$.
	It follows that we can run \Cref{alg:TZquery}, and return the path consisting of the two return edges. The $2k-1$ stretch is guaranteed by \cite{TZ05}.
	We refer to \cite{TZ05} for proofs and analysis, the theorem now follows.
\end{proof}

\begin{theorem}[Sparse cover based hop path reporting spanner]\label{thm:GeneralSmallQueryPathReportingSpanner}
	Consider an $n$ point metric space with aspect ratio $\Phi$. Then for every integer parameter $k\ge1$ and $\eps\in(0,\frac12)$, $X$ admits a path reporting $2$-hop $(1+\eps)(4k-2)$-spanner with $O(n^{1+\frac{1}{k}}\cdot\eps^{-1}\cdot k\cdot\log\Phi)$ edges with $O(\frac{\log2k}{\eps})$ query time.
\end{theorem}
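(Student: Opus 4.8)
The plan is to combine Awerbuch--Peleg sparse covers \cite{AP90} taken at all geometric scales with a ``star per cluster'' spanner. Normalize the metric so that distances lie in $[1,\Phi]$, set $r_j=(1+\eps)^j$ for $j=0,1,\dots,J$ with $J=O(\eps^{-1}\log\Phi)$ (so $r_J\ge\Phi$), and for each $j$ let $\mathcal{C}_j$ be a sparse cover with parameters $(O(k\cdot n^{1/k}),\,2k-1,\,(2k-1)r_j)$: every cluster $C\in\mathcal{C}_j$ has diameter at most $(2k-1)r_j$ and a center vertex $z_C\in C$; every vertex lies in at most $O(k\cdot n^{1/k})$ clusters of $\mathcal{C}_j$; and every vertex $x$ is covered by a unique cluster $D_j(x)\in\mathcal{C}_j$ with $B(x,r_j)\subseteq D_j(x)$. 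The spanner is $H=\bigcup_{j}\bigcup_{C\in\mathcal{C}_j}\{\{z_C,v\}:v\in C\}$, with $\{z_C,v\}$ carrying weight $d_X(z_C,v)$. Its size is $\sum_{j}\sum_{v}\bigl|\{C\in\mathcal{C}_j:v\in C\}\bigr|=O(J\cdot k\cdot n^{1+1/k})=O(n^{1+1/k}\cdot\eps^{-1}\cdot k\cdot\log\Phi)$, as required. For path reporting we store at each vertex $x$: for every $j$ the triple $\bigl(\mathrm{id}(D_j(x)),\,z_{D_j(x)},\,d_X(x,z_{D_j(x)})\bigr)$; and a hash table mapping each cluster id $C$ with $x\in C$ to the pair $\bigl(z_C,\,d_X(x,z_C)\bigr)$. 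This occupies space proportional to $|H|$.

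For the stretch, fix $x,y$ and let $j^{*}$ be the least index with $r_{j^{*}}\ge d_X(x,y)$, so $d_X(x,y)>r_{j^{*}}/(1+\eps)$. Since $d_X(x,y)\le r_{j^{*}}$ we get $y\in B(x,r_{j^{*}})\subseteq D_{j^{*}}(x)$; writing $z$ for the center of $D_{j^{*}}(x)$, both $\{x,z\}$ and $\{z,y\}$ belong to $H$ (because $x,y\in D_{j^{*}}(x)$, a cluster centered at $z$), so $x\to z\to y$ is a $2$-hop path of weight
\[
d_X(x,z)+d_X(z,y)\le 2\cdot\diam\bigl(D_{j^{*}}(x)\bigr)\le 2(2k-1)\,r_{j^{*}}<(1+\eps)(4k-2)\cdot d_X(x,y).
\]
More generally, \emph{any} scale $j$ with $j\le j^{*}$ and $y\in D_j(x)$ yields such a path, and the weight $d_X(z_{D_j(x)},y)$ of its second edge is recovered from $y$'s hash table, keyed by $\mathrm{id}(D_j(x))$, in $O(1)$ time. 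So the query reduces to locating one such scale quickly.

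The query-time bound is where the padding property does the work. The predicate ``$y\in D_j(x)$'' is monotone in $j$: if $j\ge j^{*}$ then $d_X(x,y)\le r_{j^{*}}\le r_j$, hence $y\in B(x,r_j)\subseteq D_j(x)$. Let $\hat\jmath:=\min\{j:y\in D_j(x)\}$, so $\hat\jmath\le j^{*}$ and every index in $[\hat\jmath,j^{*}]$ is a valid answer. Conversely, whenever $y\in D_j(x)$ we have $d_X(x,y)\le\diam(D_j(x))\le(2k-1)r_j$, hence $r_j>d_X(x,y)/(2k-1)>r_{j^{*}}/\bigl((1+\eps)(2k-1)\bigr)$, i.e.\ $j\ge j^{*}-S$ with $S:=\bigl\lceil\log_{1+\eps}\!\bigl((1+\eps)(2k-1)\bigr)\bigr\rceil=O(\eps^{-1}\log 2k)$. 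Thus the (non-empty) set of valid answers $[\hat\jmath,j^{*}]$ lies inside a window of at most $S+1$ consecutive scales. The query algorithm then (i) performs a coarse localization of $j^{*}$ using the covers at the geometrically sparser sub-scales $(2k-1)^{0},(2k-1)^{1},\dots$ (each rounded to one of the $r_j$'s), where the same monotonicity argument pins $j^{*}$ to within one coarse step, i.e.\ to a window of $O(\eps^{-1}\log 2k)$ fine scales; and (ii) probes those $O(\eps^{-1}\log 2k)$ fine scales, each in $O(1)$ via the hash tables, returning the $2$-hop path at the smallest probed scale $j$ with $y\in D_j(x)$. This gives $O(\eps^{-1}\log 2k)$ query time.

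I expect the delicate point to be exactly this query bound --- keeping it at $O(\eps^{-1}\log 2k)$ rather than paying an extra $O(\log J)$ for a plain binary search over all scales, and thereby obtaining the promised exponential improvement over the $O(k)$ query time of \cite{TZ05}. The mechanism that makes it possible is the padding guarantee $B(x,r_j)\subseteq D_j(x)$: it forces any scale at which $x$ and $y$ share a cluster to be within $O(\eps^{-1}\log 2k)$ of the optimal scale $j^{*}$, so the only remaining task is to reach that narrow window without scanning the $\Theta(\eps^{-1}\log\Phi)$ scales, which the coarse sub-scale layer (equivalently, a crude $O(k)$-approximate distance read directly off the covers) supplies. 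Two routine-but-necessary accompaniments are: that each per-scale probe is constant time --- handled by the hashed cluster-membership tables, which additionally return the weight of the edge $\{z_{D_j(x)},y\}$ needed to report the path --- and that the stretch is exactly $(1+\eps)(4k-2)$, which is what dictates $(1+\eps)$-geometric (rather than dyadic) scales and hence the $\eps^{-1}$ factor in the edge count.
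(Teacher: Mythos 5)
Your construction, edge count, and stretch analysis coincide with the paper's: sparse covers of \cite{AP90} at all $(1+\eps)$-geometric scales, star edges to cluster centers, and the padding property $B(x,r_j)\subseteq D_j(x)$ giving a $2$-hop path of weight at most $2(2k-1)r_{j^*}<(1+\eps)(4k-2)\,d_X(x,y)$ at the critical scale $j^*$. The per-probe $O(1)$ cost via hashed cluster membership is also fine, and your observation that any scale where membership holds lies within $O(\eps^{-1}\log 2k)$ of $j^*$ is correct (the minor claim that \emph{every} index in $[\hat\jmath,j^*]$ is valid is unjustified, since membership need not hold at intermediate scales, but your algorithm never relies on it).

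The genuine gap is the ``coarse localization'' step, which is exactly where the claimed query time must be earned. You propose to pin $j^*$ using a second layer of scales at powers of $2k-1$, but there are $\Theta(\log_{2k-1}\Phi)$ such coarse scales, and locating the transition coarse scale (the smallest one at which $y\in D_{j_c}(x)$) still requires either a linear scan, costing $\Omega(\log\Phi/\log k)$ probes, or a binary search, costing $\Omega(\log\log\Phi)$ probes; neither is bounded by $O(\eps^{-1}\log 2k)$, which must be independent of $\Phi$ and $n$ (take $k,\eps$ constant and $\Phi$ polynomial in $n$). Your parenthetical ``a crude $O(k)$-approximate distance read directly off the covers'' begs the same question: reading such an estimate off the covers requires already knowing roughly the right scale. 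The paper closes this gap with an additional ingredient: it stores, alongside the cover-based spanner, a $(2k-1)$-stretch distance oracle with $O(1)$ query time and $O(n^{1+\frac1k})$ space (Chechik's improvement of \cite{TZ05}), queries it once to get $\mathrm{est}(u,v)\in[d_X(u,v),(2k-1)d_X(u,v)]$, and then scans only the $O(\eps^{-1}\log 2k)$ scales between $\lfloor\log_{1+\eps}\frac{\mathrm{est}(u,v)}{2k-1}\rfloor$ and $\lceil\log_{1+\eps}\mathrm{est}(u,v)\rceil+1$. Without some such constant-time crude distance estimate, your query-time bound does not follow; with it, the rest of your argument goes through essentially as in the paper.
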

\begin{proof}
	We will use the following classic sparse covers due to Awerbuch and Peleg \cite{AP90}.
	\begin{lemma}[Sparse cover, \cite{AP90}]\label{lem:SparseCoverAP90}
		Consider an $n$-point metric space $(X,d_X)$, and a parameter $\Delta>0$. Then $X$ admits a collection of clusters $\mathcal{C}$ of $X$ points such that:
		\begin{enumerate}
			\item Every point $x\in X$ belongs to at most $O(k\cdot n^{\frac1k})$ clusters in $\mathcal{C}$.
			\item For every cluster $C\in\mathcal{C}$, there is some point $x_C$, called the center of $C$, such that $C\subseteq B_X(x_c,(2k-1)\cdot\Delta)$. In particular, $C$ has diameter at most $(4k-2)\Delta$.
			\item For every $x$ there come cluster $C\in\mathcal{C}$ such that $B_X(x,\Delta)\subseteq C$.
		\end{enumerate}
	\end{lemma}
	Assume w.l.o.g. that the minimum distance in $X$ is $1$ (otherwise scale). In particular the maximum distance is $\Phi$. 
	For every integer $i\in[1,\left\lceil \log_{1+\eps}\Phi\right\rceil]$, we apply \Cref{lem:SparseCoverAP90} with parameter $\Delta_i=(1+\eps)^i$ to obtain a collection of clusters (sparse cover) $\mathcal{C}^{(i)}$.
	For every cluster $C\in \mathcal{C}^{(i)}$ let $x_C$ be the cluster center.
	We create a spanner $H$ by adding an edge from every point $y$ to the cluster centers $x_C$, for all the clusters $y$ belongs to in all the distance scales. 
	Note that we added a total of $O(n\cdot\log_{1+\eps}\Phi\cdot n^{\frac{1}{k}}\cdot k)=O(n^{1+\frac{1}{k}}\cdot\eps^{-1}\cdot k\cdot\log\Phi)$
	edges.
	In our data structure we will simply store the spanner, all the created clusters with their centers.
	In addition for every point $x$ and scale $i$, we will store the name of the cluster $C$ of the $i$'th scale where $B_X(x,\Delta_i)\subseteq C$.
	Finally, we will also create a $2k-1$-distance oracle with $O(1)$ query time and $O(n^{1+\frac1k})$ space \cite{Chechik15}.
	The total space is $O(n^{1+\frac{1}{k}}\cdot\eps^{-1}\cdot k\cdot\log\Phi)$.
	
	Given a query $u,v$ we first use the distance oracle to get an estimation ${\rm est}(u,v)$ of $d_X(u,v)$. Next, we go over the indices $\left\lfloor \log_{1+\epsilon}\frac{{\rm est}(u,v)}{2k-1}\right\rfloor ,\dots,\left\lceil \log_{1+\epsilon}{\rm est}(u,v)\right\rceil+1$.
	For every index $i$, let $C^{(i)}\in{\cal C}^{(i)}$ be the cluster containing $B_X(u,(1+\eps)^i)$. If $C^{(i)}$ also contains $v$, then  $C^{(i)}$ will provide us with the $2$-hop path: $(u,x_{C^{(i)}},v)$. We will return the $2$-hop path of minimum weight (among the observed paths). Clearly, the query time is $O(\log_{1+\eps}2k)=O(\frac{\log2k}{\eps})$.
	For correctness of estimate, let $i$ be the unique index such that $(1+\eps)^{i-1}\le d_X(u,v)< (1+\eps)^{i}$. Then 
	$(1+\eps)^{i-1}\le {\rm est}(u,v)\le (2k-1)\cdot (1+\eps)^{i}$. In particular, $\left\lfloor \log_{1+\epsilon}\frac{{\rm est}(u,v)}{2k-1}\right\rfloor\le i\le\left\lceil \log_{1+\epsilon}{\rm est}(u,v)\right\rceil+1$, thus $i$ is one of the indices we checked. 
	As $v\in B_X(u,(1+\eps)^i)\subseteq C^{(i)}$, one of the $2$-hop paths we checked been of weight $d_X(u,x_{C^{(i)}})+d_X(v,x_{C^{(i)}})\le 2\cdot (2k-1)\cdot (1+\eps)^{i}\le(1+\eps)(4k-2)\cdot d_X(u,v)$.
	The theorem now follows.	
\end{proof}

\subsection{Fault tolerant spanners}

In this section we present an extremely simple construction of fault tolerant spanners based on LSO's, which obtains the optimal linear dependence on the fault parameter $f$.
We prove meta theorem Theorems \ref{thm:MetaLSOFaultTolerantSpanner}, \ref{thm:MetaTriangleLSOFaultTolerantSpanner}, \ref{thm:MetaTriangleLSOFaultTolerantSpanner}, from which we derive all our results in \Cref{cor:FaultTolerantSpannerFromLSO}. See \Cref{tab:FaultTolerantSpanners} for a summary.

We begin with an $f$-fault tolerant $1$-spanner construction for the path graph. The construction is very similar to the classic \Cref{thm:2hopPath}.

\begin{theorem}\label{thm:2hopfaultTolerantPath}
	The path graph $P_n=(v_1,v_2,\dots,v_n)$ admits a $2$-hop $f$-fault tolerant $1$-spanner $H$ with $O(n\cdot f\cdot \log n)$ edges. 
\end{theorem}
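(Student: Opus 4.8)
The plan is to imitate the classical Alon--Schieber construction of \Cref{thm:2hopPath}, replacing the single ``middle'' vertex by a block of $f+1$ consecutive middle vertices. Concretely, given a contiguous block $B=(v_\ell,\dots,v_r)$ of the path: if $|B|\le f+1$, turn $B$ into a clique; otherwise let $M\subseteq B$ be the $f+1$ vertices whose indices are closest to the middle of $B$ (a contiguous sub-block), add the edge $\{x,y\}$ for every $x\in B$ and every $y\in M$ (and record, for each vertex, the edges incident to it added here, in the spirit of the sets $E_i$ of \Cref{thm:2hopPath}), delete $M$, and recurse on the two remaining sub-blocks $B_L$ (the indices below $\min M$) and $B_R$ (the indices above $\max M$). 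The spanner $H$ is the union of all edges ever added, the recursion being started on the whole path $(v_1,\dots,v_n)$.

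For the edge count, a recursive call on a block of $m$ vertices adds at most $(f+1)\cdot m$ edges and spawns sub-blocks of total size $m-(f+1)$, each of size less than $m/2$; hence the recursion has depth $O(\log n)$, and since at any fixed depth the active blocks are pairwise disjoint sub-intervals of $[n]$, the edges added at that depth number at most $(f+1)\cdot n$. Summing over the $O(\log n)$ depths yields $|E(H)|=O(n\cdot f\cdot\log n)$, and $|E_i|=O(f\log n)$ for every $i$.

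For the fault-tolerance guarantee, fix a failure set $F$ with $|F|\le f$ and two surviving vertices $v_a,v_b$ with $a<b$. Trace the recursion: as long as $v_a$ and $v_b$ lie in the same child block we descend into it; since block sizes strictly decrease, we eventually reach a block $B$ containing both which is either a base case --- so $\{v_a,v_b\}\in H$, a monotone $1$-hop path of weight $b-a$ --- or a block in which the middle set $M$ \emph{separates} $a$ from $b$, meaning $v_a,v_b$ are not both in $B_L$ and not both in $B_R$. In the latter case, if $a\in M$ or $b\in M$ then, since every vertex of $B$ was joined to every vertex of $M$, we have $\{v_a,v_b\}\in H$, again a $1$-hop path of weight $b-a$; otherwise $a\in B_L$ and $b\in B_R$, so every index in $M$ lies strictly between $a$ and $b$, and since $|M|=f+1>|F|$ some $u\in M$ survives. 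Then $v_a\to u\to v_b$, both of whose edges were added while processing $B$, is a monotone $2$-hop path in $H[V\setminus F]$ of weight exactly $b-a=d_{P_n}(v_a,v_b)$. In every case $H[V\setminus F]$ contains a $2$-hop path of the required weight, so $H$ is a $2$-hop $f$-fault-tolerant $1$-spanner, which proves the theorem.

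The construction is elementary and I do not foresee a genuine obstacle; the only subtlety is to take \emph{exactly} $f+1$ middle vertices and to observe that in the genuinely two-hop case the surviving intermediate vertex automatically lies between $v_a$ and $v_b$ --- which is precisely what makes the reported path monotone, and hence of weight equal to $d_{P_n}(v_a,v_b)$ rather than merely at most it.
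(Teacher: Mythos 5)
Your proposal is correct and is essentially the paper's own construction: connect every vertex of the current block to a contiguous block of $f+1$ middle vertices and recurse, with the surviving middle vertex supplying the monotone $2$-hop path (the paper's formal induction keeps the middle block inside the two recursed halves rather than deleting it, but this is exactly the variant sketched in its technical overview and changes nothing in the counting or the fault-tolerance argument). No gaps to report.
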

\begin{proof}
	For simplicity we will assume that $n=2^\delta$ for some integer $\delta$. Later we can simply construct the spanner for a power of $2$ in $[n,2n)$, and remove the redundant edges. For simplicity we will also assume that $f$ is even (otherwise construct an $f+1$ fault tolerant spanner).
	Assume by induction that for every $\delta'<\delta$, a $2$-hop $f$-fault tolerant $1$-spanner with $(2^{\delta'}\cdot\delta'+1)\cdot (f+1)$ edges
	exist for the path graph $P_{n'}=P_{2^{\delta'}}$.
	The base case with $2$ vertices and $\delta'=1$ clearly holds.
	Next consider $n=2^\delta$. Add edges from all the vertices to the set of $f+1$ vertices $v_{2^{\delta-1}-\frac{f}{2}},v_{2^{\delta-1}-\frac{f}{2}+1},\dots,v_{2^{\delta-1}},\dots,v_{2^{\delta-1}+\frac{f}{2}}$.
	Denote this edges by $H_0$.
	Next, we use the induction hypothesis on the graph induced by $(v_1,v_2,\dots,v_{2^{\delta-1}})$, and the graph induced by $(v_{2^{\delta-1}+1},\dots,v_{2^{\delta}})$. Denote the returned spanners by $H_1$ and $H_2$ respectively.
	 By the induction hypothesis, the total number of edges is bounded by 
	\begin{align*}
		(2^{\delta}-1)\cdot(f+1)+2\cdot\left(2^{\delta-1}(\delta-1)+1\right)\cdot(f+1) & =\left(2^{\delta}-1+2^{\delta}(\delta-1)+2\right)\cdot(f+1)\\
		& =\left(2^{\delta}\cdot\delta+1\right)\cdot(f+1)
	\end{align*}
	Consider a faulty set $F$ of at most $f$ vertices, and let $v_i,v_j\notin F$ where $i<j$.
	If $i\le 2^{\delta-1}<j$, there there is some index $l\in\left\{ 2^{\delta-1}-\frac{f}{2},\dots,2^{\delta-1},\dots,2^{\delta-1}+\frac{f}{2}\right\} $
	such that $i\le l\le j$ and $v_l\notin F$.
	In particular, $\{v_i,v_l\},\{v_l,v_j\}$ are edges in $H$, and thus $H\setminus F$ contains a $2$ hop shortest path  between $v_i$ and $v_j$.
	Otherwise, if either $j\le2^{\delta-1}$ or $i>2^{\delta-1}$, then by the induction hypotesis one of the spanners $H_1\setminus F$ or $H_2\setminus F$ contains a shortest path between $v_i$ and $v_j$.
\end{proof}

We now turn to our meta theorems.
\begin{theorem}[Meta theorem: LSO to $2$ hop fault tolerant spanner]\label{thm:MetaLSOFaultTolerantSpanner}
	Suppose that a metric space $(X,d_X)$ admits a $(\tau,\rho)$-LSO for $\rho\in(0,\frac12)$. Then for every integer $f\ge1$, $X$ admits a $2$-hop $f$-fault tolerant $1+2\rho$-spanner with $O(n\cdot f\cdot \tau\cdot \log n)$ edges.
\end{theorem}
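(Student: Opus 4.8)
The plan is to mimic exactly the structure of \Cref{thm:MetaLSOPathReportingSpanner}, replacing the $2$-hop $1$-spanner for the path from \Cref{thm:2hopPath} by the $2$-hop $f$-fault tolerant $1$-spanner for the path from \Cref{thm:2hopfaultTolerantPath}. Concretely, let $\Sigma$ be the promised $(\tau,\rho)$-LSO. For each ordering $\sigma\in\Sigma$ form the (unweighted) path graph $P_\sigma$ whose vertex set is $X$ with the vertices laid out along the path in the order $\sigma$, and apply \Cref{thm:2hopfaultTolerantPath} to obtain a $2$-hop $f$-fault tolerant $1$-spanner $H_\sigma$ of $P_\sigma$ with $O(n\cdot f\cdot\log n)$ edges; give each such edge $\{x,y\}$ the weight $d_X(x,y)$. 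The output spanner is $H=\bigcup_{\sigma\in\Sigma}H_\sigma$, which has $O(n\cdot f\cdot\tau\cdot\log n)$ edges since $|\Sigma|\le\tau$.

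Next I would verify the fault tolerance and stretch. Fix any failure set $F\subseteq X$ with $|F|\le f$, and any pair $x,y\notin F$. Let $\sigma\in\Sigma$ be the ordering satisfying the LSO property for $x,y$; w.l.o.g. $x\preceq_\sigma y$, and the points strictly between $x$ and $y$ in $\sigma$ split into $I_x\subseteq B_X(x,\rho\cdot d_X(x,y))$ and $I_y\subseteq B_X(y,\rho\cdot d_X(x,y))$. In the path graph $P_\sigma$, $x$ and $y$ are two non-failed vertices; since $H_\sigma$ is a $2$-hop $f$-fault tolerant $1$-spanner of $P_\sigma$ and $|F|\le f$, the graph $H_\sigma\setminus F$ contains a monotone $2$-hop path $x\to z\to y$ with $z\in X\setminus F$, $x\preceq_\sigma z\preceq_\sigma y$, and $\{x,z\},\{z,y\}\in H_\sigma\setminus F$. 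Now $z$ lies in $I_x\cup I_y$ (or equals $x$ or $y$), so either $d_X(x,z)\le\rho\cdot d_X(x,y)$ or $d_X(z,y)\le\rho\cdot d_X(x,y)$. In the first case the triangle inequality gives $d_X(z,y)\le d_X(z,x)+d_X(x,y)\le(1+\rho)d_X(x,y)$, so the path $x\to z\to y$ has weight $d_X(x,z)+d_X(z,y)\le(1+2\rho)d_X(x,y)$; the second case is symmetric. Hence $d_{H\setminus F}(x,y)\le(1+2\rho)\cdot d_X(x,y)$ via a $2$-hop path, as required.

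I do not expect any genuine obstacle here — the statement is a direct corollary once \Cref{thm:2hopfaultTolerantPath} is in hand; the only point requiring the slightest care is the clean splitting of the candidate midpoint $z$ into the two balls $I_x,I_y$, which is exactly the content of \Cref{def:LSOclassic}, and the observation that the reported path is monotone in $\sigma$ (guaranteed by \Cref{thm:2hopfaultTolerantPath}) so that $z$ really is one of the points \emph{between} $x$ and $y$. I would write the proof compactly in the form below.

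\begin{proof}
	Let $\Sigma$ be a $(\tau,\rho)$-LSO of $(X,d_X)$. For every $\sigma\in\Sigma$, form the unweighted path graph $P_\sigma$ with vertex set $X$, where the vertices appear along the path in the order $\sigma$, and let $H_\sigma$ be the $2$-hop $f$-fault tolerant $1$-spanner of $P_\sigma$ from \Cref{thm:2hopfaultTolerantPath}, with $O(n\cdot f\cdot\log n)$ edges; assign each edge $\{x,y\}\in H_\sigma$ the weight $w(\{x,y\})=d_X(x,y)$. Set $H=\bigcup_{\sigma\in\Sigma}H_\sigma$; as $|\Sigma|\le\tau$, $H$ has $O(n\cdot f\cdot\tau\cdot\log n)$ edges.

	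Fix a failure set $F\subseteq X$ with $|F|\le f$, and a pair $x,y\notin F$. Let $\sigma\in\Sigma$ be an ordering satisfying \Cref{def:LSOclassic} for $x,y$; w.l.o.g. $x\preceq_\sigma y$, and the points between $x$ and $y$ w.r.t. $\sigma$ partition into $I_x\subseteq B_X(x,\rho\cdot d_X(x,y))$ and $I_y\subseteq B_X(y,\rho\cdot d_X(x,y))$. Since $|F|\le f$ and $x,y\notin F$, \Cref{thm:2hopfaultTolerantPath} guarantees that $H_\sigma\setminus F$ contains a $2$-hop monotone path from $x$ to $y$, i.e. a vertex $z\in X\setminus F$ with $x\preceq_\sigma z\preceq_\sigma y$ and $\{x,z\},\{z,y\}\in H_\sigma\setminus F\subseteq H\setminus F$. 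Then $z\in\{x,y\}\cup I_x\cup I_y$, so (say) $d_X(x,z)\le\rho\cdot d_X(x,y)$, and by the triangle inequality $d_X(z,y)\le d_X(z,x)+d_X(x,y)\le(1+\rho)\cdot d_X(x,y)$ (the case $z\in I_y$ is symmetric). Hence
	\[
	d_{H\setminus F}(x,y)\le d_X(x,z)+d_X(z,y)\le(1+2\rho)\cdot d_X(x,y)~,
	\]
	via a $2$-hop path in $H\setminus F$. As $F$ and $x,y$ were arbitrary, $H$ is a $2$-hop $f$-fault tolerant $(1+2\rho)$-spanner of $X$.
\end{proof}
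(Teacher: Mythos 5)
Your proposal is correct and follows essentially the same route as the paper's own proof: build the $2$-hop $f$-fault tolerant $1$-spanner of \Cref{thm:2hopfaultTolerantPath} on the path graph induced by each ordering, take the union with metric edge weights, and for a surviving pair $x,y$ use the satisfying ordering to get a surviving midpoint $z$ between them, splitting into $I_x$ or $I_y$ and applying the triangle inequality to get stretch $1+2\rho$. Your extra remark that the midpoint is monotone (hence lies between $x$ and $y$ in $\sigma$) is exactly the property the paper uses implicitly, so nothing is missing.
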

\begin{proof}
	Let $\Sigma$ be the $(\tau,\rho)$-LSO as assumed by the theorem.
	For every ordering $\sigma\in \Sigma$, we form an unweighted path graph $P_{\sigma}$ with vertex set $X$ and the order of vertices along the path is $\sigma$. We construct a $2$-hop $f$-fault tolerant $1$-spanner $H_\sigma$ for $P^{\sigma}_{n}$ using \Cref{thm:2hopfaultTolerantPath}. 
	Our spanner will be $H=\cup_{\sigma\in\Sigma}H_\sigma$ (where each edge will get the weight equal to the actual metric distance). 
	Clearly the number of edges is $O(n\cdot f\cdot \tau\cdot \log n)$. To argue stretch and $2$-hop, consider a set $F$ of at most $f$ faulty points, and let $x,y\notin F$. Let $\sigma \in \Sigma$ be the ordering that satisfies the LSO property for $x$ and $y$. 
	There must be $z\notin F$ such that $x\preceq_{\sigma}z\preceq_{\sigma}y$ and $\{x,z\},\{z,y\}\in H_\sigma$. 
	Suppose w.l.o.g. that $d_X(x,z)\le\rho\cdot d_X(x,y)$. By triangle inequality, $d_X(y,z)\le(1+\rho)\cdot d_X(x,y)$.
	We conclude that 
	\[
	d_{H\setminus F}(x,y)\le d_X(x,z) + d_X(z,y) = \rho\cdot d_X(x,y)+(1+\rho)\cdot d_X(x,y)=(1+2\rho)\cdot d_{X}(x,y)~.
	\]
\end{proof}

Using the exact same proof, we can also conclude the following meta theorem:
\begin{theorem}[Meta theorem: Triangle LSO  to $2$ hop fault tolerant spanner]\label{thm:MetaTriangleLSOFaultTolerantSpanner}
	Suppose that a metric space $(X,d_X)$ admits a $(\tau,\rho)$-triangle LSO for $\rho\ge 1$. Then for every integer $f\ge1$, $X$ admits a $2$-hop $f$-fault tolerant $2\rho$-spanner with $O(n\cdot f\cdot \tau\cdot \log n)$ edges.
\end{theorem}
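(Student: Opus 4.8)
The statement to prove is \Cref{thm:MetaTriangleLSOFaultTolerantSpanner}: a $(\tau,\rho)$-triangle LSO with $\rho\ge 1$ yields a $2$-hop $f$-fault tolerant $2\rho$-spanner with $O(n\cdot f\cdot \tau\cdot\log n)$ edges. The excerpt explicitly says ``Using the exact same proof'' as \Cref{thm:MetaLSOFaultTolerantSpanner}, so the plan is to mimic that argument verbatim, changing only the stretch bookkeeping.

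The plan is as follows. Let $\Sigma$ be the promised $(\tau,\rho)$-triangle LSO. For each ordering $\sigma\in\Sigma$, form the unweighted path graph $P_\sigma$ whose vertices are $X$ laid out in the order $\sigma$, and invoke \Cref{thm:2hopfaultTolerantPath} to obtain a $2$-hop $f$-fault tolerant $1$-spanner $H_\sigma$ of $P_\sigma$ with $O(n\cdot f\cdot\log n)$ edges. Define the global spanner $H=\cup_{\sigma\in\Sigma}H_\sigma$, where every edge $\{a,b\}$ carries weight $d_X(a,b)$ (so $H$ is genuinely a metric spanner). Summing over the $\tau$ orderings gives the edge bound $O(n\cdot f\cdot\tau\cdot\log n)$ immediately.

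For the stretch and hop guarantee, fix a faulty set $F\subseteq X$ with $|F|\le f$ and a pair $x,y\notin F$. Let $\sigma\in\Sigma$ be the ordering that satisfies the triangle-LSO property for $x,y$, say $x\prec_\sigma y$ w.l.o.g. Since $H_\sigma$ is a $2$-hop $f$-fault tolerant $1$-spanner of $P_\sigma$ and $x,y\notin F$, there is a vertex $z\notin F$ with $x\preceq_\sigma z\preceq_\sigma y$ and $\{x,z\},\{z,y\}\in H_\sigma\subseteq H$, so these two edges survive in $H\setminus F$. Now the triangle-LSO property (\Cref{def:TriangleLSO}), applied to the chain $x\preceq_\sigma x\preceq_\sigma z\preceq_\sigma y$ and to $x\preceq_\sigma z\preceq_\sigma y\preceq_\sigma y$, gives $d_X(x,z)\le\rho\cdot d_X(x,y)$ and $d_X(z,y)\le\rho\cdot d_X(x,y)$. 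Hence
\[
d_{H\setminus F}(x,y)\le d_X(x,z)+d_X(z,y)\le 2\rho\cdot d_X(x,y),
\]
via a $2$-hop path, which is exactly the claimed bound.

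There is essentially no obstacle here: the only subtlety is the stretch computation, and unlike the plain-LSO case (where one endpoint distance is bounded by $\rho\,d_X(x,y)$ and the other by the triangle inequality through $(1+\rho)\,d_X(x,y)$), here the triangle-LSO property directly bounds \emph{both} $d_X(x,z)$ and $d_X(z,y)$ by $\rho\cdot d_X(x,y)$, so no extra triangle-inequality step and no $+1$ term appears. I would simply note this one-line difference and otherwise refer to the proof of \Cref{thm:MetaLSOFaultTolerantSpanner}.
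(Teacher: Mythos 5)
Your proposal is correct and follows exactly the route the paper takes: it reuses the construction and fault-tolerance argument from \Cref{thm:MetaLSOFaultTolerantSpanner} verbatim, with the only change being that the triangle-LSO property bounds both $d_X(x,z)$ and $d_X(z,y)$ by $\rho\cdot d_X(x,y)$, yielding the $2\rho$ stretch. Nothing is missing.
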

The only small difference in the proof is in the stretch analysis. Given $z\in X$ such that $x\preceq_{\sigma}z\preceq_{\sigma}y$ and $\{x,z\},\{z,y\}\in H_\sigma$, we have that $d_X(x,z),d_X(y,z)\le\rho\cdot d_X(x,y)$, and hence $d_{H}(x,y)\le d_X(x,z) + d_X(z,y) = 2\rho\cdot d_X(x,y)$.

Next we deal with rooted LSO's. 
Consider the simple star graph (which actually has $(1,1)$-rooted LSO). Then every $1$-fault tolerant spanner with less ${n \choose 2}$ edges will have stretch $\ge 2$. Indeed, if the edge $\{x,y\}$ is missing, and we deleted the middle vertex, then every remaining path will have stretch at least $2$.
This is the reason that in the meta theorem bellow, the stretch parameter is $2\rho$.%
\begin{theorem}[Meta theorem: rooted LSO  to $2$ hop fault tolerant spanner]\label{thm:MetaRootedLSOFaultTolerantSpanner}
	Suppose that a metric space $(X,d_X)$ admits a $(\tau,\rho)$-rooted LSO for $\rho\ge 1$. Then for every integer $f\ge1$, $X$ admits a $2$-hop $f$-fault tolerant $2\rho$-spanner with $O(n\cdot f\cdot \tau)$ edges.
\end{theorem}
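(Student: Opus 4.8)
The plan is to mimic the construction behind \Cref{thm:MetaRootedLSOPathReportingSpanner}, adding just enough redundancy that the $f$ faults cannot destroy the single ``hub'' used there. Let $\Sigma$ be a $(\tau,\rho)$-rooted LSO. For every ordering $\sigma\in\Sigma$ let $Z_\sigma$ be the set of the first $\min\{f+1,|V(\sigma)|\}$ points of $\sigma$ with respect to $\preceq_\sigma$ (equivalently, the $\le f+1$ points closest to $x_\sigma$). The spanner $H$ is obtained by adding, for every $\sigma\in\Sigma$, every $y\in V(\sigma)$ and every $z\in Z_\sigma$, the edge $\{y,z\}$ with weight $d_X(y,z)$ (self-loops ignored). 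Since each point of $X$ lies in at most $\tau$ orderings, $\sum_{\sigma}|V(\sigma)|\le n\tau$, so the number of edges is at most $(f+1)\sum_\sigma|V(\sigma)|=O(n\cdot f\cdot\tau)$.

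For the stretch/$2$-hop guarantee I would fix a fault set $F$ with $|F|\le f$ and two points $x,y\notin F$, use the rooted-LSO property to pick $\sigma\in\Sigma$ with $x,y\in V(\sigma)$ and $d_X(x,x_\sigma)+d_X(x_\sigma,y)\le\rho\cdot d_X(x,y)$, and relabel so that $x\preceq_\sigma y$. Since $\preceq_\sigma$ orders points by distance from $x_\sigma$, this gives $d_X(x_\sigma,x)\le d_X(x_\sigma,y)$, hence $2d_X(x_\sigma,x)\le d_X(x_\sigma,x)+d_X(x_\sigma,y)\le\rho\cdot d_X(x,y)$ --- this is the observation that pins the final stretch at $2\rho$ rather than $3\rho$. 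Then split into two cases. If $x\in Z_\sigma$, the edge $\{x,y\}$ lies in $H$ and survives in $H\setminus F$, giving a $1$-hop path of weight $d_X(x,y)\le 2\rho\cdot d_X(x,y)$ (using $\rho\ge1$). Otherwise $x\notin Z_\sigma$, which forces $|V(\sigma)|\ge f+1$ (so $|Z_\sigma|=f+1>|F|$) and forces every $z\in Z_\sigma$ to satisfy $z\prec_\sigma x$; pick any $z\in Z_\sigma\setminus F$, so that $\{x,z\},\{z,y\}\in H\setminus F$ and $x$–$z$–$y$ is a valid $2$-hop path. Using $d_X(x_\sigma,z)\le d_X(x_\sigma,x)$ (from $z\prec_\sigma x$) and the triangle inequality twice,
\[
d_X(x,z)\le d_X(x,x_\sigma)+d_X(x_\sigma,z)\le 2\,d_X(x,x_\sigma)\le\rho\cdot d_X(x,y),
\]
\[
d_X(z,y)\le d_X(z,x_\sigma)+d_X(x_\sigma,y)\le d_X(x,x_\sigma)+d_X(x_\sigma,y)\le\rho\cdot d_X(x,y),
\]
so the $2$-hop path has weight at most $2\rho\cdot d_X(x,y)$, as required.

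I do not expect a genuine obstacle: the construction is one line and the analysis is elementary triangle inequality. The one place that needs care is obtaining stretch exactly $2\rho$ --- the factor-$2$ loss relative to \Cref{thm:MetaRootedLSOPathReportingSpanner} is unavoidable, as the star example in the remark preceding the theorem shows --- and this rests precisely on the ``w.l.o.g.\ $x\preceq_\sigma y$'' bound $2d_X(x_\sigma,x)\le\rho\cdot d_X(x,y)$ together with the fact that the surviving backup $z$ can always be chosen to precede $x$ in $\sigma$. The only other bookkeeping point is the degenerate regime $|V(\sigma)|\le f+1$, where $Z_\sigma=V(\sigma)$ already contains $x$ and we are in the first case.
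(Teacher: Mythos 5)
Your construction is identical to the paper's (connect every point in each ordering to that ordering's first $f+1$ points), and your stretch analysis is essentially the paper's argument as well: the paper picks the first surviving point $z$ among the first $f+1$ and uses $d_X(z,x_\sigma)\le\min\{d_X(x,x_\sigma),d_X(y,x_\sigma)\}$ to get $d_X(x,z)+d_X(z,y)\le 2(d_X(x,x_\sigma)+d_X(x_\sigma,y))\le 2\rho\cdot d_X(x,y)$, which is the same triangle-inequality bookkeeping as your w.l.o.g.\ plus case split. The proposal is correct.
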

\begin{proof}
	Let $\Sigma$ be a $(\tau,\rho)$-rooted-LSO as assumed by the theorem.
	For every ordering $\sigma\in \Sigma$, we construct a spanner $H_\sigma$ by adding edges from the first $f+1$ points in $\sigma$, to all the other points. The final spanner $H$ is simply the union $H=\cup_{\sigma\in\Sigma}H_\sigma$.
	Note that as each vertex participates in at most $\tau$ orderings, the number of edges is bounded by $O(n\cdot f\cdot \tau)$. 
	
	Consider a set $F$ of at most $f$ faulty points, and let $x,y\notin F$. Let $\sigma \in \Sigma$ be the ordering that satisfies the ordering property for $x$ and $y$. 
	Let $z\notin F$ be the first not deleted point among the first $f+1$  points in the order $\sigma$ (note that only $f$ points belong to $F$). 
	Recall that for the first point $x_\sigma$ in $\sigma$ it holds that  $d_X(u,x_\sigma)+d_X(x_\sigma,v)\le\rho\cdot d_X(u,v)$. 
	Furthermore, the points are ordered w.r.t. distance from $x_\sigma$ and thus $d_X(z,x_\sigma)\le \min\{d_X(u,x_\sigma),d_X(x_\sigma,v)\}$. As $\{u,z\},\{z,v\}$ both belong to the spanner, we conclude
	\begin{align*}
		d_{H\setminus F}(x,y) & \le d_{X}(x,z)+d_{X}(z,y)\\
		& \le d_{X}(x,x_{\sigma})+d_{X}(x_{\sigma},z)+d_{X}(z,x_{\sigma})+d_{X}(x_{\sigma},y)\\
		& \le2\cdot\left(d_{X}(x,x_{\sigma})+d_{X}(x_{\sigma},y)\right)\le2\rho\cdot d_{X}(x,y)~.
	\end{align*}
\end{proof}

Using the various existing LSO's (summarized in \Cref{tab:LSO}), and the meta  Theorems \ref{thm:MetaLSOFaultTolerantSpanner}, \ref{thm:MetaTriangleLSOFaultTolerantSpanner} and \ref{thm:MetaRootedLSOFaultTolerantSpanner}, we conclude:
\begin{corollary}\label{cor:FaultTolerantSpannerFromLSO}
	Consider an $n$ point metric space $(X,d_X)$. Then:
	\begin{enumerate}
		\item If $X$ has doubling dimension $d$, then for $\eps\in(0,\frac12)$ $X$ admits a $2$-hop $f$-fault tolerant $(1+\eps)$-spanner with $\eps^{-O(d)}\cdot f\cdot n\cdot\log n$ edges.
		\item If $X$ has doubling dimension $d$, then for $t=\Omega(1)$, $X$ admits a $2$-hop  $f$-fault tolerant $t$-spanner with $2^{-O(\nicefrac dt)}\cdot f\cdot \tilde{O}(n)$ edges.
		\item For every integer $k$ and $\eps\in(0,\frac12)$, $X$ admits a $2$-hop  $f$-fault tolerant $4k+\eps$-spanner with $\tilde{O}(n^{1+\frac1k}\cdot f\cdot\eps^{-1})$ edges.				
		\item If $X$ is taken from $d$ dimensional Euclidean space, then for $\eps\in(0,\frac12)$, $X$ admits a $2$-hop  $f$-fault tolerant $(1+\eps)$-spanner with $O_d(\eps^{-d})\cdot \log\frac1\eps\cdot f\cdot n\log n$ edges.
		
		\item If $X$ are taken from $d$ dimensional Euclidean space, then for $t\in[8,\sqrt{d}]$ and $\eps\in(0,\frac12)$,  $X$ admits a $2$-hop $f$-fault tolerant $(1+\eps)t$-spanner with $e^{\frac{2d}{t^{2}}\cdot(1+\frac{8}{t^{2}})}\cdot\tilde{O}(\frac{d^{1.5}}{\eps\cdot t})\cdot f \cdot n\cdot \log n$ edges.
		
		\item If $X$ are taken from $d$ dimensional $\ell_p$ space for $p\in[1,2]$, then for $t\in[10,d^{\frac1p}]$, $X$ admits a $2$-hop $f$-fault tolerant $t$-spanner with $e^{O(\frac{d}{t^{p}})}\cdot\tilde{O}(d)\cdot f \cdot n\cdot \log n$ edges.
		
		\item If $X$ are taken from $d$ dimensional $\ell_p$ space for $p\in[2,\infty]$, then  $X$ admits a $2$-hop $f$-fault tolerant $2\cdot d^{1-\frac1p}$-spanner with $\tilde{O}(d)\cdot f \cdot n\cdot \log n$ edges.
		
		\item If $(X,d_X)$ is the shortest path metric of a graph with treewidth $k$ (in particular a tree for $k=1$), then  $X$ admits a $2$-hop $f$-fault tolerant $2$-spanner with $O(n\cdot k \cdot f\cdot\log n)$ edges.
		\item If $(X,d_X)$ is the shortest path metric of a fixed minor free graph (in particular planar graph), then for $\eps\in(0,\frac12)$,  $X$ admits a $2$-hop $f$-fault tolerant $2+\eps$-spanner with $O(\frac{n}{\eps}\cdot f\cdot \log^2 n)$ edges.
	\end{enumerate}
\end{corollary}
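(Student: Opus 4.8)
The plan is to obtain each of the nine items mechanically, by pairing an appropriate locality-sensitive ordering from \Cref{tab:LSO} with one of the three meta theorems just established, and then carrying the parameters through. The only subtlety is rescaling: Meta \Cref{thm:MetaLSOFaultTolerantSpanner} turns a $(\tau,\rho)$-LSO into a $2$-hop $f$-fault tolerant $(1+2\rho)$-spanner, while Meta \Cref{thm:MetaTriangleLSOFaultTolerantSpanner} resp.\ Meta \Cref{thm:MetaRootedLSOFaultTolerantSpanner} turn a $(\tau,\rho)$-triangle- resp.\ rooted-LSO into a $2$-hop $f$-fault tolerant $2\rho$-spanner; in all cases the edge count is $O(n\cdot f\cdot\tau\cdot\log n)$ (only $O(n\cdot f\cdot\tau)$ in the rooted case). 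So for a target stretch I run the underlying LSO construction at (roughly) half that value, and for a target additive/multiplicative $\eps$ I run it with $\eps$ replaced by a constant fraction of itself.

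First I would dispatch the classic-LSO items. For item~1 take the $(\eps^{-O(d)},\eps)$-LSO for doubling metrics (\cite{FL22}), for item~4 the $(O_d(\eps^{-d})\log\tfrac1\eps,\eps)$-LSO of \cite{CHJ20}; run each with $\eps/2$ in place of $\eps$ and feed it into Meta \Cref{thm:MetaLSOFaultTolerantSpanner}, yielding stretch $1+\eps$ and the stated counts $\eps^{-O(d)}\cdot f\cdot n\log n$ resp.\ $O_d(\eps^{-d})\log\tfrac1\eps\cdot f\cdot n\log n$. For the triangle-LSO items I would use Meta \Cref{thm:MetaTriangleLSOFaultTolerantSpanner}: item~2 plugs \Cref{thm:LSOdoublingLargeStretch} at stretch parameter $t/2$ (admissible for $t=\Omega(1)$), whose $2^{O(d/t)}\cdot d\cdot\log^2 t$ orderings give the claimed edge bound; item~3 plugs the $\bigl(\tilde O(n^{1/k}\cdot\tfrac{k^2}{\eps}\log\tfrac k\eps),\,2k+\eps\bigr)$-triangle LSO of \cite{FL22} run with $\eps/2$, giving stretch $4k+\eps$ and $\tilde O(n^{1+1/k}\cdot f\cdot\eps^{-1})$ edges; item~5 plugs \Cref{thm:LSOEuclideanLargeStretch} with its parameter $\delta$ set to $\eps$ and its stretch parameter set to $t/2$, which is admissible exactly when $t/2\in[4,2\sqrt d]$ and so in particular for $t\in[8,\sqrt d]$, the number of orderings $\tilde O(\tfrac{d^{1.5}}{\eps t})\cdot e^{\frac{2d}{t^{2}}(1+\frac{8}{t^{2}})}$ producing the stated edge count; item~6 plugs \Cref{thm:LSOLp12} (valid for stretch in $[5,d^{1/p}]$) at half the target stretch, which forces $t\in[10,d^{1/p}]$, giving $e^{O(d/t^{p})}\cdot\tilde O(d)\cdot f\cdot n\log n$ edges; and item~7 plugs the $(O(d\log d),d^{1-1/p})$-triangle LSO of \Cref{cor:LpLSO} directly, giving stretch $2d^{1-1/p}$ and $\tilde O(d)\cdot f\cdot n\log n$ edges. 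Finally items~8 and~9 use Meta \Cref{thm:MetaRootedLSOFaultTolerantSpanner} with the $(k\log n,1)$-rooted LSO for treewidth-$k$ graphs and the $(O(\eps^{-1}\log^2 n),1+\eps)$-rooted LSO for fixed-minor-free (and planar) graphs (the latter run with $\eps/2$), giving stretch $2$ with $O(nkf\log n)$ edges resp.\ stretch $2+\eps$ with $O(n\eps^{-1}f\log^2 n)$ edges.

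I do not expect any genuine obstacle: the whole argument is parameter bookkeeping on top of \Cref{thm:2hopfaultTolerantPath} and the three meta theorems. The one place that needs a moment of care is that the stretch-doubling in Meta \Cref{thm:MetaTriangleLSOFaultTolerantSpanner} forces us to invoke the triangle-LSO constructions at half the target stretch, and it is precisely this halving --- combined with the admissible ranges of \Cref{thm:LSOEuclideanLargeStretch} and \Cref{thm:LSOLp12} --- that pins down the constraints $t\in[8,\sqrt d]$ in item~5 and $t\in[10,d^{1/p}]$ in item~6; one must also remember to replace $\eps$ by a constant fraction of itself in items~1, 3, 4 and~9 so that the final stretch reads off as $1+\eps$, $4k+\eps$, $1+\eps$, $2+\eps$ respectively.
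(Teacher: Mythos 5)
Your proposal is correct and follows exactly the paper's route: the paper likewise obtains all nine items by plugging the LSO constructions from \Cref{tab:LSO} (and Theorems \ref{thm:LSOEuclideanLargeStretch}, \ref{thm:LSOLp12}, \ref{thm:LSOdoublingLargeStretch}, \Cref{cor:LpLSO}) into the meta Theorems \ref{thm:MetaLSOFaultTolerantSpanner}, \ref{thm:MetaTriangleLSOFaultTolerantSpanner} and \ref{thm:MetaRootedLSOFaultTolerantSpanner}, with the same stretch-halving/$\eps$-rescaling bookkeeping you describe.
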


\aidea{If time permits we can look into it, I believe that we can as well get $\log n$ hop diameter with reasonable max degree using the expander approach of \cite{BHO19}. See some comments bellow (though not super relevant to the todo).}

\subsection{Reliable Spanners}\label{subsec:ReliableSpanner}
We begin with a formal definition of reliable spanner.
\begin{definition}[Reliable spanner]\label{def:reliableSpanner}
	A weighted graph $H$ over point set $X$ is a deterministic $\nu$-reliable $t$-spanner
	of a metric space $(X,d_{X})$ if $d_{H}$ dominates 
	\footnote{Metric space $(X,d_H)$ dominates metric space $(X,d_X)$ if  $\forall u,v\in X$, $d_X(u,v)\le d_H(u,v)$.\label{foot:dominating}} 
	$d_{X}$, and for every
	set $B\subseteq X$ of points, called an \emph{attack set}, there is a set $B^{+}\supseteq B$, called a \emph{faulty extension} of $B$, 
	such that:
	\begin{enumerate}
		\item $|B^{+}|\le(1+\nu)|B|$.
		\item For every $x,y\notin B^{+}$, $d_{H[X\setminus B]}(x,y)\le t\cdot d_{X}(x,y)$.
	\end{enumerate}	
	An oblivious $\nu$-reliable $t$-spanner is a distribution $\mathcal{D}$ over dominating graphs $H$, such that for every attack set $B\subseteq X$ and $H\in\supp(\mathcal{D})$,
	there exist a superset $B^{+}$ of $B$ such that, for
	every $x,y\notin B^{+}$, $d_{H[X\setminus B]}(x,y)\le t\cdot d_{X}(x,y)$,
	and  $\mathbb{E}_{H\sim\mathcal{D}}\left[|B^{+}|\right]\le(1+\nu)|B|$. We say that the oblivious spanner $\mathcal{D}$ has $m$ edges if every graph $H\in\supp(\mathcal{D})$ has at most $m$ edges.
\end{definition}

The distribution $\mathcal{D}$ in \Cref{def:reliableSpanner} is called an oblivious $\nu$-reliable $t$-spanner because the adversary is oblivious to the specific spanner produced by the distribution (it may be aware to the distribution itself).

Buchin \etal \cite{BHO19,BHO20} in two consecutive papers (implicitly) proved that given a $(\tau,\eps)$-LSO, one can construct a deterministic/oblivious sparse $\nu$-reliable $(1+O(\eps))$-spanner.
Later, Filtser and Hung \cite{FL22} proved additional meta theorems creating oblivious reliable spanners from triangle (and left sided) LSO's. We will be using the following:
\begin{theorem}[\cite{FL22} Meta theorem: triangle LSO implies reliable spanners]\label{Thm:TriangleLSOtoSpanner} Suppose that a metric space $(X,d_{X})$ admits a $(\tau,\rho)$-Triangle-LSO. Then for every $\nu\in(0,1)$, $X$ admits an \emph{oblivious} $\nu$-reliable, $(2\rho)$-spanner with $n\tau\cdot O\left(\log^{2}n+\nu^{-1}\tau\log n\cdot\log\log n\right)$ edges.
\end{theorem}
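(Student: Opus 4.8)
The plan is to reduce the construction, exactly as in the other LSO applications of this paper, to the one-dimensional case. First I would invoke the known building block: an \emph{oblivious $\nu$-reliable $2$-hop $1$-spanner for the path graph}. Concretely, for the unweighted path $P_n=(v_1,\dots,v_n)$ and any $\nu\in(0,1)$ there is a distribution over subgraphs, each with $n\cdot O(\log^2 n+\nu^{-1}\log n\log\log n)$ edges, such that for every attack set $B$ there is a faulty extension $B^{+}\supseteq B$ with $\mathbb{E}[|B^{+}|]\le(1+\nu)|B|$, and for every $v_i,v_j\notin B^{+}$ the surviving graph contains a vertex $v_\ell$ with $i\le\ell\le j$ for which both $\{v_i,v_\ell\}$ and $\{v_\ell,v_j\}$ are present. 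This is (implicit in) Buchin, Har-Peled and Ol\'ah \cite{BHO19,BHO20}: take the deterministic reliable $2$-hop hierarchy underlying \Cref{thm:2hopPath}, and at each node of the recursion replace the single hub by an oblivious $\nu$-reliable spanner of a (near-)uniform metric on a small window of potential hubs straddling the split point; the $n\log^2 n$ term is the base cost of the hierarchy and the $\nu^{-1}\log n\log\log n$ term is the cost of making each of the $\log n$ levels reliable.

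Given the $(\tau,\rho)$-triangle LSO with orderings $\Sigma$, I would set $\nu'=\nu/\tau$. For each $\sigma\in\Sigma$, form the path graph $P_\sigma$ whose vertex set is $X$ and whose order along the path is $\sigma$, sample independently the reliable $2$-hop spanner $H_\sigma$ above on $P_\sigma$ with parameter $\nu'$, and give each edge $\{u,v\}\in H_\sigma$ the weight $d_X(u,v)$. The output distribution $\mathcal{D}$ produces $H=\bigcup_{\sigma\in\Sigma}H_\sigma$. Since each $H_\sigma$ has $n\cdot O(\log^2 n+(\nu')^{-1}\log n\log\log n)$ edges and there are $\tau$ orderings, $H$ has $n\tau\cdot O(\log^2 n+\nu^{-1}\tau\log n\log\log n)$ edges, matching the claim; domination of $d_X$ is immediate because every edge carries its true metric weight.

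For reliability, fix an attack set $B$. For each $\sigma$ let $B_\sigma^{+}\supseteq B$ be the faulty extension guaranteed by $H_\sigma$, so $\mathbb{E}[|B_\sigma^{+}\setminus B|]\le\nu'|B|$, and put $B^{+}=\bigcup_{\sigma\in\Sigma}B_\sigma^{+}$. Then $\mathbb{E}[|B^{+}\setminus B|]\le\sum_{\sigma\in\Sigma}\mathbb{E}[|B_\sigma^{+}\setminus B|]\le\tau\cdot\nu'|B|=\nu|B|$, hence $\mathbb{E}[|B^{+}|]\le(1+\nu)|B|$. For the stretch, take any $x,y\notin B^{+}$. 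By the triangle-LSO property there is an ordering $\sigma$ with (w.l.o.g.) $x\prec_\sigma y$ such that $d_X(a,b)\le\rho\cdot d_X(x,y)$ for all $a,b$ with $x\preceq_\sigma a\preceq_\sigma b\preceq_\sigma y$. Since $x,y\notin B_\sigma^{+}$, the surviving graph $H_\sigma[X\setminus B]$ contains a vertex $z$ with $x\preceq_\sigma z\preceq_\sigma y$ and both $\{x,z\},\{z,y\}$ present, so
\[
d_{H[X\setminus B]}(x,y)\le d_X(x,z)+d_X(z,y)\le\rho\cdot d_X(x,y)+\rho\cdot d_X(x,y)=2\rho\cdot d_X(x,y).
\]

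The main obstacle is the first step: establishing the oblivious $\nu$-reliable $2$-hop $1$-spanner for the path with the stated edge bound and, crucially, with the surviving intermediate vertex lying \emph{between} the two endpoints — this is what makes the triangle-LSO diameter bound usable and forces a genuinely $2$-hop (rather than multi-hop, telescoping) surviving path. Everything after that is a routine union-and-rescale argument; it is precisely this one-dimensional reliable spanner, with its hierarchical design and the level-by-level bound on $\mathbb{E}[|B^{+}|]$, that carries the technical weight, which is why the statement is taken as a black box from \cite{FL22} (building on \cite{BHO19,BHO20}). A secondary point I would verify carefully is that the factor-$\tau$ inflation from $\nu'=\nu/\tau$ attaches only to the $\nu^{-1}$ contribution and not to the $\log^2 n$ contribution of the per-ordering edge count, which is exactly why the two terms are kept separate when multiplying through by $\tau$.
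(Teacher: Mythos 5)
You should first note a structural point: the paper does not prove this statement at all — it is imported verbatim from \cite{FL22} (``We will be using the following:''), so there is no internal proof to compare against. Your reduction layer is sound and clearly reverse-engineers the right parameter structure: sampling, for each ordering $\sigma$ of the $(\tau,\rho)$-triangle LSO, an oblivious $\frac{\nu}{\tau}$-reliable $2$-hop $1$-spanner of the path $P_\sigma$ with true metric weights, taking $B^{+}=\bigcup_\sigma B_\sigma^{+}$, and using linearity of expectation gives $\mathbb{E}[|B^{+}|]\le(1+\nu)|B|$; the surviving hub $z$ with $x\preceq_\sigma z\preceq_\sigma y$ gives stretch $2\rho$ since both $(x,z)$ and $(z,y)$ are pairs lying between $x$ and $y$ in $\sigma$; and the edge count $n\tau\cdot O(\log^{2}n+\nu^{-1}\tau\log n\log\log n)$ comes out exactly because the $\tau$-inflation from rescaling $\nu$ attaches only to the $\nu^{-1}$ term. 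You are also right that the $2$-hop property (with the hub \emph{between} the endpoints) is indispensable: a monotone multi-hop surviving path would only give stretch $h\cdot\rho$ under the triangle-LSO guarantee.

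The genuine gap is the building block you take for granted. An oblivious $\nu$-reliable \emph{$2$-hop} $1$-spanner of the path with $n\cdot O(\log^{2}n+\nu^{-1}\log n\log\log n)$ edges, in which for every surviving pair some common hub between them survives, is not ``implicit in'' \cite{BHO19,BHO20}: the line constructions there are neither $2$-hop nor do they deliver a shared intermediate vertex in the form your stretch argument needs; producing exactly this object is the technical content of the cited meta-theorem of \cite{FL22}. Your one-sentence sketch (take the hierarchy of \Cref{thm:2hopPath} and replace each split hub by a reliable structure on a window of candidate hubs) does not yet establish it: making each of the $\log n$ levels reliable by naively rescaling to $\nu/\log n$ per level would inflate the $\nu^{-1}$ term by another $\log n$ factor beyond the claimed bound, and one must also argue that an attack that destroys the common hub window at the level separating $x$ and $y$ necessarily forces $x$ or $y$ into $B^{+}$ (otherwise the $2$-hop guarantee fails even though both endpoints survive). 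So as a self-contained argument the proposal is incomplete precisely at the step you yourself flag as load-bearing; as a proof-by-citation it coincides with what the paper does, namely invoking \cite{FL22}.
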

Plugging our newly constructed LSO's (see \Cref{tab:LSO}), we conclude:
\begin{corollary}\label{cor:RelaibleSpanner}
	Consider an $n$ point metric space $(X,d_X)$. Then:
	\begin{enumerate}
		\item If $X$ has doubling dimension $d$, then for parameters  $t\in[\Omega(1),d]$, $\nu\in(0,1)$, $X$ admits an oblivious $\nu$-reliable, $2t$-spanner with $\tilde{O}(n)\cdot\nu^{-1}\cdot2^{O(\nicefrac{d}{t})}$ edges.
		
		\item  If $X$ are taken from $d$ dimensional Euclidean space, then for parameters $t\in[8,\sqrt{d}]$, $\delta\in(0,\frac12)$ and $\nu\in(0,1)$,  $X$ admits a an oblivious $\nu$-reliable, $(1+\delta)\cdot t$-spanner with\\ $\nu^{-1}\cdot e^{\frac{4d}{t^{2}}\cdot(1+\frac{8}{t^{2}})}\cdot\tilde{O}(n\cdot \frac{d^{3}}{\eps^{2}\cdot t^{2}})$
		 edges.
		
		\item If $X$ are taken from $d$ dimensional $\ell_p$ space for $p\in[1,2]$, 
		 then for parameters $t\in[10,d^{\frac1p}]$ and $\nu\in(0,1)$, $X$ admits an oblivious $\nu$-reliable, $t$-spanner with $\nu^{-1}\cdot e^{O(\frac{d}{t^{p}})}\cdot\tilde{O}(n\cdot d^{2})$ edges.
		
		\item If $X$ are taken from $d$ dimensional $\ell_p$ space for $p\in[2,\infty]$, then for every $\nu\in(0,1)$,  $X$ admits an oblivious $\nu$-reliable, $2d^{1-\frac1p}$-spanner with $\nu^{-1}\cdot\tilde{O}\left(n\cdot d^{2}\right)$ edges.		
	\end{enumerate}
\end{corollary}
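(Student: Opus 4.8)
The plan is to combine the newly constructed triangle LSO's of \Cref{sec:LSOForEuclidean,sec:LSOdoubling} with the meta theorem of~\cite{FL22} (\Cref{Thm:TriangleLSOtoSpanner}), which converts any $(\tau,\rho)$-triangle LSO into an oblivious $\nu$-reliable $(2\rho)$-spanner with $n\tau\cdot O\!\left(\log^{2}n+\nu^{-1}\tau\log n\log\log n\right)$ edges. For each of the four families in the corollary I would invoke the appropriate LSO with a stretch parameter equal to one half of the target value (so the factor $2$ lost by the meta theorem is exactly absorbed), and then read off the edge bound.

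Concretely: for \textbf{item 1}, apply \Cref{Thm:TriangleLSOtoSpanner} to the $(2^{O(\nicefrac dt)}\cdot d\cdot\log^{2}t,\,t)$-triangle LSO of \Cref{thm:LSOdoublingLargeStretch} invoked with parameter $t/2$, giving stretch $2t$; since an $n$-point metric has doubling dimension $d\le\log n$, the polynomial-in-$d$ and $\log^{4}t$ factors produced by the $\tau^{2}$ term, together with the $\log n\log\log n$, are all polylogarithmic in $n$, so the edge count collapses to $\tilde O(n)\cdot\nu^{-1}\cdot2^{O(\nicefrac dt)}$. For \textbf{item 4}, apply the meta theorem to the $(O(d\log d),\,d^{1-\frac1p})$-triangle LSO of \Cref{cor:LpLSO}: the stretch becomes $2d^{1-\frac1p}$ and $\tau^{2}=\tilde O(d^{2})$, yielding $\nu^{-1}\cdot\tilde O(n\cdot d^{2})$ edges.

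For \textbf{items 2 and 3} the exponential factor in $\tau$ must be tracked. For \textbf{item 2}, invoking \Cref{thm:LSOEuclideanLargeStretch} with parameter $t/2$ and parameter $\delta$ gives a $\bigl(\tilde O(\tfrac{d^{1.5}}{\delta t})\cdot e^{\frac{2d}{t^{2}}(1+\frac{8}{t^{2}})},\,(1+\delta)t/2\bigr)$-triangle LSO (valid for $t\in[8,\sqrt d]$, since then $t/2\in[4,2\sqrt d]$); the meta theorem produces stretch $(1+\delta)t$, and the dominant $n\tau^{2}\nu^{-1}\log n\log\log n$ term is $\nu^{-1}\cdot e^{\frac{4d}{t^{2}}(1+\frac{8}{t^{2}})}\cdot\tilde O\!\bigl(n\cdot\tfrac{d^{3}}{\delta^{2}t^{2}}\bigr)$, matching the claim. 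For \textbf{item 3}, the identical computation with the $(e^{O(\nicefrac{d}{t^{p}})}\cdot\tilde O(d),\,t)$-triangle LSO of \Cref{thm:LSOLp12} (parameter $t/2$, valid since $t\in[10,d^{\nicefrac1p}]$ forces $t/2\in[5,d^{\nicefrac1p}]$) gives stretch $t$ and $\nu^{-1}\cdot e^{O(\nicefrac{d}{t^{p}})}\cdot\tilde O(n\cdot d^{2})$ edges.

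The only step needing care — and the closest thing to an obstacle — is verifying that both summands $n\tau\log^{2}n$ and $n\tau^{2}\nu^{-1}\log n\log\log n$ in \Cref{Thm:TriangleLSOtoSpanner} collapse into the stated $\tilde O(\cdot)$ expressions. Since the second summand dominates the first up to the $\nu^{-1}$ factor and a $\log n/\log^{2}n$ ratio, it suffices to bound $n\tau^{2}\nu^{-1}$ times polylogarithmic factors; squaring $\tau$ and checking (using $d\le\log n$ in the doubling case, and keeping $d$ explicit in the normed cases) that the polynomial-in-$d$ overhead matches the exponents written in the corollary is exactly the bookkeeping outlined above. No ideas beyond the LSO constructions themselves are required.
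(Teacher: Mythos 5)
Your proposal is correct and follows essentially the same route as the paper, which obtains the corollary precisely by plugging the newly constructed triangle LSO's (with the stretch parameter chosen so the factor $2$ from \Cref{Thm:TriangleLSOtoSpanner} is absorbed) into the meta theorem and reading off the $n\tau^2\nu^{-1}$-dominated edge bound. One minor slip: in item 1, invoking \Cref{thm:LSOdoublingLargeStretch} with parameter $t/2$ gives final stretch $t$ (not $2t$); invoking it with parameter $t$ gives exactly the stated $2t$ — either way the edge bound $\tilde{O}(n)\cdot\nu^{-1}\cdot2^{O(\nicefrac{d}{t})}$ and hence the claim follow.
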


See a summary of all state of the art results on reliable spanner in \Cref{tab:reliableSpanners}.

\subsection{Light Spanners}\label{subsec:light}
Recently, Le and Solomon \cite{LS21Unified2} constructed a general framework for constructing light spanners from spanner oracles (originally introduced by \cite{Le20}). 
Here we show that spanner oracles can be constructed using LSO's, and thus derive new results on light spanners. 
See \Cref{tab:resultsLightSpanners} for a summary.
We begin with the required definitions, and meta-theorems from \cite{LS21Unified2}.

\begin{definition}[Spanner Oracle \cite{LS21Unified2}]\label{def:oracle} Let $(X,d_X)$ be a metric space and let $t > 1$ be a stretch parameter. A $t$-spanner oracle for $X$, given a subset of terminals  $T\subseteq X$ and a distance parameter $L > 0$, outputs a metric spanner $S$ over $X$ such that for every pair of vertices $x,y \in T, x\not= y$ with $L \leq d_G(x,y) < 2L$:
	\begin{equation*}
		d_{S}(x,y)\leq t\cdot d_X(x,y).
	\end{equation*}	
	We denote a $t$-spanner oracle for $X$ by  $\mathcal{O}_{X,t}$, and its output subgraph is denoted by $\mathcal{O}_{X,t}(T,L)$, given two parameters $T\subseteq X$ and $L >0$.
	
	The weak sparsity of a $t$-spanner oracle $\mathcal{O}_{X,t}$, denoted by $\wsp_{\mathcal{O}_{X,t}}$ 
	as follows:
	\begin{equation*}
		\wsp_{\mathcal{O}_{X,t}} = \sup_{T\subseteq X, L \in \real^+}\frac{w\left(\mathcal{O}_{X,t}(T,L)\right)}{|T|L}
	\end{equation*}	
\end{definition}

Le and Solomon proved two reduction from spanner oracles to light spanners. The first (\Cref{thm:LS21Unified2-general-stretch-2}) is for stretch $t>2$, while the second (\Cref{thm:LS21Unified2-general-stretch-1eps}) is for stretch $1+\eps$.
\begin{theorem}[\cite{LS21Unified2}]\label{thm:LS21Unified2-general-stretch-2} Let $(X,d_X)$ be a metric space with a $t$-spanner oracle $\mathcal{O}$ of weak sparsity $\wsp_{\mathcal{O}_{X,t}}$ for $t\geq 2$. Then for any $\eps > 0$, there exists a $t(1+\epsilon)$-spanner $H$ for $X$ with lightness $\wsp_{\mathcal{O}_{X,t}}\cdot\tilde{O}\left(\epsilon^{-1}\right)$.
\end{theorem}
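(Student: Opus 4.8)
The plan is to build $H$ one distance scale at a time: at each scale I invoke the oracle $\mathcal{O}_{X,t}$ on a small set of \emph{representatives}, wire every point to its representative, and then charge the total weight to a minimum spanning tree $\mst$ of $(X,d_X)$. First I would normalize so that the minimum inter-point distance is $1$, put $\mst$ into $H$ (this contributes lightness exactly $1$), and set $L_i=2^i$ for $i\ge 0$; since every pair has distance $\ge 1$, the bands $[L_i,2L_i)$ cover all pairs. Before the per-scale construction I would apply the standard Le--Solomon reduction that passes to a ``bounded spread'' regime: at each scale one only works with the sub-instance in which $\mst$-subtrees of weight $\ll \epsilon L_i$ have been collapsed to a single point and subtrees of weight $\gg L_i/\epsilon$ have been contracted away; this costs only an $O(1)$ factor in lightness and, crucially, it guarantees that a given point is ``active'' (and hence wired) at only $O(\log \epsilon^{-1})$ scales rather than at all $O(\log\Phi)$ of them.

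Next, at scale $i$ I would greedily partition the active points into clusters of diameter at most $\epsilon'L_i$ with $\epsilon'=\epsilon/c$ for a suitable constant $c$, pick a center for each cluster, let $T_i$ be the set of centers, and add to $H$: (a) an edge of weight $\le\epsilon'L_i$ from each active point to its center, and (b) the metric spanner $\mathcal{O}_{X,t}(T_i,L_i)$, whose weight is at most $\wsp_{\mathcal{O}_{X,t}}\cdot|T_i|\cdot L_i$ by the definition of weak sparsity (after a constant number of extra oracle calls at the neighboring band values to absorb the slight widening of the band caused by clustering --- a routine constant-factor adjustment). For the stretch: given $x,y$ with $L_i\le d_X(x,y)<2L_i$ and centers $c_x,c_y$, we have $d_X(c_x,c_y)\le d_X(x,y)+2\epsilon'L_i\le(1+2\epsilon')d_X(x,y)$ (using $d_X(x,y)\ge L_i$), so the oracle gives $d_H(c_x,c_y)\le t(1+2\epsilon')d_X(x,y)$, and concatenating with the two wiring edges yields $d_H(x,y)\le t(1+2\epsilon')d_X(x,y)+2\epsilon' d_X(x,y)\le t(1+4\epsilon')d_X(x,y)$; taking $c=4$ and rescaling $\epsilon$ gives stretch $t(1+\epsilon)$. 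Note this crude ``route through the centers'' argument is exactly why the theorem is stated for $t\ge 2$: the additive $\Theta(\epsilon L_i)$ detours are absorbed into the $t$-multiplicative slack, whereas for stretch $1+\epsilon$ one needs the separate, more delicate hierarchical argument of \Cref{thm:LS21Unified2-general-stretch-1eps}.

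The total weight is $w(H)\le w(\mst)+\sum_i(\text{wiring at scale }i)+\wsp_{\mathcal{O}_{X,t}}\cdot\sum_i|T_i|\cdot L_i$, so it remains to show $\sum_i|T_i|\cdot L_i=\tilde O(\epsilon^{-1})\cdot w(\mst)$ and that the wiring cost is bounded the same way. The heart of the proof is a charging argument via a Steiner-tree packing bound: inside an $\mst$-subtree of weight $W$, any set of points that is pairwise $\ge r$-separated has size $O(W/r)+1$ (shortcut the subtree into a spanning tree of that set, each of whose edges has length $\ge r$). Cutting $\mst$ into pieces of weight $\Theta(L_i)$, each piece therefore carries $O(1/\epsilon)$ scale-$i$ centers, so $|T_i|=O\!\big(\tfrac{w(\mst)}{\epsilon L_i}\big)$ and $|T_i|L_i=O(\epsilon^{-1}w(\mst))$ \emph{per scale} --- and then the bounded-spread reduction from the first step ensures each ``location'' of $\mst$ is counted by only $O(\log\epsilon^{-1})$ scales, collapsing $\sum_i|T_i|L_i$ to $\tilde O(\epsilon^{-1})\,w(\mst)$; the wiring edges are bounded identically since each point is active at $O(\log\epsilon^{-1})$ scales and pays $\epsilon'L_i$ at each. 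Summing, $w(H)\le\big(1+\wsp_{\mathcal{O}_{X,t}}\cdot\tilde O(\epsilon^{-1})\big)\,w(\mst)$, i.e. lightness $\wsp_{\mathcal{O}_{X,t}}\cdot\tilde O(\epsilon^{-1})$.

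The main obstacle, as usual in light-spanner constructions, is precisely the lightness bookkeeping in the last paragraph: making the bounded-spread reduction rigorous (so that wiring is genuinely paid only $O(\log\epsilon^{-1})$ times per point rather than $O(\log\Phi)$ times) and carrying out the double-counting bound so that the final factor is $\tilde O(\epsilon^{-1})$ with no hidden dependence on $\Phi$ or $n$ --- this is where Le--Solomon's cluster hierarchy with its credit invariants does the real work, and it is the part I would expect to consume essentially all of the effort, the stretch analysis and the per-scale construction being straightforward.
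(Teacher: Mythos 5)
First, note that the paper itself gives no proof of this statement: it is quoted verbatim from \cite{LS21Unified2} and used as a black box, so your proposal can only be judged on its own merits, not against an internal argument.

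Your skeleton (per-scale $\epsilon L_i$-nets, oracle calls on the centers, routing a pair at scale $L_i$ through its two centers and absorbing the additive $O(\epsilon L_i)$ detour into the multiplicative slack available when $t\ge 2$) is sensible, and the stretch analysis is fine. The genuine gap is exactly where you place "the real work": the lightness bookkeeping, and the specific mechanism you propose for it does not hold up. Your packing bound gives $|T_i|\cdot L_i=O(\epsilon^{-1}w(\mst))$ \emph{per scale}, but there are up to $\Theta(\log\Phi)$ scales, and your claimed reduction (collapse $\mst$-subtrees of weight $\ll\epsilon L_i$, contract those of weight $\gg L_i/\epsilon$, so that each point is active at only $O(\log\epsilon^{-1})$ scales) is false as stated. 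Take $n$ unit-spaced points on a path: $w(\mst)=n-1$, and at every scale $L_i\le n$ the collapsed instance still has $\Theta\bigl(w(\mst)/(\epsilon L_i)\bigr)$ representatives, every point lies in an active cluster at every one of the $\Theta(\log n)$ scales, and an adversarial oracle of weak sparsity $\wsp_{\mathcal{O}_{X,t}}$ may return weight $\wsp_{\mathcal{O}_{X,t}}\cdot|T_i|\cdot L_i$ at each scale. Your accounting then yields lightness $\wsp_{\mathcal{O}_{X,t}}\cdot O(\epsilon^{-1}\log\Phi)$ (or $\log n$), not $\wsp_{\mathcal{O}_{X,t}}\cdot\tilde O(\epsilon^{-1})$ — and in this paper $\tilde O(\epsilon^{-1})$ hides polylogarithmic factors in $\epsilon^{-1}$ only, so the $\log\Phi$ cannot be swept into it. The same problem hits your wiring edges: adding a fresh edge of weight up to $\epsilon L_i$ from every active point to its center at every scale costs up to $n\epsilon L_i$ per scale, which likewise cannot be charged to $w(\mst)$ scale by scale; one must instead route inside clusters via lower-level structure.

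Removing this spread/size dependence is precisely the nontrivial content of the theorem, and it is achieved in \cite{LS21Unified2} by a hierarchical clustering endowed with credit/potential invariants (cluster diameters at level $i$ are comparable to $L_i$, clusters carry credits proportional to the $\mst$ weight they absorb, and those credits — not a per-scale charge times the number of scales — pay for both the oracle edges and the intra-cluster routing). Your proposal invokes that machinery as a black box at exactly the step it is needed, so as written it establishes only the weaker, spread-dependent bound, not the claimed $\wsp_{\mathcal{O}_{X,t}}\cdot\tilde O(\epsilon^{-1})$ lightness.
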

\begin{theorem}[\cite{LS21Unified2}]\label{thm:LS21Unified2-general-stretch-1eps} Let $(X,d_X)$ be a metric space  with a $(1+\epsilon)$-spanner oracle $\mathcal{O}$ of weak sparsity $\wsp_{\mathcal{O}_{X,1+\epsilon}}$   for any $\eps > 0$. Then there exists an $(1+O(\epsilon))$-spanner $S$ for $X$ with lightness $\wsp_{\mathcal{O}_{X,t}}\cdot\tilde{O}\left(\epsilon^{-1}\right)+\tilde{O}\left(\epsilon^{-2}\right)$.
\end{theorem}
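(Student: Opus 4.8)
The plan is to carry out the standard scale-by-scale light-spanner construction, feeding the $(1+\eps)$-spanner oracle $\mathcal{O}$ the right terminals at each scale. First I would normalize the metric so that the minimum interpoint distance is $1$, put $w^* = w(\mathrm{MST}(X))$, and fix the geometric scales $L_i = 2^i$ (taking a constant number of shifted copies so that no relevant pair falls on a window boundary). The output spanner is $S = \mathrm{MST}(X) \cup \bigcup_i \mathcal{O}(T_i, L_i)$, where the terminal sets $T_i$ come from a laminar family of clusters of $X$: at scale $i$, $X$ is partitioned into connected subtrees of $\mathrm{MST}(X)$, each of total weight $\Theta(\eps L_i)$ (hence of diameter $O(\eps L_i)$, and already internally spanned by the $\mathrm{MST}$ edges sitting in $S$), and $T_i$ picks one representative per subtree. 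Laminarity across scales is obtained for free by building the scale-$(i{+}1)$ partition bottom-up as a merge of scale-$i$ subtrees — the same ``force laminarity'' idea used in \Cref{lem:PartitionCoverSchemeToUltrametric}. One only keeps the \emph{active} clusters (those hosting an endpoint of some pair at distance in $[L_i, 2L_i)$) in $T_i$.

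For the stretch, take $x,y$ with $d_X(x,y) \in [L_j, 2L_j)$. Both lie in scale-$j$ clusters of diameter $O(\eps L_j) = O(\eps\,d_X(x,y))$, so walking from $x$ to its representative $x' \in T_j$ along $\mathrm{MST}$ edges of $S$ costs $O(\eps\,d_X(x,y))$, and likewise for $y$; moreover $d_X(x',y') = (1 \pm O(\eps))\,d_X(x,y)$. Applying the oracle guarantee to $\mathcal{O}(T_j,L_j)$ — or to the oracle output at the neighbouring scale, after re-routing to its representatives, in the boundary case where $d_X(x',y')$ slips just outside $[L_j, 2L_j)$ — produces a path of length $(1+\eps)\,d_X(x',y')$, and concatenating the three legs gives total stretch $1+O(\eps)$, which is the claimed bound after rescaling $\eps$.

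For the lightness, the $\mathrm{MST}$ contributes $w^*$, and by the definition of weak sparsity the oracle outputs contribute $\sum_i w(\mathcal{O}(T_i,L_i)) \le \wsp_{\mathcal{O}} \cdot \sum_i |T_i|\,L_i$, so the theorem reduces to the bound $\sum_i |T_i|\,L_i \le \tilde{O}(\eps^{-1}) \cdot w^* + \tilde{O}(\eps^{-2}) \cdot w^*$. This is the technical heart of the framework, and is proved by an amortized charging of active clusters to $\mathrm{MST}$ weight: a cluster costing $L_i$ carries $\Theta(\eps L_i)$ of $\mathrm{MST}$ weight drawn from edges of weight $O(\eps L_i)$, and one shows that each $\mathrm{MST}$ edge is ``active'' at only $O(\log \eps^{-1})$ scales, so that the $w^*/\eps$ per-window charge does not accumulate a $\log\Phi$ factor; the leftover ``under-full'' subtrees (a bounded number per scale) sum geometrically, and together with the $\mathrm{polylog}$ bookkeeping and the extra $\mathrm{MST}$-relative ``interface'' edges that $1+\eps$ stretch (as opposed to $t>2$) demands, they are absorbed into the additive $\tilde{O}(\eps^{-2})\,w^*$ term.

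The main obstacle is exactly this amortized charging: one must engineer the laminar clustering so that each active cluster can be assigned a disjoint $\Omega(\eps L_i)$-weight sub-forest of the $\mathrm{MST}$ while keeping cluster diameters at $O(\eps L_i)$ and the family laminar across all scales, and must confine attention to the $O(\log \eps^{-1})$ scales where a given $\mathrm{MST}$ edge is genuinely relevant so that the unbounded-aspect-ratio case goes through without a $\log\Phi$ loss. Once the oracle and this clustering are in hand, the per-scale stretch computation, the boundary-shift bookkeeping, and the final summation are all routine.
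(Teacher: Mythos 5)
First, note that this theorem is not proved in the paper at all: it is imported verbatim as a black box from Le and Solomon [LS21Unified2], so there is no in-paper proof to compare against; what you have attempted is to re-derive the Le--Solomon framework itself, which is the main technical content of a long separate paper.

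Your sketch identifies the right overall plan (MST plus per-scale oracle calls on representatives of $\Theta(\eps L_i)$-weight MST subtrees, stretch by a three-leg concatenation, lightness via weak sparsity), but it has a genuine gap exactly where you place "the technical heart": the bound $\sum_i |T_i|\,L_i \le \tilde{O}(\eps^{-1})\cdot w(\mathrm{MST})$ is asserted rather than proved, and the claim you use to justify it --- that each MST edge is active at only $O(\log \eps^{-1})$ scales --- is false as stated. A single point $x$ can have other points at distances $1,2,4,\dots$, so the cluster containing $x$ is active at essentially every scale; the per-scale disjoint charging you describe only yields $|T_i|\,L_i \le O(\eps^{-1})\cdot w(\mathrm{MST})$ \emph{per scale}, and summing over scales gives an extra $\log\Phi$ factor. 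Eliminating that $\log\Phi$ (or $\log n$) factor is precisely what the Le--Solomon unified framework is about, and their argument is not a local charging of clusters to MST weight: it is an iterative, hierarchical clustering with a potential/credit scheme in which clusters at level $i$ inherit and redistribute credits from level $i-1$, and the stretch-$(1+\eps)$ regime requires an additional, substantially more delicate analysis (this is where the separate additive $\tilde{O}(\eps^{-2})$ term comes from, not from "bookkeeping" or boundary shifts). As written, your proposal reduces the theorem to an unproven lemma whose suggested justification does not hold, so it does not constitute a proof; if you need this statement, cite [LS21Unified2] as the paper does, or be prepared to reproduce their full credit-based hierarchy.
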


We next show that LSO and triangle LSO imply spanner oracles.
\begin{theorem}\label{thm:LSOToSparsityOracle}
	Consider a metric space $(X,d_X)$ admitting a $(\tau,\rho)$-LSO for $\rho<\frac14$, then $X$ admits a $1+8\rho$ spanner oracle $\mathcal{O}_{X,1+8\rho}$ with weak sparsity $\wsp_{\mathcal{O}_{X,1+8\rho}}\le2\tau$.
\end{theorem}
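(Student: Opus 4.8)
The plan is to build the spanner oracle directly from the LSO, mimicking the path-reporting/fault-tolerant constructions but now measuring total weight rather than hops. Let $\Sigma$ be the given $(\tau,\rho)$-LSO. Given an input $T\subseteq X$ and a distance parameter $L>0$, I would do the following: for each ordering $\sigma\in\Sigma$, restrict $\sigma$ to the points of $T$, obtaining a linear order $\sigma|_T$, and add to the output spanner $S$ exactly the edges between $\sigma|_T$-consecutive pairs of $T$ (an edge $\{u,v\}$ with weight $d_X(u,v)$). This gives at most $|T|-1$ edges per ordering, hence $|\Sigma|\cdot(|T|-1)\le\tau|T|$ edges in total; the union over all $\sigma$ is the output $\mathcal{O}_{X,t}(T,L)$.

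For the stretch guarantee, fix $x,y\in T$ with $L\le d_X(x,y)<2L$. Let $\sigma\in\Sigma$ be the ordering satisfying the LSO property for the pair $(x,y)$, say $x\preceq_\sigma y$, so all points of $X$ between $x$ and $y$ in $\sigma$ lie in $B_X(x,\rho\,d_X(x,y))\cup B_X(y,\rho\,d_X(x,y))$. Restricting to $T$, walk along the consecutive $T$-points $x=u_0\prec_\sigma u_1\prec_\sigma\cdots\prec_\sigma u_m=y$; each $u_i$ for $0<i<m$ is either in $B_X(x,\rho\,d_X(x,y))$ (call these the ``left'' points) or in $B_X(y,\rho\,d_X(x,y))$ (the ``right'' points), and since the intervals $I_x,I_y$ are consecutive in $\sigma$, the left points form a prefix $u_1,\dots,u_j$ and the right points a suffix $u_{j+1},\dots,u_{m-1}$. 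Each edge of $S$ along this walk between two left points has weight at most $\mathrm{diam}(B_X(x,\rho\,d_X(x,y)))\le 2\rho\,d_X(x,y)$, and similarly for right–right edges; the single edge $\{u_j,u_{j+1}\}$ crossing from left to right has weight at most $d_X(u_j,x)+d_X(x,y)+d_X(y,u_{j+1})\le(1+2\rho)\,d_X(x,y)$ by the triangle inequality; and the first edge $\{x,u_1\}$ has weight $\le\rho\,d_X(x,y)$, the last edge $\{u_{m-1},y\}$ weight $\le\rho\,d_X(x,y)$. Summing, $d_S(x,y)\le$ (at most $2\rho\,d_X(x,y)$ for the leftmost and rightmost, plus the crossing edge's $(1+2\rho)d_X(x,y)$, plus interior left/right edges), which is bounded by $(1+8\rho)\,d_X(x,y)$ for $\rho<\tfrac14$; a cleaner way is to bound $d_S(x,y)$ by the sum of the two sub-paths within $B_X(x,\cdot)$ and $B_X(y,\cdot)$ plus the crossing edge and argue each contributes $O(\rho)\,d_X(x,y)$ beyond $d_X(x,y)$, giving $t=1+8\rho$.

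For the weak sparsity, I need $w\big(\mathcal{O}_{X,t}(T,L)\big)\le 2\tau\cdot|T|\cdot L$. The key point is that the edges added within a single ordering $\sigma|_T$ connect consecutive $T$-points, so a naive bound on their weights is useless — an unbounded-length ordering can have arbitrarily heavy consecutive pairs. The fix is to \emph{truncate}: when building $S$ for parameter $L$, add a consecutive-pair edge $\{u,v\}$ only if $d_X(u,v)<2L$ (equivalently, only keep ``short'' consecutive edges). This still preserves the stretch argument, because every edge used in the walk above has weight $\le(1+2\rho)\,d_X(x,y)<(1+2\rho)\cdot 2L<4L$ — wait, this needs care, so I would instead keep edges of weight $<4L$, say, and re-examine constants; the cleanest is to keep all consecutive edges of weight $\le c\cdot L$ for a suitable constant $c$ determined by the stretch analysis, so that every edge needed for the $x$–$y$ path (all of weight $O(d_X(x,y))=O(L)$) survives, while each surviving edge has weight $O(L)$. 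Then the total weight per ordering is at most $(|T|-1)\cdot O(L)$, hence $w(S)=O(\tau\,|T|\,L)$, and tightening the constant to match the claimed $2\tau$ (e.g.\ only keeping consecutive edges of weight $\le 2L$ and checking that the stretch path still fits, possibly at the cost of a slightly larger stretch constant, which is why the statement allows $1+8\rho$ rather than $1+6\rho$) gives the bound. The main obstacle is precisely this balancing act: choosing the truncation threshold large enough that the stretch path is intact but small enough that $\wsp\le 2\tau$; I expect the honest argument keeps consecutive edges of weight at most $2L$, notes that on the $x$–$y$ path the only possibly-long edge is the crossing one of weight $\le(1+2\rho)d_X(x,y)<(1+2\rho)2L\le 2L\cdot\frac{3}{2}$... which slightly exceeds $2L$, forcing the threshold to $3L$ or so, and then absorbing the resulting constant into ``$1+8\rho$'' and ``$2\tau$'' by a final rescaling of which consecutive pairs at which scales are included. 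This bookkeeping is the crux; everything else is the routine consecutive-pairs-plus-triangle-inequality calculation already used in Theorems~\ref{thm:MetaLSOPathReportingSpanner} and~\ref{thm:MetaLSOFaultTolerantSpanner}.
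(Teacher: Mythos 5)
Your construction coincides with the paper's (for each ordering, connect consecutive terminals of $T$, keeping only pairs of weight at most $2L$ --- the paper calls these $\sigma$-close pairs --- which immediately gives weak sparsity at most $2\tau$), but your stretch argument has a genuine gap. You bound $d_S(x,y)$ by summing the weights of \emph{all} consecutive-terminal edges along the single satisfying ordering $\sigma$ between $x$ and $y$. Each left--left edge indeed has weight at most $2\rho\cdot d_X(x,y)$, but nothing bounds the \emph{number} of terminals of $T$ lying in $I_x\subseteq B_X(x,\rho\cdot d_X(x,y))$: there can be arbitrarily many of them, and consecutive ones (in $\sigma$) can each be $\Theta(\rho\cdot d_X(x,y))$ apart inside that ball, so the total weight of the prefix of your walk is unbounded in terms of $d_X(x,y)$; the same holds for the suffix in the right ball. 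Hence the claimed bound $(1+8\rho)\cdot d_X(x,y)$ does not follow from this walk, and no choice of truncation threshold repairs it --- the threshold bookkeeping you focus on is not where the difficulty lies.

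The paper closes exactly this hole by proving the stretch by induction on $d_X(u,v)$, following the LSO-to-spanner argument of \cite{CHJ20}: let $u'$ be the rightmost terminal in $I_u$ and $v'$ the leftmost terminal in $I_v$; then $\{u',v'\}$ is the single consecutive edge crossing the gap, of weight at most $(1+2\rho)\cdot d_X(u,v)$, while $u$ is connected to $u'$ (and $v'$ to $v$) \emph{recursively}, invoking the induction hypothesis for the strictly closer pair $(u,u')$ --- which is satisfied via a possibly different ordering --- rather than by walking further inside $\sigma$. The recursion gives $d_H(u,v)\le 2\rho(1+8\rho)\cdot d_X(u,v)+(1+2\rho)\cdot d_X(u,v)\le(1+8\rho)\cdot d_X(u,v)$ precisely when $\rho\le\frac14$, which is where the constants $8$ and $\frac14$ come from. (Your observation that the crossing edge can exceed the $2L$ threshold is fair --- the paper's proof quietly assumes $d_X(u',v')\le 2L$ when declaring $u',v'$ $\sigma$-close --- but that is a secondary bookkeeping point; the missing ingredient in your proposal is the recursion on closer terminal pairs.)
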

\begin{proof}
	Let $T\subseteq X$ be a subset of terminals, and $L>0$ a parameter.
	For a pair of terminals $u,v\in T$ and ordering  $\sigma\in\Sigma$, we say that $u,v$ are $\sigma$-close if $d_X(u,v)\le 2L$, and there is no terminal $z\in T$ between $u$ and $v$ in $\sigma$.
	Let $H$ be a spanner constructed by adding the edge $\{u,v\}$ for every pair $u,v$ which are $\sigma$-close w.r.t. some ordering $\sigma\in\Sigma$.
	
	First we bound the weight of $H$. Note that for every ordering $\sigma$, there are at most $|T|-1$ $\sigma$-close pairs. Further, the weight added to $H$ for each $\sigma$-close pair is at most $2L$, as there are only $\tau$ orderings, it follows that $w(H)\le (|T|-1)\cdot\tau\cdot 2L$. In particular, the weak sparsity of our spanner oracle is bounded by $2\tau$.
	
	Next we bound the stretch of our spanner by $1+8\rho$. 
	Our proof follow similar lines to the spanner construction from LSO in \cite{CHJ20}.
	The proof is by induction on the distance between two terminals. Consider a pair of terminals $u,v\in T$ such that $d_X(u,v)\le 2L$, and assume that for every pair of terminals $u',v'\in T$ such that  $d_X(u',v')<  d_X(u,v)$ it holds that  $d_H(u',v')\le (1+8\rho)\cdot d_H(u',v')$. Let $\sigma\in\Sigma$ be an ordering such that (w.l.o.g.) $u\preceq_{\sigma}v$. The points between $u$ and $v$ w.r.t. $\sigma$ could be partitioned into two consecutive intervals $I_{u},I_{v}$ where $I_{u}\subseteq B_{X}(u,\rho\cdot d_{X}(u,v))$ and $I_{v}\subseteq B_{X}(v,\rho\cdot d_{X}(u,v))$.
	If $u,v$ are $\sigma$-close, then $\{u,v\}\in H$ and we are done. Otherwise, let $u'$ (resp. $v'$) be rightmost (resp. leftmost) terminal in $I_u$ (resp. $I_v$). Note that it is possible that either $u'=u$ or $v'=v$. \\
	\begin{center}
		\includegraphics[width=0.7\textwidth]{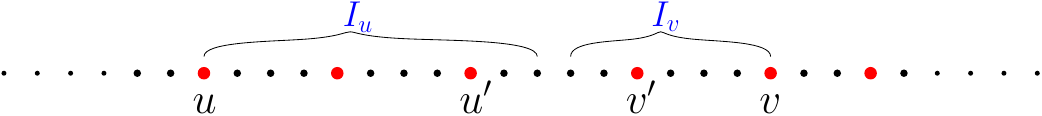}\\
	\end{center}
	Necessarily $u',v'$ are $\sigma$-close, thus  $d_H(u',v')=d_X(u',v')$. Furthermore, $d_X(u,u')\le \rho \cdot d_X(u,v)$, and hence by the induction hypothesis, $d_H(u,u')\le (1+8\rho)\cdot d_X(u,u')$. Similarly $d_H(v,v')\le (1+8\rho)\cdot d_X(v,v')$.
	By triangle inequality $d_{X}(u',v')\le d_{X}(u',u)+d_{X}(u,v)+d_{X}(v,v')\le(1+2\rho)\cdot d_{X}(u,v)$.
	We conclude
	\begin{align*}
		d_{H}(u,v) & \le d_{H}(u,u')+d_{H}(u',v')+d_{H}(v',v)\\
		& \le(1+8\rho)\cdot d_{X}(u,u')+d_{X}(u',v')+(1+8\rho)\cdot d_{X}(v',v)\\
		& \le2\rho(1+8\rho)\cdot d_{X}(u,v)+(1+2\rho)\cdot d_{X}(u,v)\\
		& \le(1+8\rho)\cdot d_{X}(u,v)~.
	\end{align*}
	where in the last inequality we used the assumption $\rho\le\frac{1}{4}$.
\end{proof}

\begin{theorem}\label{thm:TriangleLSOToSparsityOracle}
	Consider a weighted $n$-vertex graph $G=(V,E,w)$ that admits a $(\tau,\rho)$-triangle-LSO, then $G$ admits:
	\begin{OneLiners}
		\item A $2\rho$ spanner oracle  $\mathcal{O}_{G,2\rho}$ with weak sparsity $\wsp_{\mathcal{O}_{G,2\rho}}\le O(\rho\cdot\tau\cdot\log n)$.
		\item A $3\rho$ spanner oracle  $\mathcal{O}_{G,3\rho}$ with weak sparsity $\wsp_{\mathcal{O}_{G,3\rho}}\le O(\rho\cdot\tau\cdot\log\log n)$.
		\item A $4\rho$ spanner oracle  $\mathcal{O}_{G,4\rho}$ with weak sparsity $\wsp_{\mathcal{O}_{G,4\rho}}\le O(\rho\cdot\tau\cdot\log^*n)$.
	\end{OneLiners} 
\end{theorem}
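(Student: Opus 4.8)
The plan is to mimic the proof of \Cref{thm:LSOToSparsityOracle}, but iterate the construction to shrink the weak sparsity. The building block is the same: given a subset $T\subseteq X$ of terminals and a distance parameter $L>0$, and given a $(\tau,\rho)$-triangle-LSO $\Sigma$, for each ordering $\sigma\in\Sigma$ and each pair $u,v\in T$ which are \emph{$\sigma$-consecutive} (i.e.\ no terminal lies between them in $\sigma$) with $d_X(u,v)\le 2L$, add the edge $\{u,v\}$. Since for every ordering there are at most $|T|-1$ such consecutive pairs and each contributes weight at most $2L$, the spanner $H_1$ has $w(H_1)\le 2L\tau(|T|-1)$, i.e.\ weak sparsity $O(\tau)$. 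The stretch analysis is now cleaner than in the LSO case: if $u\preceq_\sigma v$ realizes the triangle-LSO guarantee, then for every pair $a,b$ with $u\preceq_\sigma a\preceq_\sigma b\preceq_\sigma v$ we have $d_X(a,b)\le\rho\cdot d_X(u,v)$. Walking along the terminals of $T$ between $u$ and $v$ in $\sigma$, say $u=z_0\preceq_\sigma z_1\preceq_\sigma\cdots\preceq_\sigma z_m=v$, each consecutive edge $\{z_i,z_{i+1}\}$ was added (note $d_X(z_i,z_{i+1})\le\rho\cdot d_X(u,v)\le 2L$ as $\rho<1$ and $d_X(u,v)<2L$... here one should actually take $L\le d_X(u,v)<2L$ so that $d_X(z_i,z_{i+1})\le 2\rho L\le 2L$). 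This already gives an oracle, but with the naive bound the path $z_0,\dots,z_m$ can have weight up to $\rho m\cdot d_X(u,v)$, which is not $O(\rho)\cdot d_X(u,v)$. To fix the stretch we instead recurse on the set $\{z_0,\dots,z_m\}$ with the \emph{same} $L$: that is, we build the consecutive-pairs spanner, and then \emph{recurse} on each ordering's consecutive sequence, which has size $m+1$; since the sizes roughly halve (or shrink geometrically) under the recursion the recursion depth is $O(\log n)$, and every level adds weak sparsity $O(\tau)$, giving total weak sparsity $O(\tau\log n)$. The stretch then telescopes: $d_H(u,v)\le(1+2\rho)\rho\cdot d_X(u,v)+$ (recursive cost), which resolves to $O(\rho)\cdot d_X(u,v)$; tracking the constant carefully yields the claimed $2\rho$.

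More precisely, for the first item the argument is by induction on $d_X(u,v)$ exactly as in \Cref{thm:LSOToSparsityOracle}: if $u,v$ are $\sigma$-consecutive in the satisfying $\sigma$ we are done with a single edge, otherwise let $u'$ be the last terminal in $I_u$ (the interval near $u$) and $v'$ the first in $I_v$; then $u',v'$ are $\sigma$-consecutive so $\{u',v'\}\in H$, while $d_X(u,u'),d_X(v,v')\le\rho\,d_X(u,v)$ and we apply the inductive hypothesis to those strictly shorter distances, and $d_X(u',v')\le(1+2\rho)d_X(u,v)$. Choosing the stretch bound $t=2\rho$ and $\rho\ge 1$, the telescoping inequality $d_H(u,v)\le 2\rho\big(d_X(u,u')+d_X(v,v')\big)+d_X(u',v')\le 2\rho\cdot 2\rho\,d_X(u,v)$ is too weak; so really the key point is that the recursion is \emph{not} bisection-by-LSO but rather a direct logarithmic-depth recursion, and the weak sparsity $O(\rho\tau\log n)$ absorbs the $\log n$. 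I will structure the proof so that at recursion level $j$ we work with terminal sets $T^{(j)}_\sigma$ of size $\le |T|/2^{j-1}$ (guaranteed because we split each consecutive sequence roughly in half around a midpoint terminal), add edge weight $\le 2L$ per consecutive pair, and stop when sets have constant size; total weight $\le O(\tau\log n)\cdot|T|\cdot L$. The $2\rho$ stretch then comes out by a standard $1+O(\rho)$-style analysis where the error geometrically decays across levels — actually it is cleanest to mirror the hop-spanner bisection lemma \Cref{thm:2hopPath}: build a $2$-hop $1$-spanner on each ordering-path restricted to $T$, which gives exactly $O(\tau\log|T|)$ edges of weight $\le 2L$ each and a $2$-hop (hence weight $\le 2\cdot\rho\,d_X(u,v)$, i.e.\ stretch $2\rho$) route. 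That is the right way to see the first item: apply \Cref{thm:2hopPath} to the path $P_\sigma$ induced by $\sigma$ on $T$, keep only edges of weight $\le 2L$.

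For the second and third items, I reduce the $\log n$ factor by bootstrapping: instead of a $2$-hop spanner, I use a $k$-hop spanner construction with fewer edges. A $(2^j)$-hop $1$-spanner for a path on $m$ vertices needs only $O(m\cdot \lambda_j(m))$ edges where $\lambda_j$ is an inverse-Ackermann-type function (Alon--Schieber \cite{AS87} hierarchy); with $3$ hops one gets $O(m\log\log m)$ edges, and with $4$ hops $O(m\log^* m)$ edges. Using a $3$-hop path spanner on each $P_\sigma|_T$ restricted to edges of weight $\le 2L$ yields a route of $3$ hops, each of weight $\le\rho\,d_X(u,v)$, so stretch $3\rho$, with weak sparsity $O(\rho\tau\log\log n)$; and a $4$-hop path spanner yields stretch $4\rho$ with weak sparsity $O(\rho\tau\log^* n)$. (The factor $\rho$ in the weak sparsity, versus the plain $\tau$ in \Cref{thm:LSOToSparsityOracle}, appears because each hop can have length up to $\rho\,d_X(u,v)\le 2\rho L$ rather than $\le 2L$, so the per-pair weight bound is $2\rho L$ times the hop count absorbed into the edge count.) The main obstacle, and the part requiring care, is the exact stretch constant: I must verify that in a triangle-LSO, when $u\preceq_\sigma v$ is the satisfying ordering and $w_1,\dots,w_{h-1}$ are the intermediate hop vertices produced by the path spanner, \emph{every} consecutive hop $\{w_i,w_{i+1}\}$ indeed has $d_X(w_i,w_{i+1})\le\rho\,d_X(u,v)$ — this is immediate from the triangle-LSO definition since $u\preceq_\sigma w_i\preceq_\sigma w_{i+1}\preceq_\sigma v$ — and hence the reported path has weight $\le h\rho\,d_X(u,v)$, but I want stretch $h\rho$ not $h\rho\cdot(\text{something})$, which forces me to take the hop bound exactly $h\in\{2,3,4\}$ and not pay any multiplicative slack; I also need all these edges to actually have weight $\le 2L$ to be included in the oracle's output for parameter $L$, which holds because $d_X(w_i,w_{i+1})\le\rho\,d_X(u,v)\le 2\rho L$ and $\rho\ge 1$ means I should state the weak sparsity with the $\rho$ factor as claimed. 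I will write out the $h=2$ case in full and note the $h=3,4$ cases follow by substituting the corresponding Alon--Schieber path-spanner bound.
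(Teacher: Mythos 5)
Your final construction---restricting each ordering of the triangle-LSO to the terminal set $T$, building an $h$-hop $1$-spanner on the resulting path ($h=2,3,4$ via the Alon--Schieber-type path-spanner bounds $O(m\log m)$, $O(m\log\log m)$, $O(m\log^* m)$), pruning long edges, taking the union over the $\tau$ orderings, and bounding each hop by $\rho\cdot d_X(u,v)$ via the triangle-LSO property---is exactly the paper's proof, with the same stretch $h\rho$ and weak sparsity $O(\rho\tau\cdot\alpha_h(n))$ accounting; the earlier consecutive-pairs/recursion detour and the induction in the style of \Cref{thm:LSOToSparsityOracle} are unnecessary and should be dropped. The one slip to fix is the pruning threshold: edges must be kept up to weight $2\rho L$ (not $2L$), since for $\rho\ge 1$ a hop can have weight up to $\rho\cdot d_X(u,v)\le 2\rho L>2L$; this is precisely the source of the $\rho$ factor in the weak sparsity that you already track, so the fix is cosmetic.
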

\begin{proof}
	We begin by proving the first assertion of the theorem. The two others  will follows similar lines.
	Let $\Sigma$ be the $(\tau,\rho)$-LSO assumed by the theorem.
	For every ordering $\sigma\in \Sigma$, we form an unweighted path graph $P_{\sigma,T}$ with vertex set $T$ and the order of vertices along the path is $\sigma$ (delete all the vertices out of $T$). We construct a $2$-hop $f$-fault tolerant $1$-spanner $H_\sigma$ for $P_{\sigma,T}$ using \Cref{thm:2hopfaultTolerantPath}. 
	Let $H_\sigma^{\le2\rho L}$ be a spanner of $G$, where we keep from $H_\sigma$ only edges of weight at most $\rho \cdot2L$.
	Our spanner will be $H=\cup_{\sigma\in\Sigma}H_\sigma^{\le2\rho L}$.
	
	To bound the weight, note that by \Cref{thm:2hopfaultTolerantPath},  every $H_\sigma$ consist of $O(|T|\log|T|)$ edges, and thus $H_\sigma^{\le2\rho L}$ have weight bounded by $O(|T|\log|T|\cdot 2\rho L)$. It follows that the weight of $H$ is bounded by $O(|T|\log|T|\cdot \rho L\cdot \tau)$. We conclude:
	\begin{equation}\label{eq:WeightOracle}
		\wsp_{\mathcal{O}_{G,t}}=\sup_{T\subseteq V,L\in\real^{+}}\frac{w\left(\mathcal{O}_{G,t}(T,L)\right)}{|T|L}\le\sup_{T\subseteq V,L\in\real^{+}}\frac{O(|T|\log|T|\cdot\rho L\cdot\tau)}{|T|L}=O(\rho\cdot\tau\cdot\log n)~.		
	\end{equation}

	Next we argue that $H$ has stretch at most $2\rho$ w.r.t. $T$. Consider a pair of terminals $u,v\in T$ such that $d_X(u,v)\le 2L$.
	Let $\sigma\in\Sigma$ be an ordering such that (w.l.o.g.) $u\preceq_{\sigma}v$ and for every $a,b$ such that $u\preceq_{\sigma}a\preceq_{\sigma}b\preceq_{\sigma}v$ it holds that $d_X(a,b)\le \rho\cdot d_X(u,v)$.
	There is a terminal $x\in T$ such that  $u\preceq_{\sigma}x\preceq_{\sigma}v$ and $\{u,x\},\{x,v\}\in H_\sigma$. 
	As $d_X(u,x),d_X(v,x)\le \rho\cdot d_X(u,v)\le \rho\cdot 2L$,  $\{u,x\},\{x,v\}\in H_\sigma^{\le2\rho L}$.
	We conclude 
	\[
	d_{H}(u,v)\le d_{H_{\sigma}^{\le2\rho L}}(u,x)+d_{H_{\sigma}^{\le2\rho L}}(x,v)=d_{X}(u,x)+d_{X}(x,v)\le2\rho\cdot d_{X}(u,v)~,
	\]
	thus the first assertion of the theorem holds.
	
	Similarly to \Cref{thm:2hopPath}, one can construct $3$-hop $1$-spanner for the path with $O(n\log\log n)$ edges, and $4$-hop $1$-spanner for the path with $O(n\log^*n)$ edges (see \cite{Solomon13}).
	Taking the construction above, and replacing \Cref{thm:2hopPath} with these two construction immediately yields the other two assertions.
\end{proof}

\begin{remark}
	In fact, there is extremely slow growing function $\alpha_h(n)$, such that one can construct $h$-hop $1$-spanner for the path with $O(n\cdot\alpha_h(n))$ edges. We used here $\alpha_2(n)=\lceil\log n \rceil$, $\alpha_3(n)=\lceil\log\log n \rceil$, and $\alpha_4(n)=\log^*n$. However, there is a general definition, and interestingly for $h=\alpha(n)$, $\alpha_{\alpha(n)}(n)=O(1)$, where $\alpha(n)$ is essentially the inverse Ackermann function. We refer to \cite{Solomon13} (see also \cite{KLMS22}) for definitions. As a result, given a $(\tau,\rho)$-triangle-LSO, one can construct a $h\rho$ spanner oracle  $\mathcal{O}_{G,h\rho}$ with weak sparsity $\wsp_{\mathcal{O}_{G,h\rho}}\le O(\rho\cdot\tau\cdot\alpha_h(n))$, or a $\alpha(n)\cdot \rho$ spanner oracle  $\mathcal{O}_{G,h\rho}$ with weak sparsity $\wsp_{\mathcal{O}_{G,h\rho}}\le O(\rho\cdot\tau)$.
\end{remark}

Using the various LSO's (summarized in \Cref{tab:LSO}), and the meta
\Cref{thm:LSOToSparsityOracle,thm:TriangleLSOToSparsityOracle}  we obtain weak sparsity oracles.
Now, plugging in these weak sparsity oracles into the meta \Cref{thm:LS21Unified2-general-stretch-1eps,thm:LS21Unified2-general-stretch-2} of \cite{LS21Unified2} we conclude:

\begin{corollary}\label{cor:LightSpannerDoubling}
	Consider an $n$ point metric space $(X,d_X)$ with doubling dimension $d$. Then $\forall\eps\in(0,\frac12)$, $X$ admits $(1+\eps)$-spanner oracle with weak sparsity $\eps^{-O(d)}$. It follows that $X$ admits $(1+\eps)$-spanner with  lightness $\eps^{-O(d)}$.\\
	Furthermore, for $t=\Omega(1)$, $X$ admits a $t$-spanner oracle with weak sparsity $2^{O(\nicefrac{d}{t})}\cdot d\cdot\log^{2}t\cdot\log^*n$.
	It follows that $X$ admits $O(t)$-spanner with lightness $2^{O(\nicefrac{d}{t})}\cdot d\cdot\log^{2}t\cdot\log^*n$.	
\end{corollary}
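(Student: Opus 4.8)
The plan is to assemble the corollary from components already in place: route a \emph{classic} LSO for doubling metrics through \Cref{thm:LSOToSparsityOracle} to get the $(1+\eps)$-spanner oracle, route the new \emph{triangle} LSO of \Cref{thm:LSOdoublingLargeStretch} through \Cref{thm:TriangleLSOToSparsityOracle} to get the $O(t)$-spanner oracle, and in both cases feed the resulting oracle into the Le--Solomon meta-theorems (\Cref{thm:LS21Unified2-general-stretch-1eps} for the $1+\eps$ case, \Cref{thm:LS21Unified2-general-stretch-2} for the $O(t)$ case) to obtain the light spanner. No new construction is required; the work is purely in tracking the stretch and weak-sparsity parameters through two reductions.

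For the first assertion I would start from the classic $\left(\eps^{-O(d)},\eps\right)$-LSO of Filtser and Le \cite{FL22} for spaces of doubling dimension $d$ (see \Cref{tab:LSO}). Replacing $\eps$ by a sufficiently small constant multiple of itself, we may assume the LSO has stretch parameter $\rho=\eps<\tfrac14$, so \Cref{thm:LSOToSparsityOracle} applies and yields a $(1+8\eps)$-spanner oracle with weak sparsity at most $2\eps^{-O(d)}=\eps^{-O(d)}$; one more rescaling of $\eps$ gives a $(1+\eps)$-spanner oracle of weak sparsity $\eps^{-O(d)}$, which is the intermediate object claimed. Plugging this oracle into \Cref{thm:LS21Unified2-general-stretch-1eps} produces a $(1+O(\eps))$-spanner with lightness $\wsp_{\mathcal{O}_{X,1+\eps}}\cdot\tilde{O}(\eps^{-1})+\tilde{O}(\eps^{-2})=\eps^{-O(d)}$, and a final constant-factor rescaling of $\eps$ yields the stated $(1+\eps)$-spanner of lightness $\eps^{-O(d)}$.

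For the second assertion I would instead invoke \Cref{thm:LSOdoublingLargeStretch}, which gives a $\left(2^{O(d/t)}\cdot d\cdot\log^{2}t,\ t\right)$-triangle LSO; applying it with the stretch parameter scaled down by the appropriate constant and then using the $\log^{*}n$-variant of \Cref{thm:TriangleLSOToSparsityOracle} (the one based on the $4$-hop $1$-spanner for the path) yields an $O(t)$-spanner oracle of weak sparsity $2^{O(d/t)}\cdot d\cdot\log^{2}t\cdot\log^{*}n$. Since $t=\Omega(1)$ we may assume the oracle stretch is at least $2$, so \Cref{thm:LS21Unified2-general-stretch-2} applies with a constant $\eps$ and outputs an $O(t)$-spanner of lightness $\wsp\cdot\tilde{O}(1)=2^{O(d/t)}\cdot d\cdot\log^{2}t\cdot\log^{*}n$, as claimed; the $\log n$ and $\log\log n$ stretch-vs-sparsity variants listed in \Cref{tab:resultsLightSpanners} follow identically from the $2$-hop and $3$-hop assertions of \Cref{thm:TriangleLSOToSparsityOracle}.

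There is no genuine obstacle here, since the statement is a composition of black boxes; the only thing to be careful about is the arithmetic of parameter ranges and rescalings: verifying $\rho<\tfrac14$ before invoking \Cref{thm:LSOToSparsityOracle}, verifying the oracle stretch is $\ge 2$ before invoking \Cref{thm:LS21Unified2-general-stretch-2} (rather than the $1+\eps$ meta-theorem, whose additive $\tilde{O}(\eps^{-2})$ term is harmless and absorbed in the first part), and performing the constant-factor rescalings of $\eps$ between successive applications so that the final stretch is exactly $1+\eps$ (resp.\ $O(t)$) while the lightness bound remains $\eps^{-O(d)}$ (resp.\ $2^{O(d/t)}\cdot d\cdot\log^{2}t\cdot\log^{*}n$).
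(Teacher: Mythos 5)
Your proposal is correct and is exactly the paper's (implicit) derivation: the classic $(\eps^{-O(d)},\eps)$-LSO of \cite{FL22} fed through \Cref{thm:LSOToSparsityOracle} and then \Cref{thm:LS21Unified2-general-stretch-1eps}, and the triangle LSO of \Cref{thm:LSOdoublingLargeStretch} fed through the $\log^{*}n$ variant of \Cref{thm:TriangleLSOToSparsityOracle} and then \Cref{thm:LS21Unified2-general-stretch-2}, with the routine rescalings of the stretch parameters. Nothing further is needed.
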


Using our LSO for $d$-dimensional Euclidean space, we conclude, 
\begin{corollary}\label{cor:highDimEuclidean}
	For every parameters $t\in[4,2\sqrt{d}]$ and $\eps\in(0,1)$, every set of $n$-point in $\R^d$, admits:
	\begin{OneLiners}
		\item $(1+\eps)$-spanner oracle with weak sparsity $O_d(\epsilon^{-d})\cdot \log\frac{1}{\epsilon}$, and thus $(1+\eps)$-spanner with lightness $O_d(\epsilon^{-d-1})\cdot \polylog\frac{1}{\epsilon}$.
		\item  $(1+\eps)2t$-spanner oracle with weak sparsity $e^{\frac{d}{2t^{2}}\cdot(1+\frac{2}{t^{2}})}\cdot\tilde{O}(\frac{d^{1.5}}{\eps})\cdot\log n$, and thus $(1+\eps)4t$-spanner with lightness $e^{\frac{d}{2t^{2}}\cdot(1+\frac{2}{t^{2}})}\cdot\tilde{O}(\frac{d^{1.5}}{\eps^{2}})\cdot\log n$.
		\item $(1+\eps)3t$-spanner oracle with weak sparsity $e^{\frac{d}{2t^{2}}\cdot(1+\frac{2}{t^{2}})}\cdot\tilde{O}(\frac{d^{1.5}}{\eps})\cdot\log\log n$, and thus $(1+\eps)6t$-spanner with lightness  $e^{\frac{d}{2t^{2}}\cdot(1+\frac{2}{t^{2}})}\cdot\tilde{O}(\frac{d^{1.5}}{\eps^{2}})\cdot\log\log n$.
		\item $(1+\eps)4t$-spanner oracle with weak sparsity $e^{\frac{d}{2t^{2}}\cdot(1+\frac{2}{t^{2}})}\cdot\tilde{O}(\frac{d^{1.5}}{\eps})\cdot\log^* n$-spanner with lightness $e^{\frac{d}{2t^{2}}\cdot(1+\frac{2}{t^{2}})}\cdot\tilde{O}(\frac{d^{1.5}}{\eps^{2}})\cdot\log^* n$.		
	\end{OneLiners}
\end{corollary}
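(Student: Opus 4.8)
The plan is to derive all the bullets of \Cref{cor:highDimEuclidean} by feeding known Euclidean LSO's into the two meta-theorems that turn an LSO into a spanner oracle (\Cref{thm:LSOToSparsityOracle} for the classic LSO and \Cref{thm:TriangleLSOToSparsityOracle} for the triangle LSO), and then feeding the resulting oracles into the Le--Solomon reductions from spanner oracles to light spanners (\Cref{thm:LS21Unified2-general-stretch-1eps} for the $(1+\eps)$-stretch regime and \Cref{thm:LS21Unified2-general-stretch-2} for the larger-stretch regime). No new geometry is required; each assertion of the corollary is one choice of input LSO composed with one reduction.

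For the first bullet I would start from the classic $\left(O_d(\eps^{-d})\cdot\log\frac1\eps,\eps\right)$-LSO of Chan \etal~\cite{CHJ20}, but invoked with stretch parameter $\eps/8$ rather than $\eps$, so that its stretch $\rho=\eps/8<\frac14$ meets the hypothesis of \Cref{thm:LSOToSparsityOracle} while its cardinality is still $\tau=O_d(\eps^{-d})\cdot\log\frac1\eps$ (the factor $8^{d}$ is absorbed into $O_d$). Then \Cref{thm:LSOToSparsityOracle} yields a $(1+8\rho)=(1+\eps)$-spanner oracle of weak sparsity at most $2\tau=O_d(\eps^{-d})\cdot\log\frac1\eps$. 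Plugging this oracle into \Cref{thm:LS21Unified2-general-stretch-1eps} gives a $(1+O(\eps))$-spanner of lightness $\wsp\cdot\tilde{O}(\eps^{-1})+\tilde{O}(\eps^{-2})=O_d(\eps^{-d-1})\cdot\polylog\frac1\eps$, where the additive $\tilde{O}(\eps^{-2})$ term is dominated since $d\ge1$; a constant rescaling of $\eps$ then produces exactly the stated bound.

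For the three large-stretch bullets I would start from the triangle LSO of \Cref{thm:LSOEuclideanLargeStretch}, invoked with slack $\delta=\Theta(\eps)$, which gives a $(\tau,\rho)$-triangle LSO with $\rho=(1+\delta)t$ and $\tau=O\!\left(\frac{d^{1.5}}{\delta t}\cdot\log\frac{2\sqrt d}{t}\cdot\log\frac d\delta\cdot e^{\frac{d}{2t^{2}}(1+\frac{2}{t^{2}})}\right)$. Applying the three assertions of \Cref{thm:TriangleLSOToSparsityOracle} produces, respectively, a $2\rho$-, a $3\rho$-, and a $4\rho$-spanner oracle of weak sparsity $O(\rho\tau\log n)$, $O(\rho\tau\log\log n)$, and $O(\rho\tau\log^{*}n)$. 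Since $\rho\tau=O\!\left(\frac{d^{1.5}}{\delta}\cdot\log\frac{2\sqrt d}{t}\cdot\log\frac d\delta\cdot e^{\frac{d}{2t^{2}}(1+\frac{2}{t^{2}})}\right)=e^{\frac{d}{2t^{2}}(1+\frac{2}{t^{2}})}\cdot\tilde{O}\!\left(\frac{d^{1.5}}{\eps}\right)$ — absorbing $\log\frac{2\sqrt d}{t}=O(\log d)$ (valid as $t\in[4,2\sqrt d]$) and $\log\frac d\delta$ into the $\tilde{O}$ — these are exactly the claimed weak sparsities, and the oracle stretches $2\rho=(1+\delta)2t$, $3\rho=(1+\delta)3t$, $4\rho=(1+\delta)4t$ are $\le(1+\eps)2t$, $\le(1+\eps)3t$, $\le(1+\eps)4t$ after the reparametrisation. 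Each such oracle has stretch at least $2$ (as $t\ge4$ and $\rho\ge1$), so \Cref{thm:LS21Unified2-general-stretch-2} applies and converts it into a spanner with one further $(1+\eps)$ factor in the stretch and an extra $\tilde{O}(\eps^{-1})$ factor in the lightness, giving lightness $e^{\frac{d}{2t^{2}}(1+\frac{2}{t^{2}})}\cdot\tilde{O}\!\left(\frac{d^{1.5}}{\eps^{2}}\right)$ multiplied by $\log n$, $\log\log n$, or $\log^{*}n$, as stated.

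There is no substantive obstacle: every step is bookkeeping on top of results proved earlier (the triangle LSO of \Cref{thm:LSOEuclideanLargeStretch}, the oracle meta-theorems, and the Le--Solomon framework), and the only care needed is parameter tracking — pushing the stretch of \cite{CHJ20}'s LSO below $\frac14$ before invoking \Cref{thm:LSOToSparsityOracle}, absorbing the $\log\frac{2\sqrt d}{t}$ and $\log\frac d\delta$ factors into $\tilde{O}(\cdot)$, and observing that the slightly loose-looking stretch in the statement (writing, e.g., $(1+\eps)4t$ where the construction actually produces a $(1+O(\eps))2t$-spanner) is just a consequence of choosing the reduction parameter in \Cref{thm:LS21Unified2-general-stretch-2} to be a constant fraction of the target slack.
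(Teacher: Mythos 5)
Your proposal is correct and takes essentially the same route as the paper, whose (implicit) proof of \Cref{cor:highDimEuclidean} is exactly this composition: the classic LSO of \cite{CHJ20} fed into \Cref{thm:LSOToSparsityOracle} and then \Cref{thm:LS21Unified2-general-stretch-1eps} for the first bullet, and the triangle LSO of \Cref{thm:LSOEuclideanLargeStretch} fed into the three assertions of \Cref{thm:TriangleLSOToSparsityOracle} and then \Cref{thm:LS21Unified2-general-stretch-2} for the rest, with the same parameter bookkeeping. Your side remark that the construction in fact yields stretch $(1+O(\eps))2t$ rather than the stated $(1+\eps)4t$ (and similarly for the other bullets) is accurate and consistent with \Cref{tab:resultsLightSpanners}, so you are proving a statement at least as strong as the one claimed.
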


Using our LSO for $\ell_p$ spaces we conclude:
\begin{corollary}\label{cor:LightSpannerLp12}
	For every parameters $t\ge 21$, every set of $n$-point in $d$ dimensional $\ell_p$ space for $p\in[1,2]$ admits
	$t$-spanner oracle with weak sparsity $e^{O(\frac{d}{t^{p}})}\cdot\tilde{O}(d\cdot t)\cdot\log^{*}n$, and thus $t$-spanner with lightness $e^{O(\frac{d}{t^{p}})}\cdot\tilde{O}(d\cdot t)\cdot\log^{*}n$.
\end{corollary}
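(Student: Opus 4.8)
The plan is to derive \Cref{cor:LightSpannerLp12} directly from the machinery already assembled in this section, in exactly the same way \Cref{cor:LightSpannerDoubling} and \Cref{cor:highDimEuclidean} were obtained. First I would invoke \Cref{thm:LSOLp12}, which gives, for every $p\in[1,2]$, $t'\in[5,d^{1/p}]$, and $d\ge 1$, a $\left(e^{O(d/(t')^p)}\cdot\tilde O(d),\,t'\right)$-triangle LSO for $d$-dimensional $\ell_p$. Then I would feed this triangle LSO into the third bullet of \Cref{thm:TriangleLSOToSparsityOracle} (the $4\rho$ spanner oracle with weak sparsity $O(\rho\cdot\tau\cdot\log^* n)$), since that is the variant that yields only a $\log^* n$ factor rather than $\log n$ or $\log\log n$. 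With $\rho=t'$ and $\tau=e^{O(d/(t')^p)}\cdot\tilde O(d)$, this produces a $4t'$-spanner oracle with weak sparsity $\wsp \le O\!\left(t'\cdot e^{O(d/(t')^p)}\cdot\tilde O(d)\cdot\log^* n\right)=e^{O(d/(t')^p)}\cdot\tilde O(d\cdot t')\cdot\log^* n$.

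Next, since the target stretch $4t'$ is at least $2$ (indeed the hypothesis $t\ge 21$ makes the stretch comfortably large), I would apply \Cref{thm:LS21Unified2-general-stretch-2} of \cite{LS21Unified2} with an arbitrarily small constant $\eps_0$ (absorbed into the $\tilde O$). That theorem converts a $(4t')$-spanner oracle of weak sparsity $\wsp$ into a $(1+\eps_0)4t'$-spanner with lightness $\wsp\cdot\tilde O(\eps_0^{-1})=O(\wsp)$, i.e. lightness $e^{O(d/(t')^p)}\cdot\tilde O(d\cdot t')\cdot\log^* n$. The final statement is then obtained by the same rescaling device used at the end of the proofs of \Cref{thm:LSOEuclideanLargeStretch} and \Cref{thm:LSOLp12}: set $t' = c\cdot t$ for a suitable constant $c$ so that the resulting stretch $(1+\eps_0)4t'$ becomes exactly $t$ (this is why the corollary requires $t\ge 21$ — we need $t'\ge 5$ for \Cref{thm:LSOLp12} to apply, and $4\cdot 5=20$, with a little slack for the $(1+\eps_0)$ factor). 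Under this substitution $e^{O(d/(t')^p)}=e^{O(d/t^p)}$ and $\tilde O(d\cdot t')=\tilde O(d\cdot t)$, so the lightness bound reads $e^{O(d/t^p)}\cdot\tilde O(d\cdot t)\cdot\log^* n$, as claimed.

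There is essentially no real obstacle here — the corollary is a pure composition of three black boxes (the $\ell_p$ triangle LSO, the triangle-LSO-to-spanner-oracle reduction, and the Le–Solomon spanner-oracle-to-light-spanner reduction) plus a constant rescaling. The only points requiring a modicum of care are (i) checking that the stretch parameter stays in the admissible range of \Cref{thm:LSOLp12}, namely $t'\in[5,d^{1/p}]$, throughout the rescaling — this is exactly what the constraint $t\ge 21$ in the statement is arranged to guarantee, and implicitly the corollary is meaningful only when $t\le O(d^{1/p})$ (otherwise the $e^{O(d/t^p)}$ factor is just a constant and a weaker LSO suffices), and (ii) making sure that choosing the fixed small $\eps_0$ in \Cref{thm:LS21Unified2-general-stretch-2} only costs polylogarithmic-in-constant factors that are swallowed by the $\tilde O(\cdot)$ notation. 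Both are routine. I would therefore present the proof as a two-line chain of citations followed by the rescaling remark, mirroring the level of detail in the neighboring corollaries.
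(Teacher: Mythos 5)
Your proposal is correct and follows essentially the same route the paper intends: plug the $\ell_p$ triangle LSO of \Cref{thm:LSOLp12} (with $t'\approx t/4$) into the $\log^* n$ variant of \Cref{thm:TriangleLSOToSparsityOracle}, then apply \Cref{thm:LS21Unified2-general-stretch-2} with a constant $\eps_0$ and rescale, which also correctly accounts for the $t\ge 21$ threshold. No gaps.
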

\begin{corollary}\label{cor:LightSpannerLp2infty}
	Every set of $n$-point in $d$ dimensional $\ell_p$ space for $p\in[2,\infty]$ admits
	$4\cdot d^{1-\frac1p}$-spanner oracle with weak sparsity $\tilde{O}(d^{2-\frac{1}{p}})\cdot\log^{*}n$, and thus $5\cdot d^{1-\frac1p}$-spanner with lightness $\tilde{O}(d^{2-\frac{1}{p}})\cdot\log^{*}n$.
\end{corollary}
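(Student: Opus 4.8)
The plan is to obtain \Cref{cor:LightSpannerLp2infty} by chaining three results already established in the excerpt: the triangle LSO for $\ell_p$ spaces with $p\in[2,\infty]$ from \Cref{cor:LpLSO}, the reduction from triangle-LSO to spanner oracle in \Cref{thm:TriangleLSOToSparsityOracle}, and Le and Solomon's reduction from spanner oracle to light spanner, \Cref{thm:LS21Unified2-general-stretch-2}. No new machinery is needed; the work is purely in composing these and tracking parameters.

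First I would invoke \Cref{cor:LpLSO}: for $p\in[2,\infty]$ the $d$-dimensional $\ell_p$ space admits a $(\tau,\rho)$-triangle LSO with $\tau=O(d\log d)$ and $\rho=d^{1-\frac1p}$. Plugging this into the third assertion of \Cref{thm:TriangleLSOToSparsityOracle} — the variant built on the $4$-hop $1$-spanner for the path of \cite{Solomon13}, which is what produces the $\log^*n$ factor and costs an extra factor $2$ in the oracle stretch — yields a $4\rho=4\cdot d^{1-\frac1p}$ spanner oracle of weak sparsity $O(\rho\cdot\tau\cdot\log^*n)=O\!\left(d^{1-\frac1p}\cdot d\log d\cdot\log^*n\right)=\tilde{O}(d^{2-\frac1p})\cdot\log^*n$. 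This is exactly the first claim of the corollary. Then, since $p\ge 2$ and $d\ge 1$ force $\rho=d^{1-\frac1p}\ge 1$, the oracle has stretch $t=4d^{1-\frac1p}\ge 4\ge 2$, so \Cref{thm:LS21Unified2-general-stretch-2} applies (rather than the $1+\eps$ version). Taking $\eps=\frac14$ gives a $t(1+\eps)=5\cdot d^{1-\frac1p}$-spanner with lightness $\tilde{O}(d^{2-\frac1p})\cdot\log^*n\cdot\tilde{O}(\eps^{-1})=\tilde{O}(d^{2-\frac1p})\cdot\log^*n$, completing the proof.

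The computation is routine, so there is no real obstacle; the only points that deserve a second's attention are (i) verifying that $\rho=d^{1-\frac1p}\ge 1$ for $p\ge 2$, which is what lets us use the $t\ge 2$ reduction of \Cref{thm:LS21Unified2-general-stretch-2} (and keeps $\tilde{O}(\eps^{-1})$ a constant after fixing $\eps=\frac14$), and (ii) choosing the correct hop-count variant of \Cref{thm:TriangleLSOToSparsityOracle} — one trades a multiplicative factor of $2$ in the oracle stretch ($2\rho\to 4\rho$) for replacing $\log n$ by $\log^*n$ in the weak sparsity, which is precisely why the stated stretch is $4d^{1-\frac1p}$ for the oracle and $5d^{1-\frac1p}$ for the resulting light spanner.
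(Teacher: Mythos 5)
Your proposal is correct and follows essentially the same route the paper intends: it instantiates \Cref{cor:LpLSO} to get the $(O(d\log d),d^{1-\frac1p})$-triangle LSO, feeds it into the $4\rho$/$\log^*n$ variant of \Cref{thm:TriangleLSOToSparsityOracle}, and then applies \Cref{thm:LS21Unified2-general-stretch-2} with $\eps=\frac14$, exactly as in the paper's derivation of \Cref{cor:LightSpannerLp2infty}. The parameter bookkeeping (stretch $4d^{1-\frac1p}\ge 2$ for the oracle, $5d^{1-\frac1p}$ after the $(1+\eps)$ loss, and weak sparsity/lightness $\tilde{O}(d^{2-\frac1p})\cdot\log^*n$) checks out.
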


{\small 
	\bibliographystyle{alphaurlinit}
	\bibliography{LSObib}}

\newcommand{\etalchar}[1]{$^{#1}$}
\begin{thebibliography}{BLMN05b}

\bibitem[ABN11]{ABN11}
I.~Abraham, Y.~Bartal, and O.~Neiman.
\newblock Advances in metric embedding theory.
\newblock {\em Advances in Mathematics}, 228(6):3026--3126,
\newblock 2011, \href
  {http://dx.doi.org/https://doi.org/10.1016/j.aim.2011.08.003}
  {\path{doi:https://doi.org/10.1016/j.aim.2011.08.003}}.

\bibitem[ABS{\etalchar{+}}20]{ABSHJKS20}
A.~R. Ahmed, G.~Bodwin, F.~D. Sahneh, K.~Hamm, M.~J.~L. Jebelli, S.~G.
  Kobourov, and R.~Spence.
\newblock Graph spanners: {A} tutorial review.
\newblock {\em Comput. Sci. Rev.}, 37:100253,
\newblock 2020, \href {http://dx.doi.org/10.1016/j.cosrev.2020.100253}
  {\path{doi:10.1016/j.cosrev.2020.100253}}.

\bibitem[ACE{\etalchar{+}}20]{ACEFN20}
I.~Abraham, S.~Chechik, M.~Elkin, A.~Filtser, and O.~Neiman.
\newblock Ramsey spanning trees and their applications.
\newblock {\em {ACM} Trans. Algorithms}, 16(2):19:1--19:21, 2020.
\newblock
\newblock preliminary version published in SODA 2018, \href
  {http://dx.doi.org/10.1145/3371039} {\path{doi:10.1145/3371039}}.

\bibitem[ACKW15]{ACKW15}
I.~Abraham, S.~Chechik, R.~Krauthgamer, and U.~Wieder.
\newblock Approximate nearest neighbor search in metrics of planar graphs.
\newblock In N.~Garg, K.~Jansen, A.~Rao, and J.~D.~P. Rolim, editors, {\em
  Approximation, Randomization, and Combinatorial Optimization. Algorithms and
  Techniques, {APPROX/RANDOM} 2015, August 24-26, 2015, Princeton, NJ, {USA}},
  volume~40 of {\em LIPIcs}, pages 20--42. Schloss Dagstuhl - Leibniz-Zentrum
  f{\"{u}}r Informatik,
\newblock 2015, \href {http://dx.doi.org/10.4230/LIPIcs.APPROX-RANDOM.2015.20}
  {\path{doi:10.4230/LIPIcs.APPROX-RANDOM.2015.20}}.

\bibitem[ADD{\etalchar{+}}93]{ADDJS93}
I.~Alth{\"{o}}fer, G.~Das, D.~P. Dobkin, D.~Joseph, and J.~Soares.
\newblock On sparse spanners of weighted graphs.
\newblock {\em Discret. Comput. Geom.}, 9:81--100,
\newblock 1993, \href {http://dx.doi.org/10.1007/BF02189308}
  {\path{doi:10.1007/BF02189308}}.

\bibitem[ADF{\etalchar{+}}22]{ADFSW22}
S.~Alstrup, S.~Dahlgaard, A.~Filtser, M.~St{\"{o}}ckel, and C.~Wulff{-}Nilsen.
\newblock Constructing light spanners deterministically in near-linear time.
\newblock {\em Theor. Comput. Sci.}, 907:82--112,
\newblock 2022, \href {http://dx.doi.org/10.1016/j.tcs.2022.01.021}
  {\path{doi:10.1016/j.tcs.2022.01.021}}.

\bibitem[AFGN22]{AFGN22}
I.~Abraham, A.~Filtser, A.~Gupta, and O.~Neiman.
\newblock Metric embedding via shortest path decompositions.
\newblock {\em {SIAM} J. Comput.}, 51(2):290--314, 2022.
\newblock
\newblock a priliminary version apperared in the proceedings of STOC 18, \href
  {http://dx.doi.org/10.1137/19m1296021} {\path{doi:10.1137/19m1296021}}.

\bibitem[AG06]{AG06}
I.~Abraham and C.~Gavoille.
\newblock Object location using path separators.
\newblock In {\em Proceedings of the Twenty-fifth Annual ACM Symposium on
  Principles of Distributed Computing}, PODC '06, pages 188--197, 2006.
\newblock
\newblock Full version:
  \url{https://www.cse.huji.ac.il/~ittaia/papers/AG-TR.pdf}, \href
  {http://dx.doi.org/10.1145/1146381.1146411}
  {\path{doi:10.1145/1146381.1146411}}.

\bibitem[AGHP16]{AGHP16}
S.~Alstrup, I.~L. G{\o}rtz, E.~B. Halvorsen, and E.~Porat.
\newblock Distance labeling schemes for trees.
\newblock In {\em 43rd International Colloquium on Automata, Languages, and
  Programming, {ICALP} 2016, July 11-15, 2016, Rome, Italy}, pages
  132:1--132:16,
\newblock 2016, \href {http://dx.doi.org/10.4230/LIPIcs.ICALP.2016.132}
  {\path{doi:10.4230/LIPIcs.ICALP.2016.132}}.

\bibitem[AGKR04]{AGKR04}
S.~Alstrup, C.~Gavoille, H.~Kaplan, and T.~Rauhe.
\newblock Nearest common ancestors: {A} survey and a new algorithm for a
  distributed environment.
\newblock {\em Theory Comput. Syst.}, 37(3):441--456,
\newblock 2004, \href {http://dx.doi.org/10.1007/s00224-004-1155-5}
  {\path{doi:10.1007/s00224-004-1155-5}}.

\bibitem[AGM12]{AGM12Spanners}
K.~J. Ahn, S.~Guha, and A.~McGregor.
\newblock Graph sketches: sparsification, spanners, and subgraphs.
\newblock In {\em Proceedings of the 31st {ACM} {SIGMOD-SIGACT-SIGART}
  Symposium on Principles of Database Systems, {PODS} 2012, Scottsdale, AZ,
  USA, May 20-24, 2012}, pages 5--14,
\newblock 2012, \href {http://dx.doi.org/10.1145/2213556.2213560}
  {\path{doi:10.1145/2213556.2213560}}.

\bibitem[AHL14]{AHL14}
S.~Alstrup, E.~B. Halvorsen, and K.~G. Larsen.
\newblock Near-optimal labeling schemes for nearest common ancestors.
\newblock In C.~Chekuri, editor, {\em Proceedings of the Twenty-Fifth Annual
  {ACM-SIAM} Symposium on Discrete Algorithms, {SODA} 2014, Portland, Oregon,
  USA, January 5-7, 2014}, pages 972--982. {SIAM},
\newblock 2014, \href {http://dx.doi.org/10.1137/1.9781611973402.72}
  {\path{doi:10.1137/1.9781611973402.72}}.

\bibitem[AI08]{AI08}
A.~Andoni and P.~Indyk.
\newblock Near-optimal hashing algorithms for approximate nearest neighbor in
  high dimensions.
\newblock {\em Commun. {ACM}}, 51(1):117--122, 2008.
\newblock
\newblock Preliminary version published in FOCS 2006, \href
  {http://dx.doi.org/10.1145/1327452.1327494}
  {\path{doi:10.1145/1327452.1327494}}.

\bibitem[AIR18]{AIR18}
A.~Andoni, P.~Indyk, and I.~P. Razenshteyn.
\newblock Approximate nearest neighbor search in high dimensions.
\newblock {\em CoRR}, abs/1806.09823,
\newblock 2018, \href {http://arxiv.org/abs/1806.09823}
  {\path{arXiv:1806.09823}}.

\bibitem[Als08]{Alspach08}
B.~Alspach.
\newblock The wonderful {W}alecki construction.
\newblock {\em Bull. Inst. Combin. Appl}, 52:7--20, 2008.
\newblock
\newblock see
  \href{https://www.researchgate.net/publication/267666953_The_wonderful_Walecki_construction}{here}.

\bibitem[And09]{AndoniThesis}
A.~Andoni.
\newblock {\em Nearest neighbor search: the old, the new, and the impossible}.
\newblock PhD thesis, Massachusetts Institute of Technology, 2009.
\newblock
\newblock see \href{https://www.cs.columbia.edu/~andoni/thesis/main.pdf}{here}.

\bibitem[ANN{\etalchar{+}}17]{ANNRW17}
A.~Andoni, H.~L. Nguyen, A.~Nikolov, I.~P. Razenshteyn, and E.~Waingarten.
\newblock Approximate near neighbors for general symmetric norms.
\newblock In H.~Hatami, P.~McKenzie, and V.~King, editors, {\em Proceedings of
  the 49th Annual {ACM} {SIGACT} Symposium on Theory of Computing, {STOC} 2017,
  Montreal, QC, Canada, June 19-23, 2017}, pages 902--913. {ACM},
\newblock 2017, \href {http://dx.doi.org/10.1145/3055399.3055418}
  {\path{doi:10.1145/3055399.3055418}}.

\bibitem[ANRW21]{ANRW21}
A.~Andoni, A.~Nikolov, I.~P. Razenshteyn, and E.~Waingarten.
\newblock Approximate nearest neighbors beyond space partitions.
\newblock In D.~Marx, editor, {\em Proceedings of the 2021 {ACM-SIAM} Symposium
  on Discrete Algorithms, {SODA} 2021, Virtual Conference, January 10 - 13,
  2021}, pages 1171--1190. {SIAM},
\newblock 2021, \href {http://dx.doi.org/10.1137/1.9781611976465.72}
  {\path{doi:10.1137/1.9781611976465.72}}.

\bibitem[AP90]{AP90}
B.~Awerbuch and D.~Peleg.
\newblock Sparse partitions.
\newblock In {\em Proceedings of the 31st IEEE Symposium on Foundations of
  Computer Science (FOCS)}, pages 503--513,
\newblock 1990, \href {http://dx.doi.org/10.1109/FSCS.1990.89571}
  {\path{doi:10.1109/FSCS.1990.89571}}.

\bibitem[AR15]{AR15}
A.~Andoni and I.~P. Razenshteyn.
\newblock Optimal data-dependent hashing for approximate near neighbors.
\newblock In R.~A. Servedio and R.~Rubinfeld, editors, {\em Proceedings of the
  Forty-Seventh Annual {ACM} on Symposium on Theory of Computing, {STOC} 2015,
  Portland, OR, USA, June 14-17, 2015}, pages 793--801. {ACM},
\newblock 2015, \href {http://dx.doi.org/10.1145/2746539.2746553}
  {\path{doi:10.1145/2746539.2746553}}.

\bibitem[AS87]{AS87}
N.~Alon and B.~Schieber.
\newblock Optimal preprocessing for answering on-line product queries.
\newblock Technical report,
\newblock 1987.

\bibitem[ASZ20]{ASZ20}
A.~Andoni, C.~Stein, and P.~Zhong.
\newblock Parallel approximate undirected shortest paths via low hop emulators.
\newblock In K.~Makarychev, Y.~Makarychev, M.~Tulsiani, G.~Kamath, and
  J.~Chuzhoy, editors, {\em Proccedings of the 52nd Annual {ACM} {SIGACT}
  Symposium on Theory of Computing, {STOC} 2020, Chicago, IL, USA, June 22-26,
  2020}, pages 322--335. {ACM},
\newblock 2020, \href {http://dx.doi.org/10.1145/3357713.3384321}
  {\path{doi:10.1145/3357713.3384321}}.

\bibitem[AT11]{AT11}
I.~Akg{\"{u}}n and B.~{\c{C}}. Tansel.
\newblock New formulations of the hop-constrained minimum spanning tree problem
  via miller-tucker-zemlin constraints.
\newblock {\em Eur. J. Oper. Res.}, 212(2):263--276,
\newblock 2011, \href {http://dx.doi.org/10.1016/j.ejor.2011.01.051}
  {\path{doi:10.1016/j.ejor.2011.01.051}}.

\bibitem[BA92]{BA92}
A.~Balakrishnan and K.~Altinkemer.
\newblock Using a hop-constrained model to generate alternative communication
  network design.
\newblock {\em {INFORMS} J. Comput.}, 4(2):192--205,
\newblock 1992, \href {http://dx.doi.org/10.1287/ijoc.4.2.192}
  {\path{doi:10.1287/ijoc.4.2.192}}.

\bibitem[Bar21]{Bar21}
Y.~Bartal.
\newblock Advances in metric ramsey theory and its applications.
\newblock {\em CoRR}, abs/2104.03484,
\newblock 2021, \href {http://arxiv.org/abs/2104.03484}
  {\path{arXiv:2104.03484}}.

\bibitem[Bas08]{Baswana08}
S.~Baswana.
\newblock Streaming algorithm for graph spanners - single pass and constant
  processing time per edge.
\newblock {\em Inf. Process. Lett.}, 106(3):110--114,
\newblock 2008, \href {http://dx.doi.org/10.1016/j.ipl.2007.11.001}
  {\path{doi:10.1016/j.ipl.2007.11.001}}.

\bibitem[BCL02]{BCL02}
K.~Ball, E.~A. Carlen, and E.~H. Lieb.
\newblock {\em Sharp uniform convexity and smoothness inequalities for trace
  norms}, pages 171--190.
\newblock Springer Berlin Heidelberg, Berlin, Heidelberg,
\newblock 2002, \href {http://dx.doi.org/10.1007/978-3-642-55925-9_18}
  {\path{doi:10.1007/978-3-642-55925-9_18}}.

\bibitem[BDG{\etalchar{+}}21]{BDGMN21}
A.~S. Biswas, M.~Dory, M.~Ghaffari, S.~Mitrovic, and Y.~Nazari.
\newblock Massively parallel algorithms for distance approximation and
  spanners.
\newblock In K.~Agrawal and Y.~Azar, editors, {\em {SPAA} '21: 33rd {ACM}
  Symposium on Parallelism in Algorithms and Architectures, Virtual Event, USA,
  6-8 July, 2021}, pages 118--128. {ACM},
\newblock 2021, \href {http://dx.doi.org/10.1145/3409964.3461784}
  {\path{doi:10.1145/3409964.3461784}}.

\bibitem[BDMS13]{BDMS13}
P.~Bose, V.~Dujmovic, P.~Morin, and M.~H.~M. Smid.
\newblock Robust geometric spanners.
\newblock {\em {SIAM} J. Comput.}, 42(4):1720--1736, 2013.
\newblock
\newblock preliminary version published in SOCG 2013, \href
  {http://dx.doi.org/10.1137/120874473} {\path{doi:10.1137/120874473}}.

\bibitem[BDPW18]{BDPW18}
G.~Bodwin, M.~Dinitz, M.~Parter, and V.~V. Williams.
\newblock Optimal vertex fault tolerant spanners (for fixed stretch).
\newblock In {\em Proceedings of the Twenty-Ninth Annual {ACM-SIAM} Symposium
  on Discrete Algorithms, {SODA} 2018, New Orleans, LA, USA, January 7-10,
  2018}, pages 1884--1900,
\newblock 2018, \href {http://dx.doi.org/10.1137/1.9781611975031.123}
  {\path{doi:10.1137/1.9781611975031.123}}.

\bibitem[BDR21]{BDR21}
G.~Bodwin, M.~Dinitz, and C.~Robelle.
\newblock Optimal vertex fault-tolerant spanners in polynomial time.
\newblock In D.~Marx, editor, {\em Proceedings of the 2021 {ACM-SIAM} Symposium
  on Discrete Algorithms, {SODA} 2021, Virtual Conference, January 10 - 13,
  2021}, pages 2924--2938. {SIAM},
\newblock 2021, \href {http://dx.doi.org/10.1137/1.9781611976465.174}
  {\path{doi:10.1137/1.9781611976465.174}}.

\bibitem[Ben66]{Benson66}
C.~T. Benson.
\newblock Minimal regular graphs of girths eight and twelve.
\newblock {\em Canadian Journal of Mathematics}, 18:1091–1094,
\newblock 1966, \href {http://dx.doi.org/10.4153/CJM-1966-109-8}
  {\path{doi:10.4153/CJM-1966-109-8}}.

\bibitem[BF18]{BF18}
J.~D. Boeck and B.~Fortz.
\newblock Extended formulation for hop constrained distribution network
  configuration problems.
\newblock {\em Eur. J. Oper. Res.}, 265(2):488--502,
\newblock 2018, \href {http://dx.doi.org/10.1016/j.ejor.2017.08.017}
  {\path{doi:10.1016/j.ejor.2017.08.017}}.

\bibitem[BFH19]{BFH19}
A.~Bernstein, S.~Forster, and M.~Henzinger.
\newblock A deamortization approach for dynamic spanner and dynamic maximal
  matching.
\newblock In {\em Proceedings of the Thirtieth Annual {ACM-SIAM} Symposium on
  Discrete Algorithms, {SODA} 2019, San Diego, California, USA, January 6-9,
  2019}, pages 1899--1918,
\newblock 2019, \href {http://dx.doi.org/10.1137/1.9781611975482.115}
  {\path{doi:10.1137/1.9781611975482.115}}.

\bibitem[BFKT22]{BFKT22}
S.~Bhore, A.~Filtser, H.~Khodabandeh, and C.~D. T{\'{o}}th.
\newblock Online spanners in metric spaces.
\newblock In {\em 30th Annual European Symposium on Algorithms, {ESA} 2022,
  September 5-7, 2022, Potsdam, Germany}, 2022.
\newblock
\newblock to appear, see \href{https://arxiv.org/abs/2202.09991}{arXiv
  version}.

\bibitem[BFN19a]{BFN19Ramsey}
Y.~Bartal, N.~Fandina, and O.~Neiman.
\newblock Covering metric spaces by few trees.
\newblock In {\em 46th International Colloquium on Automata, Languages, and
  Programming, {ICALP} 2019, July 9-12, 2019, Patras, Greece}, pages
  20:1--20:16,
\newblock 2019, \href {http://dx.doi.org/10.4230/LIPIcs.ICALP.2019.20}
  {\path{doi:10.4230/LIPIcs.ICALP.2019.20}}.

\bibitem[BFN19b]{BFN19}
Y.~Bartal, A.~Filtser, and O.~Neiman.
\newblock On notions of distortion and an almost minimum spanning tree with
  constant average distortion.
\newblock {\em J. Comput. Syst. Sci.}, 105:116--129, 2019.
\newblock
\newblock preliminary version published in SODA 2016, \href
  {http://dx.doi.org/10.1016/j.jcss.2019.04.006}
  {\path{doi:10.1016/j.jcss.2019.04.006}}.

\bibitem[BHO19]{BHO19}
K.~Buchin, S.~Har{-}Peled, and D.~Ol{\'{a}}h.
\newblock A spanner for the day after.
\newblock In {\em 35th International Symposium on Computational Geometry, SoCG
  2019, June 18-21, 2019, Portland, Oregon, {USA}}, pages 19:1--19:15,
\newblock 2019, \href {http://dx.doi.org/10.4230/LIPIcs.SoCG.2019.19}
  {\path{doi:10.4230/LIPIcs.SoCG.2019.19}}.

\bibitem[BHO20]{BHO20}
K.~Buchin, S.~Har{-}Peled, and D.~Ol{\'{a}}h.
\newblock Sometimes reliable spanners of almost linear size.
\newblock In {\em 28th Annual European Symposium on Algorithms, {ESA} 2020,
  September 7-9, 2020, Pisa, Italy (Virtual Conference)}, pages 27:1--27:15,
\newblock 2020, \href {http://dx.doi.org/10.4230/LIPIcs.ESA.2020.27}
  {\path{doi:10.4230/LIPIcs.ESA.2020.27}}.

\bibitem[BKS12]{BKS12}
S.~Baswana, S.~Khurana, and S.~Sarkar.
\newblock Fully dynamic randomized algorithms for graph spanners.
\newblock {\em {ACM} Trans. Algorithms}, 8(4):35:1--35:51,
\newblock 2012, \href {http://dx.doi.org/10.1145/2344422.2344425}
  {\path{doi:10.1145/2344422.2344425}}.

\bibitem[BLMN05a]{BLMN05b}
Y.~Bartal, N.~Linial, M.~Mendel, and A.~Naor.
\newblock On metric {R}amsey-type dichotomies.
\newblock {\em Journal of the London Mathematical Society}, 71(2):289--303,
\newblock 2005, \href {http://dx.doi.org/10.1112/S0024610704006155}
  {\path{doi:10.1112/S0024610704006155}}.

\bibitem[BLMN05b]{BLMN05}
Y.~Bartal, N.~Linial, M.~Mendel, and A.~Naor.
\newblock Some low distortion metric ramsey problems.
\newblock {\em Discret. Comput. Geom.}, 33(1):27--41,
\newblock 2005, \href {http://dx.doi.org/10.1007/s00454-004-1100-z}
  {\path{doi:10.1007/s00454-004-1100-z}}.

\bibitem[BLW17]{BLW17}
G.~Borradaile, H.~Le, and C.~Wulff{-}Nilsen.
\newblock Minor-free graphs have light spanners.
\newblock In {\em 2017 IEEE 58th Annual Symposium on Foundations of Computer
  Science}, FOCS '17, pages 767--778,
\newblock 2017, \href {http://dx.doi.org/10.1109/FOCS.2017.76}
  {\path{doi:10.1109/FOCS.2017.76}}.

\bibitem[BLW19]{BLW19}
G.~Borradaile, H.~Le, and C.~Wulff{-}Nilsen.
\newblock Greedy spanners are optimal in doubling metrics.
\newblock In {\em Proceedings of the 30th Annual ACM-SIAM Symposium on Discrete
  Algorithms}, SODA `19, pages 2371--2379,
\newblock 2019, \href {http://dx.doi.org/10.1137/1.9781611975482.145}
  {\path{doi:10.1137/1.9781611975482.145}}.

\bibitem[BP19]{BP19Spanners}
G.~Bodwin and S.~Patel.
\newblock A trivial yet optimal solution to vertex fault tolerant spanners.
\newblock In {\em Proceedings of the 2019 {ACM} Symposium on Principles of
  Distributed Computing, {PODC} 2019, Toronto, ON, Canada, July 29 - August 2,
  2019}, pages 541--543,
\newblock 2019, \href {http://dx.doi.org/10.1145/3293611.3331588}
  {\path{doi:10.1145/3293611.3331588}}.

\bibitem[BS07]{BS07}
S.~Baswana and S.~Sen.
\newblock A simple and linear time randomized algorithm for computing sparse
  spanners in weighted graphs.
\newblock {\em Random Struct. Algorithms}, 30(4):532--563, 2007.
\newblock
\newblock preliminary version published in ICALP 2003, \href
  {http://dx.doi.org/10.1002/rsa.20130} {\path{doi:10.1002/rsa.20130}}.

\bibitem[BTS94]{BTS94}
H.~L. Bodlaender, G.~Tel, and N.~Santoro.
\newblock Trade-offs in non-reversing diameter.
\newblock {\em Nord. J. Comput.}, 1(1):111--134, 1994.
\newblock
\newblock see \href{https://dl.acm.org/doi/10.5555/640186.640193}{here}.

\bibitem[CCG{\etalchar{+}}98]{CCGGP98}
M.~Charikar, C.~Chekuri, A.~Goel, S.~Guha, and S.~A. Plotkin.
\newblock Approximating a finite metric by a small number of tree metrics.
\newblock In {\em 39th Annual Symposium on Foundations of Computer Science,
  {FOCS} '98, November 8-11, 1998, Palo Alto, California, {USA}}, pages
  379--388. {IEEE} Computer Society,
\newblock 1998, \href {http://dx.doi.org/10.1109/SFCS.1998.743488}
  {\path{doi:10.1109/SFCS.1998.743488}}.

\bibitem[CCGG98]{CCGG98}
M.~Charikar, C.~Chekuri, A.~Goel, and S.~Guha.
\newblock Rounding via trees: deterministic approximation algorithms for group
  steiner trees and k-median.
\newblock In {\em STOC '98: Proceedings of the thirtieth annual ACM symposium
  on Theory of computing}, pages 114--123, New York, NY, USA, 1998.
\newblock ACM Press, \href
  {http://dx.doi.org/http://doi.acm.org/10.1145/276698.276719}
  {\path{doi:http://doi.acm.org/10.1145/276698.276719}}.

\bibitem[CDNS95]{CDNS95}
B.~Chandra, G.~Das, G.~Narasimhan, and J.~Soares.
\newblock New sparseness results on graph spanners.
\newblock {\em Int. J. Comput. Geom. Appl.}, 5:125--144, 1995.
\newblock
\newblock preliminary version published in SOCG 1992, \href
  {http://dx.doi.org/10.1142/S0218195995000088}
  {\path{doi:10.1142/S0218195995000088}}.

\bibitem[CFKL20]{CFKL20}
V.~Cohen{-}Addad, A.~Filtser, P.~N. Klein, and H.~Le.
\newblock On light spanners, low-treewidth embeddings and efficient traversing
  in minor-free graphs.
\newblock {\em CoRR}, abs/2009.05039, 2020.
\newblock
\newblock To appear in FOCS 2020,\url{https://arxiv.org/abs/2009.05039}, \href
  {http://arxiv.org/abs/2009.05039} {\path{arXiv:2009.05039}}.

\bibitem[CG06]{CG06NNS}
R.~Cole and L.~Gottlieb.
\newblock Searching dynamic point sets in spaces with bounded doubling
  dimension.
\newblock In J.~M. Kleinberg, editor, {\em Proceedings of the 38th Annual {ACM}
  Symposium on Theory of Computing, Seattle, WA, USA, May 21-23, 2006}, pages
  574--583. {ACM},
\newblock 2006, \href {http://dx.doi.org/10.1145/1132516.1132599}
  {\path{doi:10.1145/1132516.1132599}}.

\bibitem[CG12]{CG12}
T.~H. Chan and A.~Gupta.
\newblock Approximating {TSP} on metrics with bounded global growth.
\newblock {\em {SIAM} J. Comput.}, 41(3):587--617, 2012.
\newblock
\newblock preliminary version published in SODA 2008, \href
  {http://dx.doi.org/10.1137/090749396} {\path{doi:10.1137/090749396}}.

\bibitem[Che15]{Chechik15}
S.~Chechik.
\newblock Approximate distance oracles with improved bounds.
\newblock In R.~A. Servedio and R.~Rubinfeld, editors, {\em Proceedings of the
  Forty-Seventh Annual {ACM} on Symposium on Theory of Computing, {STOC} 2015,
  Portland, OR, USA, June 14-17, 2015}, pages 1--10. {ACM},
\newblock 2015, \href {http://dx.doi.org/10.1145/2746539.2746562}
  {\path{doi:10.1145/2746539.2746562}}.

\bibitem[CHJ20]{CHJ20}
T.~M. Chan, S.~Har{-}Peled, and M.~Jones.
\newblock On locality-sensitive orderings and their applications.
\newblock {\em {SIAM} J. Comput.}, 49(3):583--600, 2020.
\newblock
\newblock preliminary version published in ITCS 2019, \href
  {http://dx.doi.org/10.1137/19M1246493} {\path{doi:10.1137/19M1246493}}.

\bibitem[CLPR10]{CLPR10}
S.~Chechik, M.~Langberg, D.~Peleg, and L.~Roditty.
\newblock Fault tolerant spanners for general graphs.
\newblock {\em {SIAM} J. Comput.}, 39(7):3403--3423, 2010.
\newblock
\newblock preliminary version published in STOC 2009, \href
  {http://dx.doi.org/10.1137/090758039} {\path{doi:10.1137/090758039}}.

\bibitem[Coh00]{Cohen00}
E.~Cohen.
\newblock Polylog-time and near-linear work approximation scheme for undirected
  shortest paths.
\newblock {\em J. {ACM}}, 47(1):132--166, 2000.
\newblock
\newblock preliminary version published in STOC 1994, \href
  {http://dx.doi.org/10.1145/331605.331610} {\path{doi:10.1145/331605.331610}}.

\bibitem[CW18]{CW18}
S.~Chechik and C.~Wulff{-}Nilsen.
\newblock Near-optimal light spanners.
\newblock {\em {ACM} Trans. Algorithms}, 14(3):33:1--33:15, 2018.
\newblock
\newblock preliminary version published in SODA 2016, \href
  {http://dx.doi.org/10.1145/3199607} {\path{doi:10.1145/3199607}}.

\bibitem[DG03]{DG03}
S.~Dasgupta and A.~Gupta.
\newblock An elementary proof of a theorem of johnson and lindenstrauss.
\newblock {\em Random Struct. Algorithms}, 22(1):60--65,
\newblock 2003, \href {http://dx.doi.org/10.1002/rsa.10073}
  {\path{doi:10.1002/rsa.10073}}.

\bibitem[DHN93]{DHN93}
G.~Das, P.~J. Heffernan, and G.~Narasimhan.
\newblock Optimally sparse spanners in 3-dimensional euclidean space.
\newblock In C.~Yap, editor, {\em Proceedings of the Ninth Annual Symposium on
  Computational GeometrySan Diego, CA, USA, May 19-21, 1993}, pages 53--62.
  {ACM},
\newblock 1993, \href {http://dx.doi.org/10.1145/160985.160998}
  {\path{doi:10.1145/160985.160998}}.

\bibitem[DK11]{DK11}
M.~Dinitz and R.~Krauthgamer.
\newblock Fault-tolerant spanners: better and simpler.
\newblock In {\em Proceedings of the 30th Annual {ACM} Symposium on Principles
  of Distributed Computing, {PODC} 2011, San Jose, CA, USA, June 6-8, 2011},
  pages 169--178,
\newblock 2011, \href {http://dx.doi.org/10.1145/1993806.1993830}
  {\path{doi:10.1145/1993806.1993830}}.

\bibitem[DR20]{DR20}
M.~Dinitz and C.~Robelle.
\newblock Efficient and simple algorithms for fault-tolerant spanners.
\newblock In {\em {PODC} '20: {ACM} Symposium on Principles of Distributed
  Computing, Virtual Event, Italy, August 3-7, 2020}, pages 493--500,
\newblock 2020, \href {http://dx.doi.org/10.1145/3382734.3405735}
  {\path{doi:10.1145/3382734.3405735}}.

\bibitem[DS17]{DS17}
A.~Driemel and F.~Silvestri.
\newblock {Locality-Sensitive Hashing of Curves}.
\newblock In {\em Proceedings of the 33rd International Symposium on
  Computational Geometry}, volume~77, pages 37:1--37:16, Brisbane, Australia,
  July 2017.
\newblock Schloss Dagstuhl--Leibniz-Zentrum f\"ur Informatik, \href
  {http://dx.doi.org/10.4230/LIPIcs.SoCG.2017.37}
  {\path{doi:10.4230/LIPIcs.SoCG.2017.37}}.

\bibitem[EFN18]{EFN18}
M.~Elkin, A.~Filtser, and O.~Neiman.
\newblock Prioritized metric structures and embedding.
\newblock {\em {SIAM} J. Comput.}, 47(3):829--858,
\newblock 2018, \href {http://dx.doi.org/10.1137/17M1118749}
  {\path{doi:10.1137/17M1118749}}.

\bibitem[EFN20]{EFN20}
M.~Elkin, A.~Filtser, and O.~Neiman.
\newblock Distributed construction of light networks.
\newblock In {\em Proceedings of the 39th Symposium on Principles of
  Distributed Computing}, PODC'20, pages 483–--492,
\newblock 2020, \href {http://dx.doi.org/10.1145/3382734.3405701}
  {\path{doi:10.1145/3382734.3405701}}.

\bibitem[EGK{\etalchar{+}}14]{EGKRTT14}
M.~Englert, A.~Gupta, R.~Krauthgamer, H.~R{\"{a}}cke, I.~Talgam{-}Cohen, and
  K.~Talwar.
\newblock Vertex sparsifiers: New results from old techniques.
\newblock {\em {SIAM} J. Comput.}, 43(4):1239--1262,
\newblock 2014, \href {http://dx.doi.org/10.1137/130908440}
  {\path{doi:10.1137/130908440}}.

\bibitem[Elk11]{Elkin11}
M.~Elkin.
\newblock Streaming and fully dynamic centralized algorithms for constructing
  and maintaining sparse spanners.
\newblock {\em {ACM} Trans. Algorithms}, 7(2):20:1--20:17,
\newblock 2011, \href {http://dx.doi.org/10.1145/1921659.1921666}
  {\path{doi:10.1145/1921659.1921666}}.

\bibitem[EN19a]{EN19}
M.~Elkin and O.~Neiman.
\newblock Efficient algorithms for constructing very sparse spanners and
  emulators.
\newblock {\em {ACM} Trans. Algorithms}, 15(1):4:1--4:29, 2019.
\newblock
\newblock preliminary version published in SODA 2017, \href
  {http://dx.doi.org/10.1145/3274651} {\path{doi:10.1145/3274651}}.

\bibitem[EN19b]{EN19hop}
M.~Elkin and O.~Neiman.
\newblock Hopsets with constant hopbound, and applications to approximate
  shortest paths.
\newblock {\em {SIAM} J. Comput.}, 48(4):1436--1480, 2019.
\newblock
\newblock preliminary version published in FOCS 2016, \href
  {http://dx.doi.org/10.1137/18M1166791} {\path{doi:10.1137/18M1166791}}.

\bibitem[EN20]{EN20}
M.~Elkin and O.~Neiman.
\newblock Near-additive spanners and near-exact hopsets, {A} unified view.
\newblock {\em Bull. {EATCS}}, 130, 2020.
\newblock
\newblock see
  \href{http://bulletin.eatcs.org/index.php/beatcs/article/view/608}{here}.

\bibitem[ENS15]{ENS15}
M.~Elkin, O.~Neiman, and S.~Solomon.
\newblock Light spanners.
\newblock {\em {SIAM} J. Discret. Math.}, 29(3):1312--1321,
\newblock 2015, \href {http://dx.doi.org/10.1137/140979538}
  {\path{doi:10.1137/140979538}}.

\bibitem[EP15]{EP15}
M.~Elkin and S.~Pettie.
\newblock A linear-size logarithmic stretch path-reporting distance oracle for
  general graphs.
\newblock In {\em Proceedings of the Twenty-Sixth Annual {ACM-SIAM} Symposium
  on Discrete Algorithms, {SODA} 2015, San Diego, CA, USA, January 4-6, 2015},
  pages 805--821,
\newblock 2015, \href {http://dx.doi.org/10.1137/1.9781611973730.55}
  {\path{doi:10.1137/1.9781611973730.55}}.

\bibitem[EP18]{EP18}
I.~Z. Emiris and I.~Psarros.
\newblock Products of euclidean metrics and applications to proximity questions
  among curves.
\newblock In B.~Speckmann and C.~D. T{\'{o}}th, editors, {\em 34th
  International Symposium on Computational Geometry, SoCG 2018, June 11-14,
  2018, Budapest, Hungary}, volume~99 of {\em LIPIcs}, pages 37:1--37:13.
  Schloss Dagstuhl - Leibniz-Zentrum f{\"{u}}r Informatik,
\newblock 2018, \href {http://dx.doi.org/10.4230/LIPIcs.SoCG.2018.37}
  {\path{doi:10.4230/LIPIcs.SoCG.2018.37}}.

\bibitem[Erd64]{Erdos64}
P.~Erd\H{o}s.
\newblock Extremal problems in graph theory.
\newblock {\em Theory of Graphs and Its Applications (Proc. Sympos.
  Smolenice)}, pages 29--36, 1964.
\newblock
\newblock see
  \href{http://citeseerx.ist.psu.edu/viewdoc/summary?doi=10.1.1.210.7240}{here}.

\bibitem[ES16]{ES16}
M.~Elkin and S.~Solomon.
\newblock Fast constructions of lightweight spanners for general graphs.
\newblock {\em {ACM} Trans. Algorithms}, 12(3):29:1--29:21,
\newblock 2016, \href {http://dx.doi.org/10.1145/2836167}
  {\path{doi:10.1145/2836167}}.

\bibitem[FFK20]{FFK20}
A.~Filtser, O.~Filtser, and M.~J. Katz.
\newblock Approximate nearest neighbor for curves - simple, efficient, and
  deterministic.
\newblock In A.~Czumaj, A.~Dawar, and E.~Merelli, editors, {\em 47th
  International Colloquium on Automata, Languages, and Programming, {ICALP}
  2020, July 8-11, 2020, Saarbr{\"{u}}cken, Germany (Virtual Conference)},
  volume 168 of {\em LIPIcs}, pages 48:1--48:19. Schloss Dagstuhl -
  Leibniz-Zentrum f{\"{u}}r Informatik,
\newblock 2020, \href {http://dx.doi.org/10.4230/LIPIcs.ICALP.2020.48}
  {\path{doi:10.4230/LIPIcs.ICALP.2020.48}}.

\bibitem[FGK20]{FGK20}
A.~Filtser, L.~Gottlieb, and R.~Krauthgamer.
\newblock Labelings vs. embeddings: On distributed representations of
  distances.
\newblock In {\em Proceedings of the 2020 {ACM-SIAM} Symposium on Discrete
  Algorithms, {SODA} 2020, Salt Lake City, UT, USA, January 5-8, 2020}, pages
  1063--1075,
\newblock 2020, \href {http://dx.doi.org/10.1137/1.9781611975994.65}
  {\path{doi:10.1137/1.9781611975994.65}}.

\bibitem[FGNW17]{FGNW17}
O.~Freedman, P.~Gawrychowski, P.~K. Nicholson, and O.~Weimann.
\newblock Optimal distance labeling schemes for trees.
\newblock In {\em Proceedings of the {ACM} Symposium on Principles of
  Distributed Computing, {PODC} 2017, Washington, DC, USA, July 25-27, 2017},
  pages 185--194,
\newblock 2017, \href {http://dx.doi.org/10.1145/3087801.3087804}
  {\path{doi:10.1145/3087801.3087804}}.

\bibitem[Fil19]{Fil19padded}
A.~Filtser.
\newblock On strong diameter padded decompositions.
\newblock In {\em Approximation, Randomization, and Combinatorial Optimization.
  Algorithms and Techniques, {APPROX/RANDOM} 2019, September 20-22, 2019,
  Massachusetts Institute of Technology, Cambridge, MA, {USA}}, pages
  6:1--6:21,
\newblock 2019, \href {http://dx.doi.org/10.4230/LIPIcs.APPROX-RANDOM.2019.6}
  {\path{doi:10.4230/LIPIcs.APPROX-RANDOM.2019.6}}.

\bibitem[Fil20a]{Fil20}
A.~Filtser.
\newblock A face cover perspective to $\ell_1$ embeddings of planar graphs.
\newblock In S.~Chawla, editor, {\em Proceedings of the 2020 {ACM-SIAM}
  Symposium on Discrete Algorithms, {SODA} 2020, Salt Lake City, UT, USA,
  January 5-8, 2020}, pages 1945--1954. {SIAM},
\newblock 2020, \href {http://dx.doi.org/10.1137/1.9781611975994.120}
  {\path{doi:10.1137/1.9781611975994.120}}.

\bibitem[Fil20b]{Fil20scattering}
A.~Filtser.
\newblock Scattering and sparse partitions, and their applications.
\newblock In A.~Czumaj, A.~Dawar, and E.~Merelli, editors, {\em 47th
  International Colloquium on Automata, Languages, and Programming, {ICALP}
  2020, July 8-11, 2020, Saarbr{\"{u}}cken, Germany (Virtual Conference)},
  volume 168 of {\em LIPIcs}, pages 47:1--47:20. Schloss Dagstuhl -
  Leibniz-Zentrum f{\"{u}}r Informatik,
\newblock 2020, \href {http://dx.doi.org/10.4230/LIPIcs.ICALP.2020.47}
  {\path{doi:10.4230/LIPIcs.ICALP.2020.47}}.

\bibitem[Fil21]{Fil21}
A.~Filtser.
\newblock Hop-constrained metric embeddings and their applications.
\newblock In {\em 62nd {IEEE} Annual Symposium on Foundations of Computer
  Science, {FOCS} 2021, Denver, CO, USA, February 7-10, 2022}, pages 492--503.
  {IEEE},
\newblock 2021, \href {http://dx.doi.org/10.1109/FOCS52979.2021.00056}
  {\path{doi:10.1109/FOCS52979.2021.00056}}.

\bibitem[FKM{\etalchar{+}}14]{fkmvww14}
T.~Faver, K.~Kochalski, M.~K. Murugan, H.~Verheggen, E.~Wesson, and A.~Weston.
\newblock Roundness properties of ultrametric spaces.
\newblock {\em Glasgow Mathematical Journal}, 56(3):519–535,
\newblock 2014, \href {http://dx.doi.org/10.1017/S0017089513000438}
  {\path{doi:10.1017/S0017089513000438}}.

\bibitem[FKN21]{FKN21}
A.~Filtser, M.~Kapralov, and N.~Nouri.
\newblock Graph spanners by sketching in dynamic streams and the simultaneous
  communication model.
\newblock In D.~Marx, editor, {\em Proceedings of the 2021 {ACM-SIAM} Symposium
  on Discrete Algorithms, {SODA} 2021, Virtual Conference, January 10 - 13,
  2021}, pages 1894--1913. {SIAM},
\newblock 2021, \href {http://dx.doi.org/10.1137/1.9781611976465.113}
  {\path{doi:10.1137/1.9781611976465.113}}.

\bibitem[FL21]{FL21}
A.~Filtser and H.~Le.
\newblock Clan embeddings into trees, and low treewidth graphs.
\newblock In S.~Khuller and V.~V. Williams, editors, {\em {STOC} '21: 53rd
  Annual {ACM} {SIGACT} Symposium on Theory of Computing, Virtual Event, Italy,
  June 21-25, 2021}, pages 342--355. {ACM},
\newblock 2021, \href {http://dx.doi.org/10.1145/3406325.3451043}
  {\path{doi:10.1145/3406325.3451043}}.

\bibitem[FL22]{FL22}
A.~Filtser and H.~Le.
\newblock Locality-sensitive orderings and applications to reliable spanners.
\newblock In S.~Leonardi and A.~Gupta, editors, {\em {STOC} '22: 54th Annual
  {ACM} {SIGACT} Symposium on Theory of Computing, Rome, Italy, June 20 - 24,
  2022}, pages 1066--1079. {ACM},
\newblock 2022, \href {http://dx.doi.org/10.1145/3519935.3520042}
  {\path{doi:10.1145/3519935.3520042}}.

\bibitem[FN22]{FN22}
A.~Filtser and O.~Neiman.
\newblock Light spanners for high dimensional norms via stochastic
  decompositions.
\newblock {\em Algorithmica},
\newblock 2022, \href {http://dx.doi.org/10.1007/s00453-022-00994-0}
  {\path{doi:10.1007/s00453-022-00994-0}}.

\bibitem[FS20]{FS20}
A.~Filtser and S.~Solomon.
\newblock The greedy spanner is existentially optimal.
\newblock {\em {SIAM} J. Comput.}, 49(2):429--447, 2020.
\newblock
\newblock preliminary version published in PODC 2016, \href
  {http://dx.doi.org/10.1137/18M1210678} {\path{doi:10.1137/18M1210678}}.

\bibitem[GGN06]{GGN06}
J.~Gao, L.~J. Guibas, and A.~Nguyen.
\newblock Deformable spanners and applications.
\newblock {\em Computational Geometry}, 35(1):2--19,
\newblock 2006, \href
  {http://dx.doi.org/https://doi.org/10.1016/j.comgeo.2005.10.001}
  {\path{doi:https://doi.org/10.1016/j.comgeo.2005.10.001}}.

\bibitem[GKK{\etalchar{+}}01]{GKKPP01}
C.~Gavoille, M.~Katz, N.~A. Katz, C.~Paul, and D.~Peleg.
\newblock Approximate distance labeling schemes.
\newblock In {\em Algorithms - {ESA} 2001, 9th Annual European Symposium,
  Aarhus, Denmark, August 28-31, 2001, Proceedings}, pages 476--487,
\newblock 2001, \href {http://dx.doi.org/10.1007/3-540-44676-1\_40}
  {\path{doi:10.1007/3-540-44676-1\_40}}.

\bibitem[GKL03]{GKL03}
A.~Gupta, R.~Krauthgamer, and J.~R. Lee.
\newblock Bounded geometries, fractals, and low-distortion embeddings.
\newblock In {\em 44th Symposium on Foundations of Computer Science {(FOCS}
  2003), 11-14 October 2003, Cambridge, MA, USA, Proceedings}, pages 534--543,
\newblock 2003, \href {http://dx.doi.org/10.1109/SFCS.2003.1238226}
  {\path{doi:10.1109/SFCS.2003.1238226}}.

\bibitem[GM03]{GM03}
L.~E.~N. Gouveia and T.~L. Magnanti.
\newblock Network flow models for designing diameter-constrained
  minimum-spanning and steiner trees.
\newblock {\em Networks}, 41(3):159--173,
\newblock 2003, \href {http://dx.doi.org/10.1002/net.10069}
  {\path{doi:10.1002/net.10069}}.

\bibitem[Got15]{Gottlieb15}
L.~Gottlieb.
\newblock A light metric spanner.
\newblock In {\em {IEEE} 56th Annual Symposium on Foundations of Computer
  Science, {FOCS} 2015, Berkeley, CA, USA, 17-20 October, 2015}, pages
  759--772,
\newblock 2015, \href {http://dx.doi.org/10.1109/FOCS.2015.52}
  {\path{doi:10.1109/FOCS.2015.52}}.

\bibitem[GPdSV03]{GPSV03}
L.~E.~N. Gouveia, P.~Patr{\'{\i}}cio, A.~de~Sousa, and R.~Valadas.
\newblock {MPLS} over {WDM} network design with packet level qos constraints
  based on {ILP} models.
\newblock In {\em Proceedings {IEEE} {INFOCOM} 2003, The 22nd Annual Joint
  Conference of the {IEEE} Computer and Communications Societies, San Franciso,
  CA, USA, March 30 - April 3, 2003}, pages 576--586,
\newblock 2003, \href {http://dx.doi.org/10.1109/INFCOM.2003.1208708}
  {\path{doi:10.1109/INFCOM.2003.1208708}}.

\bibitem[GPPR04]{GPPR04}
C.~Gavoille, D.~Peleg, S.~P{\'{e}}rennes, and R.~Raz.
\newblock Distance labeling in graphs.
\newblock {\em J. Algorithms}, 53(1):85--112, 2004.
\newblock
\newblock preliminary version published in SODA 2001, \href
  {http://dx.doi.org/10.1016/j.jalgor.2004.05.002}
  {\path{doi:10.1016/j.jalgor.2004.05.002}}.

\bibitem[HIM12]{HIM12}
S.~Har{-}Peled, P.~Indyk, and R.~Motwani.
\newblock Approximate nearest neighbor: Towards removing the curse of
  dimensionality.
\newblock {\em Theory Comput.}, 8(1):321--350,
\newblock 2012, \href {http://dx.doi.org/10.4086/toc.2012.v008a014}
  {\path{doi:10.4086/toc.2012.v008a014}}.

\bibitem[HIS13]{HIS13}
S.~Har{-}Peled, P.~Indyk, and A.~Sidiropoulos.
\newblock Euclidean spanners in high dimensions.
\newblock In {\em Proceedings of the Twenty-Fourth Annual {ACM-SIAM} Symposium
  on Discrete Algorithms, {SODA} 2013, New Orleans, Louisiana, USA, January
  6-8, 2013}, pages 804--809,
\newblock 2013, \href {http://dx.doi.org/10.1137/1.9781611973105.57}
  {\path{doi:10.1137/1.9781611973105.57}}.

\bibitem[HK13]{HK13}
S.~Har{-}Peled and N.~Kumar.
\newblock Approximate nearest neighbor search for low-dimensional queries.
\newblock {\em {SIAM} J. Comput.}, 42(1):138--159,
\newblock 2013, \href {http://dx.doi.org/10.1137/110852711}
  {\path{doi:10.1137/110852711}}.

\bibitem[HM06]{HM06}
S.~Har{-}Peled and M.~Mendel.
\newblock Fast construction of nets in low-dimensional metrics and their
  applications.
\newblock {\em {SIAM} J. Comput.}, 35(5):1148--1184,
\newblock 2006, \href {http://dx.doi.org/10.1137/S0097539704446281}
  {\path{doi:10.1137/S0097539704446281}}.

\bibitem[HMO21]{HMO21}
S.~Har{-}Peled, M.~Mendel, and D.~Ol{\'{a}}h.
\newblock {Reliable Spanners for Metric Spaces}.
\newblock In {\em 37th International Symposium on Computational Geometry},
  SoCG'21, pages 43:1--43:13, 2021.
\newblock
\newblock Full version at \url{https://arxiv.org/abs/2007.08738}, \href
  {http://dx.doi.org/10.4230/LIPIcs.SoCG.2021.43}
  {\path{doi:10.4230/LIPIcs.SoCG.2021.43}}.

\bibitem[HT84]{HT84}
D.~Harel and R.~E. Tarjan.
\newblock Fast algorithms for finding nearest common ancestors.
\newblock {\em {SIAM} J. Comput.}, 13(2):338--355,
\newblock 1984, \href {http://dx.doi.org/10.1137/0213024}
  {\path{doi:10.1137/0213024}}.

\bibitem[Ind01]{Indyk01infty}
P.~Indyk.
\newblock On approximate nearest neighbors under $\ell_\infty$ norm.
\newblock {\em J. Comput. Syst. Sci.}, 63(4):627--638,
\newblock 2001, \href {http://dx.doi.org/10.1006/jcss.2001.1781}
  {\path{doi:10.1006/jcss.2001.1781}}.

\bibitem[Ind02]{Indyk02}
P.~Indyk.
\newblock Approximate nearest neighbor algorithms for {F}r{\'e}chet distance
  via product metrics.
\newblock In {\em Proceedings of the 8th Symposium on Computational Geometry},
  pages 102--106, Barcelona, Spain, June 2002.
\newblock {ACM} Press, \href {http://dx.doi.org/10.1145/513400.513414}
  {\path{doi:10.1145/513400.513414}}.

\bibitem[IT03]{IT03}
P.~Indyk and N.~Thaper.
\newblock Fast image retrieval via embeddings.
\newblock In {\em 3rd international workshop on statistical and computational
  theories of vision}, volume~2, page~5. Nice, France, 2003.
\newblock
\newblock see \href{https://people.csail.mit.edu/indyk/emd.pdf}{here}.

\bibitem[JL84]{JL84}
W.~Johnson and J.~Lindenstrauss.
\newblock Extensions of {L}ipschitz mappings into a {H}ilbert space.
\newblock {\em Contemporary Mathematics}, 26:189–206, 1984.
\newblock
\newblock see
  \href{https://link.springer.com/article/10.1007/BF02764938}{here}.

\bibitem[JLN{\etalchar{+}}05]{JLNRS05}
L.~Jia, G.~Lin, G.~Noubir, R.~Rajaraman, and R.~Sundaram.
\newblock Universal approximations for tsp, steiner tree, and set cover.
\newblock In H.~N. Gabow and R.~Fagin, editors, {\em Proceedings of the 37th
  Annual {ACM} Symposium on Theory of Computing, Baltimore, MD, USA, May 22-24,
  2005}, pages 386--395. {ACM},
\newblock 2005, \href {http://dx.doi.org/10.1145/1060590.1060649}
  {\path{doi:10.1145/1060590.1060649}}.

\bibitem[KL04]{KL04}
R.~Krauthgamer and J.~R. Lee.
\newblock Navigating nets: simple algorithms for proximity search.
\newblock In J.~I. Munro, editor, {\em Proceedings of the Fifteenth Annual
  {ACM-SIAM} Symposium on Discrete Algorithms, {SODA} 2004, New Orleans,
  Louisiana, USA, January 11-14, 2004}, pages 798--807. {SIAM}, 2004.
\newblock
\newblock see \href{http://dl.acm.org/citation.cfm?id=982792.982913}{here}.

\bibitem[KL05]{KL05}
R.~Krauthgamer and J.~R. Lee.
\newblock The black-box complexity of nearest-neighbor search.
\newblock {\em Theor. Comput. Sci.}, 348(2-3):262--276,
\newblock 2005, \href {http://dx.doi.org/10.1016/j.tcs.2005.09.017}
  {\path{doi:10.1016/j.tcs.2005.09.017}}.

\bibitem[Kle02]{Kle02}
P.~N. Klein.
\newblock Preprocessing an undirected planar network to enable fast approximate
  distance queries.
\newblock In {\em Proceedings of the Thirteenth Annual {ACM-SIAM} Symposium on
  Discrete Algorithms, January 6-8, 2002, San Francisco, CA, {USA}}, pages
  820--827, 2002.
\newblock
\newblock see \href{https://dl.acm.org/doi/abs/10.5555/545381.545488}{here}.

\bibitem[Kle06]{Klein06}
P.~N. Klein.
\newblock Subset spanner for planar graphs, with application to subset {TSP}.
\newblock In {\em Proceedings of the 38th Annual ACM Symposium on Theory of
  Computing}, STOC '06, pages 749--756,
\newblock 2006, \href
  {http://dx.doi.org/http://doi.acm.org/10.1145/1132516.1132620}
  {\path{doi:http://doi.acm.org/10.1145/1132516.1132620}}.

\bibitem[Kle08]{Klein08}
P.~N. Klein.
\newblock A linear-time approximation scheme for {TSP} in undirected planar
  graphs with edge-weights.
\newblock {\em {SIAM} J. Comput.}, 37(6):1926--1952,
\newblock 2008, \href {http://dx.doi.org/10.1137/060649562}
  {\path{doi:10.1137/060649562}}.

\bibitem[KLMS22]{KLMS22}
O.~Kahalon, H.~Le, L.~Milenkovic, and S.~Solomon.
\newblock Can't see the forest for the trees: Navigating metric spaces by
  bounded hop-diameter spanners.
\newblock In A.~Milani and P.~Woelfel, editors, {\em {PODC} '22: {ACM}
  Symposium on Principles of Distributed Computing, Salerno, Italy, July 25 -
  29, 2022}, pages 151--162. {ACM},
\newblock 2022, \href {http://dx.doi.org/10.1145/3519270.3538414}
  {\path{doi:10.1145/3519270.3538414}}.

\bibitem[KPX08]{KPX08}
I.~A. Kanj, L.~Perkovic, and G.~Xia.
\newblock Computing lightweight spanners locally.
\newblock In {\em Distributed Computing, 22nd International Symposium, {DISC}
  2008, Arcachon, France, September 22-24, 2008. Proceedings}, pages 365--378,
\newblock 2008, \href {http://dx.doi.org/10.1007/978-3-540-87779-0\_25}
  {\path{doi:10.1007/978-3-540-87779-0\_25}}.

\bibitem[KW14]{KW14}
M.~Kapralov and D.~P. Woodruff.
\newblock Spanners and sparsifiers in dynamic streams.
\newblock In {\em {ACM} Symposium on Principles of Distributed Computing,
  {PODC} '14, Paris, France, July 15-18, 2014}, pages 272--281,
\newblock 2014, \href {http://dx.doi.org/10.1145/2611462.2611497}
  {\path{doi:10.1145/2611462.2611497}}.

\bibitem[LCM99]{BCM99}
L.~J. LeBlanc, J.~Chifflet, and P.~Mahey.
\newblock Packet routing in telecommunication networks with path and flow
  restrictions.
\newblock {\em {INFORMS} J. Comput.}, 11(2):188--197,
\newblock 1999, \href {http://dx.doi.org/10.1287/ijoc.11.2.188}
  {\path{doi:10.1287/ijoc.11.2.188}}.

\bibitem[Le20]{Le20}
H.~Le.
\newblock A {PTAS} for subset {TSP} in minor-free graphs.
\newblock In {\em Proceedings of the 2020 {ACM-SIAM} Symposium on Discrete
  Algorithms, {SODA} 2020, Salt Lake City, UT, USA, January 5-8, 2020}, pages
  2279--2298,
\newblock 2020, \href {http://dx.doi.org/10.1137/1.9781611975994.140}
  {\path{doi:10.1137/1.9781611975994.140}}.

\bibitem[LMS22]{LMS22}
H.~Le, L.~Milenkovic, and S.~Solomon.
\newblock Sparse euclidean spanners with tiny diameter: {A} tight lower bound.
\newblock In X.~Goaoc and M.~Kerber, editors, {\em 38th International Symposium
  on Computational Geometry, SoCG 2022, June 7-10, 2022, Berlin, Germany},
  volume 224 of {\em LIPIcs}, pages 54:1--54:15. Schloss Dagstuhl -
  Leibniz-Zentrum f{\"{u}}r Informatik,
\newblock 2022, \href {http://dx.doi.org/10.4230/LIPIcs.SoCG.2022.54}
  {\path{doi:10.4230/LIPIcs.SoCG.2022.54}}.

\bibitem[LNS02]{LNS02}
C.~Levcopoulos, G.~Narasimhan, and M.~H.~M. Smid.
\newblock Improved algorithms for constructing fault-tolerant spanners.
\newblock {\em Algorithmica}, 32(1):144--156, 2002.
\newblock
\newblock preliminary version published in STOC 1998, \href
  {http://dx.doi.org/10.1007/s00453-001-0075-x}
  {\path{doi:10.1007/s00453-001-0075-x}}.

\bibitem[LP21]{LP21}
Y.~Long and S.~Pettie.
\newblock Planar distance oracles with better time-space tradeoffs.
\newblock In D.~Marx, editor, {\em Proceedings of the 2021 {ACM-SIAM} Symposium
  on Discrete Algorithms, {SODA} 2021, Virtual Conference, January 10 - 13,
  2021}, pages 2517--2537. {SIAM},
\newblock 2021, \href {http://dx.doi.org/10.1137/1.9781611976465.149}
  {\path{doi:10.1137/1.9781611976465.149}}.

\bibitem[LS21]{LS21Unified2}
H.~Le and S.~Solomon.
\newblock A unified framework of light spanners {II:} fine-grained optimality.
\newblock {\em CoRR}, abs/2111.13748,
\newblock 2021, \href {http://arxiv.org/abs/2111.13748}
  {\path{arXiv:2111.13748}}.

\bibitem[LS22]{LS22}
H.~Le and S.~Solomon.
\newblock Near-optimal spanners for general graphs in (nearly) linear time.
\newblock In J.~S. Naor and N.~Buchbinder, editors, {\em Proceedings of the
  2022 {ACM-SIAM} Symposium on Discrete Algorithms, {SODA} 2022, Virtual
  Conference / Alexandria, VA, USA, January 9 - 12, 2022}, pages 3332--3361.
  {SIAM},
\newblock 2022, \href {http://dx.doi.org/10.1137/1.9781611977073.132}
  {\path{doi:10.1137/1.9781611977073.132}}.

\bibitem[LT79]{LT79}
R.~Lipton and R.~Tarjan.
\newblock A separator theorem for planar graphs.
\newblock {\em SIAM Journal on Applied Mathematics}, 36(2):177--189, 1979.
\newblock
\newblock see
  \href{https://epubs.siam.org/doi/abs/10.1137/0136016?casa_token=DkjtdOZTkjAAAAAA\%3Ae2YCuftvhu-fouaQ8lmXnjDVedUvBb8iwYMmbyEczK0UhCej2ob80HfCpFChf6QOqqA_Qk6p\&journalCode=smjmap}{here}.

\bibitem[Luk99]{Lukovszki99}
T.~Lukovszki.
\newblock New results on fault tolerant geometric spanners.
\newblock In F.~Dehne, J.-R. Sack, A.~Gupta, and R.~Tamassia, editors, {\em
  Algorithms and Data Structures}, pages 193--204, Berlin, Heidelberg, 1999.
\newblock Springer Berlin Heidelberg, \href
  {http://dx.doi.org/10.1007/3-540-48447-7_20}
  {\path{doi:10.1007/3-540-48447-7_20}}.

\bibitem[LUW95]{LUW95}
F.~Lazebnik, V.~A. Ustimenko, and A.~J. Woldar.
\newblock A new series of dense graphs of high girth.
\newblock {\em Bulletin of the American mathematical society}, 32(1):73--79,
\newblock 1995, \href {http://dx.doi.org/10.1090/S0273-0979-1995-00569-0}
  {\path{doi:10.1090/S0273-0979-1995-00569-0}}.

\bibitem[LW21]{LW21}
H.~Le and C.~Wulff{-}Nilsen.
\newblock Optimal approximate distance oracle for planar graphs.
\newblock In {\em 62nd {IEEE} Annual Symposium on Foundations of Computer
  Science, {FOCS} 2021, Denver, CO, USA, February 7-10, 2022}, pages 363--374.
  {IEEE},
\newblock 2021, \href {http://dx.doi.org/10.1109/FOCS52979.2021.00044}
  {\path{doi:10.1109/FOCS52979.2021.00044}}.

\bibitem[Mat96]{Mat96}
J.~Matou{\v{s}}ek.
\newblock On the distortion required for embedding finite metric spaces into
  normed spaces.
\newblock {\em Israel Journal of Mathematics}, 93(1):333--344,
\newblock 1996, \href {http://dx.doi.org/10.1007/BF02761110}
  {\path{doi:10.1007/BF02761110}}.

\bibitem[MN07]{MN07}
M.~Mendel and A.~Naor.
\newblock Ramsey partitions and proximity data structures.
\newblock {\em Journal of the European Mathematical Society}, 9(2):253--275,
\newblock 2007.

\bibitem[Ngu13]{Nguyen13}
H.~L. Nguyen.
\newblock Approximate nearest neighbor search in $\ell_p$.
\newblock {\em CoRR}, abs/1306.3601,
\newblock 2013, \href {http://arxiv.org/abs/1306.3601}
  {\path{arXiv:1306.3601}}.

\bibitem[NS07]{NS07}
G.~Narasimhan and M.~H.~M. Smid.
\newblock {\em Geometric spanner networks}.
\newblock Cambridge University Press,
\newblock 2007, \href {http://dx.doi.org/10.1017/CBO9780511546884}
  {\path{doi:10.1017/CBO9780511546884}}.

\bibitem[OR07]{OR07}
R.~Ostrovsky and Y.~Rabani.
\newblock Low distortion embeddings for edit distance.
\newblock {\em J. {ACM}}, 54(5):23,
\newblock 2007, \href {http://dx.doi.org/10.1145/1284320.1284322}
  {\path{doi:10.1145/1284320.1284322}}.

\bibitem[Par22]{Parter22}
M.~Parter.
\newblock Nearly optimal vertex fault-tolerant spanners in optimal time:
  sequential, distributed, and parallel.
\newblock In S.~Leonardi and A.~Gupta, editors, {\em {STOC} '22: 54th Annual
  {ACM} {SIGACT} Symposium on Theory of Computing, Rome, Italy, June 20 - 24,
  2022}, pages 1080--1092. {ACM},
\newblock 2022, \href {http://dx.doi.org/10.1145/3519935.3520047}
  {\path{doi:10.1145/3519935.3520047}}.

\bibitem[Pel00]{Peleg00Labling}
D.~Peleg.
\newblock Proximity-preserving labeling schemes.
\newblock {\em J. Graph Theory}, 33(3):167--176, 2000.
\newblock
\newblock preliminary version published in WG 1999, \href
  {http://dx.doi.org/10.1002/(SICI)1097-0118(200003)33:3<167::AID-JGT7>3.0.CO;2-5}
  {\path{doi:10.1002/(SICI)1097-0118(200003)33:3<167::AID-JGT7>3.0.CO;2-5}}.

\bibitem[Pel05]{Peleg05}
D.~Peleg.
\newblock Informative labeling schemes for graphs.
\newblock {\em Theor. Comput. Sci.}, 340(3):577--593,
\newblock 2005, \href {http://dx.doi.org/10.1016/j.tcs.2005.03.015}
  {\path{doi:10.1016/j.tcs.2005.03.015}}.

\bibitem[PS03]{PS03}
H.~Pirkul and S.~Soni.
\newblock New formulations and solution procedures for the hop constrained
  network design problem.
\newblock {\em Eur. J. Oper. Res.}, 148(1):126--140,
\newblock 2003, \href {http://dx.doi.org/10.1016/S0377-2217(02)00366-1}
  {\path{doi:10.1016/S0377-2217(02)00366-1}}.

\bibitem[RAJ12]{RAJ12}
A.~Rossi, A.~Aubry, and M.~Jacomino.
\newblock Connectivity-and-hop-constrained design of electricity distribution
  networks.
\newblock {\em Eur. J. Oper. Res.}, 218(1):48--57,
\newblock 2012, \href {http://dx.doi.org/10.1016/j.ejor.2011.10.006}
  {\path{doi:10.1016/j.ejor.2011.10.006}}.

\bibitem[Sol13]{Solomon13}
S.~Solomon.
\newblock Sparse euclidean spanners with tiny diameter.
\newblock {\em {ACM} Trans. Algorithms}, 9(3):28:1--28:33,
\newblock 2013, \href {http://dx.doi.org/10.1145/2483699.2483708}
  {\path{doi:10.1145/2483699.2483708}}.

\bibitem[Sol14]{Solomon14}
S.~Solomon.
\newblock From hierarchical partitions to hierarchical covers: optimal
  fault-tolerant spanners for doubling metrics.
\newblock In {\em Symposium on Theory of Computing, {STOC} 2014, New York, NY,
  USA, May 31 - June 03, 2014}, pages 363--372,
\newblock 2014, \href {http://dx.doi.org/10.1145/2591796.2591864}
  {\path{doi:10.1145/2591796.2591864}}.

\bibitem[Tal04]{Talwar04}
K.~Talwar.
\newblock Bypassing the embedding: algorithms for low dimensional metrics.
\newblock In {\em STOC '04: Proceedings of the thirty-sixth annual ACM
  symposium on Theory of computing}, pages 281--290. ACM Press,
\newblock 2004, \href
  {http://dx.doi.org/http://doi.acm.org/10.1145/1007352.1007399}
  {\path{doi:http://doi.acm.org/10.1145/1007352.1007399}}.

\bibitem[Tho04]{Thorup04}
M.~Thorup.
\newblock Compact oracles for reachability and approximate distances in planar
  digraphs.
\newblock {\em Journal of the ACM}, 51(6):993–1024,
\newblock 2004, \href {http://dx.doi.org/10.1145/1039488.1039493}
  {\path{doi:10.1145/1039488.1039493}}.

\bibitem[TZ05]{TZ05}
M.~Thorup and U.~Zwick.
\newblock Approximate distance oracles.
\newblock {\em J. ACM}, 52(1):1--24, 2005.
\newblock
\newblock preliminary version published in STOC 2001, \href
  {http://dx.doi.org/http://doi.acm.org/10.1145/1044731.1044732}
  {\path{doi:http://doi.acm.org/10.1145/1044731.1044732}}.

\bibitem[WA88]{WA88}
K.~A. Woolston and S.~L. Albin.
\newblock The design of centralized networks with reliability and availability
  constraints.
\newblock {\em Comput. Oper. Res.}, 15(3):207--217,
\newblock 1988, \href {http://dx.doi.org/10.1016/0305-0548(88)90033-0}
  {\path{doi:10.1016/0305-0548(88)90033-0}}.

\bibitem[Wen91]{Wenger91}
R.~Wenger.
\newblock Extremal graphs with no c4's, c6's, or c10's.
\newblock {\em Journal of Combinatorial Theory, Series B}, 52(1):113 -- 116,
\newblock 1991, \href
  {http://dx.doi.org/https://doi.org/10.1016/0095-8956(91)90097-4}
  {\path{doi:https://doi.org/10.1016/0095-8956(91)90097-4}}.

\bibitem[Wil83]{Wil83}
D.~E. Willard.
\newblock Log-logarithmic worst-case range queries are possible in space
  theta(n).
\newblock {\em Inf. Process. Lett.}, 17(2):81--84,
\newblock 1983, \href {http://dx.doi.org/10.1016/0020-0190(83)90075-3}
  {\path{doi:10.1016/0020-0190(83)90075-3}}.

\end{thebibliography}

\appendix

\section{Missing Proofs from \Cref{sec:LSOForEuclidean}}\label{appendix:Missing}
This appendix contains the missing proofs from \Cref{sec:LSOForEuclidean}, restated for convenience.
\HypersphericalCup*
\begin{proof}
	By scaling, it is enough to prove the claim for $r=1$. To calculate
	the volume of the cap we will simply integrate over the values of
	the first coordinate, which take values in $[\alpha,1]$. Once the
	first coordinate is fixed to be $x$, the rest of the coordinate constitute a
	$d-1$ dimensional ball of radius $\sqrt{1-x^{2}}$. We have that
	
	\begin{align*}
		\text{Vol}\left(\text{cup at distance }\alpha\right) & =\int_{\alpha}^{1}V_{d-1}\cdot\sqrt{1-x^{2}}^{d-1}dx\\
		& =V_{d-1}\cdot\int_{0}^{1-\alpha^{2}}\frac{y^{\frac{d-1}{2}}}{2\sqrt{1-y}}dy\\
		& \ge V_{d-1}\cdot\int_{1-(\frac{3}{2}\alpha)^{2}}^{1-\alpha^{2}}\frac{y^{\frac{d-1}{2}}}{2\cdot\frac{3}{2}\alpha}dy\\
		& =V_{d}\cdot\frac{V_{d-1}}{V_{d}}\cdot\frac{2}{d+1}\cdot\frac{1}{3\alpha}\cdot y^{\frac{d+1}{2}}\mid_{1-(\frac{3}{2}\alpha)^{2}}^{1-\alpha^{2}}\\
		& =\Omega(1)\cdot V_{d}\cdot\frac{1}{\alpha\cdot\sqrt{d}}\cdot(1-\alpha^{2})^{\frac{d+1}{2}}\cdot\left(1-\left(\frac{1-(\frac{3}{2}\alpha)^{2}}{1-\alpha^{2}}\right)^{\frac{d+1}{2}}\right)~,
	\end{align*}
	where in the second equality follows by substitution $y=1-x^{2}$,
	and the last equality holds by the fact $\frac{V_{d-1}}{V_{d}}=\Omega(\sqrt{d})$.
	The claim holds as $\alpha\ge\frac{1}{2\sqrt{d}}$, and
	thus $1-\left(\frac{1-(\frac{3}{2}\alpha)^{2}}{1-\alpha^{2}}\right)^{\frac{d+1}{2}}=1-\left(1-\frac{\frac{5}{4}\cdot\alpha^{2}}{1-\alpha^{2}}\right)^{\frac{d+1}{2}}\ge1-\left(1-\frac{\frac{5}{4}\cdot\frac{1}{4d}}{1-\frac{1}{4d}}\right)^{\frac{d+1}{2}}=\Omega(1)$.
\end{proof}

\BallIntersectionLpLarge*
\begin{proof}
	The following is a know fact ($\ell_{p}$ is $p$-uniformly smooth, \footnote{For $p\in(1,2]$, \cref{eq:p12-uniformly smoth} holds for $\ell_{p}$ space due to uniform smoothness. On the other hand, one can easily verify that \cref{eq:p12-uniformly smoth} holds for absolute value, and hence also for $\ell_1$.}
	see e.g. \cite{BCL02}): 
	\begin{equation}
		\text{For }p\in[1,2]\text{ and }x,y\in\ell_{p}:\qquad\frac{\|x\|_{p}^{p}+\|y\|_{p}^{p}}{2}\le\|\frac{x+y}{2}\|_{p}^{p}+\|\frac{x-y}{2}\|_{p}^{p}~.\label{eq:p12-uniformly smoth}
	\end{equation}
	By symmetry we can assume that $w_i=1$.
	\begin{claim}\label{clm:BallLp12Intersectin}
		For $p\in[1,2]$, and $\|x-y\|_{p}\le\frac{1}{t}<\frac{1}{3}$, it
		holds that\\\hfill$B_{p}\left(\frac{x+y}{2},\left(1-\frac{3}{(2t)^{p}}\right)^{\frac{1}{p}}\right)\subseteq\left(B_{p}\left(x,1\right)\cap B_{p}\left(y,1\right)\right)\cup B_{p}\left(x,\left(1-\frac{4}{(2t)^{p}}\right)^{\frac{1}{p}}\right)\cup B_{p}\left(y,\left(1-\frac{4}{(2t)^{p}}\right)^{\frac{1}{p}}\right)$.
	\end{claim}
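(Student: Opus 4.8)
~\textbf{Proof plan for \Cref{clm:BallLp12Intersectin}.}
The plan is to fix a point $z$ in the small $\ell_p$-ball around the midpoint $m=\tfrac{x+y}{2}$ and show it lands in one of the three sets on the right-hand side. The natural tool is the uniform smoothness inequality \cref{eq:p12-uniformly smoth}, which I would apply with the substitution ``$x \mapsto z-x$'' and ``$y \mapsto z-y$''; this gives
\[
\frac{\|z-x\|_p^p+\|z-y\|_p^p}{2}\le \Big\|z-\tfrac{x+y}{2}\Big\|_p^p+\Big\|\tfrac{x-y}{2}\Big\|_p^p \le \Big(1-\tfrac{3}{(2t)^p}\Big)+\tfrac{1}{(2t)^p}=1-\tfrac{2}{(2t)^p},
\]
using $z\in B_p(m,(1-\tfrac{3}{(2t)^p})^{1/p})$ and $\|x-y\|_p\le\tfrac1t$ so that $\|\tfrac{x-y}{2}\|_p^p\le\tfrac{1}{(2t)^p}$. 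So the \emph{average} of $\|z-x\|_p^p$ and $\|z-y\|_p^p$ is at most $1-\tfrac{2}{(2t)^p}$.

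The rest is a short case analysis on the two quantities $a:=\|z-x\|_p^p$ and $b:=\|z-y\|_p^p$, whose average is $\le 1-\tfrac{2}{(2t)^p}$. First I would observe that it is impossible for \emph{both} $a>1$ and $b>1$, since then the average would exceed $1$, contradicting the bound (as $\tfrac{2}{(2t)^p}>0$). So w.l.o.g.\ $b\le 1$, i.e.\ $z\in B_p(y,1)$. Now split on whether $a\le 1$ as well: if $a\le 1$ too, then $z\in B_p(x,1)\cap B_p(y,1)$ and we are done. Otherwise $a>1$; then from $\tfrac{a+b}{2}\le 1-\tfrac{2}{(2t)^p}$ we get $b\le 2-\tfrac{4}{(2t)^p}-a< 2-\tfrac{4}{(2t)^p}-1 = 1-\tfrac{4}{(2t)^p}$, hence $z\in B_p(y,(1-\tfrac{4}{(2t)^p})^{1/p})$. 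By symmetry (interchanging the roles of $x$ and $y$ from the start) the remaining case yields $z\in B_p(x,(1-\tfrac{4}{(2t)^p})^{1/p})$. This covers all possibilities, establishing the claimed containment.

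I expect no real obstacle here: the only thing to double-check carefully is that $t>3$ (equivalently $\tfrac1t<\tfrac13$) guarantees $1-\tfrac{3}{(2t)^p}>0$ and $1-\tfrac{4}{(2t)^p}>0$, so that all three radii are well defined and positive — indeed $(2t)^p\ge (2t)^1>6>4$. One should also make sure the direction of the uniform smoothness inequality is being used correctly (it upper-bounds the midpoint term, which is exactly what we want). The claim then feeds into \Cref{lem:diffClustersLargeLp}: taking volumes, $\mathrm{Vol}_d(B_p(x,1)\cap B_p(y,1))\ge \mathrm{Vol}_d(B_p(m,(1-\tfrac{3}{(2t)^p})^{1/p})) - 2\,\mathrm{Vol}_d(B_p(x,(1-\tfrac{4}{(2t)^p})^{1/p}))$, and since $\mathrm{Vol}_d$ of an $\ell_p$-ball of radius $r$ scales like $r^d$, this is $V_p^d\big((1-\tfrac{3}{(2t)^p})^{d/p}-2(1-\tfrac{4}{(2t)^p})^{d/p}\big)$; comparing with $\mathrm{Vol}_d(B_p(x,1)\cup B_p(y,1))\le 2V_p^d$ gives the ratio, and using $1-u\ge e^{-u/(1-u)}$ type estimates one reads off the $e^{-\Omega(d/t^p)}$ bound (the subtraction of the $2(1-\tfrac{4}{(2t)^p})^{d/p}$ term is absorbed since $\tfrac{4}{(2t)^p}$ is a constant factor larger than $\tfrac{3}{(2t)^p}$, making that term exponentially smaller than the first). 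But the present claim only asks for the geometric containment, which is exactly the case analysis above.
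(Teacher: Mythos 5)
Your proposal is correct and takes essentially the same route as the paper's proof: both rest on applying the $p$-uniform smoothness inequality \eqref{eq:p12-uniformly smoth} to $z-x$ and $z-y$ against the midpoint term, with identical arithmetic ($\tfrac{3}{(2t)^p}$, $\tfrac{4}{(2t)^p}$, $\tfrac{1}{(2t)^p}$). The only difference is presentational — the paper argues by contradiction (assuming $z$ avoids all three sets and deducing $\|x-y\|_p>\tfrac1t$), while you run the contrapositive as a direct case analysis on $\|z-x\|_p^p$ and $\|z-y\|_p^p$.
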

	
	\begin{proof}\sloppy
		Suppose for contradiction otherwise, then there is  point $z\in B_{p}\left(\frac{x+y}{2},\left(1-\frac{3}{(2t)^{p}}\right)^{\frac{1}{p}}\right)\setminus\left(B_{p}\left(x,1\right)\cap B_{p}\left(y,1\right)\right)\cup B_{p}\left(x,\left(1-\frac{4}{(2t)^{p}}\right)^{\frac{1}{p}}\right)\cup B_{p}\left(y,\left(1-\frac{4}{(2t)^{p}}\right)^{\frac{1}{p}}\right)$.
		W.l.o.g. $\|x-z\|_{p}\le\|y-z\|_{p}$. Hence
		\begin{align*}
			\|y-z\|_{p} & >1\\
			\|x-z\|_{p} & >\left(1-\frac{4}{(2t)^{p}}\right)^{\frac{1}{p}}\\
			\|\frac{x+y}{2}-z\|_{p} & \le\left(1-\frac{3}{(2t)^{p}}\right)^{\frac{1}{p}}~,
		\end{align*}
		applying \cref{eq:p12-uniformly smoth} to $x-z$ and $y-z$
		we have 
		\begin{align*}
			\|\frac{x-y}{2}\|_{p}^{p} & \ge\frac{\|x-z\|_{p}^{p}+\|y-z\|_{p}^{p}}{2}-\|\frac{x+y}{2}-z\|_{p}^{p}\\
			& >\frac{1+\left(1-\frac{4}{(2t)^{p}}\right)}{2}-\left(1-\frac{3}{(2t)^{p}}\right)=\frac{1}{(2t)^{p}}\,,
		\end{align*}
		implying that $\|x-y\|_{p}>\frac{1}{t}$, a contradiction.
	\end{proof}
	
	Denote by $V_{p}$ the volume of the $\ell_{p}$ unit ball. Note that
	the volume of a ball of radius $t$ equals $t^{d}\cdot V_{p}$. By \Cref{clm:BallLp12Intersectin} it
	follows that 
	\begin{align*}
		{\rm \frac{{\rm Vol}\left(B_{p}\left(x,1\right)\cap B_{p}\left(y,1\right)\right)}{V_{p}}} & \ge\frac{1}{V_{p}}\cdot{\rm Vol}\left(B_{p}\left(\frac{x+y}{2},\left(1-\frac{3}{(2t)^{p}}\right)^{\frac{1}{p}}\right)\right)-\frac{2}{V_{p}}\cdot{\rm Vol}\left(B_{p}\left(x,\left(1-\frac{4}{(2t)^{p}}\right)^{\frac{1}{p}}\right)\right)\\
		& =\left(1-\frac{3}{(2t)^{p}}\right)^{\frac{d}{p}}-2\cdot\left(1-\frac{4}{(2t)^{p}}\right)^{\frac{d}{p}}\\
		& =\left(1-\frac{3}{(2t)^{p}}\right)^{\frac{d}{p}}\cdot\left(1-2\cdot\left(1-\frac{1}{(2t)^{p}-3}\right)^{\frac{d}{p}}\right)=\Omega(1)\cdot e^{-\frac{6}{2^{p}\cdot p}\cdot\frac{d}{t^{p}}}~,
	\end{align*}
	where the last equality follows by the fact $1-x\ge e^{-2x}$, and
	as $t\ge3$ and thus for large enough $d$, $2\cdot\left(1-\frac{1}{(2t)^{p}-3}\right)^{\frac{d}{p}}\le\frac{1}{2}$.
	As ${\rm \frac{{\rm Vol}\left(B_{p}\left(x,1\right)\cup B_{p}\left(x,1\right)\right)}{V_{p}}}\le 2\cdot V_p$, the lemma follows.
\end{proof}

\BallIntersectionLpSmall* 
\begin{proof}
	By triangle inequality, for every $z\in\mathbb{R}^{d}$, $\|z-y\|_{p}\le\|z-x\|_{p}+\|x-y\|_{p}$,
	and hence $B_{p}\left(x,w_{i}-\|x-y\|_{p}\right)\subseteq B_{p}\left(x,w_{i}\right)\cap B_{p}\left(y,w_{i}\right)$.
	In particular,
	\begin{align*}
		\text{Vol}_{d}\left(B_{p}(x,w_{i})\cap B_{p}(y,w_{i})\right) & \ge\text{Vol}_{d}\left(B_{p}(x,w_{i}-\|x-y\|_{p})\right)\\
		& =\left(1-\frac{\|x-y\|_{p}}{w_{i}}\right)^{d}\cdot\text{Vol}_{d}\left(B_{p}(x,w_{i})\right)\ge\left(1-2d\cdot\frac{\|x-y\|_{p}}{w_{i}}\right)\cdot\text{Vol}_{d}\left(B_{p}(x,w_{i})\right)~,
	\end{align*}
	where the last inequality holds as $1-x\le e^{-x}\le1-\frac{x}{2}$.
	Denote $\text{Vol}_{d}\left(B_{p}(x,w_{i})\right)=Q$, and $\alpha\cdot Q=\text{Vol}_{d}\left(B_{p}(x,w_{i})\cap B_{p}(y,w_{i})\right)$.
	Set $\beta=1-\alpha$, then $\beta\le2d\cdot\frac{\|x-y\|_{p}}{w_{i}}$.
	It follows that 
	\[
	\frac{\text{Vol}_{d}\left(B_{p}(x,w_{i})\cap B_{p}(y,w_{i})\right)}{\text{Vol}_{d}\left(B_{p}(x,w_{i})\cup B_{p}(y,w_{i})\right)}=\frac{\alpha\cdot Q}{2Q-\alpha\cdot Q}=\frac{1-\beta}{1+\beta}\ge1-2\beta\ge1-4d\cdot\frac{\|x-y\|_{p}}{w_{i}}~.
	\]
\end{proof}

\end{document}